\documentclass[a4paper,10pt]{article}
\tolerance=1
\emergencystretch=\maxdimen
\hyphenpenalty=10000
\hbadness=10000
\usepackage{graphicx}
\usepackage{caption}
\usepackage{subcaption}
\usepackage{pdfpages} 
\usepackage[margin=1in]{geometry}
\usepackage[utf8x]{inputenc}
\usepackage{amsfonts}
\usepackage{amsmath}
\usepackage{amssymb}
\usepackage{amsthm}
\usepackage{complexity}
\usepackage{bbm}
\usepackage{mathtools}
\usepackage{thmtools,thm-restate}
\usepackage{mdframed}
\global\mdfdefinestyle{myframe}{leftmargin=.75in,rightmargin=.75in,linecolor=black,linewidth=1.5pt,innertopmargin=10pt,innerbottommargin=10pt}
\usepackage{enumerate}
\usepackage{graphics}
\usepackage{xspace}
 \usepackage{color}
\definecolor{darkgreen}{rgb}{0,0.5,0}
\usepackage{hyperref}
\hypersetup{
    unicode=false,          % non-Latin characters in AcrobatΓÇÖs bookmarks
    colorlinks=true,        % false: boxed links; true: colored links
    linkcolor=red,          % color of internal links (change box color with linkbordercolor)
    citecolor=darkgreen,        % color of links to bibliography
    filecolor=magenta,      % color of file links
    urlcolor=cyan           % color of external links
}

\usepackage{algorithm}
\usepackage[noend]{algpseudocode}

 \newtheorem{lemma}{Lemma}
 \newtheorem{theorem}{Theorem}
 \newtheorem{definition}{Definition}

 \newtheorem{corollary}{Corollary}
 \newtheorem{claim}{Claim}
 \newtheorem{fact}{Fact}
 \newtheorem{observation}{Observation}
 
 \newtheorem{remark}{Remark}
 
 \newenvironment{proofof}[1]{\noindent{\bf Proof of #1:}}{$\qed$\par}

\usepackage[capitalize, nameinlink]{cleveref}
\crefname{theorem}{Theorem}{Theorems}
\Crefname{lemma}{Lemma}{Lemmas}
\Crefname{claim}{Claim}{Claims}
\Crefname{fact}{Fact}{Facts}
\Crefname{remark}{Remark}{Remarks}
\Crefname{observation}{Observation}{Observations}
\Crefname{line}{Line}{Lines}
\crefalias{AlgoLine}{line}

\makeatletter
\newcounter{algorithmicH}% New algorithmic-like hyperref counter
\let\oldalgorithmic\algorithmic
\renewcommand{\algorithmic}{%
  \stepcounter{algorithmicH}% Step counter
  \oldalgorithmic}% Do what was always done with algorithmic environment
\renewcommand{\theHALG@line}{ALG@line.\thealgorithmicH.\arabic{ALG@line}}
\makeatother

\usepackage{todonotes}

\newcommand{\jakab}[1]{}

\newcommand{\e}{\epsilon}
\newcommand{\eps}{\e}
\newcommand{\wt}{\widetilde}

\newcommand{\prob}[1]{\mathbb P \left[ #1 \right]}
\newcommand{\expected}[1]{\mathbb E \left[ #1 \right]}
\newcommand{\expectedtwo}[2]{{\mathbb E}_{#1} \left[ #2 \right]}
\newcommand{\ind}[1]{\mathbbm{1}\left[ #1 \right]}
\newcommand{\var}[1]{\text{Var}\left[ #1 \right]}
\newcommand{\ee}{\expected}
\newcommand{\eetwo}{\expectedtwo}

\newcommand{\rb}[1]{\left( #1 \right)}
\newcommand{\cb}[1]{\left\{ #1 \right\}}

\newcommand{\eqdef}{\stackrel{\text{\tiny\rm def}}{=}}

\newcommand{\oX}{\overline{X}}

\newcommand{\cG}{\mathcal{G}}
\newcommand{\cE}{\mathcal{E}}

\newcommand{\tO}{\tilde{O}}
\newcommand{\Ber}{\text{Ber}}

\newcommand{\true}{\textsc{true}\xspace}
\newcommand{\false}{\textsc{false}\xspace}

\newcommand{\wh}[1]{\widehat{#1}}

\newcommand{\enum}[2]{\left|\left\{#1\middle|#2\right\}\right|}

\newcommand{\Zsum}{Z^{\Sigma}}
\newcommand{\hV}{\wh{V}}
\newcommand{\hM}{\wh{M}}

\newcommand{\hL}{\wh{L}}

\newcommand{\disclaimer}{For sufficiently small $\delta>0$ and large enough $c$ the following holds. }

% Algorithm definitions
\newcommand{\AlgVTest}[1]{\textsc{Level-$#1$-Test}\xspace}
\newcommand{\LCAAlgVTest}[1]{\textsc{LCA-Level-$#1$-Test}\xspace}
\newcommand{\AlgETest}{\textsc{Edge-Level-Test}\xspace}
\newcommand{\LCAAlgETest}{\textsc{LCA-Edge-Level-Test}\xspace}
\newcommand{\AlgSample}{\textsc{Sample}\xspace}
\newcommand{\AlgIID}{\textsc{IID-Peeling}\xspace}
\newcommand{\AlgGlobal}{\textsc{Global-Peeling}\xspace}
\newcommand{\OracleEdge}{\textsc{Oracle-Edge}\xspace}
\newcommand{\OracleVertex}{\textsc{Oracle-Vertex}\xspace}
\newcommand{\MatchingCandidate}{\textsc{Matching-Candidate}\xspace}
\newcommand{\YYIMatching}{\textsc{YYI-Maximal-Matching}\xspace}
\algnewcommand\myand{\textbf{and} }
\algnewcommand\myor{\textbf{or} }

\newcommand{\mm}[1]{\text{MM}\left(#1\right)}
\newcommand{\vc}[1]{\text{VC}\left(#1\right)}

\newcommand{\Xlarge}{X_{\text{large}}}
\newcommand{\jstar}{j^{\star}}

\newcommand{\bbR}{\mathbb{R}}

\newcommand{\tvd}[2]{\left\lVert#1-#2\right\rVert_\text{TV}}
\newcommand{\DNo}{\mathcal D^\text{NO}}
\newcommand{\DYes}{\mathcal D^\text{YES}}

 \newtheorem{Remark}{Remark}

%\newcommand{\Jakab}[1]{{\bf \color{blue} Jakab: #1}}
%\newcommand{\Jakab}[1]{}

% Complex \xxx for making notes of things to do.  Use \xxx{...} for general
% notes, and \xxx[who]{...} if you want to blame someone in particular.
% Puts text in brackets and in bold font, and normally adds a marginpar
% with the text ``xxx'' so that it is easy to find.  On the other hand, if
% the comment is in a minipage, figure, or caption, the xxx goes in the text,
% because marginpars are not possible in these situations.
{\makeatletter
 \gdef\xxxmark{%
   \expandafter\ifx\csname @mpargs\endcsname\relax % in minipage?
     \expandafter\ifx\csname @captype\endcsname\relax % in figure/caption?
       \marginpar{xxx}% not in a caption or minipage, can use marginpar
     \else
       xxx % notice trailing space
     \fi
   \else
     xxx % notice trailing space-
   \fi}
 \gdef\xxx{\@ifnextchar[\xxx@lab\xxx@nolab}
 \long\gdef\xxx@lab[#1]#2{{\bf [\xxxmark #2 ---{\sc #1}]}}
 \long\gdef\xxx@nolab#1{{\bf [\xxxmark #1]}}
 % This turns them off:
 \long\gdef\xxx@lab[#1]#2{}\long\gdef\xxx@nolab#1{}%
}

\title{Space Efficient Approximation to Maximum Matching Size from Uniform Edge Samples}
\author{Michael Kapralov\\EPFL \and Slobodan Mitrovi\'c\\MIT \and Ashkan Norouzi-Fard\\Google Zurich \and Jakab Tardos \\EPFL}
\date{}

\pdfinfo{%
  /Title    ()
  /Author   ()
  /Creator  ()
  /Producer ()
  /Subject  ()
  /Keywords ()
}

\begin{document} 
\maketitle
\begin{abstract}
%\stodo{Should we use $\Theta(1)$ instead of "`constant factor"'?}
Given a source of iid samples of edges of an input graph $G$ with $n$ vertices and $m$ edges, how many samples does one need to compute a constant factor approximation to the maximum matching size in $G$? Moreover, is it possible to obtain such an estimate in a small amount of space? We show that, on the one hand, this problem cannot be solved using a nontrivially sublinear (in $m$) number of samples: $m^{1-o(1)}$ samples are needed. On the other hand, a surprisingly space efficient algorithm for processing the samples exists: $O(\log^2 n)$ bits of space suffice to compute an estimate.

Our main technical tool is a new peeling type algorithm for matching that we simulate using a  recursive sampling process that crucially ensures that local neighborhood information from `dense' regions of the graph is provided at appropriately higher sampling rates. We show that a delicate balance between exploration depth and sampling rate allows our simulation to not lose precision over a logarithmic number of levels of recursion and achieve a constant factor approximation. The previous best result on matching size estimation from random samples was a $\log^{O(1)} n$ approximation [Kapralov et al'14], which completely avoided such delicate trade-offs due to the approximation factor being much larger than exploration depth.

Our algorithm also yields a constant factor approximate local computation algorithm (LCA) for matching with $O(d\log n)$ exploration starting from any vertex. Previous approaches were based on local simulations of randomized greedy, which take $O(d)$ time {\em in expectation over the starting vertex or edge} (Yoshida et al'09, Onak et al'12), and could not achieve a better than $d^2$ runtime. Interestingly, we also show that unlike our algorithm, the local simulation of randomized greedy that is the basis of the most efficient prior results does take $\wt{\Omega}(d^2)\gg O(d\log n)$ time for a worst case edge even for $d=\exp(\Theta(\sqrt{\log n}))$.
\end{abstract}

\newpage
\tableofcontents
\newpage

%!TEX root = ./000-main.tex
\section{Introduction}

\jakab{permutation stream is not mentioned in the abstract.}

Large datasets are prevalent in modern data analysis, and processing them requires algorithms with a memory footprint much smaller than the size of the input, i.e. sublinear space. The streaming model of computation, which captures this setting, has received a lot of attention in the literature recently. In this model the edges of the graph are presented to the algorithm as a stream in some order. It has recently been shown that randomly ordered streams allow for surprisingly space efficient estimation of graph parameters by nontrivial memory vs sample complexity tradeoffs (see, e.g.~\cite{kks2014,MonemizadehMPS17,PengS18} for approximating matching size and  other graph properties such as number of connected components, weight of MST and independent set size). Memory vs sample complexity tradeoffs for learning problems have also recently received a lot of attention in literature~\cite{Raz16,Raz17,KolRT17,GargRT18,BeameGY18}. In this paper we study the following question:

\vspace{0.05in}
\fbox{
\parbox{0.9\textwidth}{
\begin{center}
How many iid samples of edges of a graph $G$ are necessary and sufficient for \\ estimating the size of the maximum matching in $G$ to within a constant factor? \\ Can such an estimate be computed using small space?
\end{center}
}
}
\vspace{0.05in}

We give nearly optimal bounds for both questions, developing a collection of new techniques for efficient simulation of matching algorithms by random sampling. Our main result is

\begin{restatable}{theorem}{theoremupperbound}\label{theorem:upper-bound}
There exists an algorithm (\cref{alg2}) that, given access to iid edge-samples 
%from the edge set
 of a graph $G=(V,E)$ with $n$ vertices and $m$ edges produces a constant factor approximation to maximum matching size in $G$ using $O(\log^2 n)$ bits of memory and at most $m$ samples with probability at least $4/5$.

The sample complexity of \cref{alg2} is essentially optimal: for every constant $C$, every $m$ between $n^{1+o(1)}$ and $\Omega(n^2)$ it is information theoretically impossible to compute a $C$-approximation to maximum matching size in a graph with high constant probability using fewer than $m^{1-o(1)}$ iid samples from the edge set of $G$, even if the algorithm is not space bounded.
\end{restatable}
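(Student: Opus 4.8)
The statement is really two independent claims — an algorithmic upper bound and an information-theoretic lower bound — so I would prove them separately, reusing essentially nothing between the two.

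\textbf{Upper bound.} The plan is to first isolate a purely combinatorial ``peeling'' characterization of $\mm{G}$ up to a constant, and only then worry about sampling and space. \emph{Step 1 (combinatorial core):} organize $G$ into $L=O(\log n)$ density levels by repeatedly peeling a maximal matching $M_i$ off the subgraph induced by the vertices whose current degree exceeds a geometrically decreasing threshold, and deleting those matched vertices; since the $M_i$ live on disjoint vertex sets, $\sum_i|M_i|\le\mm{G}$ trivially, and I would prove the reverse inequality $\sum_i|M_i|\ge\Omega(\mm{G})$ by charging a fixed maximum matching against the levels, using that a maximal matching is a $2$-approximation inside each peeled dense region and that the level thresholds partition the edge set appropriately. \emph{Step 2 (recursive local estimator):} each $|M_i|$ is determined by a local process, so estimate $|M_i|/n$ by drawing a uniform vertex (realized from a uniform edge endpoint), deciding via local exploration whether it ends up in $M_i$, and averaging; the membership test at level $i$ recurses into the $O(\Delta_i)$ relevant neighbours at level $i-1$, giving recursion depth $L=O(\log n)$ and $O(d\log n)$ exploration from any start. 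The crucial point is that when the process reveals a vertex to sit in a \emph{dense} region, its local neighbourhood must be explored at a \emph{higher} sampling rate (rejection/importance sampling that upweights locally rare edges), because that local estimate feeds multiplicatively into the final count; with the right rate schedule the per-level multiplicative error is $1+o(1/\log n)$, so compounding over $L$ levels it stays $\Theta(1)$ — this is the ``delicate trade-off'' the abstract refers to — while the total number of samples consumed stays below $m$ with high probability. \emph{Step 3 (space):} the LCA-style recursion has depth $O(\log n)$, and each stack frame stores $O(1)$ quantities of $O(\log n)$ bits (a vertex id, a level index, a sample counter, a running estimate), for $O(\log^2 n)$ bits total; randomness is regenerated rather than stored, and a constant number of independent repetitions boosts the success probability to $4/5$ without changing the asymptotics.

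\textbf{Lower bound.} Fix the constant $C$ and an $m\in[n^{1+o(1)},\Omega(n^2)]$. I would use the two-point (Le Cam) method: construct distributions $\DYes,\DNo$ over $n$-vertex graphs with $\Theta(m)$ edges such that $\mm{G}\ge\rho$ almost surely under $\DYes$ while $\mm{G}\le\rho/C$ almost surely under $\DNo$ (the latter via a small vertex cover), and then show that for $s=m^{1-o(1)}$ the laws of an iid $s$-edge sample satisfy
\[
\tvd{\mathcal L^{(s)}_{\DYes}}{\mathcal L^{(s)}_{\DNo}}=o(1),
\]
which forbids any algorithm, space-bounded or not, from distinguishing with high constant probability. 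After randomly relabelling vertices, the $s$-sample is a random sub-multigraph on at most $2s$ vertices, and I would make the hard instances \emph{hierarchical}: a rooted tree of $\Theta(\log n/\log\log n)$ scales of blow-up gadgets, where at each scale a block is either a balanced union of small bicliques (large local matching) or one lopsided biclique (small local vertex cover), with the blow-up factors chosen so that the total edge count and the marginal law of a single sampled edge agree between the two ensembles and distinguishing requires witnessing $\omega(1)$ sampled edges inside a common block at some scale. A birthday/collision estimate then bounds the probability of such a coincidence by $\mathrm{poly}(s)/m^{1-o(1)}$ at every scale simultaneously; summing over the $\Theta(\log n/\log\log n)$ scales keeps it $o(1)$ provided $s\le m^{1-\omega(1/\log\log n)}=m^{1-o(1)}$, and tuning the matching-to-cover ratio per scale makes $\rho/(\rho/C)>C$.

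\textbf{Main obstacle.} For the upper bound the crux is Step 2: proving that the rate-boosted recursive estimator has per-level error $1+o(1/\log n)$ — so that it does \emph{not} degrade over $\Theta(\log n)$ levels — while simultaneously keeping total samples below $m$ and space $O(\log^2 n)$; the coupled constraints among exploration depth, sampling rate, and variance are exactly what makes this delicate, and it is the step where the $\log^{O(1)}$-approximation of prior work breaks. For the lower bound the obstacle is engineering the hierarchical instance so that the few-edge sample marginals are matched at \emph{every} scale at once: a two-scale construction only yields an $m^{1/2-o(1)}$ or $m^{1-\Omega(1)}$ bound, and reaching $m^{1-o(1)}$ is precisely what forces the super-constant number of carefully balanced scales.
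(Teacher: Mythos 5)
Your high-level architecture (peel into $O(\log n)$ levels, simulate locally with level-dependent sampling rates, prove the lower bound via indistinguishable gadget pairs) matches the paper's, but both halves contain gaps at exactly the points you flag as the crux. For the upper bound, the claim that the rate schedule can be tuned so that the \emph{per-level} multiplicative error is $1+o(1/\log n)$ (and then compounded) is the step that fails: driving each level's error below $1/\log n$ requires logarithmic oversampling of each neighborhood, which is the route of \cite{kks2014} and blows either the sample budget past $m$ or the approximation factor past a constant. The paper's tests have \emph{constant} per-level error probability; correctness is instead an aggregate statement, proved via (i) the Oversampling Lemma (\cref{lemma:oversampling-lemma}), which shows that averaging $c$ iid copies of the level-$j$ statistic halves the probability of a spurious early failure, so the early-failure probabilities $\alpha_j$ decay geometrically and $\sum_j(\alpha_j+\beta_j)\le 2\gamma$ without any per-level $o(1/\log n)$ bound, and (ii) a quadratic tail bound $\prob{\wh M(v)\ge x}=O(\ee{\wh M(v)}/x^2)$ that lets one discard the constant fraction of weight sitting on over-classified vertices. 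Separately, your choice of \emph{maximal matchings} $M_i$ as the per-level object makes the membership test a local simulation of randomized greedy, which \cref{thm:yoshida-lb} shows costs $\wt\Omega(d^2)$ queries for a worst-case edge; the paper deliberately replaces integral matchings by fractional weight accumulation with a threshold test, which is what makes the level test both cheap (via the early-stopping rule, \cref{lemma:single-test-sample-complexity}) and robust.

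For the lower bound, matching the total edge count and the marginal law of a \emph{single} sampled edge between the YES and NO ensembles is far from sufficient: after random relabelling, the algorithm observes the isomorphism type of the sampled sub-multigraph inside each block, so the two gadgets must have identical counts of \emph{every} subgraph with up to $k$ edges (\cref{lower-main}), where $k=2/\epsilon$ must be a large constant to reach $m^{1-\epsilon}$. Your biclique gadgets are already distinguishable from two or three co-block edges (e.g.\ the probability that two sampled edges share an endpoint differs between a balanced union of small bicliques and one lopsided biclique), and no number of scales repairs this. Constructing gadget pairs whose $k$-edge subgraph statistics agree exactly while their matching sizes differ by a factor $C$ is the actual technical content of the lower bound — the paper does it by a recursive degree-padding construction giving identical $k$-level degrees (\cref{sec:k-level}) followed by a high-girth Cayley-graph lift (\cref{thm:lifting}) that upgrades equal $k$-level degrees to isomorphic $k$-hop neighborhoods — and this ingredient is absent from your proposal. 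The multi-scale hierarchy is also unnecessary: a single scale of $m^{o(1)}$-size gadgets plus a padding clique and a birthday bound suffices once the gadgets themselves are $k$-edge indistinguishable.
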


The proof of the upper bound part in \cref{theorem:upper-bound} is given in \cref{sec:sample-complexity} (more precisely, \cref{sec:iid-sc}). The proof of the lower bound part of \cref{theorem:upper-bound} follows from \cref{thm:main-lower-bound} in \cref{sec:lower-bound}.
\if 0
\begin{remark}
We note that a lower bound of $\sqrt{n}$ for the number of iid uniform edge samples needed to approximate the size of the matching readily follows by noting that with $o(\sqrt{n})$ samples one cannot even distinguish between $n/10$ stars with 10 petals each, say, and $n/20$ stars with $20$ petals each, since one needs at least two sampled edges to land in at least one star. The main contribution of the lower bound above consists of the nearly linear lower bound for almost any prescribed number of edges.
\end{remark}
\fi

The core algorithmic tool underlying \cref{theorem:upper-bound} is a general method for implementing peeling type algorithms efficiently using sampling.  In particular, our approach almost directly yields a constant factor approximate local computation algorithm (LCA) for maximum matching in a graph $G$ with degrees bounded by $d$ using $O(d\log n)$ queries and $O(\log d)$ exploration depth, whose analysis is deferred to \cref{sec:LCA}.
\begin{restatable}{theorem}{thmlca}\label{thm:LCA}
	Let $G$ be a graph with $n$ vertices and maximum degree $d$. Then there exists a random matching $M$, such that $\ee{|M|}=\Theta(\text{MM}(G))$, and an algorithm that with high probability:
	\begin{itemize}
		\item Given an edge $e$ of $G$, the algorithm reports whether $e$ is in $M$ or not by using $O(d \log{n})$ queries.
		\item Given a vertex $v$ of $G$, the algorithm reports whether $v$ is in $M$ or not by using $O(d \log{n})$ queries.
	\end{itemize}
	Moreover, this algorithm can be executed by using $O(d \log^3{n})$ bits of memory.
\end{restatable}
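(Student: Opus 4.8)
The plan is to recognize that the matching promised in the statement is exactly the one our peeling algorithm implicitly constructs, and that membership in it can be decided by a shallow, essentially non-branching recursion whose total footprint is $O(d\log n)$ vertices. Concretely, let $M$ be the matching produced by the peeling process underlying \cref{theorem:upper-bound} (cf.\ \AlgGlobal), but run directly on $G$: given neighbor-list and degree queries there is no need to reconstruct bounded local neighborhoods from iid samples, so every such reconstruction is replaced by an explicit batch of $O(d)$ queries. Only the \emph{access mode} changes, not the underlying random process, so $\ee{|M|}=\Theta(\mm{G})$ is inherited verbatim from the analysis of \cref{theorem:upper-bound}; moreover, for a graph of maximum degree $d$ the number of peeling levels may be taken to be $L=O(\log d)$, which is what ultimately gives the $O(\log d)$ exploration depth.

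For the oracle, recall that the peeling assigns each vertex and edge an independent random rank, and that the ``level-$i$ status'' of a vertex $v$ (active/peeled, matched/unmatched) depends only on the ranks of $v$ and of its $\le d$ neighbors together with the level-$(i-1)$ statuses of those neighbors and of the incident edges. This immediately yields recursive routines $\LCAAlgVTest{i}$ and $\LCAAlgETest$, and hence $\OracleVertex$ and $\OracleEdge$: to resolve $v$ at the top level one enumerates $N(v)$ (at most $d$ queries), recurses one level down, and combines. The naive recursion tree has size $d^{\Theta(L)}$, which is useless; the crux of the proof is a structural lemma showing that, thanks to the design of the peeling (as opposed to a single global greedy ranking, whose local simulation provably requires $\wt\Omega(d^2)$ probes for a worst-case edge), resolving $v$ in fact reduces level by level to a short \emph{chain} of sub-problems --- with only $O(d)$ local work per level and without the exploration ever doubling back --- so the explored region stays within radius $O(\log d)$ and the total query count is $O(dL)=O(d\log n)$. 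This is precisely the LCA incarnation of the exploration-depth-versus-sampling-rate balance behind \cref{theorem:upper-bound}.

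By construction the oracle returns the true level-$L$ status, hence true membership in $M$, whenever the recursion halts within $L$ levels; we truncate it there and declare any unresolved vertex unmatched, which by the same depth bound already established for \cref{theorem:upper-bound} happens with probability $n^{-\Omega(1)}$ for any fixed query, so a union bound over the $O(d\log n)$ probed vertices yields the high-probability guarantee. The ranks are generated from a stored polylogarithmic-bit seed (limited independence suffices, since each relevant event depends on only polylogarithmically many ranks), so repeated queries are answered consistently, i.e.\ with respect to one fixed matching $M$ --- which is what makes this a valid LCA. For the memory bound, the recursion stack has depth $O(\log n)$, and each frame stores $O(d)$ candidate neighbors with their $O(\log n)$-bit ranks and $O(1)$-bit statuses, plus a small amount of per-level bookkeeping; accounting also for the (limited-independence) rank generator, the workspace totals $O(d\log^3 n)$ bits.

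The one genuinely nontrivial ingredient is the structural lemma of the second paragraph: bounding the recursion by $O(d\log n)$ rather than the naive $d^{\Theta(\log n)}$, by arguing that the local exploration induced by the peeling is essentially a length-$O(\log n)$ chain rather than a depth-$O(\log n)$ branching tree. Everything else --- the distribution of $M$, the correctness of the recursive tests, and the probability and memory accounting --- is bookkeeping layered on top of the peeling analysis already developed for \cref{theorem:upper-bound}, and I expect it to go through with only minor adaptation to the query model.
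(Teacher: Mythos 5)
There are two genuine gaps here. First, the peeling process does not produce a matching --- it produces a \emph{fractional} pseudo-matching $\wh M$ (the weights returned by \AlgETest), so there is no object $M$ for your oracle to report membership in until you say how to round. The paper's construction marks each edge $e$ independently with probability $\LCAAlgETest(e)/(10\lambda)$ and puts $e$ into $M$ iff it is the \emph{only} marked edge in its one-hop neighborhood; proving $\ee{|M|}=\Theta(\mm G)$ for this rounded matching then requires the heavy-vertex bound (\cref{corollary:bad_vertices}) to show that discarding edges incident to vertices of fractional weight $\ge\lambda$, and losing edges to collisions, costs only a constant factor. Your proposal is silent on all of this, and it is not ``bookkeeping'': the $O(d\log n)$ query bound for \OracleEdge is itself driven by the rounding (the oracle tests incident edges until one is marked; since each test of weight $w$ costs $O(wd)$ queries and the total marked-indicator mass is $O(\log n)$ w.h.p.\ by Chernoff, the cost is $O(d\log n)$ --- this is where the $\log n$ comes from, not from the number of levels).

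Second, your ``structural lemma'' --- that the recursion collapses to a non-branching chain with $O(d)$ work per level --- is not the right mechanism and the accounting $O(d)\times O(\log d)$ does not even produce $O(d\log n)$. The recursion in \LCAAlgVTest{j} genuinely branches; what tames it is the early-stopping counter $S$: every recursive call to a level-$i$ subtest is immediately preceded by adding $c^{i-1-j}$ to $S$, and the test aborts once $S\ge\delta$, so the weighted count of recursive calls satisfies $\sum_i c^{i-1-j}\alpha_i\le\delta$ \emph{deterministically}, giving $\tau_j\le 2c^{j-1}$ by induction (\cref{lemma:LCA-single-test-sample-complexity}) and hence $O(M_e\cdot d)$ queries per edge test. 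Relatedly, no truncation at depth $L$ with an $n^{-\Omega(1)}$ failure event is needed or present in the paper --- the query bound holds with probability one for each test. Your remarks on consistency via limited independence and a PRG, and the $O(d\log^3 n)$ memory figure, are in the right spirit, but without the rounding step and the counter-based charging argument the proof does not go through.
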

We note that the most efficient LCA's for matching \cite{levi2017local} require $d^4\log^{O(1)} n$ exploration to achieve a constant factor approximation, and are based on the idea of simulating the randomized greedy algorithm locally. 
%(see \cref{alg:Yoshida} in \cref{sec:worst-case-greedy}). 
Indeed, this runtime complexity follows from the beautiful result of Yoshida et al.~\cite{yoshida2009improved} or Onak et al.~\cite{onak2012near} that shows the size of the query tree of the randomized greedy is $O(d)$ {\em in expectation over the starting edge}. Applying a Markov bound to discard vertices/edges on which the exploration takes too long leads to the desired complexity. Moreover, in a degree $d$ bounded graph matching size could be as small as $n/d$, with only a $1/d$ fraction of vertices and edges involved in the matching. Therefore, a multiplicative (as opposed to multiplicative-additive) constant factor approximation based on average case results of~\cite{yoshida2009improved,onak2012near} must use a Markov bound with a loss of at least a factor of $d$ in the size of the query tree with respect to the average, and hence cannot lead to a better than $O(d^2)$ runtime overall. At the same time, \cref{thm:LCA} yields the near-optimal runtime of $O(d\log n)$, going well beyond what is achievable with the above mentioned average case results.

At this point it is natural to wonder if the average case analysis of~\cite{yoshida2009improved,onak2012near} can be improved to show that the size of the query tree is $O(d)$ in expectation {\em for any given edge} as opposed to for a random one. Surprisingly, we show in \cref{sec:worst-case-greedy} that this is not possible:
% in \cref{sec:worst-case-greedy} that this,  is not possible:

\begin{restatable}{theorem}{thmyoshidalb}\label{thm:yoshida-lb}
There exists an absolute constant $b>0$ such that for every $n$, $d\in[5,\exp(b\sqrt{\log n})]$ and $\epsilon\in[1/d,1/2]$ there exists a graph $G$ with $n$ vertices and maximum degree $d+1$, and an edge $e$ such that running $\YYIMatching(e,\pi)$ from \cref{alg:Yoshida} results in an exploration tree of size at least $$\frac18\cdot \e\cdot\left(\frac d2\right)^{2-\epsilon},$$
in expectation.
\end{restatable}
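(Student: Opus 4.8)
The plan is to build, for each admissible triple $(n,d,\epsilon)$, an explicit recursively-layered graph $G$ on which the local simulation of randomized greedy, started from a single carefully chosen edge $e$, recurses through $\approx (d/2)^{2-\epsilon}$ edges in expectation. The construction will use $d^{\Theta(\log d)}$ vertices, which is exactly why we are restricted to $d\le\exp(b\sqrt{\log n})$: this is the condition for $G$ to fit inside $n$ vertices, the remainder left isolated (which does not affect the run of the algorithm). The one feature of \cref{alg:Yoshida} that must be defeated is its early-termination rule --- when the oracle examines an edge it scans the lower-rank neighbours in increasing order of rank and halts the scan the moment it finds one in the matching --- and this rule is precisely what makes the query tree have size $O(d)$ in expectation over a uniformly random starting edge. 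The goal is to choose $e$ and $G$ so that the scan at every recursion level is neither cut off immediately nor allowed to fan out over all $\approx d$ neighbours, but instead lets the query-tree size grow by a constant factor per level over $\Theta(\log d)$ levels.

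First I would define a family of gadgets $\Gamma_0,\Gamma_1,\dots$ (in two variants according to whether the root edge should be in or out of the matching), each carrying a distinguished root edge whose matching status is controlled and whose verification via the oracle is expensive. The base gadget is a single edge pendant on a star of $\Theta(d)$ leaves, so that with probability $1-O(1/d)$ this edge is out of the matching, with its out-of-matching status witnessed by the minimum-rank star edge --- a fact the oracle discovers after only $O(1)$ work. This ``robust blocker'' is the unavoidable cost driver: because an edge can be forced out of the matching with probability $1-O(1/d)$ only by attaching $\Omega(d)$ pendant edges to it, each level of the recursion inflates the gadget size by a factor $d^{\Omega(1)}$ while contributing only an $O(1)$ factor to the query-tree size, so that iterating $\Theta(\log d)$ levels yields size $d^{2-\epsilon}$ from a gadget on $d^{\Theta(\log d)}$ vertices. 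For $k\ge1$, $\Gamma_k$ is a hub of degree $d+1$ whose root's low-rank neighbourhood contains several robustly-blocked copies of $\Gamma_{k-1}$, placed at ranks that the scan traverses before a single ``stopper'' copy of $\Gamma_{k-1}$ placed at a rank just below the root's, which is in the matching exactly when all of its own low-rank neighbours are blocked out --- an event of constant probability. The edge $e$ of the statement is the root of $\Gamma_L$ with $L=\Theta(\log d)$.

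The quantitative heart is a recursive estimate $C_k\ge\mu\,C_{k-1}$, where $C_k$ is $\ee{|T|}$ for the exploration tree $T$ rooted at the root of $\Gamma_k$, conditioned on the root's rank lying in a designated ``good window'' near the top of the local rank scale, and $\mu>1$ is a constant depending only on $\epsilon$. On that event the scan at the root traverses the blocked $\Gamma_{k-1}$-copies (whose own verification is itself recursively expensive, since each is blocked in a way whose witness lies below a nested layer of expensive sub-gadgets) and then the stopper $\Gamma_{k-1}$-copy (whose verification as in-the-matching forces scanning all its blocked low-rank neighbours, again recursively expensive), so $C_k\ge\mu\,C_{k-1}$. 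Iterating over the $L=\Theta(\log d)$ levels gives $C_L\ge\mu^{L}$, and choosing $L$ and the good-window widths as functions of $\epsilon$ yields $C_L\ge\frac18\,\epsilon\,(d/2)^{2-\epsilon}$: the leading $\epsilon$ accounts for the probability that $\pi(e)$ lands in the top-level good window, and the factor $d^{-\epsilon}$ (relative to the idealized $(d/2)^2$) absorbs the accumulated per-level slack from blockers that occasionally fail and good windows that are occasionally missed, with $\epsilon$ trading off the strength of the bound against the probability of the favourable event and the size of $G$.

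I expect the main obstacle to be making the conditioning in this recursion rigorous: the events ``the root of a $\Gamma_{k-1}$-copy lies in its good window'', ``blocker $j$ succeeds'', and ``the stopper is in the matching'' are not independent, and conditioning on the good window at level $k$ distorts the conditional law of $\pi$ on the sub-gadgets. I would control this by choosing the good windows at successive levels to be pairwise-disjoint subintervals of $[0,1]$ arranged so that conditioning on the window at one level leaves the conditional law of $\pi$ on the lower-level sub-gadgets a rescaled copy of the unconditioned law; by proving that, on the event that a blocked edge's own rank is in its good window, each robust blocker succeeds with probability $1-O(1/d)$, so a union bound over the blockers scanned before the stopper still leaves constant probability; and by the routine bookkeeping that every hub has degree at most $d+1$, that $|V(G)|=d^{\Theta(\log d)}\le n$ for $b$ sufficiently small, and that the blockers' own verification never dominates $C_k$. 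The delicate point, and the one I would spend the most care on, is arranging the ranks of the nested expensive sub-gadgets below each blocker's witness so that verifying a blocked edge is simultaneously expensive (to feed the recursion) and still occurs with constant probability (to keep the final expectation large) --- this trade-off is what ultimately pins down both the $\Theta(\log d)$ depth and the exponent $2-\epsilon$.
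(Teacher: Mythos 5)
Your overall architecture is genuinely different from the paper's, and its central step is asserted rather than established. The paper does not build a finite recursive gadget with a constant gain per level; it works on the \emph{infinite} $d$-regular tree rooted at an edge (\cref{def:hd}), where self-similarity turns the recursion for $p_e(\lambda)=\Pr[e\in M\mid r(e)=\lambda]$ and $t_e(\lambda)=\ee{|T_e|\mid r(e)=\lambda}$ into integral equations with exact solutions $p(\lambda)=x(\lambda)^{d/(1-d)}$ and $t(\lambda)=x(\lambda)^{d/(d-1)}$ for $x(\lambda)=1+(d-1)\lambda$ (\cref{lemma:P&T}); the near-quadratic cost comes from a \emph{single} merging step in which $\e d$ such trees are glued at the root (\cref{def:hde}), and the finite instance is obtained by truncating at depth $\Theta(\log d)$ and controlling the loss with a second-moment bound. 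The $\Theta(\log d)$ depth enters only at the truncation stage, not as a product of per-level gains.

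The gap in your plan is the level-$k$ ``expensive robust blocker'': an edge that is simultaneously out of the matching with probability $1-O(1/d)$ \emph{and} costs $C_{k-1}$ in expectation to verify, inside a graph of maximum degree $d+1$. You construct this only for $k=1$ (the pendant-star blocker), and there it is verified in $O(1)$ queries precisely \emph{because} the minimum-rank star edge is an immediately available in-matching witness --- the early-stopping rule of \cref{alg:Yoshida} halts the scan the moment such a witness is found. The exact computation on the regular tree shows why this is not an accident: there $p(\lambda)\cdot t(\lambda)=1$ identically, i.e.\ an edge whose verification costs $T$ in expectation survives as a matching candidate with probability about $1/T$, so a blocker that is cheap to refute is easy to block and a blocker that is expensive to refute is in the matching with non-negligible probability. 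Concretely, if you make a level-$2$ blocker expensive by giving it $d$ star-blocked children, it lands in the matching with \emph{constant} probability, so a scan over several such blockers at level $3$ terminates after $O(1)$ of them and the recursion $C_k\ge\mu C_{k-1}$ fails already at the second level; repairing this by inserting a high-rank in-matching witness behind the expensive children requires conditioning on a rank ordering whose probability decays with the number of children, and quantifying that trade-off --- cost of the witness placement versus gain from the expensive children, level by level, under the degree budget --- is exactly the content of the theorem. The paper resolves it in closed form (the exponent $2-\e$ is literally the output of that computation, via the $x^{(\e d-1)/(1-d)}\approx x^{-\e}$ survival probability of the $\e d$ merged subtrees); your proposal would need to carry out an equivalent discrete analysis before the recursive inequality can be claimed.
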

 In \cref{sec:coupling} we prove that our algorithm is robust to correlations that result from replacing the iid stream of edge-samples with a random permutation stream. Formally, we prove 

\begin{restatable}{theorem}{thmpi}\label{thm:alg-pi}
	There exists an algorithm that given access to a random permutation edge stream of a graph $G=(V,E)$, with $n$ vertices and $m\ge3n$ edges, produces an $O(\log^2 n)$ factor approximation to maximum matching size in $G$ using $O(\log^2n)$ bits of memory in a single pass over the stream with probability at least $3/4$.
\end{restatable}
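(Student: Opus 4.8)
The plan is to reduce the random-permutation-stream setting to the iid-sample setting already handled by \cref{theorem:upper-bound} (and more precisely by the algorithm \cref{alg2} and its analysis in \cref{sec:sample-complexity}), paying only a small cost in the approximation factor and a constant loss in success probability. The key observation is that a prefix of a uniformly random permutation of the $m$ edges is distributed as a uniformly random \emph{subset} of the appropriate size, whereas the recursive sampling procedure underlying \cref{alg2} only ever inspects at most $m$ edge samples and only ever uses them to test local membership events — events whose probabilities are continuous and slowly varying in the number of samples drawn. Concretely, I would first fix a parameter $s = m/2$ (any constant fraction below $m$ works, since we are promised $m \ge 3n$ and the iid algorithm succeeds with at most $m$ samples), and let $S$ be the first $s$ edges of the random permutation stream. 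Then $S$ is a uniformly random $s$-subset of $E$, which can be coupled with a sequence of $s$ iid uniform edge samples up to the (well-understood) birthday-paradox collision events: the total variation distance between ``$s$ iid samples with duplicates removed'' and ``a uniform $s'$-subset for a random $s' \le s$'' is small, and conditioned on no collision the two are identical.

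The second step is to argue that our matching-size estimator is insensitive to the with/without-replacement distinction. Here I would invoke the structure of \cref{alg2}: its estimate is a function of the outcomes of a bounded number of local neighborhood explorations (the \AlgVTest{} / \AlgETest{} recursion), each of which queries whether certain specific edges appear among the samples. Since each individual edge is present in a uniform $s$-subset with probability $s/m$ and present among $s$ iid samples with probability $1-(1-1/m)^s$, and these differ by only $O(s^2/m^2) = O(1)$ per edge — and more importantly the \emph{joint} law over the $O(\log^2 n)$ edges that a single exploration touches has TV distance $o(1)$ between the two models by a standard negative-association / hypergeometric-vs-binomial comparison — the distribution of the full transcript of the algorithm changes by at most $o(1)$ in total variation when we feed it the permutation-prefix $S$ instead of genuine iid samples. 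A union bound over the (polynomially or poly-logarithmically many) explorations the algorithm performs keeps this total at $o(1)$, which we absorb into the constant success probability, bringing it from $4/5$ down to, say, $3/4$. This gives a constant-factor approximation in a single pass using $O(\log^2 n)$ bits; the weaker $O(\log^2 n)$-factor claim in the theorem statement is then more than comfortably met, with the slack presumably reserved in case the coupling is only carried out at a coarser granularity (e.g.\ discretizing the sampling rate into $O(\log n)$ buckets, each contributing an $O(1)$ distortion, and composing these over the $O(\log n)$ recursion levels).

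The main obstacle I anticipate is the correlation control in the recursive sampling process, not the single-exploration coupling. The algorithm of \cref{theorem:upper-bound} crucially adapts its sampling rate to the local density it discovers, so different branches of the recursion are fed sample sets that are \emph{not} independent even in the iid model — and in the permutation model they are additionally negatively correlated through the fixed budget of $m$ edges total. The delicate exploration-depth-versus-sampling-rate balance that \cref{theorem:upper-bound} relies on to avoid losing precision over $\log n$ levels must be shown to degrade gracefully under these correlations; I expect the clean way to do this is to condition on the (low-entropy) ``profile'' of how many samples each recursion level consumed, observe that conditioned on this profile the permutation stream again looks like independent uniform subsets of prescribed sizes at each level, and then re-run the iid analysis level by level. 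The accounting of how the $o(1)$ TV errors and the $O(1)$ per-level distortions compose across the $O(\log n)$ levels — and checking that they only inflate the approximation ratio to $O(\log^2 n)$ rather than something super-polylogarithmic — is the part that will require the most care, and is presumably why the theorem is stated with the weaker guarantee rather than the constant-factor one of \cref{theorem:upper-bound}.
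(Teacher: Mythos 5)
Your reduction has a genuine quantitative gap at its core. You propose to couple a length-$s=m/2$ prefix of the permutation with $s$ iid samples and absorb the discrepancy as an $o(1)$ total variation error. But for $s=\Theta(m)$ these two distributions on transcripts are \emph{far} apart in total variation: $\Theta(m)$ iid draws from a universe of $m$ edges contain repeated edges with probability $1-o(1)$, while a permutation prefix never does, so an algorithm can distinguish them almost surely. The birthday-paradox bound $O(s^2/m^2)$ you invoke only gives $o(1)$ when $s=o(\sqrt m)$, whereas \cref{alg2} consumes $\Theta(m)$ samples and a single top-level \AlgVTest{} already consumes $\Theta(m/c)$ of them (\cref{lemma:single-test-sample-complexity}), not $O(\log^2 n)$ edges as you assume. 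One therefore cannot argue at the level of the whole transcript; one has to argue that the algorithm's \emph{output} is insensitive, test by test. The paper does exactly this: it bounds, by induction on the level $j$ and conditioned on the residual graph $G^t$ (the real source of bias is that already-seen edges cannot reappear, not collisions), the KL divergence between $\AlgVTest{j}$ run on the permutation stream and on the iid stream, obtaining a bound proportional to the number of samples the test consumes times $O(\log^2 n/m)$ (\cref{coupling-main-lemma}). Carrying out this induction requires a further idea you do not have: KL divergence fails the triangle inequality for near-deterministic Bernoulli variables (\cref{notriineq}), which the paper circumvents by padding the tests (\cref{sec:padding}, \cref{kltri}) and controlling the resulting perturbation in total variation (\cref{tvd-lemma}).

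This accounting also explains the $O(\log^2 n)$ approximation factor, which you misattribute to ``coarser granularity of the coupling.'' Since the accumulated divergence is $O(\log^2 n/m)$ per sample consumed, the algorithm must be modified to consume only $O(m/\log^2 n)$ samples in total for the divergence to stay constant; the paper achieves this by truncating \AlgETest to level $J-2\log_c(\log n)$ (\cref{alg:E-test-truncated}), and it is this truncation --- not the coupling --- that degrades the approximation from $O(1)$ to $O(\log^2 n)$ (\cref{log2approx}). Your expectation that a constant-factor guarantee would follow ``comfortably'' is therefore not supported: with the full, untruncated algorithm the divergence bound would be $\Theta(\log^2 n)$, which is useless. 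Your instinct that one should condition on how the recursion unfolds is in the right spirit (the paper conditions on $G^t$ and on the levels/identities of the recursive calls), but as stated your profile-conditioning claim --- that given the per-level sample counts the permutation again looks like independent uniform subsets --- is not correct, since which edges have been consumed, not merely how many, determines the conditional law of the remainder of the stream.
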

\cref{thm:alg-pi} improves upon the previous best known $\log^{O(1)} n$-approximation due to ~\cite{kks2014}, where the power of the logarithm was quite large (we estimate that it is at least $8$). The proof is given in \cref{sec:coupling}.

Let us now provide a brief overview of some of the techniques we use in this paper. 

\paragraph{Our techniques: approximating matching size from iid samples.} As mentioned above, our approach consists of efficiently simulating a simple peeling-type algorithm, using a small amount of space and samples. This approach was previously used in~\cite{kks2014} to achieve a polylogarithmic approximation to maximum matching size in polylogarithmic space. As we show below, major new ideas are needed to go from a polylogarithmic approximation to a constant factor approximation.
%\stodo{Is \cite{kuhn2016local} appropriate to cite for the claim in the next sentence? If yes, we should cite.}
The reason for this is the fact that any matching size approximation algorithm needs to perform very deep (logarithmic depth) exploration in the graph, leading to $\omega(1)$ long chains of dependencies (i.e., recursive calls) in any such simulation. Indeed, both the work of~\cite{kks2014} and our result use a peeling type algorithm that performs a logarithmic number of rounds of peeling as a starting point. The work of~\cite{kks2014} simulated this process by random sampling, oversampling various tests by a logarithmic factor to ensure a high degree of concentration, and losing extra polylogarithmic factors in the approximation to mitigate the effect of this on sample complexity of the recursive tests. Thus, it was crucial for the analysis of~\cite{kks2014} that the approximation factor is much larger than the depth of exploration of the algorithm that is being simulated. In our case such an approach is provably impossible: a constant factor approximation requires access to $\Omega(\log n)$ depth neighborhoods by known lower bounds (see~\cite{kuhn2016local} and \cref{sec:lower-bound}).
%\stodo{Isn't the sampling rate pretty much the same as in your previous paper (the same up to $\log{n}$'s)? But we rather reason about confidence at deeper levels in a different way, which ultimately enables us to use the early stopping rule?}
A method for circumventing this problem is exactly the main contribution of our work  -- we show how to increase sampling rates in deeper explorations of the graph, improving confidence of statistical estimation and thereby avoiding a union bound over the depth, while at the same time keeping the number of samples low. A detailed description of the algorithm and the analysis are presented in \cref{sec:alg}.

%\atodo{removing tight instances for peeling?}
\paragraph{Our techniques: tight sample complexity lower bound and tight instances for peeling algorithms.} As the second result in \cref{theorem:upper-bound} shows, the sample complexity of our algorithm is essentially best possible even among algorithms that are not space constrained.  The lower bound (see \cref{sec:lower-bound}) is based on a construction of two graphs $G$ and $H$ on $n$ vertices such that for a parameter $k$ {\bf (a)} matching size in $G$ is smaller than matching size in $H$ by a factor of $n^{\Omega(1)/k}$ but {\bf (b)} there exists a bijection from vertices of $G$ to vertices of $H$ that preserves $k$-depth neighborhoods up to isomorphism. To the best of our knowledge, this construction is novel. Related constructions have been shown in the literature (e.g. cluster trees of~\cite{kuhn2016local}), but these constructions would not suffice for our lower bound, since they do not provide a property as strong as {\bf (b)} above. 
%\atodo{remove following lines, not related to our techniques, not we explain them later.}
%For example, the construction of~\cite{kuhn2016local} only produces one graph $G$ with a large matching together with two subsets of vertices $S, S'$ of $G$ whose neighborhoods are isomorphic. This suffices for proving strong lower bounds on finding near-optimal matchings in a distributed setting~\cite{kuhn2016local}, but not for our purpose. Indeed, it is crucial for us to have a gap (i.e., two graphs $G$ and $H$) and have the strong indistinguishability property provided by {\bf (b)}. 

Our construction proceeds in two steps. We first construct two graphs $G'$ and $H'$ that have identical $k$-level degrees (see \cref{sec:k-level}). These two graphs are indistinguishable based on $k$-level degrees and their maximum matching size differs by an $n^{\Omega(1/k)}$ factor, but their neighborhoods are not isomorphic due to cycles. $G'$ and $H'$ have $n^{2-O(1/k)}$ edges and provide nearly tight instances for peeling algorithms that we hope may be useful in other contexts. 
%\atodo{we are mentioning it way to much. we mention the main result twice and this 2-3 times till now.}
%We note that a similar step is used in the construction of cluster trees of~\cite{kuhn2016local}, but, as mentioned above, these graphs provide neither the indistinguishability property for all vertices nor a gap in matching size. Furthermore, the number of edges in the corresponding instances of~\cite{kuhn2016local} is $\wt{O}(n^{3/2})$, i.e., the graphs do not get denser with large $k$, whereas our construction appears to have the optimal behaviour. 
The second step of our construction is a lifting map (see \cref{thm:lifting} in \cref{sec:lower-bound}) that relies on high girth Cayley graphs and allows us to convert graphs with identical $k$-level vertex degrees to graphs with isomorphic depth-$k$ neighborhoods without changing matching size by much. The details are provided in \cref{ssec:lift}.

Finally, the proof of the sampling lower bound proceeds as follows. To rule out factor $C$ approximation in $m^{1-o(1)}$ space, take a pair of constant (rather, $m^{o(1)}$) size graphs $G$ and $H$ such that {\bf (a)} matching size in $G$ is smaller than matching size in $H$ by a factor of $C$ and {\bf (b)} for some large $k$ one has that $k$-depth neighborhoods in $G$ are isomorphic to $k$-depth neighborhoods in $H$. Then the actual hard input distribution consists of a large number of disjoint copies of $G$ in the {\bf NO} case and a large number of copies of $H$ in the {\bf YES}  case, possibly with a small disjoint clique added in both cases to increase the number of edges appropriately. Since the vertices are assigned uniformly random labels in both cases, the only way to distinguish between the {\bf YES} and the {\bf NO} case is to ensure that at least $k$ edge-samples land in one of the small copies of $H$ or $G$. Since $k$ is small, the result follows. The details can be found in \cref{sec:lower-bound}.

\paragraph{Our techniques: random permutation streams.}
% In \cref{sec:coupling} we prove that our algorithm is robust to correlations that result from replacing the iid stream of edge-samples with a random permutation stream. Formally, we prove 

%\begin{restatable}{theorem}{thmpi}\label{thm:alg-pi}
%	For every graph $G=(V, E)$ with $|V|=n$ vertices \cref{alg:pi-peeling} with the setting $c=c'\log^2 n$ for a sufficiently large absolute constant $c'$ outputs a $O(\log^2 n)$-approximation to the size of maximum matching in $G$ using a single pass over a randomly ordered stream of edges of $G$.
%\end{restatable}

\if 0
\cref{thm:alg-pi} improves upon the previous best known $\log^{O(1)} n$-approximation due to ~\cite{kks2014}, where the power of the logarithm was quite large (we estimate that it is at least $10$). 

We also show 
\begin{restatable}{theorem}{thmpi-multipass}\label{thm:alg-pi-multipass}
	For every graph $G=(V, E)$ with $|V|=n$ vertices \cref{alg:??} with the setting $c=C\log^2 n$ for a sufficiently large absolute constant $C$ outputs a $O(\log n)$-approximation to the size of maximum matching in $G$ using $O(\log n)$ passes over independent random reorderings of the stream of edges of $G$.
\end{restatable}

In order to establish \cref{thm:alg-pi,thm:alg-pi-multipass} we exhibit a coupling between the distribution of the state of the algorithm when run on an iid stream and when run on a permutation stream. The coupling techniques used~\cite{kks2014} cannot yield due to the use of a somewhat crude multiplicative-additive proximity measure between distributions, and a loss of multiplicative polylogarithmic factors per exploration level.
Our approach consists bounding the KL divergence between the distribution of the state of the algorithm in the iid and random permutation settings by induction on the level of the tests performed in the algorithm. In order to make this approach work, we develop a restricted version of triangle inequality for KL divergence that may be of independent interest (see \cref{kltri} in \cref{sec:coupling})
\fi
As mentioned above, in \cref{sec:coupling} (see \cref{thm:alg-pi}) we extend our result to a stream of edges in random order which improves upon the previous best known $\log^{O(1)} n$-approximation due to ~\cite{kks2014}, where the power of the logarithm was quite large (we estimate that it is at least $8$). In order to establish \cref{thm:alg-pi} we exhibit a coupling between the distribution of the state of the algorithm when run on an iid stream and when run on a permutation stream. The approach is similar in spirit to that of~\cite{kks2014}, but carrying it out for our algorithm requires several new techniques. Our approach consists of bounding the KL divergence between the distribution of the state of the algorithm in the iid and random permutation settings by induction on the level of the tests performed in the algorithm. In order to make this approach work, we develop a restricted version of triangle inequality for KL divergence that may be of independent interest (see \cref{kltri} in \cref{sec:coupling}). %Again, we note that our techniques here are very different and stronger than those of~\cite{kks2014}.

\subsection{Related Work}
Over the past decade, matchings have been extensively studied in the context of streaming and related settings. The prior work closest to ours is by Kapralov et al.~\cite{kks2014}. They design an algorithm that estimates the maximum matching size up to $\poly \log{n}$ factors by using at most $m$ iid edge-samples. Their algorithm requires $O(\log^2{n})$ bits of memory. As they prove, this algorithm is also applicable to the scenario in which the edges are provided as a random permutation stream. The problem of approximating matching size using $o(n)$ space has received a lot of attention in the literature, including random order streams~\cite{CormodeJMM17,MonemizadehMPS17} and worst case streams~\cite{V18,BuryGMMSVZ19,McGregorV16,ChitnisCEHMMV16,EsfandiariHLMO15,EsfandiariHM16,BuryS15,AssadiKL17}.  The former, i.e., ~\cite{MonemizadehMPS17,CormodeJMM17}, are the closest to our setting since both of these works consider random streams of edges. However, the results mostly apply to bounded degree graphs due to an (at least) exponential dependence of the space on the degree. The latter consider worst case edge arrivals, but operate under a bounded arboricity assumption on the input graph. Very recently constant space algorithms for approximating some graph parameters (such as number of connected components and weight of the minimum spanning tree) from random order streams were obtained in~\cite{PengS18} (see also ~\cite{MonemizadehMPS17}).

An extensive line of work focused on computing approximate maximum matchings by using $\tO(n)$ memory. The standard greedy algorithm provide a $2$ approximation. It is known that no single-pass streaming algorithm (possibly randomized) can achieve better than $1 - 1/e$ approximation while using $\tO(n)$ memory~\cite{kapralov2013better,GoelKK12}. For both unweighted and weighted graphs, better results are known when a stream is randomly ordered. In this scenario, it was shown how to break the barrier of $2$ approximation and obtain a $2 - \eps$ approximation, for some small constant $\eps > 0$~\cite{konrad2012maximum,konrad2018simple,AssadiKLY16,AssadiK17,PazS17,AssadiBBMS19,AssadiB19,GhaffariW19,gamlath2018weighted}. If multiple passes over a stream are allowed, a line of work~\cite{feigenbaum2005graph,eggert2009bipartite,ahn2011linear} culminated in an algorithm that in $O(\poly(1/\eps))$ passes and $\tO(n)$ memory outputs a $(1+\eps)$-approximate maximum matching in bipartite graphs. In the case of general graphs, $(1+\eps)$-approximate weighted matching can be computed in $(1/\eps)^{O(1/\eps)}$ passes and also $\tO(n)$ memory \cite{mcgregor2005finding,epstein2011improved,zelke2012weighted, ahn2011laminar,ahn2011linear}  and \cite{gamlath2018weighted}.

\if 0 
The online matching problem has also received significant attention. A celebrated result Karp-Vazirani-Vazirani~\cite{karp1990optimal} shows how to achieve a $1 - 1/e$ approximation for bipartite graphs that contain perfect matching. In a setting where the vertex-arrivals follow a given distribution, in~\cite{feldman2009online,haeupler2011online,huang2018online} was shown how to obtain approximation guarantee better than $1 - 1/e$.

A closely related to online is stochastic matching. In this problem, the input consists of a graph $G$ and an a parameter $p \in (0, 1)$. In this variant, the goal is to select a subgraph $H$ (potentially in adaptive manner) of $G$ such that: $H$ is small (with maximum degree being a function of $p$); and, the maximum matching among realized edges of $H$ is as close as possible to a realized maximum matching of $G$. In a sequence of results~\cite{blum2015ignorance,assadi2016stochastic,assadi2017stochastic,behnezhad2019stochastic} it was shown how to obtain $H$ by querying $O(\log{(1 / p)} / p)$ many edges per vertex, and achieving $\approx 0.6568$ approximation for unweighted and $0.501$ approximation for weighted graphs. Several authors considered a version of online stochastic matching in which an edge $e$ is realized with probability $p_e$, but if a queried edge is realized then it must be taken as part of the matching~\cite{chen2009approximating,costello2012stochastic,gamlath2019beating} .
\fi

In addition to the LCA results we pointed to above, close to our work are \cite{rubinfeld2011fast,alon2012space,levi2017local,ghaffari2016improved,ghaffari2019sparsifying} who also study the worst-case oracle behavior. In particular, \cite{ghaffari2019sparsifying} showed that there exists an oracle that given an arbitrary chosen vertex $v$ outputs whether $v$ is in some fixed maximal independent set or not by performing $d^{O(\log \log{d})} \cdot \poly \log{n}$ queries. When this oracle is applied to the line graph, then it reports whether a given edge is in a fixed maximal matching or not. We provide further comments on LCA related work in \cref{sec:LCA-related-work}.

%!TEX root = ./000-main.tex
\section{Preliminaries}
\label{sec:preliminaries}

\paragraph{Graphs} In this paper we consider unweighted undirected graphs $G=(V,E)$ where $V$ is the vertex set and $E$ is the edge set of the graph. We denote $|V|$ by $n$ and $|E|$ by $m$. Furthermore, we use $d$ to signify the maximum degree of $G$ or an upper bound on the maximum degree.

\paragraph{Matching} In this paper, we are concerned with estimating the size of a maximum matching. Given a graph $G=(V,E)$, a \emph{matching} $M \subseteq E$ of $G$ is a set of pairwise disjoint edges, i.e., no two edges share a common vertex. If $M$ is a matching of the maximum cardinality, then $M$ is also said to be \emph{a maximum matching}. We use $\mm{G}$ to refer to the cardinality of a maximum matching of $G$. A \emph{fractional matching} $M_f:E \to\mathbb R^+$ is a function assigning weights to the edges such that the summation of weights assigned to the edges connected to each vertex is at most one, i.e., $\sum_{e \in E: v \in e} M_f(e) \leq 1$, for each vertex $v$. The size of a fractional matching (denoted by $|M_f|$) is defined to be the summation of the weights assigned to the edges. Notice that any matching can also be seen as a fractional matching with weights in $\{0,1\}$. It is well-known that 
\[
\mm{G} \leq \max_{\text{fractional matching $M_f$ of $G$}} |M_f| \leq \frac{3}{2} \mm{G},
\]
hence,
\[
\mm{G} = \max_{\text{fractional matching $M_f$ of $G$}} \Theta(|M_f|).
\]
Therefore, to estimate the cardinality of a maximum matching, it suffice to estimates the size of a fractional maximum matching. In this work, we show that the estimate returned by our algorithm is by a constant factor smaller than the size of a fractional maximum matching, which implies that it is also by a constant factor smaller than $\mm{G}$.

\paragraph{Vertex cover} We also lower-bound the estimate returned by our algorithm. To achieve that, we prove that there is a vertex cover in $G$ of the size within a constant factor of the output of our algorithm. Given a graph $G=(V,E)$, a set $C \subseteq V$ is a \emph{vertex cover} if each edge of the graph is incident to at least one vertex of the set $C$. It is folklore that the size of any vertex cover is at least the size of any matching. %This enables to proof the desired lower-bound on the estimate of the size of the matching that our algorithm returns.

\paragraph{Vertex neighborhood} Given a graph $G = (V, E)$, we use $N(v)$ to denote the vertex neighborhood of $v$. That is, $N(v) = \{w\ :\ \{v, w\} \in E \}$.

%\input{300-overview}
%!TEX root = ./000-main.tex
\section{Our Algorithm and Technical Overview}\label{sec:alg}
In this section we present our algorithm for estimating the matching size from at most $m$ iid edge-samples. We begin by providing an algorithm that estimates the matching size while having full access to the graph (see \cref{alg:global}) and then, in \cref{simul-iid-edge}, we show how to simulate \cref{alg:global} on iid edge-samples. We point out that \cref{alg:global} uses $O(m)$ memory, while our simulation uses $O(\log^2{n})$ bits of memory.
%Let us start by describing the offline algorithm that we use. This algorithm also serves as a warm up for the main test based algorithm. %Notice that we the memory that this algorithm uses is linear and later on reduce it to $\Theta(\log^2 n)$.
%\jakab{add explanatory test somewhere explaining the interchangability of n and d}
\subsection{Offline Algorithm}\label{global-algo}

\begin{algorithm}
\caption{Offline peeling algorithm for constructing a constant factor approximate maximum fractional matching and a constant factor approximate minimum vertex cover \label{alg:global}}
\begin{algorithmic}[1]
\Procedure{\AlgGlobal}{$G=(V,E), \delta, c$}\Comment{$\delta$ and $c$ are two constants that we fix later} 
\State $A\gets V$ \Comment{Set of active vertices}
\State $C\gets\emptyset$ \Comment{$C$ is the vertex cover that the algorithm constructs}
\State $M:E\to\mathbb R$ 
\State $M(e)\gets 0\quad \forall e \in E$ \Comment{Initially, the fractional matching for all the edges is zero}
\For{$i=1$ {\bf to} $J+1$} \Comment{Represents the rounds of peeling}
    \For{$e\in E\cap A\times A$} \Comment{Remaining edges}
        \State $M(e) \gets M(e) + c^{i-1}/d$ \Comment{Increases weight on edges}
    \EndFor
    \For{$v\in A$} 
        \If{$M(v)\ge\delta$} \Comment{Recall that $M(v)=\sum_{w\in N(v)}M((v,w))$} %\Comment{If true, the current weight of vertex is high, so we remove it}             
        \State $A \gets A \backslash \{v\}$ \Comment{Remove the vertex}
            \State $C \gets C \cup \{v\}$ \Comment{Add the vertex to the vertex cover}
        \EndIf
    \EndFor
\EndFor
\State \Return $(M,C)$
\EndProcedure
\end{algorithmic}
\end{algorithm}

First we introduce a simple peeling algorithm which we will be simulating locally. This algorithm constructs a fractional matching $M$ and a vertex cover $C$ which are within a constant factor of each other. As noted in \cref{sec:preliminaries}, by duality theory this implies that both $M$ and $C$ are within a constant factor of the optimum. \cref{alg:global} begins with both $M$ and $C$ being empty sets and augments them simultaneously in rounds. When the weight of a vertex\footnote{Remark that we use the term {\emph{weight of an edge}} to refer to the fractional matching assigned to that edge, i.e., the value computed by \AlgETest \cref{alg:E-test}. Similarly, we use the term {\emph{weight of a vertex}} as the summation of the weights of the edges connected to it.} $v$ is higher than a threshold $\delta$, the algorithm adds $v$ to the vertex cover $C$ and removes $v$ along with all its incident edges. When this happens, we say that $v$ and its edges have been \emph{peeled off}. In the $i^\text{th}$ round the weight of the matching on each edge that has not yet been peeled off is increased by $c^{i-1}/d$. 
%Then each vertex whose total matching weight of adjacent edges reaches some threshold $\delta$ is added to $B$ and peeled off.
%With slight abuse of notation,
For any $v\in V$, we denote the total weight of $M$ adjacent to $v$ by
 $$M(v)=\sum_{w\in N(v)}M((v,w)).$$
This process continues for $J+1$ rounds before all the edges (but not necessarily all the vertices) are peeled off from the graph. Note that at the last ($J+1^\text{st}$) iteration the amount of weight assigned to each of the non-peeled-off edges is at least $c^J/d$ and hence it suffices to set $J=\Theta(\log d)$ for all the vertices $v$ with non-zero degree to be peeled off. Therefore, at this point $C$ is indeed a vertex cover.
%Notice that weight added to the edges connected to each of the vertices of the vertex cover is at least $\delta$, which guarantees that the size of the fractional matching returned by \cref{alg:global} is at most a factor $\delta$ less than the size of the maximum matching.

At the termination of \cref{alg:global}, any vertex $v$ that has been added to $C$ at some point must have at least $\delta$ matching adjacent to it. More precisely,
\begin{align}\label{global:eq1}
\sum_{e\in E} M(e) \geq \delta |C| \ge \delta \mm{G}.
\end{align}

However, since the rate at which $M$ is increased on each edge only increases by a multiplicative factor of $c$ per round, the weight adjacent to any vertex cannot be much higher. Indeed, if $v$ is peeled off in round $i+1$, then $M(v)$ is at most $\delta$ in round $i$. In the intervening one round $M(v)$ could increase by at most $c\delta$, therefore $M(v)$ is at most $(c+1)\delta$ at the point when $v$ is peeled off. Naturally, if a vertex is peeled off in the first round, $M(v)$ is at most $1$. In conclusion $M(v)$ is at most $\max((c+1)\delta,1)$; thus scaling down $M$ by a factor of $\max((c+1)\delta,1)$ produces a (valid) fractional matching which is within a $\max(c+1,1/\delta)$ factor of $C$. %This is similar but not identical to the peeling algorithm discussed in [insert citation]. 
Therefore, 
\begin{align}\label{global:eq2}
\sum_{e\in E} M(e) \leq \max((c+1)\delta,1) \frac{3}{2} \mm{G},
\end{align}
where the factor $3/2$ is the result of the gap between a fractional matching and the maximum matching explained in preliminaries.
Combining \cref{global:eq1} and \cref{global:eq2} we get a constant approximation guarantee.

\paragraph{Main challenges in simulating \cref{alg:global} and comparison to~\cite{kks2014}.} The approach of starting with a peeling type algorithm and performing a small space simulation of that algorithm by using edge-samples of the input graph has been used in~\cite{kks2014}. However, the peeling algorithm that was used is significantly weaker than \cref{alg:global} and hence much easier to simulate, as we now explain. Specifically, the algorithm of~\cite{kks2014} repeatedly peels off vertices of sufficiently high degree, thereby partitioning edges of the input graph into classes, and assigns {\em uniform weights} on edges of every class (vertices of degree $\approx c^i$ in the residual graph are assigned weight $\approx c^{i-1}/n$). This fact that the weights are uniform simplifies simulation dramatically, at the expense of only an $O(\log n)$ approximation.  Indeed, in the algorithm of~\cite{kks2014} the weight on an edge is equivalent to the level at which the edge disappears from the graph, and only depends on the number of neighbors that the endpoints have one level lower. In our case the weight of an edge at level $i$ is composed of contributions from {\bf all levels smaller than $i$}. As we show below, estimating such weights is much more challenging due to the possibility to errors accumulating across the chain of $\Omega(\log d)$ (possibly $\Omega(\log n)$) levels. \if 0 This was not an issue in~\cite{kks2014} since the authors could affors a polylogarithmic loss in approximation quality, which is much larger than the number of levels. However, since our algorithm achieves a constant factor approximation despite logarithmic exploration depth, this is a major hurdle that we need to overcome.\fi In fact, techniques for coping with this issue, i.e., avoiding a union bound over $\log n$ levels, are a major contribution of our work.

%\atodo{check if the new explanations makes more sense}

\subsection{IID Edge Stream Version of \cref{alg:global}}\label{simul-iid-edge}
In this section we describe our simulation of \cref{alg:global} in the regime in which an algorithm can learn about the underlying graph only by accessing iid edges from the graph.

\begin{remark}{\label{m>n}}
In the following we will assume that $m\ge n$ for simplicity. In the case where this is not true we can simply modify the graph by adding a new vertex $v_0$ to $V$ and adding an $n$-star centered at $v_0$ to the input graph $G$ to get $G'$. This increases the matching size by at most $1$. Also, knowing $m$ and $V$ we can simply simlulate sampling an iid edge $G'$. Indeed, whenever we need to sample an edge of $G'$ with probability $\frac m{n+m}$ we sample an iid edge of $G$ and with probability $\frac n{n+m}$ we sample uniformly from $v_0\times V$. For simplicity we will omit this subroutine from our algorithms and assume that $m\ge n$ in the input graph $G$.
\end{remark}

\begin{remark}
Recall that $d$ is an upper bound on the maximum degree of $G$. For the purposes of the iid sampling algorithm, we can simply set $d$ to be $n$, as it can be any upper bound on the maximum degree. In this model of computation the runtime will actually not depend on $d$. We still carry $d$ throughout the analysis, as it will be crucial in the LCA implementation (\cref{sec:LCA}). However, the reader is encouraged to consider $d=n$ for simplicity.
\end{remark}

%asking for a random edge-sample from the graph.

\begin{algorithm}
\caption{IID edge sampling algorithm approximating the maximum matching size of $G$, $\mm{G}$. \label{alg2}}

\begin{algorithmic}[1]
\Procedure{\AlgIID}{$G=(V,E)$}
\State $M'\gets0$ \Comment{The value of estimated matching}
\State $t\gets1$ \Comment{The number of iid edges that we run $\AlgETest$ on} 
\While{samples lasts}\label{alg2:while}
    \State $M'\gets \AlgSample(G,t)$ \label{line:invoking-AlgSample}
    \State $t \gets 2 t$ \label{line:next-batch}
\EndWhile
\State \Return $M'$ \Comment{The estimated size of the matching that our algorithm returns}
\EndProcedure
\\

\Procedure{\AlgSample}{$G=(V,E),t$} \label{alg:sample}
\State $M' \gets 0$ \Comment{Starting fractional matching}
\For{$k=1$ {\bf to} $t$} \label{line:selecting-t-samples}
    \State $e\gets$ \text{ iid edge from the stream} \label{line:SAMPLE-sampling-edge}
    \State $M' \gets M' + \AlgETest(e)$ \label{line:etest}\Comment{The fractional matching assigned to this edge}
\EndFor
\State \Return $m \cdot \tfrac{M'}{t}$
\EndProcedure
\end{algorithmic}
\end{algorithm}

\begin{algorithm}
\caption{Given an edge $e$, this algorithm returns a fractional matching-weight of $e$. \label{alg:E-test}}
\begin{algorithmic}[1]
\Procedure{\AlgETest}{$e=(u,v)$}
\State $w \gets 1 / d$ \label{init-w}\Comment{By definition, every edge gets the weight $1/d$}
\For{$i=1$ {\bf to} $J+1$}\Comment{Recall that $J = \lfloor\log_cd\rfloor-1$}
    \If{$\AlgVTest{i}(u)$ {\bf and} $\AlgVTest{i}(v)$}\Comment{Check if both endpoints pass the $i$-th test}
        \State $w \gets w + c^i/d$ \label{line:increase-M(e)} \Comment{Increase the weight of the edge}
    \Else
        \State \Return $w$ \label{return1}
    \EndIf
\EndFor
\State \Return $w$ \label{return2}
\EndProcedure
\end{algorithmic}
\end{algorithm}

\begin{algorithm}
\caption{Given a vertex $v$ that has passed the first $j$ rounds of peeling, this algorithm returns whether or not it passes the $j+1^\text{st}$ round. \label{alg:V-test}}
\begin{algorithmic}[1]
\Procedure{\AlgVTest{(j + 1)}}{$v$}
\State $S\gets0$ \Comment{The estimate of the weight of this vertex}
\For{$k=1$ to $c^j \cdot \tfrac{m}{d}$} \label{outside-for-loop}
    \State $e\gets$ next edge in the stream \label{line:random-edge} \Comment{Equivalent of sampling and iid edge}
    \If{$e$ is adjacent to $v$}
        \State $w\gets$ the other endpoint of $e$
        \State $i\gets0$ \Comment{Represents the last level that $w$ passes}
        \While{$i \le j$ {\bf and} $\AlgVTest{i}(w)$ \label{line:invoke-test-recursively}}\Comment{\AlgVTest{0} returns \true by definition}
            \State $S \gets S + c^{i - j}$ \label{line:contribution-to-S}\Comment{The contribution that edge $e$ gives with respect to the current sampling rate}
            \If{$S\ge\delta$ \label{alg:extra-early-stopping}} \Comment{If yes, then the weight of the vertex is high}
                \State \Return \false
            \EndIf
            \State $i \gets i + 1$
        \EndWhile
    \EndIf
\EndFor
\State \Return \true
\EndProcedure

\end{algorithmic}
\end{algorithm}

\cref{alg2} is a local version of \cref{alg:global} which can be implemented by using iid edge-samples. Notice that for the sake of simplicity in both presentation of the algorithms and analysis, we referred to the iid edge oracle as stream of random edges. Instead of running the peeling algorithm on the entire graph, we select a uniformly random edge $e$ and estimate what the value of $M$ in \cref{alg:global} would be on this edge. This is done by the procedure $\AlgETest(e)$ (\cref{alg:E-test}). The procedures \AlgIID and \AlgSample are then used to achieve the desired variance and will be discussed in \cref{sec:sample-complexity}; they are not necessary for now, to understand the intuition behind \cref{alg2}.

Notice that in \cref{alg:global} if the two endpoints of $e = \{u, v\}$ get peeled off at rounds $i_1$ and $i_2$ respectively, then we can determine the value of $M(e)$ to be $\sum_{i=1}^{\min(i_1,i_2)}c^{i-1}/d$. Therefore, when evaluating $\AlgETest(e)$ we need only to estimate what round $u$ and $v$ make it to. To this end we use the procedure $\AlgVTest{j}(v)$ which estimates whether a vertex survives the $j^\text{th}$ round of \cref{alg:global}. However, instead of calculating $M(v)$ exactly, by looking recursively at all neighbors of $v$, we only sample the neighborhood of $v$ at some appropriate rate to get an unbiased estimator of $M(v)$.

Additionally, both $\AlgETest$ and $\AlgVTest{j}$ determinate early if the output becomes clear. In $\AlgETest(e)$, if either endpoint returns \false in one of the tests we may stop, since the value of $M(e)$ depends only on the endpoint that is peeled off earlier. In $\AlgVTest{j}$, if the variable 'S', which is used as an accumulator, reaches the threshold $\delta$ we may stop and return \false even if the designated chunk of the stream has not yet been exhausted. This speed-up will be crucial in the sample complexity analysis that we provide in \cref{sec:sample-complexity}.

%\xxx[MK]{Did we define `weight of a vertex', `weight of an edge' etc? I am using such terminology below.}

\paragraph{Main challenges that \cref{alg:E-test} resolves and comparison to~\cite{kks2014}.} We now outline the major ideas behind our constant factor approximation algorithm and compare them to the techniques used by the polylogarithmic approximation of~\cite{kks2014}. 

\paragraph{Precision of level estimates despite lack of concentration.} As discussed above, the crucial feature of our algorithm is the fact that the total weight of a vertex that is assigned to level $j$ (i.e., fails  $\AlgVTest{j}$) is contributed by vertices at all lower levels, in contrast to the work of~\cite{kks2014}, where only contributions from the previous level were important. Intuitively, a test is always terminated as soon as it would need to consume more samples than expected. In fact, one can show that without this even a single call of $\AlgVTest{j}$ may run for $\omega(m)$ samples in expectation, in graphs similar to the hard instances for greedy, constructed in~\cref{sec:worst-case-greedy}. As we will see in~\cref{sec:sample-complexity} the early stopping rule lends itself to extremely short and elegant analysis of sample complexity. However, the correctness of~\cref{alg2} is much less straightforward. For example, note that we are approximately computing the weight of a vertex at level $j$ by sampling, and need such estimates to be precise. The approach of~\cite{kks2014} to this problem was simple: $C\log n$ neighbors on the previous level  were observed for a large constant factor $C$ to ensure that all estimates concentrate well around their expectation. This lead to a blowup in sample complexity, which was reduced by imposing an aggressive pruning threshold on the contribution of vertices from the previous level, at the expense of losing another logarithmic factor in the approximation quality. The work of~\cite{kks2014} could afford such an approach because the approximation factor was much larger than the depth of the recursive tests (approximation factor was polylogarithmic, whereas the depth merely logarithmic). Since we are aiming for a constant factor approximation, we cannot afford this approach.

The core of our algorithm is $\AlgVTest{j}$ of \cref{alg:V-test}. Indeed, once the level of two vertices, $u$ and $v$, have been determined by repeated use of $\AlgVTest{j}$ to be $\wh L(u)$ and $\wh L(v)$ respectively, the connecting edge has fractional weight of exactly
\begin{equation}{\label{def-of-whm}}
\wh{M}(u, v) \eqdef \sum_{i=0}^{\min\{\wh{L}(u), \wh{L}(v)\}}c^i/d.
\end{equation}

This means that the size of the matching constructed by the algorithm is $\sum_{e=(u, v)\in E} \wh{M}(u, v)$, and our matching size estimation algorithm (\cref{alg2}) simply samples enough edges to approximate the sum to a constant factor multiplicatively. Thus, in order to establish correctness of our algorithm  it suffices to show that  $\sum_{e=(u, v)\in E} \wh{M}(u, v)$ is a constant factor approximation to the size of the maximum matching in $G$. We provide the formal analysis in \cref{correctness-algo}, and give an outline here for convenience of the reader. %More precise definitions related to this setting are provided in \cref{correctness-algo}.

Our goal will be to prove that the total weight of $\wh M$ on all the edges is approximately equal to the maximum matching size of $G$. Consider by analogy the edge weighting $M$ of \cref{alg:global}. Here, $M$ is designed to be such that the weight adjacent to any vertex (that is the sum of the weight of adjacent edges) is about a constant, with the exception of the vertices that make it to the last ($T+1^\text{st}$) peeling level. This is sufficient to prove correctness. $\wh M$ is designed similarly, however, classifications of some vertices may be inaccurate.

First of all, if a vertex is misclassified to a peeling level much higher than where it should be, it may have a superconstant amount of weight adjacent to it. Since among $n$ vertices some few are bound to be hugely misclassified we cannot put a meaningful upper bound on the weight adjacent to a worst case vertex; we cannot simply say that $\wh M$, when normalized by a constant, is a fractional matching. Instead, we choose some large constant threshold $\lambda$ and discard all vertices whose adjacent weight is higher than $\lambda$ (we call these bad vertieces), then normalize the remaining matching by $\lambda$. By analyzing the concentration properties of $\wh M$ in \cref{corollary:Mv-greater-than-x}, we get that $\wh M$ concentrates quadratically around its expectation, that is
$$\prob{\wh M(v)>x}= O\left(\ee{\wh M(v)}/x^2\right).$$
Using this we can show that discarding bad vertices will not significantly change the total weight of the graph. For details see \cref{sec:analysis} and particularly \cref{corollary:bad_vertices}, which states that only a small fraction of the total weight is adjacent to bad vertices, in expectation:
$$\expected {\wh{M}(v) \cdot \ind{\wh{M}(v)\ge\lambda}}\le\frac14\expected{\wh{M}(v)}.$$

A more difficult problem to deal with is that vertices may be misclassified to lower peeling levels than where they should be. Indeed, the early stopping rule, (see~\cref{alg:extra-early-stopping} of~\cref{alg:V-test}) means that vertices have a lot of chances to fail early. For example, a vertex $v$ which {\it should} reside at some high ($\Theta(\log n)$) level must survive $\Theta(\log n)$ inaccurate tests to get there. One might reasonably think that some vertices are misclassified to lower levels even in the typical case; this however turns out to not be true. The key realization is that the $\AlgVTest{(j+1)}(v)$ behaves very similarly to the $\AlgVTest{j}(v)$ with one of the crucial differences being that $\AlgVTest{(j+1)}(v)$ samples the neighborhood of $v$ $c$-times more aggressively. A multiplicative increase in sampling rate translates to a multiplicative decrease in the error probability of the test, as formalized by \cref{lemma:oversampling-lemma} from \cref{section5.2}:
\begin{restatable}[Oversampling lemma]{lemma}{oversamplinglemma}\label{lemma:oversampling-lemma} \disclaimer Let $X=\sum_{k=1}^K Y_k$ be a sum of independent random variables $Y_k$ taking values in $[0,1]$, and $\overline X \eqdef \frac1c\sum_{i=1}^c X_i$ where $X_i$ are iid copies of $X$. If $\ee{X} \le \delta/3$ and $\prob{X \ge \delta}=p$, then $\prob{\oX \ge \delta} \le p/2$.
\end{restatable}

More formally, consider some vertex $v$ whose peeling level {\it should} be about $j^*=\Theta(\log n)$. When running $\AlgVTest{j}(v)$ for some $j\ll j^*$ the variable $S$ in ~\cref{alg:V-test} is an unbiased estimator of the weight currently adjacent on $v$, and $\ee{S}\ll\delta$. However, we have no bound on the variance of $S$ and so it is entirely possible that with as much as constant probability $S$ exceeds $\delta$ and the test exits in \cref{alg:extra-early-stopping} to return false. Over a logarithmic number of levels we cannot use union bound to bound the error probability. Instead, let $S'$ be the same variable in the subsequent run of $\AlgVTest{(j+1)}(v)$. $S'$ can be broken into contributions from neighbors at levels below $j$ (denoted by $A$ below), and contributions from the neighbors at level $j$ (denoted by $B$ below). It can be shown that
$$S'=A+B$$
and
$$\prob{S'\ge\delta}\le\prob{A\ge\delta}+\prob{B\ge1},$$
due to the integrality of $B$. However, $A$ is simply an average taken over $c$ iid copies of $S$, and so we can apply the oversampling lemma, with $X=S$. (Note that $S$ satisfies the condition of being the independent sum of bounded variables, as different iterations of the for loop in \cref{outside-for-loop} are independent and the contribution of each to $S$ is small.) As a result we get
$$\prob{A\ge\delta}\ge\frac12\cdot\prob{S\ge\delta}$$
which ultimately allows us to get a good bound on the total error probability, summing over all $j^*$ levels.

\if 0
\paragraph{The early stopping rule.} Note that our algorithm stops and declares failure much more readily than more standard versions of the peeling algorithm, such as the one used in~\cite{kks2014}. There the equivalent subroutine of $\AlgVTest{j}(v)$ terminates only if a neighbor of $v$ reaches the previous level ($j-1$). In our version the variable $S$, which can be interpreted as an estimator of the weight adjacent to $v$, is updated throughout the recursive calls. As a result it may reach the threshold $\delta$ without the algorithm ever finding a neighbor of $v$ that reaches level $j-1$. We call this much weaker requirement on stopping the test (see~\cref{alg:extra-early-stopping} of~\cref{alg:V-test}) the 'early stopping rule'. Intuitively, a test is always terminated as soon as it would need to consume more samples than expected. In fact, one can show that without this even a single call of $\AlgVTest{j}$ may run for $\omega(m)$ samples in expectation, in graphs similar to the hard instances for greedy, constructed in~\cref{sec:worst-case-greedy}.

As we will see in~\cref{sec:sample-complexity} the early stopping rule lends itself to extremely short and elegant analysis of sample complexity. However, the correctness of~\cref{alg2} is much less straightforward. Consider the case of determining the level of some specific vertex, $v$. This consists of running $\AlgVTest{j}(v)$ from $j=1$ and continuing to increase $j$ until one of the tests fails. On the one hand, it is unavoidable that $v$ will be slightly misclassified due to the crude nature of the tests. However, the greater worry is that $\AlgVTest{j}$ may be mistakenly stopped at each of up to $\Theta(\log n)$ levels; this could lead to some vertices being consistently misclassified to lower levels than where they should be.

We will show in~\cref{correctness-algo} that this is in fact not the case. Although the tests are in general not accurate with high probability, each consecutive call of $\AlgVTest{j}$ increases the sampling rate of the neighborhood of $v$ by a multiplicative factor. We will show that this also multiplicatively decreases the error probability of the tests, resulting in the total probability of error being small even over $\Theta(\log n)$ levels. This intuitive, but nontrivial statement, that a higher sampling rate multiplicatively increases the accuracy of our tests, is captured in~\cref{lemma:oversampling-lemma}.
\fi

\paragraph{Sample complexity analysis of tests.} Note that as \cref{alg:V-test}, in order to test whether  a vertex $v$ is peeled off at iteration at most $j$ it suffices to sample a $c^{j-1}/d$ fraction of the stream and run recursive tests on neighbors of $v$ found in that random sample. It is crucial for our analysis that the tests are sample efficient, as otherwise our algorithm would not gather sufficient information to approximate matching size from only $m$ samples (or, from a single pass over a randomly ordered permutation stream -- see \cref{sec:coupling}). One might hope that the sample complexity of $\AlgVTest{j}$ is dominated by the samples that it accesses explicitly (as opposed to the ones contributed by recursive calls), and this turns out to be the case. The proof follows rather directly by induction, essentially exactly because lower level tests, say $\AlgVTest{i}$ for $i<j$, by the inductive hypothesis use a $c^{i-1}/d$ fraction of the stream, and contribute $c^{i-j-1}$ to the counter $S$ maintained by the algorithm (see \cref{alg:V-test}). Since the algorithm terminates as soon as the counter reaches a small constant $\delta$, the claim essentially follows, and holds deterministically for any stream of edges -- the proof is given in \cref{sec:sample-complexity} below (see \cref{lemma:single-test-sample-complexity}).

%\xxx[MK]{Need to say something of the form 'we will think of the i.i.d. samples as being given by an oracle, or, conveniently, provided as a stream', somewhere. without sounding too much like we would prefer a random permutation of the edge set.}

\paragraph{Sample complexity of approximating matching size.} A question that remains to be addressed is how to actually use the level tests to approximate the maximum matching size. We employ the following natural approach: keep sampling edges of the graph using the stream of iid samples and testing the received edges (the way we process these samples is novel). In~\cite{kks2014}, where a polylogarithmic approximation was achieved, the algorithm only needed to ascertain that at least one of the logarithmic classes (similar to the ones defined by our peeling algorithm) contains nontrivial edge mass, and could discard the others.

Since we aim to obtain a constant factor approximation, this would not be sufficient. Instead, our \cref{alg:sample} maintains a counter that it updates with {\em estimated weights} of the edges sampled from the stream. Specifically, an edge $e=(u, v)$ is declared to be a level $j$ edge if both $u$ and $v$ pass all $\AlgVTest{i}$ tests with $i<j$ and at least one of them fails $\AlgVTest{j}$ -- we refer to this procedure as $\AlgETest(e)$ (see \cref{alg:E-test}). Such an edge contributes approximately $c^{j-1}/d$ to a counter $M'$ that estimates matching size. It turns out to be very helpful to think of approximating matching size {\bf as a fraction of the stream length}, i.e., have $M'\in [0, 1]$ (see \cref{alg:sample} for the actual test and \cref{line:etest} for the application inside \cref{alg:sample}). Our estimate is then the average of the weights of all sampled edges. We then need to argue that the variance of our estimate is small enough to ensure that we can get a constant multiplicative approximation without consuming more than $m$ samples. This turns out to be a very natural variance calculation: the crucial observation is that whenever an edge $e$ sampled from the stream is assigned weight $c^{i-1}/d$ due to the outcomes of $\AlgVTest{j}{}$ on the endpoints (see \cref{line:etest} of \cref{alg:sample}), the cost of testing it (in terms of samples consumed by recursive calls) is comparable to its contribution to the estimate. This implies that at most $m$ samples are sufficient -- the details are given in the proof of \cref{theorem:sample-complexity-of-algIID}

%!TEX root = ./000-main.tex
\if 0
\section{Analysis of \cref{alg2}}{\label{sample-cx}}
In this section we prove our upper-bound result. First, in \cref{sec:sample-complexity} we analyze the sample complexity of our algorithm. Then assuming the properties on $\AlgETest$ that we prove in \cref{correctness-algo}, we show:
\theoremupperbound*
%From the analysis, one can see that we can reduce the length of the iid edge stream to $m$ as well which gives the following.
%\corollaryupperbound*

Afterward, in \cref{correctness-algo}, we formally prove the claimed properties of the  $\AlgETest$. We discuss that invoking $\AlgETest$ on each edge suffices to estimate the maximum matching size up to a constant factor. In the proof of \cref{theorem:sample-complexity-of-algIID} we show that in fact it is not necessary to run that test on all the edges but it suffices to run the test only until the algorithm sees $O(m)$ iid edge-samples.

\fi

\section{Sample Complexity of \cref{alg2}}
\label{sec:sample-complexity}
We begin by analyzing the sample complexity of a single $\AlgVTest{j}$ test. After that, in \cref{theorem:sample-complexity-of-algIID}, we prove that this sample complexity suffices to estimate the matching size by using no more than $m$ iid edge-samples.
%We set $c$ large enough so that $\delta+1 \leq \frac{c}{2}$.

\subsection{Sample Complexity of Level Tests}
\begin{lemma}\label{lemma:single-test-sample-complexity}
For every $c\geq 1$, $\delta \leq 1/2$, and graph $G=(V,E)$, let $\tau_j$ be the maximum possible number of samples required by $\AlgVTest{j}$ defined in \cref{alg:E-test}, for $j\in[1,J+1]$. Then, with probability one we have:
	\begin{equation}\label{eq:hypothesis}
		\tau_j \le 2 c^{j - 1} \cdot \frac md.
	\end{equation}
\end{lemma}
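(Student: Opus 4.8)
The plan is to prove the bound $\tau_j \le 2c^{j-1} \cdot \frac md$ by induction on $j$. Let me set up the key accounting: when we call $\AlgVTest{(j+1)}(v)$, the outer for-loop (\cref{outside-for-loop} of \cref{alg:V-test}) runs for exactly $c^j \cdot \frac md$ iterations, each consuming one sample explicitly from the stream. On top of this, whenever a sampled edge $e$ turns out to be adjacent to $v$, the algorithm recursively invokes $\AlgVTest{i}(w)$ on the other endpoint $w$, for $i = 0, 1, 2, \ldots$ up to some stopping point. So $\tau_{j+1}$ is bounded by $c^j \cdot \frac md$ plus the total number of samples consumed by all these recursive subcalls.

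The crucial observation — and this is exactly where the early-stopping rule in \cref{alg:extra-early-stopping} earns its keep — is that each time a recursive call $\AlgVTest{i}(w)$ returns \true (so that the while-loop in \cref{line:invoke-test-recursively} proceeds), the counter $S$ increases by $c^{i-j}$ (for the run at level $j+1$, the increment at recursion level $i$ is $c^{i-(j+1)+1}=c^{i-j}$; I should double-check the exact exponent against \cref{line:contribution-to-S}, but the structure is: increment $=$ (sampling-rate ratio) and is designed so the counting telescopes). Since the test halts the moment $S \ge \delta$, the total number of successful recursive calls at any fixed recursion level $i$, weighted by their increments, is at most $\delta + 1 \le 2$. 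By the inductive hypothesis each such call at level $i$ costs at most $\tau_i \le 2c^{i-1}\cdot\frac md$ samples. The plan is to show that when we multiply the cost $\tau_i$ of a level-$i$ recursive call by the number of times such calls can fire before $S$ saturates, and sum over $i = 0, 1, \ldots, j$, the whole recursive contribution is at most $c^j \cdot \frac md$ (or a comparable constant fraction thereof), so that $\tau_{j+1} \le c^j\cdot\frac md + c^j\cdot\frac md = 2c^j\cdot\frac md$, completing the induction. The base case $j=1$ is immediate: $\AlgVTest{1}$ only runs its outer loop for $c^0\cdot\frac md = \frac md \le 2\cdot\frac md$ iterations and makes no genuine recursive calls (since $\AlgVTest{0}$ returns \true by definition and consumes nothing), i.e.\ $\tau_1 \le \frac md$.

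The one subtlety I anticipate — and the main thing to get right — is the bookkeeping on how many times a recursive call at level $i$ can be triggered. A single sampled edge adjacent to $v$ can trigger a whole chain $\AlgVTest{0}(w), \AlgVTest{1}(w), \ldots$, not just one call, and different sampled edges trigger different chains; so I want to bound, over the entire execution of $\AlgVTest{(j+1)}(v)$, the total cost of all level-$i$ subcalls summed across all edges. The clean way is: each level-$i$ subcall that returns \true contributes exactly $c^{i-j}$ to $S$ (using the level-$(j+1)$ indexing, $c^{i-(j+1)+1}$ — I will fix the exponent carefully), and $S$ never exceeds $\delta + 1$, so the number of \true\ level-$i$ subcalls is at most $(\delta+1)c^{\,j - i}$; each costs $\le \tau_i \le 2c^{i-1}\frac md$; there is also at most one \false\ level-$i$ subcall per triggering edge, but a \false\ subcall $\AlgVTest{i}(w)$ costs at most $\tau_i$ as well and — here I'll need a separate small argument — the \false\ subcalls at level $i$ are themselves in bijection with \true\ subcalls at level $i-1$ (a chain reaches level $i$ and fails there only if it passed level $i-1$), so their count is also controlled. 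Putting the pieces together, the level-$i$ total is $O(c^{j-i}) \cdot c^{i-1}\frac md = O(c^{j-1}\frac md)$, and I must verify the hidden constant, together with the sum over $i \le j$, stays below the slack factor of $2$; if it doesn't quite fit, the geometric decay in $i$ (the $c^{j-i}$ factor) is what saves us — the sum is dominated by its top term $i = j$, giving $O(c^{j-1}\frac md)$ rather than $O(j \cdot c^{j-1}\frac md)$. This geometric domination is the heart of why a naive union-bound-style $\log$-factor loss is avoided, and I expect it to be the step requiring the most care in the constants.
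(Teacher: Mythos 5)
Your skeleton matches the paper's: induct on $j$, split $\tau_{j+1}$ into the $c^j\cdot\frac md$ explicit samples plus the cost of recursive subcalls, and control the latter via the early-stopping counter $S$. The base case is fine and you have correctly located the crux. But the accounting you propose for the recursive part has a concrete gap. You bound the number of level-$i$ subcalls \emph{separately for each level} (at most $(\delta+1)c^{j-i}$ of them, each costing $\tau_i\le 2c^{i-1}\frac md$ by induction) and then hope that ``geometric decay in $i$'' lets the sum over $i$ collapse to its top term. It does not: the per-level total is $(\delta+1)c^{j-i}\cdot 2c^{i-1}\frac md = 2(\delta+1)c^{j-1}\frac md$, which is \emph{independent of $i$} --- the growth of $\tau_i$ exactly cancels the decay of the count. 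Summing over $i=1,\ldots,j$ therefore costs a factor $j=\Theta(\log_c d)$, which destroys the claimed bound. What actually saves the argument is not decay but the fact that the constraint imposed by $S$ is \emph{joint across all levels}: every recursive call to $\AlgVTest{i}$ (with $i\ge1$, whether it will return \true or \false) is immediately preceded by the increment $c^{i-1-j}$ performed after $\AlgVTest{(i-1)}$ returned \true, and at that moment the test has not yet exited, so $S<\delta$. Hence all these increments together --- over all levels and all sampled edges --- sum to less than $\delta$, giving $\sum_i \alpha_i c^{i-1-j}\le\delta$ for the total call counts $\alpha_i$, and then $\sum_i\alpha_i\tau_i\le 2\delta\, c^{j}\frac md$ in one stroke, with no per-level bound and no sum over levels.

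Two further issues in your charging scheme would break the constant $2$ even with the joint constraint. First, you charge a \true\ level-$i$ call to the increment that occurs \emph{after} it returns and bound the total of all increments by $\delta+1$ (allowing the final, overshooting increment). The induction needs the recursive cost to be at most $c^j\frac md$, i.e.\ the relevant weighted total must be at most $\delta\le1/2$; replacing it by $\delta+1\ge1$ yields a constant of at least $3$. The paper's charging avoids this automatically: the increments it charges against are precisely those after which the test \emph{continued}, so they sum to strictly less than $\delta$, and the one increment that trips the threshold never precedes a recursive call and never needs to be paid for. Second, your separate treatment of \false\ calls via a bijection with \true\ calls one level down double-charges each increment --- once for the \true\ call it follows and once for the \false\ call it precedes --- inflating the constant again. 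Charging every call, \true\ or \false, to the unique increment immediately \emph{before} it (the paper's scheme) resolves all three problems at once and closes the induction with $1+2\delta\le 2$.
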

\begin{proof}
%	Define $\gamma$ as
%	\begin{equation}\label{eq:gamma-definition}
%		\gamma \eqdef \delta + 1 \le \frac{c}{2}.
%	\end{equation}
We prove this lemma by induction that \cref{eq:hypothesis} holds for each $j$.

%Given a vertex $v$, for a neighbor $w \in N(v)$ the algorithm updates $S$ only for those levels which $w$ is not passing.
% It might be a bit easier to think that we update $S$ for each level that $w$ is tested for. For instance, if $w$ passes level $1$, $2$ but not $3$, I would suggest to update $S$ by $c^{1 - j}$, $c^{2 - j}$ and $c^{3 - j}$. Right now, $S$ would be incremented only by $c^{3 - j}$. Notice that this affects $S$ only by a factor of $(1 + 1/c)$.

\paragraph{Base of induction}
For $j = 1$ the bound \cref{eq:hypothesis} holds directly as
\[
	\tau_1 \le \frac md,
\]
by the definition of the algorithm.

%We also derive
%\[
	%\tau_2 \le c \cdot m / n + (\delta + 1/c) c \cdot \tau_1 = (1 + \gamma) c \cdot m / n
%\]
%In the bound on $\tau_2$, we use $\gamma = \delta + 1/c$ instead of $\delta$ as $S$ can get larger than $\delta$, but at most by $1/c$.
%
%\begin{equation}\label{eq:definitionT3}
	%\tau_3 \le c^2 \cdot m / n + \alpha_1 c^2 \cdot \tau_1 + \alpha_2 c \cdot \tau_2
%\end{equation}
%
%Now, look at $\tau_3$. If we make $\alpha_1 c^2$ calls to $\tau_1$ while doing $\tau_3$, how much does it contribute to $S$? It contributes $\alpha_1$. Similarly holds for $\alpha_2 c$. So, for $T_3$, we want a condition
%\[
	%\alpha_1 + \alpha_2 \le \gamma
%\]
%
%This together with \cref{eq:definitionT3} gives
%
%\[
	%T_3 \le c^2 \cdot m/ n + \delta (1 + \gamma) c^2 \cdot m / n = (1 + \gamma (1 + \gamma)) c^2 \cdot m / n
%\]

\paragraph{Inductive step.}
Assume that \cref{eq:hypothesis} holds for $j$. We now show that \cref{eq:hypothesis} holds for $ j + 1$ as well.

Consider any vertex $v \in V$ and $\AlgVTest{(j + 1)}(v)$. Let $\alpha_i$ be the number of recursive $\AlgVTest{i}$ calls invoked during a worst case run of $\AlgVTest{(j+1)}(v)$ (that is when $\AlgVTest{(j+1)}(v)$ consumes $\tau_{j+1}$ samples). Then, $\tau_{j + 1}$ can be upper-bounded as
\[
	\tau_{j + 1} \le c^j \cdot \frac md + \sum_{i =1}^j\alpha_i \tau_i.
\]
Moreover, from \cref{eq:hypothesis} and our inductive hypothesis, it holds
\begin{align}
	\tau_{j + 1} & \le c^j \cdot \frac md + \sum_{i =1}^j \alpha_i \cdot 2 c^{i - 1} \cdot \frac md \nonumber \\
	& = c^j \cdot \frac md \cdot\left(1+2\sum_{i=1}^jc^{i-1-j}\alpha_i\right). \label{eq:bounding-T-j+1}
\end{align}

Our goal now is to upper-bound $\sum_{i \le j} c^{i-1-j}\alpha_i$. During the execution, $\AlgVTest{(j + 1)}$ maintains the variable $S$. Consider any time during \AlgVTest{(j+1)} when a recursive call is made to \AlgVTest{i}, for $i\ge1$, in \cref{line:invoke-test-recursively} of \cref{alg:V-test}. Immediately preceding this, in \cref{line:contribution-to-S} of the previous iteration of the loop, the variable $S$ would have been incremented by $c^{i-1-j}$. This happens $\alpha_i$ times for every $i\ge1$. Consider now the state of the algorithm just before the {\it last} recursive call is made to a lower level test. By the time the last recursive call is made, $S$ has been increased by $\sum_{i\le j}c^{i-1-j}\alpha_i$ in total. However, just before the last recursive test is made, the algorithm {\it does not} exit to return \false in \cref{alg:extra-early-stopping}, meaning that $S<\delta$ at this point. Hence
$$\sum_{i\le j}c^{i-1-j}\alpha_i\le\delta.$$

% the number of edges by the end of the test that we start running is at most $c\cdot S$.
%  was invoked $c^{j - i} \alpha_i$ times, from \cref{line:contribution-to-S} those tests increase $S$ by exactly $c^{j - i} \alpha_i \cdot c^{i - j} = \alpha_i$. Therefore, we have $S = \sum_{i \le j} \alpha_i$. %Observe that if the algorithm passes the condition in \cref{alg:extra-early-stopping} it might happen that $S > \delta$. Nevertheless, the largest increment of $S$ is $1$. Hence, even if $S \ge \delta$, we have $S \le \delta + 1$. This now implies that
%\[
%	\sum_{i =1}^j \alpha_i = S \le \delta + 1.
%\]
This together with \cref{eq:bounding-T-j+1} leads to
\[
	\tau_{j + 1} \le c^{j} \cdot \frac md\cdot\left(1+2\delta\right)\le 2 c^{j} \cdot \frac md,
\]
as desired, where the last inequality follows by the assumption that $\delta\leq 1/2$.
\end{proof}
%!TEX root = ./000-main.tex
%Let $\mathcal \tau_j$ be the test returning whether a vertex $v$ makes it into $\hV_j$ given that it is in $\hV_{j-1}$. Let $\mathcal T$ be test determining the $\hL(v)$ of some vertex $v$; it consists of running $\mathcal \tau_i$ sequentially until one fails or $v$ reaches $\hV_{T+1}$.

%Let $\cT(v)$ be a function that returns $\hL(v)$ (this variable is defined by~\eqref{eq:definition-hLv}) of vertex $v$. That is, $\cT(v)$ consists of running $\AlgVTest{i}(v)$ in the increasing order of $i$ while the tests return \true or $i$ becomes $T+1$.
%Let $\mathcal M$ be the test that determines the weight of $\hM$ for some edge $e=(u,v)$. It is performed by determining the levels of $u$ and $v$ using $\mathcal T$ and returning $\sum_{i=0}^{\hL(u)\wedge\hL(v)}c^i/n$. Note that the two $\mathcal T$ tests are run in parallel and when either fails $\mathcal M$ terminates.
%In this section we show that \cref{alg2} estimates the maximum matching size up to a constant factor. We begin by relating the sample complexity of an execution of $\AlgETest(\cdot)$ and its output.
\subsection{Sample Complexity of \AlgIID}\label{sec:iid-sc}

\begin{lemma}\label{lemma:algetest-output-and-sample-complexity}
%Consider an invocation of $\AlgETest(e)$. Let its output be $M_e$. Then, this invocation used at most $4 M_e\cdot m$ edge-samples.
For every $c\geq 2$, $0<\delta \leq 1/2$, and graph $G=(V,E)$, for any edge $e=(u,v)\in E$, if $M_e$ is the output of an invocation of $\AlgETest(e)$, then with probability one this invocation used at most $4 M_e \cdot m$ edge-samples.
\end{lemma}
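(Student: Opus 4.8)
The plan is to charge the edge-samples consumed by each level test invoked inside $\AlgETest(e)$ against the weight $w=M_e$ that those same tests cause to be accumulated, using the per-test sample bound of \cref{lemma:single-test-sample-complexity} together with a telescoping geometric sum. The guiding observation is that the cost of running $\AlgVTest{i}$ is, up to the factor $m$, of the same order as the contribution $c^{i-1}/d$ that level $i-1$ has already added to $w$, so summing over all levels gives a bounded geometric series (with a factor of $c$ to spare that we will not even need).

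First I would fix notation for how far the \textbf{for} loop in $\AlgETest$ (\cref{alg:E-test}) runs. If that loop never executes then $M_e=1/d$ and no samples are consumed, so the claim is trivial; otherwise let $\ell\ge 1$ be the last value of the loop index $i$ for which iteration $i$ is entered, so that $\AlgVTest{i}$ is invoked on $u$ (and possibly on $v$) for each $i\in\{1,\dots,\ell\}$ and for no larger $i$. There are exactly two ways iteration $\ell$ can end: either both level-$\ell$ tests are run and return \true, which forces $\ell=J+1$ (the loop then terminating at \cref{return2}) with $M_e=\tfrac1d\sum_{i=0}^{\ell}c^i$; or iteration $\ell$ exits at \cref{return1} because one of the level-$\ell$ tests returns \false, giving $M_e=\tfrac1d\sum_{i=0}^{\ell-1}c^i$. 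In either case
\[
M_e \;\ge\; \frac1d\sum_{i=0}^{\ell-1}c^i \;=\; \frac1d\cdot\frac{c^{\ell}-1}{c-1}.
\]

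Next I would bound the number of samples. By \cref{lemma:single-test-sample-complexity}, every invocation of $\AlgVTest{i}$ uses at most $\tau_i\le 2c^{i-1}m/d$ samples, with probability one. Because of the short-circuit evaluation of the \textbf{and} in \cref{alg:E-test}, iteration $i$ of the loop invokes $\AlgVTest{i}$ at most twice (on $u$, and on $v$ only if the call on $u$ returned \true), so it consumes at most $4c^{i-1}m/d$ samples. Summing over $i=1,\dots,\ell$, the total number of samples consumed by $\AlgETest(e)$ is at most
\[
\sum_{i=1}^{\ell}4c^{i-1}\cdot\frac md \;=\; \frac{4m}{d}\cdot\frac{c^{\ell}-1}{c-1} \;\le\; 4m\cdot M_e,
\]
where the last inequality is the lower bound on $M_e$ from the previous paragraph. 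Since the bound of \cref{lemma:single-test-sample-complexity} holds with probability one (indeed deterministically for every realization of the stream), so does the conclusion.

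I do not expect a genuine obstacle here: the statement reduces to careful bookkeeping once \cref{lemma:single-test-sample-complexity} is available. The only points that need a little care are (i) the short-circuiting of the \textbf{and}, which costs a harmless factor of $2$, and (ii) the two possible stopping modes of the loop, both of which still leave $M_e\ge\tfrac1d\sum_{i=0}^{\ell-1}c^i$; matching the exponents of the cost sum $\sum_{i\le\ell}c^{i-1}$ with this lower bound is precisely what makes the geometric series close.
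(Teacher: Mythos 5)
Your proof is correct and follows essentially the same route as the paper's: bound each level test's cost by $\tau_i\le 2c^{i-1}m/d$ via \cref{lemma:single-test-sample-complexity}, double it for the two endpoints, and match the resulting geometric sum $\tfrac{4m}{d}\sum_{i=0}^{\ell-1}c^i$ term-by-term against $M_e\ge\tfrac1d\sum_{i=0}^{\ell-1}c^i$. Your handling of the two stopping modes is in fact slightly cleaner than the paper's bookkeeping of the index $I$ in the \cref{return2} case, but the argument is the same.
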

\begin{proof}

As before, let $\tau_j$ be the maximum possible number samples required by $\AlgVTest{j}$. From \cref{lemma:single-test-sample-complexity} we have $\tau_j \le 2c^{j-1} \cdot \tfrac{m}{d}$.

%Proof by induction. Clearly $\mathcal \tau_1$ takes at most $m/n$ edges. Inductive step: Performing $\mathcal \tau_{j+1}$ on $v$ consists of sampling $mc^j/n$ edges from the stream and then running tests up to $\mathcal \tau_j$ on any neighbors found. This is continued until we reach the end of the chunk of stream or the summation $\sum_{i=0}^ja_ic^{i-j}$ reaches $\delta$, where $a_i$ is number of neighbors of $v$ found so far to be in $\hV_i$. Whenever we run some test $\mathcal \tau_i$ on some neighbor $w$ of $v$ this means that $w$ must have reached level $i-1$ and this must not have caused the above sum to exceed $\delta$. Therefore
%$$\tau_{j+1}\le \frac{mc^j}n+\sum_{i=1}^ja_i \tau_i$$
%where $\sum_{i=1}^ja_ic^{i-j-1}\le\delta$. By the inductive hypothesis
%\begin{align*}
    %\tau_{j+1}&\le\frac{mc^j}n+\sum_{i=1}^ja_i\cdot\frac{2mc^{i-1}}n\\
    %&\le\frac{mc^j}n\cdot(1+2\delta)\\
    %&\le2mc^j/n
%\end{align*}
%for $\delta\le1/2$.

Let $I$ be the last value of $i$, at which the algorithm $\AlgETest(e)$ exits the while loop in \cref{return1} of \cref{alg:E-test}. Alternately if the algorithm exits in \cref{return2}, let $I=J$. This means that the variable $w$ has been incremented for all values of $i$ from $1$ to $I-1$, making $w$ equal to $\sum_{i=0}^{I-1}c^i/d$. On the other hand, in the worst case scenario, \AlgVTest{i} has been called on both $u$ and $v$ for values of $i$ from $1$ to $I$. Therefore, the number of samples used by this invocation of \AlgETest$(e)$ is at most
\[
	2 \sum_{i=1}^{I} \tau_i \le 2 \sum_{i=1}^{I} \frac{2 c^{i-1} \cdot m }{d} = 4 m \cdot \sum_{i=0}^{I-1} \frac{c^i}{d} = 4 M_e \cdot m.
\]
\end{proof}
In \cref{correctness-algo}, we show the following theorem.

\begin{restatable}{theorem}{alledgecorrectness}\label{thm:alledge-correctness}
\disclaimer For a graph $G=(V,E)$ and an edge $e\in E$, let $M_e$ denote the value returned by {\AlgETest}$(e)$ (\cref{alg:E-test}). Then,
$$\sum_{e\in E} \ee{M_e} = \Theta(\mm{G}).$$
%where $\mm{G}$ is the cardinality of the maximum matching in $G$. 
\end{restatable}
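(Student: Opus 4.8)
The plan is to establish $\sum_{e\in E}\ee{M_e}=\Theta(\mm{G})$ by proving matching upper and lower bounds against the fractional matching / vertex cover constructed by the offline \AlgGlobal\ (\cref{alg:global}), using the observation from the overview that $M_e=\wh M(u,v)=\sum_{i=0}^{\min\{\wh L(u),\wh L(v)\}}c^i/d$, where $\wh L(v)$ is the (random) last level at which $v$ passes $\AlgVTest{\cdot}(v)$. So the whole statement reduces to understanding the distribution of the random variable $\wh M(v)\eqdef\sum_{w\in N(v)}\wh M(v,w)$, the total estimated weight adjacent to $v$.

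\textbf{Upper bound.} First I would show $\sum_{e\in E}\ee{M_e}\le O(\mm{G})$. The key is a concentration statement for $\wh M(v)$: using the oversampling lemma (\cref{lemma:oversampling-lemma}) inductively over levels, one argues that $\wh M(v)$ concentrates quadratically around its mean, i.e.\ $\prob{\wh M(v)>x}=O(\ee{\wh M(v)}/x^2)$ (this is \cref{corollary:Mv-greater-than-x} referenced in the overview). Granting this, fix a large constant threshold $\lambda$ and call $v$ \emph{bad} if $\wh M(v)\ge\lambda$. The quadratic tail gives $\ee{\wh M(v)\cdot\ind{\wh M(v)\ge\lambda}}\le\tfrac14\ee{\wh M(v)}$ (\cref{corollary:bad_vertices}), so discarding bad vertices costs at most a constant factor in total weight. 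On the remaining (good) vertices, $\wh M$ restricted to edges with both endpoints good, scaled down by $\lambda$, is a genuine fractional matching of $G$, hence has size $\le\tfrac32\mm{G}$. Since $\sum_e M_e\le 2\sum_v\wh M(v)$ and the bad-vertex contribution is a constant fraction, we conclude $\sum_e\ee{M_e}=O(\mm{G})$.

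\textbf{Lower bound.} Next I would show $\sum_{e\in E}\ee{M_e}\ge\Omega(\mm{G})$ by exhibiting a vertex cover of $G$ of size $O(\sum_e\ee{M_e})$. The natural candidate is $\wh C\eqdef\{v: \wh L(v)<J+1\ \text{with noticeable probability}\}$ — more carefully, one defines the cover deterministically using the \emph{expected} behavior of the tests. The point is: for $v$ to genuinely fail $\AlgVTest{j}(v)$, its accumulator $S$ must reach $\delta$, and $S$ is an unbiased estimator of the current adjacent weight; so in expectation a vertex that is ``covered'' carries $\Omega(\delta)$ units of estimated matching weight adjacent to it, giving $\sum_v\ee{\wh M(v)}\ge\delta\,|\wh C|$ after ruling out vertices peeled only in the final level $J+1$. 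The one subtlety is that $\wh C$ must actually be a vertex cover: every edge $e=(u,v)$ that survives to level $J+1$ receives weight $\ge c^J/d=\Omega(1/d^{o(1)})$... actually $c^J/d\ge 1/(cd)\cdot c^{J+1}\ge 1/(c\cdot\text{const})$ is not quite right — rather, since $J=\lfloor\log_c d\rfloor-1$, an edge making it past level $J$ has $M_e\ge\sum_{i=0}^J c^i/d=\Theta(c^J/d)=\Theta(1/c)$, a constant, so such edges contribute $\Omega(1)$ each to $\sum_e\ee{M_e}$ and there can be at most $O(\sum_e\ee{M_e})$ of them; add one endpoint of each to $\wh C$. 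Combining, $|\wh C|=O(\sum_e\ee{M_e})$ and $\wh C$ covers all edges, so $\mm{G}\le|\wh C|=O(\sum_e\ee{M_e})$.

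\textbf{Main obstacle.} The hard part is the concentration statement $\prob{\wh M(v)>x}=O(\ee{\wh M(v)}/x^2)$, and more fundamentally the fact that vertices are not systematically \emph{under}classified: a vertex whose true level is $j^\star=\Theta(\log n)$ must survive $\Theta(\log n)$ noisy tests, and a naive union bound over levels fails. This is exactly where the delicate argument sketched in the overview is needed — decomposing the accumulator $S'$ of $\AlgVTest{(j+1)}(v)$ into a part $A$ that is an average of $c$ iid copies of the accumulator $S$ of $\AlgVTest{j}(v)$ plus an integer-valued part $B$ counting level-$j$ neighbors, then applying the oversampling lemma to get $\prob{A\ge\delta}\le\tfrac12\prob{S\ge\delta}$ so the per-level error probabilities form a geometric series. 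Making this rigorous requires carefully tracking that $S$ really is a sum of independent $[0,1]$-bounded contributions (one per iteration of the outer loop of \cref{alg:V-test}) and that $\ee{S}\le\delta/3$ whenever $v$ is not yet supposed to be peeled, which in turn relies on controlling $\ee{\wh M(v)}$ at each level by a recursive/inductive argument. All of this is deferred to \cref{correctness-algo}; here I have only laid out how these ingredients assemble into the $\Theta(\mm{G})$ bound.
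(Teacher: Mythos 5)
Your proposal is correct and follows essentially the same route as the paper: reduce to the pseudo-matching $\wh M$, prove the upper bound by truncating at a constant threshold $\lambda$ using the quadratic tail $\prob{\wh M(v)\ge x}=O(\ee{\wh M(v)}/x^2)$ (\cref{corollary:Mv-greater-than-x,corollary:bad_vertices}), and prove the lower bound via the vertex cover of vertices that fail to reach level $J+1$ with noticeable probability, controlling early failures across $\Theta(\log n)$ levels by the $A_j/B_j$ split and the oversampling lemma. The only (harmless) deviation is how you certify the cover property --- you add one endpoint of each edge likely to survive to level $J+1$ and charge it to its $\Omega(1)$ expected weight, whereas the paper shows directly by contradiction that the set $C$ of \cref{eq:C-1/2-def} already covers every edge.
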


Given this, we now prove that our main algorithm outputs a constant factor approximation of the maximum matching size.
\begin{theorem}\label{theorem:sample-complexity-of-algIID}
\disclaimer For a graph $G=(V,E)$, $\AlgIID$ (\cref{alg2}) wit probability $4/5$ outputs a constant factor approximation of $\mm{G}$ by using at most $m$ iid edge-samples.
\end{theorem}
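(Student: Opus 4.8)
The plan is to combine the two tools already established: the per-edge sample-complexity bound (\cref{lemma:algetest-output-and-sample-complexity}), which says that testing an edge $e$ costs at most $4 M_e\cdot m$ samples, and the correctness of the edge test in aggregate (\cref{thm:alledge-correctness}), which says $\sum_{e\in E}\ee{M_e}=\Theta(\mm{G})$. The key bridge is the observation that in \AlgSample$(G,t)$ we draw $t$ iid edges $e_1,\dots,e_t$ from the stream and form $M' = m\cdot\frac1t\sum_{k=1}^t \AlgETest(e_k)$, so $\ee{M'} = m\cdot\ee{\AlgETest(e)}$ where $e$ is a uniformly random edge, and $\ee{\AlgETest(e)} = \frac1m\sum_{e\in E}\ee{M_e}$ since $m$ is the number of edges. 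Hence $\ee{M'} = \sum_{e\in E}\ee{M_e} = \Theta(\mm{G})$ exactly, independent of $t$. So each call to \AlgSample returns an unbiased (up to the $\Theta$) estimator of the matching size; the role of increasing $t$ is purely variance reduction.

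First I would set up the variance bound. Writing $\mu \eqdef \ee{\AlgETest(e)}$ for $e$ uniform, each summand $\AlgETest(e_k)$ lies in $[1/d, \sum_{i=0}^{J}c^i/d]\subseteq[0,1]$ (by the definition of \AlgETest, since $J=\lfloor\log_c d\rfloor-1$ forces $\sum_{i=0}^J c^i/d < c^{J+1}/d \le 1$ for $c\ge2$), and the $t$ samples are independent, so $\var{\frac1t\sum_k \AlgETest(e_k)} \le \mu/t$. Multiplying by $m^2$, $\var{M'} \le m^2\mu/t = m\cdot\ee{M'}/t$. Chebyshev then gives that for $t \ge \Theta(m/\ee{M'}) = \Theta(m/\mm{G})$ the estimate $M'$ is within a constant factor of $\ee{M'} = \Theta(\mm{G})$ with probability, say, at least $9/10$. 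Here I should keep the constants explicit enough that the success probability can be pushed to $9/10$ rather than merely bounded away from $1/2$.

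Next I would handle the fact that \AlgIID does not know $\mm{G}$ and so cannot pick the right $t$ in advance: it runs \AlgSample with geometrically growing $t = 1,2,4,\dots$ inside the \texttt{while samples lasts} loop, updating $M'$ each time, and returns the last completed value. I would argue there is a ``good'' batch: the smallest power of two $t^\star$ with $t^\star \ge \Theta(m/\mm{G})$ satisfies $t^\star \le \Theta(m/\mm{G}) \le m$ (using $\mm{G}\ge1$, and recalling $m\ge n$ by \cref{m>n}), so the sample budget $m$ is enough to reach batch $t^\star$ — in fact the total number of samples consumed across all batches up to $t^\star$ is geometric and still $O(m/\mm{G})$ edge-samples in expectation, times the per-test overhead. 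This is exactly where \cref{lemma:algetest-output-and-sample-complexity} enters: the samples consumed inside \AlgSample$(G,t)$ are not $t$ but $\sum_{k=1}^t 4 M_{e_k} m$, whose expectation is $4m\cdot t\cdot\mu = 4t\cdot\ee{M'}/m \cdot m /\,$— more cleanly, $\ee{\text{samples in batch }t} \le 4m\sum_{k}\ee{M_{e_k}} = 4mt\mu = 4t\cdot\ee{M'}$. Summing the geometric series of batch sizes up to $t^\star = \Theta(m/\mm{G})$ gives expected total samples $O(t^\star\cdot\mm{G}) = O(m)$, and a Markov bound makes this at most $m$ with probability, say, $9/10$. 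The main obstacle — and the place I'd be most careful — is this last coupling of two random quantities: the event that the sample budget lasts long enough to finish batch $t^\star$ depends on the same randomness (which edges got sampled, how deep their tests went) as the event that $M'$ is accurate in that batch, and I also need the earlier, inaccurate batches not to cause the loop to terminate with a bad value. I would resolve it by a union bound over two events — (i) total samples through batch $t^\star$ do not exceed $m$, and (ii) the batch-$t^\star$ estimate is within a constant factor of $\mm{G}$ — each failing with probability at most $1/10$, and note that \AlgIID returns the last \emph{completed} \AlgSample call, which is at least batch $t^\star$ on event (i); on the intersection of (i) and (ii), since all batches from $t^\star$ onward are at least as accurate (variance only decreases), the returned value is a constant-factor approximation. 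This yields success probability at least $4/5$, using at most $m$ samples, as claimed.
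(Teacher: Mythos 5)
Your proposal follows the same route as the paper: per-batch accuracy via Chebyshev, per-batch cost via \cref{lemma:algetest-output-and-sample-complexity}, a geometric sum over the doubling batches, and a Markov bound plus a union bound over the two failure events. The structure is correct, and your explicit remark that the returned value comes from the \emph{last completed} batch (whose variance is no larger than that of batch $t^\star$) is a point the paper itself treats only implicitly.

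There is, however, one concrete quantitative gap, and it sits exactly where the theorem's claim of ``at most $m$ samples'' (as opposed to $O(m)$ samples) has to be earned. You bound each summand by $\AlgETest(e_k)\le 1$ and conclude $\var{\tfrac1t\sum_k M_{e_k}}\le \mu/t$. Chebyshev then forces $t^\star=\Theta(1/\mu)$ with an absolute constant of order $40$--$80$, the expected total sample count through batch $t^\star$ becomes $K\cdot\mu t^\star m$ for some absolute constant $K$ in the hundreds, and the Markov step inflates this by another factor of $10$. The result is a guarantee of ``at most $K'm$ samples'' for a large absolute constant $K'$, which does not meet the stated budget of $m$, and there is no remaining parameter in your argument to absorb $K'$. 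The paper avoids this by using the sharper bound $M_e\le\sum_{i=0}^{J}c^i/d\le 2c^{J}/d\le 2/c$ (valid since $J=\lfloor\log_c d\rfloor-1$), which gives $\var{M_e}\le (2/c)\,\ee{M_e}$ and hence threads a factor of $1/c$ through $t^\star$, the expected total cost, and the Markov bound; the disclaimer's freedom to take $c$ sufficiently large (the paper uses $c\ge 32000$) is then precisely what pushes the final count below $m$. Your argument is repaired by replacing the bound $M_e\le 1$ with $M_e\le 2/c$ and carrying the resulting $1/c$ through to the end.
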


We now give the proof of the main algorithmic result of the paper:

\begin{proofof}{\cref{theorem:upper-bound} (first part)}
The proof of the first part of \cref{theorem:upper-bound} now follows from \cref{theorem:sample-complexity-of-algIID} and the fact that $\AlgIID$ has recursion depth of $O(\log{n})$, where each procedure in the recursion maintains $O(1)$ variables, hence requiring $O(\log{n})$ bits of space. Therefore, the total memory is $O(\log^2 n)$.
\end{proofof}

\begin{proofof}{\cref{theorem:sample-complexity-of-algIID}} We first derive an upper bound on the number of edges that $\AlgSample$ (see~\cref{alg2}) needs to test in order to obtain a constant factor approximation to maximum matching size with probability at least $9/10$, and then show that the number of iid samples that the corresponding edge tests use is bounded by $m$, as required. We define
\[
	\mu \eqdef \eetwo{e\sim U(E)}{M_e}
\]
for convenience, where $U(E)$ is the uniform distribution on $E$. Now we have $\mu\cdot m=\Theta(MM(G))$ by \cref{thm:alledge-correctness}. We now show that our algorithm obtains a multiplicative approximation to $\mu$ using at most $m$ samples.

\paragraph{Upper bounding number of edge tests that suffice for multiplicative approximation of $\mu$.}
We now analyze the number of edge-samples used by \cref{alg2}. We first analyze the sample-complexity of method \AlgSample, and later of method \AlgIID.

Let $E_t = \{e_1, \ldots, e_t\}$ be a list of $t$ iid edge-samples taken by \AlgSample$(G,t)$ (\cref{line:selecting-t-samples} of \cref{alg2}), where $t$ is a parameter passed on \cref{line:invoking-AlgSample} of \AlgIID. We now show that if $t \geq  160/\mu$, then 
\begin{equation}\label{eq:deviation-bound}
\prob{\left|\frac1{t}\sum_{i=1}^t M_{e_i}-\mu\right|>\mu/2}<1/10,
\end{equation}
where the probability is over the choice of $E_t$ as well as over the randomness involved in sampling $M_{e_t}$ for $i=1,\ldots, t$.

We prove \cref{eq:deviation-bound} by Chebyshev's inequality. For $e\sim U(E)$ we have that $M_e \le \sum_{i = 0}^J c^i / d \le 2c^J/d\le2/c$, since $J=\left\lfloor\log_c d\right\rfloor-1$.
\begin{equation*}
\var{M_e} \leq  \ee{M_e^2} \leq 2 /c\cdot\ee{M_e}=2\mu/c.
\end{equation*}
We thus get by Chebyshev's inequality, using the fact that $\var{\frac1{t}\sum_{i=1}^t M_{e_i}}=\frac1{t}\var{M_e}$, that
\begin{equation*}
\prob{\left|\frac1{t}\sum_{i=1}^t M_{e_i}-\mu\right|\ge\mu/2}\le\var{M_e}/(t\cdot (\mu/2)^2)\leq \frac{8}{tc\mu}, 
\end{equation*}
and hence \cref{eq:deviation-bound} holds for any $t\geq 160/c\mu$, as required. 

\paragraph{Upper bounding total sample complexity of edge  tests.} It remains to upper bound the total number of iid edge-samples consumed by the edge tests. For an edge $e\in E$ let $Z_e$ denote the fraction of our overall budget of $m$ samples needed to finish the invocation $\AlgETest(e)$ (equivalently, let $m\cdot Z_e$ be the number of samples taken). By \cref{lemma:algetest-output-and-sample-complexity} we have
\begin{equation} \label{eq:ze-vs-me}
Z_e \le 4 M_{e}.
\end{equation}
Since $e_1,\ldots, e_t$ are uniform samples from the edges of the graph, we have 
$$
\ee{\sum_{i=1}^t Z_{e_i}}=\sum_{i=1}^t \ee{Z_{e_i}}\leq \sum_{i=1}^t 4\ee{M_{e_i}}=4\mu t.
$$

Hence, $t$ executions of $\AlgETest(e_i)$ in expectation require at most $4\mu t\cdot m$ edges. Each of these invocations of $\AlgETest$ is performed by $\AlgSample(G,t)$. In addition to invoking $\AlgETest$, $\AlgSample(G, t)$ samples $t$ edges on \cref{line:SAMPLE-sampling-edge} to obtain $e_1, \ldots, e_t$. Therefore, the total sample complexity of $\AlgSample(G,t)$ in expectation is
$$4t\mu\cdot m+t\le 5t\mu\cdot m.$$
In the last inequality we upper-bounded $t$ by $t \mu \cdot m$. This upper-bound holds as $M_e \ge 1/d$ (\cref {init-w} of \cref{alg:E-test}), so $\mu\ge1/d\ge1/n$ and hence $\mu\cdot m\ge1$. Also, here we used that $m\ge n$ by \cref{m>n}.
%\xxx[MK]{This uses the assumption $m\geq n$. We should add this assumption. Can we ensure this by simply (implicitly) adding a star on $n$ nodes, which increases matching sizes by $1$ only? Maybe add a separate lemma for this?}

\paragraph{Upper bounding sample complexity of \AlgIID.} Finally, we bound the expected sample complexity of \AlgIID from \cref{alg2}. \AlgIID terminates only when it runs out of samples, but we consider it to have had enough samples if it completes the call of $\textsc{Sample}$ with a parameter $t\ge160/c\mu$. (In particular let $t^*$ be the smallest power of $2$ greater than $160\mu/c$; this is the value of $t$ we aim for, as all values of $t$ are powers of $2$. $t^*\le320/c\mu$.) The expected number of iid samples required is at most

$$5\mu\cdot m+10\mu\cdot m + \cdots + 5t^*\mu\cdot m\le10t^*\mu\cdot m\le 3200m/c$$

By Markov's inequality we thus have that \AlgIID the call of \textsc{Sample} for $t=t^*$ within $32000m/c\le m$ samples with probability at least $9/10$. (Let $c\ge32000$.) Conditioned on this, the algorithm succeeds with probability at least $9/10$, therefore it succeeds with overall probability at least $4/5$, as claimed.

\if 0

Hence, $\eetwo{e}{Z_e} \le 4 \mu$. Now, to obtain a constant factor approximation of $\mu$, we intuitively need to take $\Theta(1 / \mu)$ edge-samples and estimate $\mu$ by taking their average. Note that, since $\eetwo{e}{Z_e} \le 4 \mu$, those $\Theta(1/\mu)$ samples will require the tests to consume $O(1)$ fraction of the stream. We now make this intuition formal.
%We note that \cref{} simply takes an average of a large number of indepedent number of samples $e_1,\ldots, e_t$, and then outputs the empirical mean $\frac1{t}{}
%Our task now is to get a constant factor approximation of $\mu$ and use \cref{thm:alledge-correctness}, by relating $m \mu = \Theta(\mm{G})$, to conclude that our algorithm outputs a $\Theta(1)$-approximation of $\mm{G}$.

Let $E_t = \{e_1, \ldots, e_t\}$ be a list of $t$ iid edge-samples, for some $t \in \Theta(1/\mu)$ that we set later. 

Define $\Zsum$ to be the total fraction of the stream consumed by the tests run on $E_t$, i.e., 
\[
	\Zsum \eqdef \sum_{e \in E_t} Z_{e}.
\]
Now we want analyze the concentration of $\Zsum$ around its expectation by using Chebyshev's inequality. To that end, we derive the upper-bound on the variance of $\Zsum$
\begin{equation}\label{eq:var-Zsum-bound}
	\var{\Zsum} = t \var{Z_{e}} \le 16 t \ee{M_{e}^2} \le 32 t \ee{M_{e}} = 32 t \mu.
\end{equation}

\stodo{Below this is the old proof.}

%In this proof, we first analyze the sample complexity and the accuracy of the output of $\AlgSample$, and then analyze $\AlgIID$ (see \cref{alg2}).

%%%%%%%%%%%%%%%%%%%%%%%%%%%%%%%%%%%%%%%%%%%%%%%%%%%%%%

\paragraph{Analysis of $\AlgSample$}

Let us start by defining some random variables.  Let $E_t = \{e_1, \ldots, e_t\}$ be a list of $t$ iid edge-samples. 
%For each $e_i$, consider an execution $\AlgETest(e_i)$; let $M_{e_i}$ be its output, let $Z_{e_i}$ denote the sample complexity of that invocation. 
From \cref{thm:alledge-correctness} we have that the sum of the outcomes of \AlgETest's on all edges of the graph is $\Theta(\mm{G})$ in expectation; therefore, by linearity of expectation the outcome of \AlgETest on a randomly sampled edge is $\Theta(\mm{G}/m)$. That is, $\ee{m M_{e}}=\Theta(\mm{G})$ for a uniformly random $e \in E_t$.
%and
%\begin{equation}\label{eq:total-sample-complexity}
	%\sum_{i = 1}^t \ee{\tau_{e_i}} \le 4 m \sum_{i = 1}^t \ee{M_{e_i}} = 4 m t \ee{M_{e_1}},
%\end{equation}
%where the last equality follows from the fact that $M_{e_i}$ variables are independent and obtained on the same distributions of the edges.

%\stodo{We need a citation for the above.}

Define $\mu$ as the average estimated matching on the list of edges $E_t$, i.e., 
\[
	\mu \eqdef \frac{1}{t} \sum_{e \in E_t} M_{e}.
\]
%Notice that $m\mu$ corresponds to the output of $\AlgSample$. 
%Therefore, $\ee{\mu} = \ee{M_e} = \Theta(\mm{G}/m)$ for any edge $e \in E_t$. 

Next, we analyze how well $\mu$ approximates $\text{MM}(G)/m$.
From the linearity of expectation, $\ee{\mu}=\Theta(\text{MM}(G)/m)$. To proof the concentration of $\mu$ around its expectation, we apply Chebyshev's inequality. To that end, observe that $M_{e}$ values are independent random variables derived from the same distribution of the edge-samples for $e \in E_t$. Hence, we have
\begin{equation}\label{eq:var-oM-bound}
	\var{\mu} = \frac{1}{t} \var{M_{e}} \le \frac{1}{t} \ee{M_{e}^2}.
\end{equation}
Observe that $M_{e} \le \sum_{i = 0}^T c^i / n \le 2$ for any $e \in E_t$. Hence, from~\eqref{eq:var-oM-bound} we further conclude
\[
	\var{\mu} \le \frac{2}{t} \ee{M_{e}}.
\]
By Chebyshev's inequality we obtain
\begin{equation}\label{eq:chebyshev-AlgSample}
	\prob{|\mu - \ee{\mu} | \ge \frac{1}{2} \ee{\mu}} \le \frac{4 \var{\mu}}{\ee{\mu}^2} \le \frac{8}{t \cdot \ee{M_{e}}} = \frac{8}{t \cdot \ee{\mu}}.
\end{equation}

\paragraph{Analysis of $\AlgIID$}
For $t \ge 20 /\ee{\mu}$ the failure probability given by~\eqref{eq:chebyshev-AlgSample} implies that $\AlgSample$ estimates the maximum matching size up to a constant factor with constant probability. However, the value of $\ee{\mu}$, and hence the value of $t$, is not known in advance. Therefore, $\AlgIID$ runs the tests in batches (each batch means a new invocation of $\AlgSample$) by doubling $t$ from batch to batch, as achieved by \cref{line:next-batch} of \cref{alg2}. From \cref{eq:ze-vs-me} we get that for each of such batches of length $t$ which it run the $\AlgSample$ on list edges $E_t = \{e_1, \ldots, e_t\}$, we have
\begin{equation}\label{eq:numedge-vs-mu}
\sum_{e\in E_t} \ee{Z_e} \leq \sum_{e\in E_t} 4m\ee{M_e} = 4mt \ee{\mu}
\end{equation}
\atodo{next sentence should be changed or removed}
Once all edges are sampled, $\AlgIID$ returns the value $\mu$ associated with the last batch that was fully executed. 

Therefore by \cref{eq:numedge-vs-mu}, the sample complexity of running all the batches until $t$ becomes $20 / \ee{\mu}$ is in expectation at most
\[
	\sum_{i=1}^{\log_2(20 / \ee{M_{e_1}})+1}2^i \cdot 4m\ee{\mu} \le \frac{40}{\ee{\mu}} \cdot 4 m \ee{\mu} = 160 m.
\]

By Markov's inequality there are enough samples to finish the test with $9/10$ probability and $\AlgIID$ returns a constant factor approximation of $\text{MM}(G)$ with probability at least $4/5$.
\fi
\end{proofof}

%!TEX root = ./000-main.tex

%\subsection{Analysis of the Approximation Guarantee} \label{correctness-algo}
%In this section we show that the expected summation returned by running~\cref{alg:E-test} over all the edges of the graph is within a constant factor of the size of maximum matching. Formally we show that:

%\section{Approximation Guarantee of \AlgIID (\cref{alg2})} \label{correctness-algo}
%In this section we show that the expected estimate of the matching size returned by \AlgIID (\cref{alg2}) is within a constant factor of the size of maximum matching. 
\section{Correctness of \cref{alg2} (proof of \cref{thm:alledge-correctness})}
\label{correctness-algo}

	The main result of this section is the following theorem.

\alledgecorrectness*

\paragraph{Overview of techniques.}  The main challenge in proving this claim is that the function $\AlgVTest{j}$ (\cref{alg:E-test}) potentially returns different outputs for the same vertex on different runs. Moreover, the outputs of two independent invocations could differ with constant probability. The propagation of such unstable estimates over $\Theta(\log d)$ (possibly $\Theta(\log n)$) peeling steps could potentially result in a significant error in the estimation of $\mm{G}$. The previous works avoid this issue by loosening the approximation factor to $O(\poly \log n)$, which allows a union bound over all vertices of the graph. We cannot afford this. Instead, we control error propagation by showing that contributions of lower levels to higher level tests have progressively smaller variance, and hence the total error stays bounded. This is nontrivial to show, however, since none of the involved random variables concentrate particularly well around their expectation -- our main tool for dealing with this is the Oversampling Lemma (\cref{lemma:oversampling-lemma}) below.

An orthogonal source of difficulty stems from the fact that without polylogarithimic oversampling $\AlgVTest{j}$'s are inherently noisy, and might misclassify nontrivial fractions of vertices across $\omega(1)$ levels. Informally, to cope with this issue we charge the cost of the vertices that the $\AlgVTest{j}$'s misclassify to the rest of the vertices. More precisely, we show that although the algorithm might misclassify the layer that a vertex belongs to for a constant fraction of the vertices, the amount of resulting error of the edges adjacent to such vertices in the estimation of the matching is low compared to the contribution of the rest of the edges. This enables us to bound the total error cost by a small constant with respect to the size of the maximum matching. 

\subsection{Definitions and Preliminaries}

In order to establish \cref{thm:alledge-correctness} we analyze the propagation of the error introduced by misclassification in a new setting, not strictly corresponding to any of our algorithms. We first consider a hypothetical set of tests. Namely, for each vertex $v \in V$, we consider executions of $\AlgVTest{i}(v)$ for $i=1, 2, \ldots$ until a test fails. Then we use these outcomes to categorize all the vertices of $G$ into levels; these levels loosely corresponding to those of \cref{alg:global}. We then define a set of edge weights based on these levels, $\hM:E \to \bbR$. Most of the section is then devoted to showing that this is close to a maximum fractional matching in the sense that
$$
%\wh M(e) = O(1) \text{ and } 
\sum_{e\in E} \expected{\wh{M}(e)}=\Theta(\mm{G}).
$$

\paragraph{Defining a (nearly optimal) vertex cover.} Our main tool in upper bounding the size of the maximum matching in $G$ is a carefully defined (random) nested sequence of $V=\wh{V}_0\supseteq \wh{V}_1\supseteq \ldots \supseteq \wh{V}_{J + 1}$  As we show later in \cref{section5.2} (see the proof of \cref{lemma:exists-feasible-C}), the set 
\begin{equation}\label{eq:VC-definition-early}
C \eqdef \cb{ v\in V \middle| \prob{v \in \hV_{T+1}} \le 1- \frac1{3c^2}}
\end{equation}
turns out to be a nearly optimal deterministic vertex cover in $G$.

Since our algorithm is essentially an approximate and randomized version of  a peeling algorithm for approximating matching size (\cref{alg:global}), our analysis is naturally guided by a sequence of random subsets $\wh{V}_j$ of the vertex set $V$. These subsets loosely correspond to the set of vertices in $V$ that survived $j$ rounds of the peeling process. The sets are defined by the following process performed {\bf independently} by all vertices of $G$. Every $v\in V$ keeps running \AlgVTest{j}$(v)$ (\cref{alg:E-test}) for $j=0, 1,2,\ldots, $ while the tests return \true. Let $\wh{L}(v)$ denote the largest $j$ such that the corresponding test returned \true. We then let
\begin{equation}\label{eq:definition-whV}
\wh{V}_j\eqdef\left\{v\in V: \wh{L}(v)\geq j\right\}, \text{~for~}j=0,\ldots, J+1.
\end{equation}
Note that the variables $\wh{L}(v)$ are independent, and $V=\wh{V}_0\supseteq \wh{V}_1\supseteq \wh{V}_2 \supseteq \ldots \supseteq \wh{V}_{J + 1}$ with probability $1$. Also, $\hL(v)$ values can be defined via the sets $\hV_j$ as
%\begin{mdframed}
\begin{equation}\label{eq:definition-hLv}
	\hL(v) \eqdef \text{maximum $i$ such that } v \in \hV_i.
\end{equation}
%\end{mdframed}
Recall that $\hV_0$ equals $V$, so for any $v$ there is at least one $i$ such that $v \in \hV_i$, and hence the definition \cref{eq:definition-hLv} is valid.

\if 0
 Fix an input graph $G=(V,E)$ and let $\hV_0=V$. We run \AlgVTest{1} independently on all vertices and let $\hV_1$ be the set of all vertices who passed (ie. the \AlgVTest{1} returned \true). Then run \AlgVTest{2} independently on all vertices in $\hV_1$ and the let $\hV_2$ be the set of all vertices who passed. Notice that during the course of \AlgVTest{2} recursive calls will be made to \AlgVTest{1}. However, these are reevaluated independently of whether or not the tested vertex is actually in $\hV_1$. We continue running $\AlgVTest{i}$ in this manner, constructing the nested random sets $\wh{V} = \wh{V}_0 \supseteq \ldots \supseteq \wh{V}_{T+1}$.
\fi

\if 0
We call the level of a vertex $v$, $\hL(v)$, the highest indexed set $\hV_i$ that $v$ made it into.  Formally, we have
%\begin{mdframed}
\begin{equation}\label{eq:definition-hLv}
	\hL(v) \eqdef \text{maximum $i$ such that } v \in \hV_i.
\end{equation}
%\end{mdframed}
%\atodo{What if no such layer exist? we need to explain it}
\fi 

\if 0
Observe that the sets $\hV_j$ are not precomputed. Hence, when $\AlgVTest{(j + 1)}(v)$ needs to obtain the level of a neighbor $w$ of $v$ at \cref{line:invoke-test-recursively}, it invokes recursively $\AlgVTest{i}(w)$. In general, this can result in the scenario in which for different calls of $\AlgVTest{i}(w)$ we get different outputs, e.g. once $\AlgVTest{2}(w)$ being \false and once $\AlgVTest{2}(w)$ being \true. To capture the probability distribution of whether a vertex belongs to a given level, we use the silent variable $L(v)$ which is of identical distribution to $\hL(v)$.
\fi
Our proof crucially relies on a delicate analysis of the probability of a given vertex $v$ belonging to $\wh{V}_{j+1}$ conditioned on $v$ belonging to $\wh{V}_j$ for various $j=0,\ldots, J$. To analyze such events we define, for every $v\in V$, the random variable $S_{j}(v)$ as follows. Let $r = c^j m / d$ and let $e_1,\ldots, e_{r}$ be i.i.d.~uniform samples (with repetition) from the edge set $E$ of $G$. Let $i_1\leq \ldots \leq i_Q$, where $Q\leq r$, be the subset of indices corresponding to edges in the sample that are incident on $v$, i.e., $e_{i_a}=(v, w_a)$ for every $a=1,\ldots, Q$ (note that $Q$ is a random variable). For every $a=1,\ldots, Q$ let $L_a\sim \wh{L}(w_a)$ be independent samples from the distribution $\wh{L}(w_a)$ defined above. We now let 
\begin{equation}\label{eq:definition-Sjv-pedestrian}
	S_{j}(v) \eqdef \sum_{a=1}^Q \sum_{i=0}^{\min\{L_a, j\}}c^{i-j}.
\end{equation}
In other words, $S_{j}(v)$ is simply the value of the variable $S$ during the call $\AlgVTest{(j+1)}$ (\cref{alg:E-test}).
In particular, we have
\begin{equation}\label{eq:vjhat-def}
	\wh{V}_{j+1} = \left\{v\in\wh{V}_j\middle|S_{j}(v) < \delta \right\}.
\end{equation}

\if 0 Condition $S_j(v) < \delta$ in the definition above corresponds to \cref{alg:extra-early-stopping} of \AlgVTest{(j + 1)} (see \cref{alg:Tj-test}). So, to obtain $\wh{V}_{j+1}$, for each vertex $v \in \wh{V}_j$ we sample $c^j m/n$ i.i.d.~edges. For each sampled edge $e_k$ that is incident to $v$, i.e., $e_k = \{v, w\}$ for some neighbor $w$ of $v$, we add $\sum_{i=0}^{L(w)\wedge j}c^{i-j}$ to the summation. If this summation is at least the threshold $\delta$, it means that the overall degree of vertex $v$ is big and $\AlgVTest{(j+1)}$ should return \false, i.e. $v$ should get peeled at this level. \fi

\begin{remark}
Note that \cref{eq:definition-Sjv-pedestrian} and \cref{eq:vjhat-def}, together with $\wh{V}_0\eqdef V$ define $\wh{V}_j$ recursively (and in a non-cyclic manner). Indeed, $\hV_{j+1}$ depends on $S_j(v)$ which depends on $\min\{L_a, j\}$ whose distribution is defined by $\hV_i, i\leq j$ only. 
\end{remark}
We also define $S_j(v)$ in a compact way and use that definition in the proofs extensively
%\begin{mdframed}
\begin{equation}\label{eq:definition-Sjv}
	S_j(v) \eqdef \sum_{\begin{matrix}k=1\\ e_k\sim U(E)\end{matrix}}^{c^j m/d}\ind{v\in e_k}\sum_{i=0}^{\min(L(e_k\backslash v), j)}c^{i-j}.
\end{equation}
%\end{mdframed}

\begin{remark}
Note that here $L(w)$ is a random variable independently sampled from the distribution of $\hL(w)$, similarly to $L_a$ in \cref{eq:definition-Sjv-pedestrian}.
\end{remark}

\begin{remark}
Here $U(E)$ denotes the uniform distribution over $E$. Also we denote by $v\in e$ the fact that $e$ is adjacent to $v$, where $v$ is a vertex and $e$ is an edge. In this case we denote the other endpoint of $e$ by $e\backslash v$. We use this notation heavily throughout the analysis.
\end{remark}

\paragraph{Defining the fractional pseudo-matching $\wh{M}$.}
We now define a (random) fractional assignment $\wh{M}$ of mass to the edges of $G$ that our analysis will be based on: we will later show (see \cref{lemma:upper-bound-on-hM} in ~\cref{sec:analysis}) that $\wh{M}$ is close to some matching of $G$.  For every edge $e = (u, v)\in E$ we let 
\begin{equation}\label{eq:definition-whM-edge}
	\wh{M}(u, v) \eqdef \sum_{i=0}^{\min\{\wh{L}(u), \wh{L}(v)\}}c^i/d,
\end{equation}
and define the size $|\wh{M}|$ of the pseudo-matching $\wh{M}$ to be the summation of its fractional matching mass along all the edges:
\begin{equation}\label{eq:m-for-edge-total}
	|\wh{M}|\eqdef \sum_{e\in E} \wh{M}(e).
\end{equation}
We note that $\wh{M}$ is a random variable, and we show later in \cref{sec:analysis} (see \cref{lemma:upper-bound-on-hM,lemma:exists-feasible-C}) that $\expected{|\wh{M}|}$ is a constant factor approximation to the size of a maximum matching in $G$. The following natural auxiliary definitions will be useful.

For every vertex $v\in V$, we let
\begin{equation}\label{eq:mhatv-def}
\wh{M}(v) \eqdef \sum_{w\in N(v)}\sum_{i=0}^{\min\{\wh{L}(v), \wh{L}(w)\}}c^i/d=\sum_{w\in N(v)}\wh{M}(v,w),
\end{equation}
denote amount of fractional mass incident to $v$. Similarly, we let 
\begin{equation}\label{eq:mhatjv-def}
\wh{M}_j(v) \eqdef \sum_{w\in N(v)}\sum_{i=0}^{\min\{\wh{L}(w), j\}}c^i/d,
\end{equation}
denote the amount of fractional mass contributed to $v$ by its neighbors conditioned on $\hL(v)=j$.

\if 0

Throughout the section, we set the parameters $c$ and $\delta$ as in \cref{table:parameters}. Recall that $\delta$ is the parameter that we use in \AlgVTest{j} (\cref{alg:E-test}) to implement our early stopping rule.
\begin{table}[ht]
\begin{mdframed}[style=myframe]
The constant of the early-stopping rule $\delta$:\\
\[
	\delta \eqdef 1/2.
\]

The approximation constant $c$:\\
\[
	c \geq \max(2000\log(1/\delta)/\delta,32000)=32000.
\]

\end{mdframed}
\caption{Global parameters $c$ and $\delta$.}
\label{table:parameters}
\end{table} 

%%%%%%%%%%%%%%%%%%%%%%%%%%%%%%%%%%%%%%%%%%%%%%%%%%%%%%%%
%%%%%%%%%%%%%%%%%%%%%%%%%%%%%%%%%%%%%%%%%%%%%%%%%%%%%%%%
%\subsubsection{Basic properties of $\wh{V}_j$ and $\wh{M}$}\label{sec:rel-ouralgo-m}

\fi

In the upcoming section we will prove upper and lower bounds on $\wh{M}$ to prove that it is within a constant factor of the matching number of $G$. Supposing we have this result, it is easy to deduce \cref{thm:alledge-correctness}.
\begin{proof}[Proof of \cref{thm:alledge-correctness}]
Consider the marginal distribution of $\wh M(e)$ for some edge $e=(u,v)$. $\wh M(e)$ is essentially the result of running independent \AlgVTest{i}'s on $u$ and $v$ to determine at which level either vertex is peeled off, and then updating the edge weight in accordance with \AlgETest. Thus the marginal distribution of $\wh M(e)$ is identical to that of \AlgETest$(e)$. Therefore

$$\sum_{e\in E}\ee{M_e}=\sum_{e\in E}\ee{\wh M(e)}=\ee{|\wh M|}=\Theta(|M|),$$

thus proving the theorem.

\end{proof}%what is this? - Jakab

Next we state two observations that follow directly from the definitions above.
\begin{observation}\label{obs:Mj-and-hLj-independent}
For any vertex $v$ and any $j$ the following holds:
\begin{enumerate}[(a)]
\item\label{item:obj-Mjv-and-hLv-independent} $\wh{M}_j(v)$ depends only on vertices other than $v$, therefore it is independent of $\wh{L}(v)$.
\item\label{item:Mv-and-MLv} $\wh{M}(v)=\wh{M}_{\wh{L}(v)}(v)$.
\item\label{item:hM=Sj} $\ee{\wh{M}_j(v)}=\ee{S_j(v)}=\sum_{w\in N(v)}\sum_{i=0}^j\prob{w\in\wh{V}_i} \cdot c^i/d.$\label{item:c}
\end{enumerate}
\end{observation}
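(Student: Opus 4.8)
The plan is to verify the three parts directly from the definitions; nothing beyond linearity of expectation and the already-noted mutual independence of the variables $\wh{L}(\cdot)$ is required.

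For part (a), I would observe that by \cref{eq:mhatjv-def} the random variable $\wh{M}_j(v)$ is, for fixed $j$, a deterministic function of the collection $\{\wh{L}(w)\ :\ w\in N(v)\}$ and of nothing else. Since $G$ has no self-loops, $v\notin N(v)$, and the variables $\wh{L}(\cdot)$ are mutually independent (noted in the text immediately following \cref{eq:definition-whV}), so $\wh{M}_j(v)$ is independent of $\wh{L}(v)$. For part (b), I would substitute $j=\wh{L}(v)$ into \cref{eq:mhatjv-def}; since $\min\{\wh{L}(w),\wh{L}(v)\}=\min\{\wh{L}(v),\wh{L}(w)\}$, the resulting expression is precisely the definition \cref{eq:mhatv-def} of $\wh{M}(v)$.

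Part (c) carries the only (mild) computation. The key identity is $\sum_{i=0}^{\min\{\ell,j\}}c^i=\sum_{i=0}^{j}c^i\,\ind{\ell\ge i}$, valid for every integer $\ell\ge 0$. Applying it with $\ell=\wh{L}(w)$ and using $\{\wh{L}(w)\ge i\}=\{w\in\wh{V}_i\}$ (by \cref{eq:definition-whV}), then taking expectations and summing over $w\in N(v)$ by linearity, yields $\ee{\wh{M}_j(v)}=\sum_{w\in N(v)}\sum_{i=0}^{j}\prob{w\in\wh{V}_i}\,c^i/d$. For $\ee{S_j(v)}$ I would start from \cref{eq:definition-Sjv}: linearity over the $c^j m/d$ iid uniform edges and the independent level samples $L(\cdot)$, together with the fact that a uniform edge equals $(v,w)$ with probability $1/m$ for each $w\in N(v)$, gives $\ee{S_j(v)}=\frac{c^j m}{d}\cdot\frac1m\sum_{w\in N(v)}\ee{\sum_{i=0}^{\min\{L(w),j\}}c^{i-j}}$. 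Applying the same indicator rewriting --- now with $\prob{L(w)\ge i}=\prob{\wh{L}(w)\ge i}=\prob{w\in\wh{V}_i}$, since $L(w)$ has the law of $\wh{L}(w)$ --- the factors of $m$ cancel and $c^{j}\cdot c^{i-j}=c^i$, leaving $\ee{S_j(v)}=\frac1d\sum_{w\in N(v)}\sum_{i=0}^{j}c^i\,\prob{w\in\wh{V}_i}$, which matches the expression already obtained for $\ee{\wh{M}_j(v)}$.

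There is no genuine obstacle here. The only point that merits a sentence of care is the independence assertion in part (a): it is a statement not about the algorithm's code but about the coupling fixed in this section, namely that the $\wh{L}(v)$ for distinct $v$ are generated by independent executions with independent recursion randomness, hence are measurable with respect to disjoint blocks of the underlying randomness, so that a function of $\{\wh{L}(w):w\in N(v)\}$ is independent of $\wh{L}(v)$.
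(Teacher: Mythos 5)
Your proof is correct and matches the paper's intent: the paper offers no proof at all, introducing the observation with ``two observations that follow directly from the definitions above,'' and your argument is exactly the routine unpacking of \cref{eq:mhatv-def}, \cref{eq:mhatjv-def}, \cref{eq:definition-Sjv}, and the independence of the $\wh{L}(\cdot)$ variables that this assertion presupposes. The indicator identity $\sum_{i=0}^{\min\{\ell,j\}}c^i=\sum_{i=0}^{j}c^i\,\ind{\ell\ge i}$ and the $1/m$ edge-sampling computation are both verified correctly, so nothing further is needed.
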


\begin{observation}\label{obs:relation-between-Sj-and-Sj+1}
For any vertex $v$ and any $j$, it holds that
	\[
		(c+1)\cdot\ee{S_j(v)}\ge\ee{S_{j+1}(v)},
	\]
	where $S_j(v)$ is defined in \cref{eq:definition-Sjv}.
\end{observation}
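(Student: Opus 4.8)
The plan is to compare the expectations of $S_j(v)$ and $S_{j+1}(v)$ using the closed form provided by \cref{obs:Mj-and-hLj-independent}\eqref{item:c}. Recall from \cref{eq:definition-Sjv-pedestrian}--\cref{eq:definition-Sjv} that $S_j(v)$ samples $c^j m/d$ iid edges and, for each one incident to $v$ with other endpoint $w$, contributes $\sum_{i=0}^{\min\{L(w),j\}} c^{i-j}$. Taking expectations over the iid edge choices and over the independent samples $L(w)\sim\wh L(w)$, the number of sampled edges landing on a fixed edge $(v,w)$ is $\frac{c^j m}{d}\cdot\frac1m = \frac{c^j}{d}$ in expectation, so
\[
\ee{S_j(v)} = \sum_{w\in N(v)} \frac{c^j}{d}\cdot \ee{\sum_{i=0}^{\min\{L(w),j\}} c^{i-j}} = \sum_{w\in N(v)}\sum_{i=0}^{j}\prob{w\in\wh V_i}\, \frac{c^i}{d},
\]
where the last equality is exactly \cref{obs:Mj-and-hLj-independent}\eqref{item:c} (using $\prob{w\in\wh V_i}=\prob{\wh L(w)\ge i}$ and reindexing the geometric-type sum; the factor $c^j$ and the $c^{i-j}$ combine to $c^i$).

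With this formula in hand, the inequality becomes a termwise comparison. Write $p_{w,i}\eqdef\prob{w\in\wh V_i}$, which is nonincreasing in $i$ since the sets $\wh V_i$ are nested. Then
\[
\ee{S_{j+1}(v)} = \sum_{w\in N(v)}\sum_{i=0}^{j+1} p_{w,i}\,\frac{c^i}{d} = \ee{S_j(v)} + \sum_{w\in N(v)} p_{w,j+1}\,\frac{c^{j+1}}{d}.
\]
It therefore suffices to show $\sum_{w} p_{w,j+1}\,c^{j+1}/d \le c\cdot\ee{S_j(v)}$. But $\ee{S_j(v)}\ge \sum_{w} p_{w,j}\,c^j/d \ge \sum_w p_{w,j+1}\,c^j/d$, again using monotonicity $p_{w,j}\ge p_{w,j+1}$; multiplying by $c$ gives exactly the needed bound. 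Hence $\ee{S_{j+1}(v)} \le \ee{S_j(v)} + c\cdot\ee{S_j(v)} = (c+1)\ee{S_j(v)}$.

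The only mild subtlety — and the step I'd be most careful about — is the bookkeeping that turns the definition of $S_j(v)$ (which carries the overall prefactor $c^{-j}$ inside the inner sum and a sampling rate $c^j m/d$ outside) into the clean expression $\sum_i p_{w,i} c^i/d$; once \cref{obs:Mj-and-hLj-independent}\eqref{item:c} is invoked this is immediate, so really there is no obstacle beyond confirming that identity. Everything else is just the nestedness of the $\wh V_i$, which holds with probability $1$ by construction (see the discussion following \cref{eq:definition-whV}).
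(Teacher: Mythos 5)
Your proof is correct and follows essentially the same route as the paper: both start from the closed form $\ee{S_j(v)}=\sum_{w\in N(v)}\sum_{i=0}^{j}\prob{w\in\wh V_i}\,c^i/d$ of \cref{obs:Mj-and-hLj-independent}\eqref{item:c}, identify $\ee{S_{j+1}(v)}-\ee{S_j(v)}$ as the single level-$(j+1)$ term, and bound it by $c\cdot\ee{S_j(v)}$ using the nestedness of the $\wh V_i$. The only cosmetic difference is the order in which monotonicity and the "one term is at most the whole sum" step are applied.
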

\begin{proof}
This follows directly from the formula of $\ee{S_j(v)}$ in \cref{obs:Mj-and-hLj-independent} \ref{item:c}:
\begin{align*}
	\ee{S_{j+1}(v)}-\ee{S_{j}(v)}&=\sum_{w\in N(v)}\prob{v\in\hV_{j+1}}\cdot c^{j+1}/d\\
	&\le\sum_{w\in N(v)}\prob{v\in\hV_j}\cdot c^{j+1}/d\\
	&\le c\cdot\sum_{i=0}^j\sum_{w\in N(v)}\prob{w\in\hV_i}\cdot c^i/d\\
	&=c\cdot\ee{S_j(v)},
\end{align*}
which implies the observation.
\end{proof}

We will use the following well-known concentration inequalities.
\begin{theorem}[Chernoff bound]\label{lemma:chernoff}
	Let $X_1, \ldots, X_k$ be independent random variables taking values in $[0, a]$. Let $X \eqdef \sum_{i = 1}^k X_i$. Then, the following inequalities hold:
	\begin{enumerate}[(a)]
		\item\label{item:less-than-1} For any $\delta \in [0, 1]$ if $\ee{X} \le U$ we have
			\[
				\prob{X \ge (1 + \delta) U} \le \exp\rb{- \delta^2 U / (3a)}.
			\]
		\item\label{item:at-least-1} For any $\delta > 1$ if $\ee{X} \le U$ we have
			\[
				\prob{X \ge (1 + \delta) U} \le \exp\rb{- (\delta+1)\log(\delta+1) U / 3a}\le\exp\rb{-\delta U/(3a)}.
			\]	
		\item\label{item:lower-tail} For any $\delta > 0$ if $\ee{X} \ge U$ we have
			\[
				\prob{X \leq (1 - \delta) U} \le \exp\rb{- \delta^2 U / (2a)}.
			\]			
	\end{enumerate}
\end{theorem}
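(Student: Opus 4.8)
The plan is to prove all three inequalities by the standard moment-generating-function (Bernstein--Chernoff) argument, after a preliminary rescaling. Since each $X_i$ lies in $[0,a]$, I would first replace $X_i$ by $X_i/a$ and $U$ by $U/a$: this makes the $X_i$ take values in $[0,1]$, scales $\ee{X}$ and the thresholds $(1\pm\delta)U$ by $1/a$, and scales every exponent on the right-hand side by $1/a$ as well, so all three claimed inequalities are equivalent to their $a=1$ versions. Hence from now on assume $X_i\in[0,1]$ and $X=\sum_{i=1}^k X_i$, with $\log$ denoting the natural logarithm.

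For the upper tails (items \ref{item:less-than-1} and \ref{item:at-least-1}), fix $t>0$ and apply Markov's inequality to $e^{tX}$: $\prob{X\ge(1+\delta)U}\le e^{-t(1+\delta)U}\ee{e^{tX}}$. By independence $\ee{e^{tX}}=\prod_i\ee{e^{tX_i}}$, and convexity of $x\mapsto e^{tx}$ on $[0,1]$ gives the pointwise bound $e^{tX_i}\le 1+(e^t-1)X_i$, whence $\ee{e^{tX_i}}\le 1+(e^t-1)\ee{X_i}\le\exp\big((e^t-1)\ee{X_i}\big)$. Multiplying over $i$ and using $e^t-1>0$ together with $\ee{X}\le U$ gives $\ee{e^{tX}}\le\exp\big((e^t-1)U\big)$, so $\prob{X\ge(1+\delta)U}\le\exp\big((e^t-1-t(1+\delta))U\big)$. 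Optimizing the exponent over $t$ (the minimizer is $t=\log(1+\delta)$) produces the classical estimate
\[
	\prob{X\ge(1+\delta)U}\le\exp\big((\delta-(1+\delta)\log(1+\delta))\,U\big).
\]
Items \ref{item:less-than-1} and \ref{item:at-least-1} then follow from routine scalar inequalities upper bounding $g(\delta)\eqdef\delta-(1+\delta)\log(1+\delta)$: for $\delta\in[0,1]$ one checks $g(\delta)\le-\delta^2/3$ (equivalently $(1+\delta)\log(1+\delta)\ge\delta+\delta^2/3$, which holds since both sides and their first derivatives agree at $\delta=0$ and a second-derivative comparison on $[0,1]$ gives the inequality), and for $\delta>1$ one bounds $g(\delta)\le-\tfrac13(1+\delta)\log(1+\delta)$ by an analogous elementary estimate. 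The final displayed inequality in \ref{item:at-least-1} uses only $(1+\delta)\log(1+\delta)\ge\delta$, which holds for all $\delta\ge0$ since the difference vanishes at $0$ and has derivative $\log(1+\delta)\ge0$.

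For the lower tail (item \ref{item:lower-tail}) I would run the symmetric argument: for $t>0$, $\prob{X\le(1-\delta)U}=\prob{e^{-tX}\ge e^{-t(1-\delta)U}}\le e^{t(1-\delta)U}\ee{e^{-tX}}$, and convexity of $x\mapsto e^{-tx}$ on $[0,1]$ gives $\ee{e^{-tX_i}}\le 1+(e^{-t}-1)\ee{X_i}\le\exp\big((e^{-t}-1)\ee{X_i}\big)$. Since $e^{-t}-1<0$, using $\ee{X}\ge U$ we get $\ee{e^{-tX}}\le\exp\big((e^{-t}-1)U\big)$ and $\prob{X\le(1-\delta)U}\le\exp\big((e^{-t}-1+t(1-\delta))U\big)$. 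Taking $t=-\log(1-\delta)$ (valid for $\delta\in(0,1)$; for $\delta\ge1$ the event has probability $0$ since $X\ge0$) yields $\prob{X\le(1-\delta)U}\le\exp\big(-(\delta+(1-\delta)\log(1-\delta))\,U\big)$, and the claim follows from the elementary inequality $\delta+(1-\delta)\log(1-\delta)\ge\delta^2/2$ on $[0,1)$, verified by differentiating twice (the function and its derivative vanish at $\delta=0$ and its second derivative $\delta/(1-\delta)$ is nonnegative). There is no genuine obstacle here — this is a textbook fact recorded for completeness — and the only points requiring care are the rescaling to $a=1$ and keeping track of the sign of $e^{\pm t}-1$ when passing from the per-coordinate bound $\ee{X_i}$ to the aggregate bound $U$.
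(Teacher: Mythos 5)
The paper never proves this theorem: it is invoked as a ``well-known concentration inequality,'' so there is no in-paper argument to compare against, and your MGF/Bernstein route is the standard one. Parts (a) and (c) are essentially correct, modulo two small patches. In (a), the ``second-derivative comparison'' does not literally work: for $h(\delta)\eqdef(1+\delta)\log(1+\delta)-\delta-\delta^2/3$ one has $h''(\delta)=\tfrac{1}{1+\delta}-\tfrac23$, which changes sign at $\delta=1/2$; the inequality $h\ge0$ on $[0,1]$ is nevertheless true (e.g.\ check that $h'(\delta)=\log(1+\delta)-2\delta/3$ stays nonnegative there, since it increases on $[0,1/2]$, decreases afterwards, and $h'(1)=\log 2-2/3>0$). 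In (c), at $\delta=1$ the event $X\le(1-\delta)U$ is $X=0$, which can have positive probability; it is covered by $\prob{X=0}\le\lim_{t\to\infty}\ee{e^{-tX}}\le e^{-U/a}\le e^{-U/(2a)}$ rather than by ``probability $0$.''

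The genuine gap is in (b). The scalar inequality you assert there, $g(\delta)\le-\tfrac13(1+\delta)\log(1+\delta)$, i.e.\ $\delta\le\tfrac23(1+\delta)\log(1+\delta)$, is \emph{false} for $1<\delta\lesssim1.4$: at $\delta=1.3$ the right-hand side is $\tfrac23\cdot 2.3\cdot\log(2.3)\approx1.277<1.3$. Worse, the first displayed bound of (b) is not merely unreachable by your method but is itself false in that range, so no proof can close this step: the Chernoff exponent $\delta-(1+\delta)\log(1+\delta)$ is attained up to polynomial factors by $X\sim\mathrm{Binomial}(k,U/k)$ with $k$ large (the Poisson limit), and at $\delta=1.3$ it equals $\approx-0.616\,U$, while the claimed exponent $-(1+\delta)\log(1+\delta)U/3$ is $\approx-0.639\,U$, a strictly smaller bound than the true tail for large $U$. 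What your derivation does deliver for all $\delta\ge1$ is the \emph{final} bound $\exp\rb{-\delta U/(3a)}$, via the valid inequality $(1+\delta)\log(1+\delta)\ge\tfrac43\delta$ for $\delta\ge1$ (it holds at $\delta=1$ since $2\log2>4/3$, and the derivative $\log(1+\delta)+1$ exceeds $4/3$ once $\delta\ge e^{1/3}-1$); you should prove that form directly and drop, or restrict to $\delta\ge3/2$ say, the intermediate $(1+\delta)\log(1+\delta)/3$ expression. Note your last sentence of (b) only justifies the chain's second inequality, not the first, so as written the proof of (b) does not go through.
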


\subsection{Lower Bound: Constructing a Near-Optimal Matching from $\wh{M}$}
\label{sec:analysis}

As the main result of this section, we prove a lower bound on $|\hM|$. Namely, we show that the mass defined by $\hM$ is at most a constant factor larger than the size of a maximum matching.
\begin{lemma}[Lower-bound on $|\hM|$]\label{lemma:upper-bound-on-hM}
Let $c\ge20$ and $0<\delta\le1$. For any graph $G=(V,E)$, there exists a feasible fractional matching $M$ such that $\ee{|\wh{M}|}=O(|M|)$, where $\wh{M}$ is defined in~\cref{eq:m-for-edge-total}.
\end{lemma}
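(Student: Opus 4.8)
The plan is to construct an explicit feasible fractional matching $M$ on $G$ whose size is comparable (up to a constant) to $\expected{|\hM|}$. The natural candidate is to take $M$ proportional to the expectation of $\hM$ itself, i.e.\ set $M(e) \eqdef \frac{1}{\lambda}\,\expected{\hM(e)}$ for a suitable large constant $\lambda$, after first discarding the contribution of ``bad'' vertices — those $v$ for which $\expected{\hM(v)}$ (or, more precisely, the probability that $\hM(v)$ is large) is too big. This mirrors the discussion in the technical overview: a worst-case vertex can accumulate super-constant mass, so we cannot hope that a global rescaling of $\hM$ is feasible, but after pruning the few bad vertices it should be.

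Concretely, the steps I would carry out are as follows. First, I would establish the promised concentration statement for $\hM(v)$, i.e.\ $\prob{\hM(v) > x} = O(\expected{\hM(v)}/x^2)$ (the ``\cref{corollary:Mv-greater-than-x}'' referenced in the overview); this rests on the Oversampling Lemma (\cref{lemma:oversampling-lemma}) controlling how the contributions from level $j$ behave, together with \cref{obs:Mj-and-hLj-independent} relating $\hM_j(v)$ and $S_j(v)$ and the recursive definition \cref{eq:vjhat-def}. Second, using this tail bound I would show $\expected{\hM(v)\cdot\ind{\hM(v)\ge\lambda}} \le \frac14\expected{\hM(v)}$ for $\lambda$ a large enough constant (the ``\cref{corollary:bad_vertices}'' statement), so that conditioning/truncating at threshold $\lambda$ loses only a constant factor of the total mass. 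Third, I would define $M(e) = \frac{1}{\lambda}\expected{\hM(e)}\cdot\ind{\text{both endpoints of }e\text{ are good}}$, where ``good'' means the vertex is not among those discarded; by the truncation bound $\sum_e M(e) = \Omega(\expected{|\hM|})$. Fourth, I would verify feasibility: for each good vertex $v$, $\sum_{w\in N(v)} M(v,w) \le \frac1\lambda \expected{\hM(v)} \le 1$, where the last inequality is exactly the definition of ``good'' (equivalently, the threshold $\lambda$ on adjacent mass), so $M$ is a valid fractional matching. Combining, $|M| = \Omega(\expected{|\hM|})$, i.e.\ $\expected{|\hM|} = O(|M|)$, as claimed.

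The main obstacle I expect is the second-moment / quadratic concentration bound on $\hM(v)$, i.e.\ proving $\prob{\hM(v)>x} = O(\expected{\hM(v)}/x^2)$. The random variable $\hM(v) = \hM_{\hL(v)}(v)$ couples the (random) level $\hL(v)$ of $v$ with the mass its neighbors contribute, and $\hL(v)$ is itself the result of a cascade of noisy tests; moreover the summands $S_j(v)$ do not concentrate well individually. The key idea to push through is the one flagged in the overview: condition on $\hL(v)=j$, note by \cref{obs:Mj-and-hLj-independent}\ref{item:obj-Mjv-and-hLv-independent} that $\hM_j(v)$ is independent of the event $\{\hL(v)=j\}$, bound $\prob{\hL(v)\ge j}$ using the Oversampling Lemma to get geometric decay in $j$ of the error probabilities across levels (avoiding a union bound over $\Theta(\log n)$ levels), and then sum the tail contributions $\prob{\hL(v)=j}\cdot\prob{\hM_j(v)>x}$ over $j$, using Chebyshev on $S_j(v)=\hM_j(v)$ via its variance bound (it is a sum of bounded independent contributions, one per sampled edge) together with \cref{obs:relation-between-Sj-and-Sj+1} to relate the per-level expectations. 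The bookkeeping that ties the geometric decay of misclassification probabilities to the $1/x^2$ tail is where the real work lies; everything after that (truncation and feasibility) is routine.
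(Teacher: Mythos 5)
Your overall plan coincides with the paper's: the heart of the argument is exactly \cref{corollary:Mv-greater-than-x} and \cref{corollary:bad_vertices} (proved via \cref{lemma:prob-of-Mv-and-Lv} and the Oversampling Lemma, essentially along the lines you sketch), followed by a truncate-at-$\lambda$-and-rescale construction. The one place where your write-up does not go through as stated is the final step. You discard vertices based on the \emph{deterministic} quantity $\expected{\hM(v)}$ (``good'' means $\expected{\hM(v)}\le\lambda$) and then invoke the truncation bound to claim the discarded mass is only a constant fraction of the total; but \cref{corollary:bad_vertices} bounds $\expected{\hM(v)\cdot\ind{\hM(v)\ge\lambda}}$, i.e.\ the mass on the \emph{event} that the realized $\hM(v)$ is large, which says nothing directly about how much total mass sits at vertices whose \emph{expectation} exceeds $\lambda$ --- a priori every vertex could have $\expected{\hM(v)}$ slightly above $\lambda$, in which case your $M$ would be identically zero. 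There are two easy repairs. (i) Do what the paper does: call $v$ violating when the \emph{realized} $\hM(v)\ge\lambda$, keep only the non-violating edges of the realized $\hM$ scaled by $1/\lambda$ (this is feasible pointwise), and compute $\expected{|M|}\ge\frac{1}{4\lambda}\expected{|\hM|}$ directly from \cref{corollary:bad_vertices}; some realization of the random $M$ then witnesses the lemma. (ii) Alternatively, observe that \cref{corollary:bad_vertices} already forces a uniform bound: $\expected{\hM(v)}=\expected{\hM(v)\ind{\hM(v)\ge\lambda}}+\expected{\hM(v)\ind{\hM(v)<\lambda}}\le\frac14\expected{\hM(v)}+\lambda$, hence $\expected{\hM(v)}\le\frac43\lambda$ for \emph{every} $v$; so with threshold $2\lambda$ no vertex is ever discarded and your deterministic $M(e)=\frac{1}{2\lambda}\expected{\hM(e)}$ is feasible with $|M|=\frac{1}{2\lambda}\expected{|\hM|}$, which is arguably cleaner than the paper's randomized construction.
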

The following lemma, which is the main technical result that we use in the proof of \cref{lemma:upper-bound-on-hM}, shows that if the pseudo-matching weight $\hM(v)$ is large at some level, then it is very likely that $v$ does not pass the next vertex tests. This lemma is crucial in proving an upper-bound on the estimated size of the matching.
\begin{restatable}[Concentration on $\hM(v)$]{lemma}{concentrationMvlemma}\label{lemma:prob-of-Mv-and-Lv}
For any vertex $v$, any $j>0$, and constants $c$ and $x$ such that $c\ge20$ and $x\ge100c\log{c}$, we have:
\begin{equation}\label{eq:mhat-uppertail}
	\prob{\wh{M}(v)\ge x\ \wedge\ \wh{L}(v)=j+1}\le\frac{10c^2}{x^2}\cdot \ee{\wh{M}(v) \cdot \ind{\wh{L}(v)=j}}.
\end{equation}
Where $\wh{L}(v)$ and $\wh{M}(v)$ are defined in~\cref{eq:definition-hLv} and~\cref{eq:mhatv-def}, respectively.
\end{restatable}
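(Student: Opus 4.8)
I want to bound the probability that $v$ both has large incident pseudo-matching mass $\wh{M}(v)\ge x$ and is peeled off exactly at level $j+1$ (i.e. $\wh{L}(v)=j+1$). The key structural observation is \cref{obs:Mj-and-hLj-independent}: on the event $\wh{L}(v)=j+1$ we have $\wh{M}(v)=\wh{M}_{j+1}(v)$, and $\wh{M}_{j+1}(v)$ is independent of $\wh{L}(v)$ since it depends only on the levels of the neighbors of $v$. So I can write
\begin{equation*}
\prob{\wh{M}(v)\ge x\ \wedge\ \wh{L}(v)=j+1}=\prob{\wh{M}_{j+1}(v)\ge x\ \wedge\ \wh{L}(v)=j+1}\le \prob{\wh{M}_{j+1}(v)\ge x \;\middle|\; v\in\wh{V}_j}\cdot\prob{v\in\wh{V}_j},
\end{equation*}
using that $\{\wh{L}(v)=j+1\}\subseteq\{v\in\wh{V}_j\}$. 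Now the goal is to trade the "large $\wh{M}_{j+1}(v)$" event against the quantity $\ee{\wh{M}(v)\cdot\ind{\wh{L}(v)=j}}=\ee{\wh{M}_j(v)\cdot\ind{\wh{L}(v)=j}}=\ee{\wh{M}_j(v)}\cdot\prob{\wh{L}(v)=j}$ that appears on the right-hand side (again using the independence of $\wh{M}_j(v)$ and $\wh{L}(v)$).

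**Main steps.** First I would relate $\wh{M}_{j+1}(v)$ to $S_j(v)$, the sampled estimator used inside $\AlgVTest{(j+1)}$: by \cref{obs:Mj-and-hLj-independent}\ref{item:hM=Sj}, $\ee{\wh{M}_{j+1}(v)}$ and $\ee{S_j(v)}$ are comparable up to a factor of $c$ (via \cref{obs:relation-between-Sj-and-Sj+1}, $\ee{S_{j+1}(v)}\le(c+1)\ee{S_j(v)}$, and $\ee{S_j(v)}=\ee{\wh{M}_j(v)}$). Second, I need a tail bound: $\wh{M}_{j+1}(v)=\sum_{w\in N(v)}\sum_{i=0}^{\min\{\wh{L}(w),j+1\}}c^i/d$ is a sum of independent bounded contributions (one per neighbor, each at most $\sum_{i=0}^{J}c^i/d\le 2c^J/d\le 2/c$ by the choice $J=\lfloor\log_c d\rfloor-1$), so Chernoff (\cref{lemma:chernoff}\ref{item:at-least-1}) applies. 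With $U=\ee{\wh{M}_{j+1}(v)}$ and $x\ge100c\log c\gg U$ in the relevant regime, the Chernoff bound gives $\prob{\wh{M}_{j+1}(v)\ge x}\le \exp(-\Omega(x/(2/c)))=\exp(-\Omega(cx))$, which is far smaller than $1/x^2$ — this handles the case where the expectation is small. Third, I need to split into cases based on the size of $\ee{\wh{M}_j(v)}$ (equivalently $\ee{S_j(v)}$): if it is large (say $\ge$ some constant multiple of $\delta$) then $\prob{\wh{L}(v)=j}$ is already forced to be small by the definition of $\wh{V}_{j+1}$ via $S_j(v)$, OR rather $\ee{\wh M(v)\ind{\wh L(v)=j}}$ is itself large, making the inequality easy; if it is small, then both a cheap Markov bound on $\prob{v\in\wh{V}_j}$ isn't needed and the Chernoff tail on $\wh{M}_{j+1}(v)$ does the work, giving a bound with room to spare. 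Putting the pieces together with the explicit constants ($10c^2/x^2$ on the RHS) should close the argument.

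**The obstacle.** The delicate part is the bookkeeping in the case analysis: I need to simultaneously relate (a) $\prob{\wh{L}(v)=j+1}$ (which lives "one level up") to (b) $\prob{\wh{L}(v)=j}$ and $\ee{\wh{M}_j(v)}$ (which live "one level down"), passing through the $c$-fold change of sampling rate between $S_j$ and $S_{j+1}$, while keeping the constants in line to land exactly at $10c^2/x^2$. The asymmetry is real — the neighbor-level distributions feeding $\wh{M}_j$ and $\wh{M}_{j+1}$ are the same (they only involve $\wh{V}_i$ for $i\le j$ — note $\min\{\wh{L}(w),j+1\}$ vs $\min\{\wh{L}(w),j\}$ differ only in whether a neighbor at level exactly $\ge j+1$ contributes an extra $c^{j+1}/d$ term), so the two differ by at most a factor $c+1$ in expectation, but I have to be careful that $x\ge 100c\log c$ is exactly the threshold at which the Chernoff bound beats $1/x^2$ with the factor-$c$ slack absorbed. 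I would also double-check the edge case $\ee{\wh{M}_{j+1}(v)}$ being itself of order $x$ or larger, where the Chernoff bound degrades — but there the right-hand side $\ee{\wh{M}(v)\ind{\wh{L}(v)=j}}$ is correspondingly large (since $\ee{\wh M_j(v)}\ge\ee{\wh M_{j+1}(v)}/(c+1)$ and the early-stopping definition of $\wh V_{j+1}$ keeps $\prob{\wh L(v)=j}$ from collapsing when $\ee{S_j(v)}$ is moderate), so the inequality still holds; making this rigorous with constants is where the bulk of the work goes.
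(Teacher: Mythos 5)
There is a genuine gap, and it is located in your very first displayed inequality. By bounding
$\prob{\wh{M}_{j+1}(v)\ge x\ \wedge\ \wh{L}(v)=j+1}\le\prob{\wh{M}_{j+1}(v)\ge x}\cdot\prob{v\in\wh{V}_j}$
you discard the factor $\prob{v\in\wh{V}_{j+1}\mid v\in\wh{V}_j}=\prob{S_j(v)<\delta}$, and this factor is exactly what saves the lemma in the large-expectation regime. Concretely, suppose every neighbor of $v$ sits deterministically at level $j$, with $j$ chosen so that $\wh{M}_{j+1}(v)=\sum_{w}\sum_{i=0}^{j}c^i/d\approx c^{j+1}/((c-1))=x$. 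Then $\prob{\wh{M}_{j+1}(v)\ge x}=1$, so your bound on the left-hand side is $\prob{v\in\wh{V}_j}$, while the right-hand side is at most $\tfrac{10c^2}{x^2}\cdot\ee{\wh{M}_j(v)}\cdot\prob{v\in\wh{V}_j}\le\tfrac{10c^2}{x}\prob{v\in\wh{V}_j}$, which is smaller by a factor $x/(10c^2)\gg1$ once $x$ is taken larger than $10c^2$ (the lemma must hold for \emph{all} $x\ge 100c\log c$). Your remedy in the "obstacle" paragraph — that the right-hand side is "correspondingly large" because $\ee{\wh{M}_j(v)}\ge\ee{\wh{M}_{j+1}(v)}/(c+1)$ — only buys a factor $\Theta(x)$ against the prefactor $10c^2/x^2$, which is not enough. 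The true statement survives in this example only because $\ee{S_j(v)}=\ee{\wh{M}_j(v)}\approx x/(c+1)\gg\delta$ forces $\prob{S_j(v)<\delta}\le\exp(-\Omega(x/c))$ by a Chernoff lower tail, i.e.\ a vertex whose incident mass is already huge at level $j$ almost never survives the level-$(j+1)$ test. You must therefore keep the full factorization $\prob{\wh{L}(v)=j+1}=\prob{v\in\wh{V}_j}\prob{S_j(v)<\delta}\prob{S_{j+1}(v)\ge\delta}$ and split on whether $\ee{\wh{M}_{j+1}(v)}\ge x/2$ or not — at which point you have essentially reconstructed the paper's argument.

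A second ingredient you do not anticipate: once you retain the full factorization, the small-expectation case leaves you comparing $\prob{S_{j+1}(v)\ge\delta}$ (on the left) with $\prob{S_j(v)\ge\delta}$ (on the right), and both can be arbitrarily small, so neither Markov nor Chernoff alone closes this. The paper bounds their ratio by $2c$ by splitting $S_{j+1}(v)=A_{j+1}+B_{j+1}$, observing that $A_{j+1}$ is an average of $c$ independent copies of $S_j(v)$ (so $\prob{A_{j+1}\ge\delta}\le c\,\prob{S_j(v)\ge\delta}$ by a union bound) and that $B_{j+1}$ is integer-valued and dominated by a sum of $c$ independent copies of $B_j$. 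Without some such argument relating consecutive levels, the constants in your plan cannot be made to land at $10c^2/x^2$.
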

%We now briefly discuss our proof of \cref{lemma:prob-of-Mv-and-Lv}, while 
%Recall that the algorithm terminates  we first observe that note that $\wh{M}_{j + 1}(v)$ can be written as a sum of independent random variables, each being in the interval $[0, 2]$. This observation follows from \cref{eq:mhatjv-def}.
The full proof of \cref{lemma:prob-of-Mv-and-Lv} is deferred to \cref{app:concentration-Mv}. Next, we show certain basic properties of $\hM$ that are derived from \cref{lemma:prob-of-Mv-and-Lv}.
\begin{corollary}\label{corollary:Mv-greater-than-x}
For any vertex $v$, $c\ge20$, and $x\ge100c\log{c}$, we have:
$$\prob{\wh{M}(v)\ge x}\le\frac{10c^2}{x^2}\cdot\expected{\wh{M}(v)},$$
where $\wh{M}(v)$ is defined in~\cref{eq:mhatv-def}.
\end{corollary}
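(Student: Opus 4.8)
The plan is to obtain \cref{corollary:Mv-greater-than-x} as an immediate consequence of \cref{lemma:prob-of-Mv-and-Lv} by summing the latter over all possible levels of $v$. Since $v\in\wh{V}_0=V$ always, the random variable $\wh{L}(v)$ takes values in $\{0,1,\ldots,J+1\}$, so the events $\{\wh{L}(v)=j\}$ partition the probability space and
\[
	\prob{\wh{M}(v)\ge x}=\sum_{j=0}^{J+1}\prob{\wh{M}(v)\ge x\ \wedge\ \wh{L}(v)=j}.
\]
It therefore suffices to bound each summand.

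First I would dispose of the $j=0$ term. When $\wh{L}(v)=0$, the inner sum in the definition \cref{eq:mhatv-def} of $\wh{M}(v)$ collapses to the single index $i=0$, so $\wh{M}(v)=\sum_{w\in N(v)}c^0/d=|N(v)|/d\le 1$, using that $d$ upper bounds the maximum degree. Since $x\ge 100c\log c>1$, the event $\{\wh{M}(v)\ge x\ \wedge\ \wh{L}(v)=0\}$ is empty and contributes $0$. For each $j\ge 1$ I would write $j=(j-1)+1$ and apply \cref{lemma:prob-of-Mv-and-Lv} with its parameter ``$j$'' instantiated to $j-1$, which gives
\[
	\prob{\wh{M}(v)\ge x\ \wedge\ \wh{L}(v)=j}\le\frac{10c^2}{x^2}\cdot\ee{\wh{M}(v)\cdot\ind{\wh{L}(v)=j-1}}.
\]
Summing over $j=1,\ldots,J+1$ and re-indexing by $j'=j-1$, the right-hand side becomes $\frac{10c^2}{x^2}\sum_{j'=0}^{J}\ee{\wh{M}(v)\cdot\ind{\wh{L}(v)=j'}}$, which is at most $\frac{10c^2}{x^2}\ee{\wh{M}(v)}$ because $\wh{M}(v)\ge 0$ and the indicators $\ind{\wh{L}(v)=j'}$ are supported on disjoint events. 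Combining with the vanishing $j=0$ term yields exactly the claimed inequality.

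There is no genuine obstacle in this corollary: all the probabilistic work — the quadratic tail bound relating level $j+1$ to the conditional mass at level $j$ — is already packaged in \cref{lemma:prob-of-Mv-and-Lv}, and what remains is a union bound over the $O(\log d)$ levels together with the trivial observation that the incident mass at level $0$ never exceeds $1$. The only points demanding a little care are checking that $x>1$ really kills the level-$0$ contribution (so it need not be absorbed into the bound), and noting that the shifted sum reaches only level $J$, so one gets $\le\ee{\wh{M}(v)}$ rather than equality — which is all that is needed.
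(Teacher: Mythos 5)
Your proposal is essentially the paper's own proof: decompose $\prob{\wh M(v)\ge x}$ over the levels $\wh L(v)=j$, kill the lowest levels by the trivial observation that $\wh M(v)$ is too small there, apply \cref{lemma:prob-of-Mv-and-Lv} to the remaining levels, and bound the resulting telescoping sum of conditional masses by $\ee{\wh M(v)}$ using disjointness of the events. The only substantive difference is that you handle the level-$0$ term explicitly (which the paper glosses over), and that is a genuine improvement in care.

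There is, however, one boundary slip. \cref{lemma:prob-of-Mv-and-Lv} is stated only for its parameter $j>0$, so it controls $\prob{\wh M(v)\ge x\wedge\wh L(v)=j+1}$ only for $\wh L(v)\ge 2$. Your instantiation "apply the lemma with parameter $j-1$" at $j=1$ sets the lemma's parameter to $0$, which is outside its hypothesis, so the term $\prob{\wh M(v)\ge x\wedge\wh L(v)=1}$ is not actually covered by the lemma as stated. The fix is exactly the argument you already used at level $0$: when $\wh L(v)=1$ each incident edge carries weight at most $(1+c)/d$, so $\wh M(v)\le 1+c<100c\log c\le x$ and the term vanishes. (This is also how the paper disposes of that level, though it understates the per-edge weight as $1/d$; the conclusion $\wh M(v)<x$ is unaffected.) With that one-line patch your argument is complete and matches the paper's.
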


\begin{proof}
We simply sum \cref{eq:mhat-uppertail} from \cref{lemma:prob-of-Mv-and-Lv} over $j=1,\ldots,J$. The term corresponding to $\hL(v)=J+1$ is missing from the right hand side, but this only makes the inequality stronger. The term corresponding to $\hL(v)=1$ is missing from the left hand side, but the corresponding probability is in fact $0$. Indeed, if $\hL(v)=1$, each edge adjacent to $v$ has only $1/d$ weight on it, and there are at most $d$ such edges. $\wh M(v)\le1<x$.
\end{proof}

The following corollary enables us to bound the contribution of the high degree vertices to the fractional matching. This is a key ingredient to proving a lower bound on the estimated matching size. Namely, in the proof of \cref{lemma:upper-bound-on-hM} we ignore all the vertices such that $\hM(v) \ge \lambda$, where we think of $\lambda$ being some large constant. Ignoring those vertices reduces the matching mass contained in $\hM$. The following corollary essentially bounds the matching mass lost in this process.

\begin{corollary}\label{corollary:bad_vertices}
For any vertex $v$, $c\ge20$, and $\lambda \geq 100c^2$:
$$\expected {\wh{M}(v) \cdot \ind{\wh{M}(v)\ge\lambda}}\le\frac14\expected{\wh{M}(v)},$$
where $\wh{M}(v)$ is defined in~\cref{eq:mhatv-def}.
\end{corollary}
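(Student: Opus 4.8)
The plan is to derive the truncated expectation by integrating the tail bound of \cref{corollary:Mv-greater-than-x}. First I would invoke the layer-cake identity for the nonnegative random variable $\wh M(v)$: for any threshold $\lambda\ge0$,
\[
\ee{\wh M(v)\cdot\ind{\wh M(v)\ge\lambda}} = \lambda\cdot\prob{\wh M(v)\ge\lambda} + \int_\lambda^\infty \prob{\wh M(v)\ge x}\,dx.
\]
(Here there is nothing to worry about regarding convergence: either one notes that $\wh M(v)\le 2c^{J+1}\le 2d$ is bounded, or one simply observes that \cref{corollary:Mv-greater-than-x} already gives $\prob{\wh M(v)\ge x}=O(1/x^2)$, which is integrable at infinity.)

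Next I would check that the hypothesis of \cref{corollary:Mv-greater-than-x} holds throughout the range of integration. Since $c\ge20$ we have $c\ge\log c$, hence $100c^2\ge100c\log c$, and therefore $\lambda\ge100c^2\ge100c\log c$. Consequently, for every $x\ge\lambda$ the bound $\prob{\wh M(v)\ge x}\le\frac{10c^2}{x^2}\ee{\wh M(v)}$ applies. Substituting this into both terms of the identity gives
\[
\ee{\wh M(v)\cdot\ind{\wh M(v)\ge\lambda}} \le \lambda\cdot\frac{10c^2}{\lambda^2}\ee{\wh M(v)} + \ee{\wh M(v)}\int_\lambda^\infty\frac{10c^2}{x^2}\,dx = \frac{20c^2}{\lambda}\cdot\ee{\wh M(v)}.
\]
Finally, using $\lambda\ge100c^2$ yields $\frac{20c^2}{\lambda}\le\frac15\le\frac14$, which is exactly the claimed inequality (in fact with the slightly stronger constant $1/5$).

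I do not expect any genuine obstacle here; all the substantive work sits in \cref{lemma:prob-of-Mv-and-Lv} and its corollary \cref{corollary:Mv-greater-than-x}. The only minor points requiring attention are (i) that the tail integral is finite, and (ii) that the constant threshold $\lambda\ge100c^2$ is large enough to make \cref{corollary:Mv-greater-than-x} applicable for \emph{all} $x\ge\lambda$ rather than merely for large $x$; both are immediate from $c\ge20$.
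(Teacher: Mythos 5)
Your proof is correct and takes essentially the same route as the paper: both arguments bound the truncated expectation by accumulating the quadratic tail bound of \cref{corollary:Mv-greater-than-x} over $x\ge\lambda$, the paper via a dyadic sum $\sum_i \lambda 2^{i+1}\prob{\wh M(v)\ge\lambda 2^i}$ and you via the exact layer-cake integral. Your continuous version is in fact slightly tighter (it yields $20c^2/\lambda\le 1/5$, whereas the dyadic discretization costs an extra factor of $2$), and your check that $\lambda\ge100c^2\ge100c\log c$ makes the tail bound applicable on the whole range is the right observation.
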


\begin{proof}
We prove this corollary by applying \cref{corollary:Mv-greater-than-x} for different values of $x$.
\begin{align*}
    \expected{\wh{M}(v) \cdot  \ind{\wh{M}(v)\ge\lambda}}\le&\sum_{i=0}^\infty\lambda2^{i+1}\prob{\wh{M}(v)\ge\lambda2^i}	\\
    \stackrel{\text{by \cref{corollary:Mv-greater-than-x}}}{\le}&\sum_{i=0}^\infty\lambda2^{i+1}\cdot\frac{10c^2}{\lambda^22^{2i}} \expected{\wh{M}(v)}\\
    =&\frac{20c^2\expected{\wh{M}(v)}}\lambda\sum_{i=0}^\infty2^{-i}\\
    \le&\frac14\expected{\wh{M}(v)}
\end{align*}
Since $\lambda>80c^2$.
\end{proof}

%\begin{lemma}
%There exists a feasible vertex cover $VC$ such that $\mathbb E|\widehat{VC}|=\mathcal O(|VC|)$.
%\end{lemma}
%\begin{proof}
%Let $x$ be a real number to be decided later.
%$$V_A=\left\{v\in V\middle|\mathbb P(v\in\wh{V}_T)<x\right\}$$
%$$V_B=\left\{v\in V\middle|\exists w\in N(v):\mathbb P(w\in\wh{V}_T)\ge x\right\}$$
%$$VC=V_A\cup V_B$$
%Clearly, any edge $(u,v)$ is covered by $VC$: either $\mathbb P(u\in\wh{V}_T)<x$ and $u\in V_A$ or $\mathbb P(u\in\wh{V}_T)\ge x$ and $v\in V_B$.
%
%Let us bound the sizes of $V_A$ and $V_B$. Any $v\in V_A$ has probability at least $1-x$ of not making it to the last level and thus $\mathbb E|\widehat{VC}|\ge(1-x)\cdot|V_A|$. Consider any $v\in V_B$. Let $w\in N(v)$ such that $\mathbb P(w\in\hat %V_T)\ge x$. Therefore $\mathbb P(L(w)\ge T)\ge x$ and the probability of $v$ not making it to the last level is at least $x$. Therefore $\mathbb E|\widehat{VC}|\ge x\cdot|V_B|$.
%
%If we set $x$ to be $1/2$ $\mathbb E|\widehat{VC}|\ge|VC|/2$.
%\end{proof}

We now have all necessary tools to prove \cref{lemma:upper-bound-on-hM}.
\begin{proof}[Proof of \cref{lemma:upper-bound-on-hM}]
We prove this lemma by constructing such a matching, $M$. To that end, let $\lambda \eqdef 100 c^2$. If $\wh{M}(v) \ge \lambda$ we say that $v$ is a \emph{violating} vertex. Similarly, any edge adjacent to at least one violating vertex we also call violating. We add all the non-violating edges of $\wh{M}$ to $M$. Moreover, we reduce the weight of each edge in fractional matching $M$ by the factor $1/\lambda$. Then, $M$ is a feasible fractional matching since the summation of the weights of the edges connected to each vertex is at most one.

We now compute the expected weight of $M$. Recall that, in any fractional matching, the summation of the weights of the edges is half the summation of the weights of the vertices, since each edge has two endpoints. Therefore
\begin{align*}
    \expected{|M|}= & \frac12\sum_{v\in V}\expected {|M(v)|}\\
    = & \frac1{2\lambda}\sum_{v\in V}\left(\expected{\wh{M}(v)}-2\expected{\wh{M}(v) \cdot \ind{\wh{M}(v) \ge \lambda}}\right)\\
    \stackrel{\text{by \cref{corollary:bad_vertices}}}{\ge} & \frac1{2\lambda}\sum_{v\in V}\frac12\expected{\wh{M}(v)}\\
    \ge &\frac1{4\lambda}\expected{\wh{M}}.
\end{align*}
Since $\lambda$ is a constant by definition, this completes the proof.
\end{proof}

%!TEX root = ./000-main.tex
\subsection{Upper bound: Constructing a Near-Optimal Vertex Cover}{\label{section5.2}}
\cref{lemma:upper-bound-on-hM} essentially states that the size of the pseudo-matching $\hM$ that our algorithm (implicitly) constructs does not exceed by more than a constant factor the size of a maximum matching. By the next lemma, we also provide a lower-bound on the size of $\hM$. Namely, we show that the size of $\hM$ is only by a constant factor smaller than a vertex cover of the input graph. Since from duality theory  the size of a vertex cover is an upper-bound on the size of a maximum matching, the next lemma together with \cref{lemma:upper-bound-on-hM} shows that $\hM$ is a $\Theta(1)$-approximate maximum matching.
%\xxx[MK]{I suggest denoting the vertex cover by $C$ as opposed to $C$.}
\begin{lemma}[Upper-bound on $|\hM|$]\label{lemma:exists-feasible-C}
\disclaimer For any graph $G=(V,E)$, there exists a feasible vertex cover $C$ such that $\ee{|\wh{M}|} = \Omega(|C|)$,  where $|\wh{M}|$ is defined in~\cref{eq:m-for-edge-total}.
\end{lemma}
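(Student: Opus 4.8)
I would take $C$ to be the set defined in \cref{eq:VC-definition-early} and prove two statements: that $C$ is a feasible vertex cover of $G$, and that $\ee{|\hM|}=\Omega(|C|)$. For the latter, since $\hM$ is an edge weighting, $|\hM|=\tfrac12\sum_{v\in V}\hM(v)$ (every edge is counted at both endpoints), so $\ee{|\hM|}\ge\tfrac12\sum_{v\in C}\ee{\hM(v)}$; since $c$ and $\delta$ are absolute constants, it therefore suffices to prove $\ee{\hM(v)}=\Omega(1)$ for every $v\in C$.

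\textbf{Step 1: $C$ is a vertex cover.} Argue by contradiction: suppose some edge $(u,v)\in E$ has both endpoints outside $C$, i.e.\ $\prob{u\in\hV_{J+1}}>1-\tfrac1{3c^2}$ and $\prob{v\in\hV_{J+1}}>1-\tfrac1{3c^2}$; as $\hV_{J+1}\subseteq\hV_J$ this gives $\prob{\hL(u)\ge J}>1-\tfrac1{3c^2}$. Consider $S_J(v)$ from \cref{eq:definition-Sjv}, a sum over $c^J m/d$ i.i.d.\ uniform edge samples: if the edge $(u,v)$ appears among these samples then, writing $L(u)\sim\hL(u)$ for the independent copy attached to that occurrence, this single occurrence contributes $\sum_{i=0}^{\min\{L(u),J\}}c^{i-J}\ge c^0=1\ge\delta$ to $S_J(v)$ whenever $L(u)\ge J$. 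A fixed edge occurs at least once with probability $1-(1-1/m)^{c^J m/d}\ge1-e^{-c^J/d}\ge\tfrac1{2c^2}$, using $c^J\ge d/c^2$, $c^J/d\le1$ and $1-e^{-x}\ge x/2$ on $[0,1]$; this sampling event is independent of $L(u)$, so $\prob{S_J(v)\ge\delta}\ge\tfrac1{2c^2}\bigl(1-\tfrac1{3c^2}\bigr)\ge\tfrac1{3c^2}$ (using $\tfrac32(1-\tfrac1{3c^2})\ge1$). By \cref{eq:vjhat-def}, $\{S_J(v)\ge\delta\}\subseteq\{v\notin\hV_{J+1}\}$, so $\prob{v\in\hV_{J+1}}\le1-\tfrac1{3c^2}$, a contradiction. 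Hence every edge of $G$ meets $C$.

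\textbf{Step 2: $\ee{\hM(v)}=\Omega(1)$ for $v\in C$.} Fix $v\in C$; then $v$ is peeled before the top level with constant probability, $P:=\prob{\hL(v)\le J}=\prob{v\notin\hV_{J+1}}\ge\tfrac1{3c^2}$. Let $j^\star$ be the least level with $\sum_{j=0}^{j^\star}\prob{\hL(v)=j}\ge P/2$. I would show that $j^\star$ has the two properties (a) $\prob{v\in\hV_{j^\star}}=\Omega(1)$, and (b) $\ee{\hM_{j^\star}(v)}=\Omega(1)$; given this, the proof concludes since $\hM_j(v)$ is nondecreasing in $j$ and independent of $\hL(v)$ (\cref{obs:Mj-and-hLj-independent}\ref{item:obj-Mjv-and-hLv-independent}), so with $\hM(v)=\hM_{\hL(v)}(v)$ (\cref{obs:Mj-and-hLj-independent}\ref{item:Mv-and-MLv}),
\[
\ee{\hM(v)}\ \ge\ \ee{\hM_{j^\star}(v)\cdot\ind{\hL(v)\ge j^\star}}\ =\ \ee{\hM_{j^\star}(v)}\cdot\prob{v\in\hV_{j^\star}}\ =\ \Omega(1).
\]
Property (a) is immediate: by minimality $\sum_{j<j^\star}\prob{\hL(v)=j}<P/2\le\tfrac12$, hence $\prob{v\in\hV_{j^\star}}>\tfrac12$. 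For (b), I would use the structure of $g(j):=\ee{\hM_j(v)}=\ee{S_j(v)}$ (\cref{obs:Mj-and-hLj-independent}\ref{item:hM=Sj}): by \cref{obs:relation-between-Sj-and-Sj+1} it grows by at most a factor $c+1$ per level, and while $j$ stays below the typical level of $v$'s neighbours it grows by a factor $\approx c$, after which it plateaus. In the plateau regime the individual contributions to $S_j(v)$ are tiny, so the Oversampling / Concentration Lemmas (\cref{lemma:oversampling-lemma}, \cref{lemma:prob-of-Mv-and-Lv}) make $\prob{\hL(v)=j}$ negligible there --- this is exactly the step that replaces a union bound over the $\Theta(\log d)$ levels. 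Consequently all but a vanishing part of the peeling probability $P=\sum_{j\le J}\prob{\hL(v)=j}$ is gathered in the ``growing'' range, where $\prob{\hL(v)=j}$ is (up to constants) $g(j)/\delta$ and the sum of a roughly-geometric increasing sequence is $\Theta$ of its largest term; since $P\ge\tfrac1{3c^2}$ is a constant, $g(j)$ reaches $\Omega(\delta)$ at the top of that range, and $j^\star$ --- being a $\Theta(1)$-fractile of a roughly-geometric sequence --- lies within $O(1)$ levels of it, so $g(j^\star)=g(\text{top})/(c+1)^{O(1)}=\Omega(1)$.

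\textbf{Main obstacle.} Step 1 is routine. The crux is Step 2 and, within it, the claim that peeling probability cannot leak out across the $\Theta(\log d)$ plateau levels (a naive union bound there would cost a $\log$ factor and yield only an $O(\log n)$-approximation). Turning the ``growing-then-plateau'' picture of $g(j)$ into a rigorous per-level estimate --- carrying \cref{lemma:oversampling-lemma} through the recursion, handling the contribution of level-$j$ neighbours (which is integer-valued and so on its own cannot reach $\delta<1$), and tracking every constant against $c$, $\delta$ and the constants of \cref{lemma:prob-of-Mv-and-Lv} --- is where I expect the real work to be.
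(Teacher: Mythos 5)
Your proposal follows essentially the same route as the paper: the same set $C$ from \cref{eq:VC-definition-early}, the same contradiction argument that $C$ is a vertex cover, and the same mechanism (the $A_j/B_j$ split, \cref{lemma:oversampling-lemma} for the $A_j$ part, integrality for the $B_j$ part, and a Markov bound summing the $\beta_j$'s) to avoid the union bound over $\Theta(\log d)$ levels. The only structural difference is dual bookkeeping: the paper picks $\jstar$ as the largest level with $\ee{S_{\jstar}(v)}<\gamma$ and shows $\prob{\hL(v)\le\jstar}\le 2\gamma$, whereas you pick $\jstar$ as a $P/2$-fractile of the level distribution and need $\ee{S_{\jstar}(v)}=\Omega(1)$; these are the two directions of the same inequality.

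One caution on your step (b): the ``growing-then-plateau'' picture of $g(j)=\ee{S_j(v)}$ and the ``sum of a roughly geometric sequence is $\Theta$ of its largest term'' argument are neither needed nor true in general (the increments $\sum_{w}\prob{w\in\hV_j}c^j/d$ need not be monotone, and $\prob{\hL(v)=j}\le g(j)/\delta$ is only a one-sided Markov bound, not a two-sided estimate). The clean way to finish with your fractile definition is the uniform per-level bound that the paper's machinery actually delivers: for every $j$ in range, $\prob{\hL(v)\le j}\le\sum_{i\le j}(\alpha_i+\beta_i)\le 2\sum_{i\le j}\beta_i\le 2\,\ee{S_j(v)}$, using $\alpha_i\le\sum_{i'<i}2^{i'-i}\beta_{i'}$ from the oversampling recursion and $\beta_i\le\ee{B_i}$. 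Applied at $j=\jstar$ this gives $\ee{S_{\jstar}(v)}\ge P/4\ge 1/(12c^2)$ directly, and the rest of your argument goes through.
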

In our proof, we choose $C$ to be the set of vertices that with constant probability do not pass to the very last level, i.e., to the level $T + 1$. ($C$ corresponds to the set that we defined in \cref{eq:VC-definition-early}.)
Then, the high-level approach in our proof of \cref{lemma:exists-feasible-C} is to choose a vertex $v \in C$ and show that with constant probability the matching incident to $v$ is sufficiently large.

Observe that $v$ is added to $C$ if the algorithm estimates that at some level the matching weight of $v$ is at least $\delta$. So, in light of our approach, we aim to show that if $v$ has expected matching at least $\delta$ then it is unlikely that its \emph{actual} matching weight is much smaller than $\delta$ in a realization. So, for every $j$ independently we first upper-bound the probability that $v$ gets added to $C$ at level $j$ if its actual matching mass by level $j$ is much smaller than $\delta$. To provide this type of upper-bound across all the levels simultaneously, one standard approach would be to take a union bound over all the levels. Unfortunately, the union bound would result in a loose upper-bound for our needs.

Instead we will show that over the levels the algorithm's estimates of the weight adjacent to a specific vertex becomes more and more accurate (since it takes more and more samples). This allows us to show that the likelihood of misclassifying a vertex by peeling it too early is small even across potentially $\Omega(\log n)$ levels. The matching weight that the algorithm estimates while testing level $j$ can be decomposed into two parts:
\begin{itemize}
\item $A_j$ -- the weight coming from the neighbors up to level $j - 1$.
\item $B_j$ -- the weight coming from the neighbors that pass to level $j$. 
\end{itemize}
Now, compared to the weight estimate while performing the test for level $j - 1$, to obtain $A_j$ the algorithm performs $c$ times more tests and takes their average. This means that $A_j$ is a more precise version of a test the algorithm has already done. Since this is the case, we can amortize the error coming from estimating $A_j$ to the tests that the algorithms has already applied, and bound only the error coming from estimating $B_j$. This observation enables us to provide a more precise analysis than just applying a union bound.

The next lemma makes formal our discussion of relating $A_j$ and the weight estimate the algorithm performs while testing level $j-1$. In this lemma, it is instructive to think of $\oX$ as of $A_j$, of each $X$ as a single instance of level $j-1$ testing (that corresponds to $A_{j - 1} + B_{j - 1}$), and of $Y_k$ as the test performed on a single sampled edge.

\oversamplinglemma*
\if 0
\begin{restatable}[Oversampling lemma]{lemma}{oversamplinglemma}\label{lemma:oversampling-lemma} \disclaimer Let $X=\sum_{k=1}^K Y_k$ be a sum of independent random variables $Y_k$ taking values in $[0,1]$, and $\overline X \eqdef \frac1c\sum_{i=1}^c X_i$ where $X_i$ are iid copies of $X$. If $\ee{X} \le \delta/3$ and $\prob{X \ge \delta}=p$, then $\prob{\oX \ge \delta} \le p/2$.
\end{restatable}
\fi
%\xxx[MK]{We should be specific about how large $c$ needs to be: e.g., `for any $c\geq 1/(\gamma \delta)...$'}

This lemma formalizes the intuition that given a random variable $X$, taking the average of many independent copies gives a more accurate estimate of the mean then $X$ itself, and it should therefore be less likely to exceed a threshold significantly above the mean. In the general case however, this is not true. Indeed consider a variable $X$ such that $X>c\delta$ with some extremely small probability and $X<\delta$ the rest of the time. In this case if even one instance of the independent samples exceeds $\delta$ then the average of all the samples ($\oX$) will as well; therefore the probability of exceeding $\delta$ actually increases. To be able to prove the lemma we need to use an additional characteristic of $X$: namely that it is the independent sum of bounded variables, and therefore it concentrates reasonably well around its expectation. This characteristic will hold for the particular random variables to which we want to apply the oversampling lemma in the proof of \cref{lemma:exists-feasible-C}, specifically $A_j$.

We present the proof of this lemma in~\cref{lemma:app-oversampling}. We are now ready to prove \cref{lemma:exists-feasible-C}.
\begin{proof}[Proof of \cref{lemma:exists-feasible-C}]
Define the following set of vertices
\begin{equation}\label{eq:C-1/2-def}
	C \eqdef \cb{ v\in V \middle| \prob{v \in \hV_{J+1}} \le 1- \frac1{3c^2}}.
\end{equation}
Notice that this set is deterministic and does not depend on the outcome of $\hV_{J+1}$, only its distribution.

First we prove that $C$ is indeed a vertex cover. Suppose toward contradiction that both $u,v\not\in C$ for some edge $e=(u,v)$. Suppose $u$ has already made it to $\hV_J$. Then in the course of deciding whether $u$ makes it further into $\hV_{J+1}$ we take $mc^J/n\ge m/c^2$ iid edge-samples. Hence the probability of sampling the edge $e$ at least one of those times is
$$1-\left(1-\frac1m\right)^{mc^J/d}\ge1-\left(1-\frac1m\right)^{m/c^2}\ge\frac1{2c^2},$$
for $c\ge1$. If $e$ is sampled, then the probability that $v$ makes it into $\hV_J$ is at least $1-1/(3c^2)$ which is strictly greater than $2/3$, since $v\not\in C$. This means that with more than $1/(3c^2)$ probability $u$ would fail at level $J$ even if it made it that far and therefore must be in $C$. By contradiction $C$ is a vertex cover.
%\jakab{is "for large enough $m$" precise enough? It is correct} \xxx[MK]{Better to say $m\geq m_0$ for some absolute constant $m_0$ in lemma statement, and say here that $m_0$ larger than $10$ works.}

Consider a vertex $v \in C$. Our goal is to show that the fractional matching adjacent to $v$ is small with small probability. To that end, we upper-bound the probability that the matching adjacent to $v$ is less than $\gamma$, for some positive constant $\gamma \ll \delta$. Indeed we will see that $\gamma\le1/(12c^2)$ works. 

Let $\jstar$ be the largest $j\in [0,J+1]$ such that $\ee{S_{\jstar}(v)} < \gamma$. We prove that the combined probability of $v$ failing any test up to $\jstar$ is at most $1/(6c^2)$. Before proving this, let us explain the rest of the proof, assuming we get such guarantee. 

Note that the random variable $\hL(v)$ is independent from the sequence of random variables $\hM_j(v)$ and $\hM(v)=\hM_{\hL(v)}(v)$ by \cref{item:obj-Mjv-and-hLv-independent} and \cref{item:Mv-and-MLv} of \cref{obs:Mj-and-hLj-independent}. Thus the expected size of $\hM$ is sufficiently large (at least $\gamma$) conditioned on the event that $\hL>\jstar$. This does not happen in two cases. Either $v$ fails a test at a level lower than or equal to $\jstar$, or $v$ fails no test, but $\jstar=T+1$, so $v$ reaches the last level but expected size of the incident matching $\hM(v)$ is too small in expectation regardless. The first event is bounded by $1/(6c^2)$ by the above guarantee; the second event is bounded by $1-1/(3c^2)$ since $v\in C$. Therefore $v$ must reach a level greater than $\jstar$ with probability at least $1/(6c^2)$. Note that this also shows that $\jstar<T+1$, which is not clear from definition. Hence,
\begin{align*}
\ee{\hM(v)} \ge & \sum_{j=0}^{T+1} \prob{\hL(v)=j}\ee{\hM_j(v)} \\
\ge& \sum_{j=\jstar+1}^{T+1}\prob{\hL(v)=j}\ee{\hM_{\jstar+1}(v)}\\
\ge &\prob{\hL(v) > \jstar}\cdot\gamma>\frac\gamma{6c^2},
\end{align*}
and consequently by linearity of expectation follows $\ee{|\hM|} \ge\frac \gamma {6c^2} \cdot |C|$. %\xxx[MK]{This is only true assuming a bound that we are about to prove. So we should first state the bound in a numbered display equation, and then say that we now finish the proof assuming the bound, and then prove the bound.}

In the rest of the proof we upper-bound the probability that $v$ fails before or at the ${\jstar}^\text{th}$ level.
\begin{align}
    \prob{\hL(v)\le \jstar} & = \sum_{j=0}^{\jstar} \prob{\hL(v)=j} \nonumber \\
    & \le \sum_{j=0}^{\jstar} \prob{\hL(v)=j | v\in\hV_j} \label{eq:bound-on-hL-le-jstar} \\
    &=\sum_{j=0}^{\jstar}\prob{S_j(v)\ge\delta}. \label{eq:bound-on-hL-le-jstar-in-Sj}
\end{align}
\cref{eq:bound-on-hL-le-jstar} follows from the fact that $\hL(v) = j$ implies that $v \in \hV_j$ (while the other direction does not necessarily hold). We next rewrite $\prob{S_j(v)\ge\delta}$. By definition \cref{eq:definition-Sjv}, we have
\begin{align*}
    \prob{S_j(v)\ge\delta} &=\mathbb P\left[\sum_{\begin{matrix}k=1\\e_k\sim U(E)\end{matrix}}^{mc^j/d}\ind{v\in e_k}\sum_{i=0}^{\min(L(e\backslash v), j)}c^{i-j}\ge\delta\right]
\end{align*}
We split the contribution to $S_j(v)$ into two parts: the weight coming from the sampled edges incident to $v$ up to level $j - 1$ (defined as the sum $A_j$ below); and, to contribution coming from the sampled edges incident to $v$ that passed to level $j$ (corresponding to the sum $B_j$ below). More precisely, for each $j$, the sums $A_j$ and $B_j$ are defined as follows
\begin{align*}
	A_j & \eqdef \sum_{\begin{matrix}k=1\\ e_k\sim U_E\end{matrix}}^{mc^j/d} \ind{v\in e} \sum_{i=0}^{\min(L(e\backslash v), (j-1))}c^{i-j}, \\
	B_j & \eqdef \sum_{\begin{matrix}k=1\\ e_k\sim U_E\end{matrix}}^{mc^j/d} \ind{v\in e} \ind{L(e\backslash v)\ge j},
\end{align*}
where we use the notation $v\in e$ for $v$ being an endpoint of $e$; in this case $e\setminus v$ denotes the other endpoint.

Observe that if $S_j(v)\ge\delta$, then either $A_j \ge \delta$, or $A_j < \delta$ and $B_j \ge \delta - A_j > 0$. If $B_j > 0$, it means that at least one edge incident to $v$ was sampled and it passed to level $j$. This sample contributes $1$ to $B_j$, and hence if $B_j > 0$ it implies $B_j \ge 1$. Then we can write
\begin{equation}\label{eq:bound-on-Sjv-by-Aj-Bj}
    \prob{S_j(v)\ge\delta} = \prob{A_j + B_j \ge\delta} = \prob{A_j \ge \delta \vee B_j \ge 1} \le \alpha_j + \beta_j,
\end{equation}
where we define $\alpha_j \eqdef \prob{A_j \ge\delta}$ and $\beta_j \eqdef \prob{B_j \ge 1}$. To upper-bound $\prob{S_j(v)\ge\delta}$, we upper-bound $\alpha_j$'s and $\beta_j$'s separately.

\paragraph{Upper-bounding $\alpha_j$ and $\beta_j$.}
We first upper-bound $\alpha_j$ by $(\alpha_{j - 1} + \beta_{j - 1}) / 2$ by applying \cref{lemma:oversampling-lemma}. We begin by defining $X$, $Y$ and $\oX$ that correspond to the setup of \cref{lemma:oversampling-lemma}. Let $Y_k$ be the following random variable
\[
	Y = \ind{v\in e} \sum_{i=0}^{\min(L(e\backslash v), (j-1))}c^{i-(j - 1)},
\]
where $e$ is an edge sampled uniformly at random. Then, $A_{j - 1} + B_{j - 1} = \sum_{k = 1}^{m c^{j - 1} / n} Y_k$, where $Y_k$ is a copy of $Y$. Let $X = A_{j - 1} + B_{j - 1}$ and $\oX = A_j$. Observe that $\oX = \sum_{i = 1}^c X_i / c$. Then, for $j > 0$, it holds
\begin{eqnarray}
	\alpha_j & = & \prob{A_j \ge\delta} \nonumber \\
	& \stackrel{\text{by \cref{lemma:oversampling-lemma}}}{\le} & \prob{A_{j - 1} + B_{j - 1} \ge \delta} / 2 \nonumber \\
	& \stackrel{\text{by \cref{eq:bound-on-Sjv-by-Aj-Bj}}}{\le} & (\alpha_{j-1}+\beta_{j-1})/2. \label{eq:bound-on-Aj}
\end{eqnarray}
In the case of $j = 0$, we have $\alpha_0 = 0$.

Applying \cref{eq:bound-on-Aj} recursively, we derive
\begin{equation}\label{eq:bound-Aj-rec}
	\alpha_j \le \sum_{i = 0}^{j - 1} \frac{1}{2^{j - i}} \beta_i.
\end{equation}

Now we upper-bound the sum of $\beta_j$'s by using Markov's inequality: since for every $j=0,\ldots, \jstar$
$$
\ee{B_j}  = \sum_{\substack{k=1\\ e_k\sim U_E}}^{mc^j/d} \ee{\ind{v\in e} \ind{L(e\backslash v)\ge j}}=\ee{|N_j(v)|} c^j/d,
$$
we get
\begin{align}
    \sum_{j=0}^{\jstar}\beta_j&\le\sum_{j=0}^{\jstar}\ee{|N(v)\cap\hV_j|} \cdot c^j/d \nonumber \\
    &=\sum_{w\in N(v)}\sum_{j=0}^{\jstar}\prob{w\in\wh{V}_j}\cdot c^j/d \nonumber  \\
    &=\ee{S_{\jstar}(v)} \nonumber \\
    &\le\gamma \label{eq:bound-on-sum-Bj}
\end{align}

\paragraph{Finalizing the proof.}
Combining the above inequalities together, we derive
\begin{eqnarray*}
    \prob{\hL(v)\le \jstar} & \stackrel{\text{from \cref{eq:bound-on-hL-le-jstar-in-Sj} and \cref{eq:bound-on-Sjv-by-Aj-Bj}}}{\le} &\sum_{j=0}^{\jstar}(\alpha_j+\beta_j)\\
    &\stackrel{\text{from \cref{eq:bound-Aj-rec}}}{=}& \sum_{j = 0}^{\jstar} \rb{\beta_j + \sum_{i = 0}^{j - 1} \frac{1}{2^{j - i}} \beta_i} \\
    &\le & \sum_{j=0}^{\jstar}2\beta_j\\
    &\stackrel{\text{from \cref{eq:bound-on-sum-Bj}}}{\le} & 2\gamma\\
    &\le & 1/(6c^2),
\end{eqnarray*}
for $\gamma\le1/(12c^2)$, as desired
\end{proof}
%!TEX root = ./000-main.tex
\section{LCA}\label{sec:LCA}

Local computational algorithms (or LCA's) have been introduced in \cite{rubinfeld2011fast} and have since been studied extensively, particularly in context of graph algorithms~\cite{alon2012space,mansour2012converting,mansour2013local,even2014deterministic,levi2017local,ghaffari2019sparsifying}.
The LCA model is designed to deal with algorithms on massive data, such that both the input {\it and the output} are too large to store in memory. Instead we deal with both via query access. In the setting of graphs, we have access to a graph $G$ via queries which can return the neighbors of a particular vertex. We must then construct our output (in our case a constant factor maximum matching) implicitly, such that we can answer queries about it consistently. That is, for any edge we must be able to say whether or not it is in the matching and for any vertex we must be able to say whether or not any edge adjacent to it is in the matching.

In this section we show that our approach, detailed in the previous sections, can be implemented in the LCA model as well. Specifically, we prove \cref{thm:LCA}.

\thmlca*

\begin{remark}{\label{hp-remark}}
It can also be shown with more careful analysis that if $d=O((n/\log n)^{1/4})$ then $|M|=\Theta(\text{MM}(G))$ with high probability. A proof sketch of this claim can be found in \cref{sec:LCA-final-proof}.
\end{remark}

The proof of this theorem is organized as follows. First, in \cref{sec:LCA-algorithms}, we state our LCA algorithms. Then, in \cref{sec:LCA-query-complexity} we analyze the query complexity of the provided algorithms, essentially proving the two bullets of \cref{thm:LCA}. In \cref{sec:LCA-approximation} we show that the matching fixed by our algorithms is $\Theta(1)$ approximation of $\mm{G}$. In \cref{sec:consistent-oracles} we discuss about the memory requirement of our approach and the implementation of consistent randomness. These conclusions are combined in \cref{sec:LCA-final-proof} into a proof of \cref{thm:LCA}.

\subsection{Overview of Our Approach}
Our main LCA algorithm simulates \cref{alg:E-test}, i.e., it simulates methods \AlgVTest{(j + 1)} and \AlgETest provided in \cref{simul-iid-edge}. However, instead of taking $c^j m / d$ random edge-samples from the entire graph (as done on \cref{line:random-edge} of \AlgVTest{j}$(v)$), we first sample the number of edges $D$ incident to a given vertex $v$, where $D$ is drawn from binomial distribution $B(c^j m / d, d(v) / m)$. Then, we query $D$ random neighbors of $v$. This simulation is given as \cref{alg:LCA-E-test}.

In our analysis, we tie the fractional matching weight of an edge to the query complexity. Essentially, we show that a matching weight $w$ of an edge is computed by performing $O(w \cdot d)$ queries. As we will see, this allows us to transform a fractional to an integral matching by using only $O(d \log{n})$ queries per an edge.

\paragraph{From fractional to integral matching.}
Given an edge $e=\{u, v\}$, \LCAAlgETest outputs the fractional matching weight $w_e$ of $e$. However, our goal is to implement an oracle corresponding to an integral matching. To that end, we round those fractional to $0/1$ weights as follows. First, each edge $e$ is marked with probability $w_e / 10\lambda$, for some large constant $\lambda$, specifically the constant from \cref{corollary:bad_vertices}, the result of which we will be relying on heavily. Then, each edge that is the only one marked in its neighborhood is added to the matching. We show that in expectation this rounding procedure outputs a $\Theta(1)$-approximate maximum matching.

%\paragraph{From $\Theta(1)$- to $(2 + \eps)$-approximate maximum matching.}

\paragraph{Consistency of the oracles.}
Our oracles are randomized. Nevertheless, they are designed in such a way that if the oracle is invoked on an edge $e$ multiple times, each time it provides the same output. We first present our algorithms by ignoring this property, and then in \cref{sec:consistent-oracles} describe how to obtain these consistent outputs.

\subsection{Related Work}
\label{sec:LCA-related-work}
Parnas and Ron~\cite{parnas2007approximating} initiated the question of estimating the minimum vertex cover and the maximum matching size in sublinear time. First, they propose a general reduction scheme that takes a $k$-round distributed algorithm and design an algorithm that for graphs of maximum degree $d$ has $O(d^k)$ query complexity. Second, they show how to instantiate this reduction with some known distributed algorithms to estimate the maximum matching size with a constant multiplicative and $\eps n$ additive factor with $d^{O(\log{(d/\eps)})}$ queries. An algorithm with better dependence on $\eps$, but worse dependence on $d$, was developed by Nguyen and Onak~\cite{nguyen2008constant} who showed how to obtain the same approximation result by using $2^{O(d)} / \eps^2$ queries. Significantly stronger query complexity, i.e., $O(d^4 / \eps^2)$, was obtained by Yoshida et al.~\cite{yoshida2009improved}. Both \cite{nguyen2008constant} and \cite{yoshida2009improved} analyze the following randomized greedy algorithm: choose a random permutation $\pi$ of the edges; visit the edges sequentially in the order as given by $\pi$; add the current edge to matching if none of its incident edge is already in the matching. We analyze this algorithm in great detail in \cref{sec:worst-case-greedy}. As their main result, assuming that the edges are sorted with respect to $\pi$, \cite{yoshida2009improved} show that this randomized greedy algorithm in expectation requires $O(d)$ queries to output whether a given edge is in the matching fixed by $\pi$ or not. When the edges are not sorted, their algorithm in expectation requires $O(d^2)$ queries to simulate the randomized greedy algorithm.

The result of \cite{yoshida2009improved} was improved by Onak el al.~\cite{onak2012near}, who showed how to estimate the maximum matching size by using $\tilde{O}(\bar{d} \cdot \poly(1 / \eps)$ queries, where $\bar{d}$ is the average degree of the graph. Instead of querying randomly chosen edges, \cite{onak2012near} query randomly chosen vertices, which in turn allows them to choose a sample of $\Theta(1/\eps^2)$ vertices rather than a sample of $\Theta(d^2 / \eps^2)$ edges. Then, given a vertex $v$, the approach of \cite{onak2012near} calls the randomized greedy algorithm on (some of) the edges incident to $v$. By adapting the analysis of \cite{yoshida2009improved}, \cite{onak2012near} are able to show that the expected vertex-query complexity of their algorithm is $O(d)$. As noted, these results estimate the maximum matching size up to a constant multiplicative and $\eps n$ additive factor. Hence, assuming that the graph does not have isolated vertices, to turn this additive to a constant multiplicative factor it suffices to set $\eps = 1 / d$.

To approximate the maximum matching size, the aforementioned results design oracles that given an edge $e$ outputs whether $e$ is in some fixed $\Theta(1)$-approximate maximum matching, e.g., a maximal matching, or not. Then, they query a small number of edges chosen randomly, and use the oracle-outputs on those edges to estimate the matching size. Concerning the query complexity, the usual strategy here is to show that running the oracle on most of the edges requires a ``small'' number of queries, leading to the desired query complexity in expectation. When those oracles are queried on arbitrary chosen edge, their query complexity might be significantly higher than the complexity needed to estimate the maximum matching size. We devote \cref{sec:worst-case-greedy} to analyzing the randomized greedy algorithm mentioned above, and show that in some cases it requires at least $\Omega(d^{2-\eps})$ queries, for arbitrary small constant $\eps$. This is in stark contrast with the expected query complexity of $O(d)$.

Recently, \cite{ghaffari2019sparsifying} showed that there exists an oracle that given an arbitrary chosen vertex $v$ outputs whether $v$ is in some fixed maximal independent set or not by performing $d^{O(\log \log{d})} \cdot \poly \log{n}$ queries, which improves on the prior work obtaining $d^{O(\poly \log{d})} \cdot \poly \log{n}$ complexity~\cite{rubinfeld2011fast,alon2012space,levi2017local,ghaffari2016improved}. When this oracle is applied to the line graph, then it reports whether a given edge is in a fixed maximal matching or not.
%\stodo{Comment other work that was getting $O(d^{\log^c{d}})$ complexity.}

%\stodo{Fit into the story $(1-\eps)$-approximate maximum matching and cite \cite{levi2017local}.}

%\stodo{\cite{yoshida2009improved} say that they need $O(d^2)$ queries as they are first sorting the neighbors by their rank.}

%\stodo{Lower-bounds? Something in \cite{parnas2007approximating}.}

%\paragraph{Comparison to \cite{yoshida2009improved,onak2012near}.}

%\paragraph{Comparison to \cite{ghaffari2019sparsifying}.} 

\subsection{Algorithms}\label{sec:LCA-algorithms}
\cref{alg:LCA-E-test} is an LCA simulation of \cref{alg:E-test}. In \LCAAlgVTest{(j + 1)}, that is an LCA simulation of \AlgVTest{(j + 1)}, we can not choose a random edge-sample from the entire graph as it is done on \cref{line:random-edge} of \AlgVTest{(j + 1)}. So, instead, we first sample from the distribution corresponding to how many of those random edge-samples will be incident to a given vertex $v$. In this way we obtain a number $D$ (see \cref{line:sample-neighbor-count} of \LCAAlgVTest{(j + 1)}). Then, our algorithm samples $D$ random edges incident to $v$ and performs computation on them. Note that this is equivalent to the iid variant, as we would ignore all edges not adjacent to $v$.
\begin{algorithm}[H]
\caption{This is an LCA simulation of \cref{alg:E-test} for a graph of maximum degree $d$. Given an edge $e$, this algorithm returns a fractional matching-weight of $e$. \label{alg:LCA-E-test}}
\begin{algorithmic}[1]
\Procedure{\LCAAlgETest}{$e=(u,v)$}

\For{$i=1$ {\bf to} $J+1$ \label{line:LCA-E-test-for}}
    \If{$\LCAAlgVTest{i}(u)$ {\bf and} $\LCAAlgVTest{i}(v)$ \label{line:invoke-AlgVTest-for-endpoints}}
        \State $w \gets w + c^i/n$
    \Else
        \State \Return $w$ \label{line:LCA-ETest-return1}
    \EndIf
\EndFor
\State \Return $w$ \label{line:LCA-ETest-return2}
\EndProcedure
\end{algorithmic}
\end{algorithm}

\begin{algorithm}
	\caption{This is an LCA simulation of \cref{alg:V-test}. Given a vertex v, this algorithm returns true if it belongs to level $j+1$ and false otherwise. }
\begin{algorithmic}[1]
\Procedure{\LCAAlgVTest{(j + 1)}}{$v$}
\State $S \gets 0$
\State Sample $D$ from binomial distribution $B\left(c^j \cdot \tfrac{m}{d}, d(v) / m\right)$ \label{line:sample-neighbor-count}
\For{$k=1$ to $D$}
    \State Let $e = \{v, w\}$ be a random edge incident to $v$. \label{line:sample-a-random-neighbor}
		\State $i\gets0$
		\While{$i \le j$ {\bf and} $\LCAAlgVTest{i}(w)$ \label{line:LCA-invoke-recursive-VTest}}\Comment{$\LCAAlgVTest{0}(w)$ returns \textsc{true} by definition.}
				\State $S \gets S + c^{i - j}$
				\If{$S\ge\delta$}
						\State \Return \false
				\EndIf
				\State $i \gets i + 1$
		\EndWhile
\EndFor
\State \Return \true
\EndProcedure

\end{algorithmic}
\end{algorithm}

We now build on $\LCAAlgETest$ and \LCAAlgVTest{(j + 1)} to design an oracle that for some fixed $\Theta(1)$-approximate maximum matching returns whether a given edge is in this matching or not.

Our final LCA algorithm is \OracleEdge (see \cref{alg:oracle-edge}). Given an edge $e=\{u, v\}$, this algorithm reports whether $e$ is in some fixed matching $M$ or not. This matching $M$ is fixed for all the queries. \OracleEdge performs rounding of a fractional matching as outlined above. As a helper method, it uses $\MatchingCandidate$ that as a parameter gets an edge $e$, and returns $0$ and $1$ randomly chosen with respect to $\LCAAlgETest(e)$. We note that if $\MatchingCandidate(e)$ returns $1$ that it \emph{does not} necessarily mean that $e$ is included in $M$. It only means that $e$ is a candidate for being added to $M$. In fact, $e$ is added to $M$ if $e$ is the only matching candidate in its $1$-hop neighborhood.
\begin{algorithm}[H]
\caption{Given an edge $e$, this method rounds fractional matching mass returned by $\LCAAlgETest(e)$ (that is first scaled by $10 \lambda$) to an integral one. \label{alg:round-fractional}}
\begin{algorithmic}[1]
\Procedure{\MatchingCandidate}{$e$}
\State Let $\lambda$ be the constant from \cref{corollary:bad_vertices}.
\State Let $X_{e}$ be $1$ with probability $\LCAAlgETest(e) / (10 \lambda)$, and be $0$ otherwise. \label{line:round-Xe}
\State \Return $X_e$
\EndProcedure
\end{algorithmic}
\end{algorithm}

\begin{algorithm}[H]
\caption{An oracle that returns \true if a given edge $e$ is in the matching, and returns \false otherwise. \label{alg:oracle-edge}}
\begin{algorithmic}[1]
\Procedure{\OracleEdge}{$e=(u, v)$}
\State $X_{e} \gets \MatchingCandidate(e)$
\If{$X_e = 0$}
	\State \Return \false \label{line:not-in-the-matching}
\EndIf
\For{each edge $e'$ incident to $e$}
		\State $X_{e'} \gets \MatchingCandidate(e')$ \label{line:round-we-to-Xe}
		\If{$X_{e'} = 1$ \label{line:in-neighbor-a-candidate}}
			\State \Return \false \label{line:early-return} \Comment{$e$ is in the matching only if it is the only candidate in its neighborhood}
		\EndIf
\EndFor
\State \Return \true
\EndProcedure
\end{algorithmic}
\end{algorithm}

We also design a vertex-oracle (see \cref{alg:oracle-vertex}) that for a given vertex $v$ reports whether an edge incident to $v$ is in the matching $M$ or not. This oracle iterates over all the edges incident to $v$, and on each invokes $\MatchingCandidate$. If any invocation of $\MatchingCandidate(e)$ returns $1$ it means that there is at least one matching candidate in the neighborhood of $v$. Recall that having two or more matching candidates in the neighborhood of $v$ would result in none of those candidates being added to $M$. Hence, $v$ is in $M$ only if $\OracleEdge(e)$ return $\true$. Otherwise, $v$ is not in $M$.
\begin{algorithm}
\caption{An oracle that returns \true if a given vertex $v$ is in the matching, and returns \false otherwise. \label{alg:oracle-vertex}}
\begin{algorithmic}[1]
\Procedure{\OracleVertex}{$v$}
\For{each edge $e$ incident to $v$}
		\State $X_{e} \gets \MatchingCandidate(e)$ \label{line:round-neighbor-e}
		\If{$X_{e} = 1$}
			\State \Return $\OracleEdge(e)$
		\EndIf
\EndFor
\State \Return \false \Comment{None of the edges incident to $v$ is a matching candidate.}
\EndProcedure
\end{algorithmic}
\end{algorithm}

\subsection{Query Complexity}\label{sec:LCA-query-complexity}
%\stodo{Will be pushed to appendix. It is anyways similar to proofs in \cref{sec:sample-complexity}.}
We begin by analyzing the query complexity of $\LCAAlgVTest{j}$. Statement of the next lemma is an adapted version of \cref{lemma:single-test-sample-complexity} to LCA.
\begin{lemma}\label{lemma:LCA-single-test-sample-complexity}
For every $c\geq 2$, $0<\delta \leq 1/2$, and graph $G=(V,E)$ with the maximum degree at most $d$, let $\tau_j$ be the maximum query complexity of $\LCAAlgVTest{j}$ defined in~\cref{alg:LCA-E-test}, for $j\in[1,J+1]$. Then, with probability one we have:
	\begin{equation}\label{eq:LCA-hypothesis}
		\tau_j \le 2 c^{j - 1}.
	\end{equation}
\end{lemma}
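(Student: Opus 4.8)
The plan is to adapt the proof of \cref{lemma:single-test-sample-complexity} essentially verbatim, establishing \cref{eq:LCA-hypothesis} by induction on $j$. The one structural novelty is that $\LCAAlgVTest{(j+1)}$ no longer scans a fixed-length block of the stream; instead it draws $D\sim B(c^j m/d,\, d(v)/m)$ and then issues graph queries only for the (at most $D$) sampled neighbors of $v$ that it actually inspects. Consequently the ``direct'' query cost is itself governed by the early-stopping rule, rather than being the deterministic quantity that the term $c^j m/d$ was in the iid analysis.

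For the base case $j=1$, note that in $\LCAAlgVTest{1}(v)$ the inner \textbf{while} loop always executes its $i=0$ iteration (since $\LCAAlgVTest{0}$ returns \true by definition), which adds $c^{0}=1\ge\delta$ to $S$; hence the very first inspected neighbor triggers the early \textbf{return} \false, so at most one neighbor query is ever made and $\tau_1\le 2=2c^{0}$ (and none at all when $D=0$). For the inductive step, assume $\tau_i\le 2c^{i-1}$ for all $i\le j$ and fix a worst-case execution of $\LCAAlgVTest{(j+1)}(v)$; let $T$ be the number of \textbf{for}-iterations it performs and let $\alpha_i$ be the number of recursive $\LCAAlgVTest{i}$ calls it makes, so that $\tau_{j+1}\le T+\sum_{i=1}^{j}\alpha_i\tau_i$. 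Now invoke the early-stopping rule twice. First, each \textbf{for}-iteration raises $S$ by at least $c^{-j}$ (again through the $i=0$ body), and the outer loop halts as soon as $S\ge\delta$, so $T\le\lceil\delta c^{j}\rceil$. Second, exactly as in \cref{lemma:single-test-sample-complexity}, each recursive call to $\LCAAlgVTest{i}$ is immediately preceded by an increment of $S$ by $c^{i-1-j}$, and $S<\delta$ just before the \emph{last} such call, whence $\sum_{i=1}^{j}c^{i-1-j}\alpha_i\le\delta$. Combining with the inductive hypothesis,
\[
\tau_{j+1}\ \le\ \lceil\delta c^{j}\rceil\ +\ 2c^{j}\sum_{i=1}^{j}c^{i-1-j}\alpha_i\ \le\ \lceil\delta c^{j}\rceil + 2\delta c^{j}\ \le\ 3\delta c^{j}+1\ \le\ 2c^{j},
\]
using $\delta\le 1/2$ and $c^{j}\ge c\ge 2$, which closes the induction.

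I expect the only genuine obstacle to be the first early-stopping step, namely the bound $T\le\lceil\delta c^{j}\rceil$ on the number of directly issued neighbor queries: in the iid setting the analogous term was simply the fixed loop length $c^{j}m/d$ and needed no argument, whereas here one must observe that because $S$ grows by a definite amount on every outer-loop iteration, the procedure can only afford to inspect $O(\delta c^{j})$ neighbors before the accumulator crosses $\delta$. Beyond that, the argument is routine: one must keep the notion of ``query'' consistent (a single degree query per invocation to parametrize the binomial, versus the per-neighbor queries inside the loop) and verify the constants close — which they do, with the same slack that made the estimate $1+2\delta\le 2$ work in \cref{lemma:single-test-sample-complexity}.
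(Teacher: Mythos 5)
Your proof is correct and follows essentially the same route as the paper's: induction on $j$, with the recursive cost controlled by $\sum_{i\le j}c^{i-1-j}\alpha_i\le\delta$ exactly as in the iid lemma, and the direct cost controlled by the observation that every inspected neighbor contributes at least $c^{-j}$ to $S$ (the paper states this as $\alpha_0\le c^j$ for $\delta<1$, a marginally looser bound than your $\lceil\delta c^j\rceil$, but the constants close the same way). The step you flagged as the "only genuine obstacle" is indeed the one place the paper's argument departs from the iid version, and you handled it the same way the paper does.
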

\begin{proof}
We prove this lemma by induction that \cref{eq:LCA-hypothesis} holds for each $j$. This proof follows the lines of \cref{lemma:single-test-sample-complexity}.
\paragraph{Base of induction}
For $j = 1$ the bound \cref{eq:LCA-hypothesis} holds directly as
\[
	\tau_{1} \le 1
\]
by definition of the algorithm. Indeed, as soon as we sample a single neighbor the algorithm returns \textsc{false}.

\paragraph{Inductive step.}
Assume that \cref{eq:LCA-hypothesis} holds for $j$. We now show that \cref{eq:LCA-hypothesis} holds for $ j + 1$ as well.

Consider any vertex $v \in V$ and $\LCAAlgVTest{(j + 1)}(v)$. Let $\alpha_i$ be the number of recursive $\LCAAlgVTest{i}$ calls invoked. Then, $\tau_{j + 1}$ can be upper-bounded as
\[
	\tau_{j + 1} \le \alpha_{0} + \sum_{i = 1}^j\alpha_i \tau_i.
\]
For $\delta < 1$, we have $\alpha_{0} \le c^{j}$. Moreover, from \cref{eq:LCA-hypothesis} and our inductive hypothesis, it holds
\begin{align*}
	\tau_{j + 1} & \le c^{j} + \sum_{i = 1}^j \alpha_i \cdot 2 c^{i - 1} \\
	& = c^{j } \cdot \left(1 + 2\sum_{i=1}^j c^{i-1-j} \alpha_i \right).
\end{align*}

The rest of the proof now follows in the same way as in \cref{lemma:single-test-sample-complexity}.
\end{proof}

The next claim is an LCA variant of \cref{lemma:algetest-output-and-sample-complexity}. The proof is almost identical.
\begin{lemma}\label{lemma:LCAAlgETest-query-complexity}
For every $c\geq 2$, $0<\delta \leq 1/2$, and graph $G=(V,E)$, for any edge $e=(u,v)\in E$, if $M_e$ is the output of an invocation of $\LCAAlgETest(e)$, then with probability one this invocation used at most $4 M_e \cdot d$ queries.
\end{lemma}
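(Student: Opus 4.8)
The plan is to replay the proof of \cref{lemma:algetest-output-and-sample-complexity} essentially verbatim, the only change being that the per-level bound $\tau_j \le 2c^{j-1}\cdot m/d$ coming from \cref{lemma:single-test-sample-complexity} is replaced by the LCA analogue $\tau_j \le 2c^{j-1}$ supplied by \cref{lemma:LCA-single-test-sample-complexity}, where now $\tau_j$ denotes the worst-case query complexity of $\LCAAlgVTest{j}$. Concretely, I would first invoke \cref{lemma:LCA-single-test-sample-complexity} to fix the deterministic bound $\tau_j \le 2c^{j-1}$ for every $j\in[1,J+1]$.

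Next, I would track how far the for-loop of \cref{alg:LCA-E-test} progresses. Let $I$ be the last value of $i$ at which $\LCAAlgETest(e)$ returns in \cref{line:LCA-ETest-return1}, and set $I=J$ if instead it returns in \cref{line:LCA-ETest-return2}. In either case the weight variable $w$ has been incremented for every $i=1,\dots,I-1$, so together with its initialization $w\gets 1/d$ we have $M_e = \sum_{i=0}^{I-1} c^i/d$. On the other hand, a recursive call $\LCAAlgVTest{i}$ is made on each of the two endpoints $u,v$ only for $i=1,\dots,I$ (for $i>I$ the loop has already returned), so the total number of queries is at most
\[
2\sum_{i=1}^{I}\tau_i \;\le\; 2\sum_{i=1}^{I} 2c^{i-1} \;=\; 4\sum_{i=0}^{I-1} c^i \;=\; 4d\sum_{i=0}^{I-1}\frac{c^i}{d} \;=\; 4M_e\cdot d,
\]
which is exactly the claimed bound. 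Since \cref{lemma:LCA-single-test-sample-complexity} holds with probability one (the early-stopping rule caps the number of neighbours sampled at each level deterministically), the same holds here.

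There is essentially no hard step: all of the real content is already packaged in \cref{lemma:LCA-single-test-sample-complexity}, and the remaining argument is the identical bookkeeping as in \cref{lemma:algetest-output-and-sample-complexity}. The only points requiring a modicum of care are (i) the factor of $2$ from running the level tests on both endpoints of $e$; (ii) getting the geometric-series indexing to line up with the definition of $M_e$, so that the initial $1/d$ supplies the $i=0$ term; and (iii) observing that the normalisation changes from $m$ to $d$ precisely because $\LCAAlgVTest{}$ queries only $O(c^{j})$ neighbours of a vertex rather than scanning a $c^{j}m/d$-fraction of an $m$-edge stream.
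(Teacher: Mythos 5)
Your proposal is correct and matches the paper's proof essentially line for line: both invoke the LCA analogue of the level-test bound ($\tau_j \le 2c^{j-1}$), define $I$ as the last completed level, and sum the geometric series $2\sum_{i=1}^{I}\tau_i \le 4\sum_{i=0}^{I-1}c^i = 4M_e\cdot d$. No differences worth noting.
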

\begin{proof}
As before, let $\tau_j$ be the maximum possible number of samples required by $\LCAAlgVTest{j}$. From \cref{lemma:LCA-single-test-sample-complexity} we have $\tau_j \le 2 c^{j - 1}$.

Let $I$ be the last value of $i$, at which the algorithm $\LCAAlgETest(e)$ exits the while loop in \cref{line:LCA-ETest-return1}. Alternately if the algorithm exits in \cref{line:LCA-ETest-return2}, let $I=J$. This means that the variable $w$ has been incremented for all values of $i$ from $0$ to $I-1$, making $w$ equal to $\sum_{i=0}^{I-1}c^i/n$. On the other hand, in the worst case scenario, \LCAAlgVTest{i} has been called on both $u$ and $v$ for values of $i$ from $ 1$ to $I$. Therefore, the number of queries used by this invocation of \LCAAlgETest$(e)$ is at most
\[
	2 \sum_{i= 1}^{I} \tau_i \le 2 \sum_{i=1}^{I } 2 c^{i-1} = 4 \sum_{i=0}^{I - 1 } c^i = 4 M_e \cdot d,
\]
where we used the fact that $I \le T$.
\end{proof}
We are now ready to prove the query complexity of our oracles, and at the same time prove the query complexity part of \cref{thm:LCA}.
\begin{lemma}\label{lemma:complexity-OracleEdge}
	For every $c\ge2$, $0<\delta\le1/2$ and $G = (V, E)$ be a graph of maximum degree at most $d$. Then, for any edge $e \in E$ and with probability at least $1 - n^{-5}$, $\OracleEdge(e)$ requires $O(d \log{n})$ queries.
\end{lemma}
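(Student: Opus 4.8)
The plan is to relate the query complexity of $\OracleEdge(e)$ to the total fractional weight $\hM(u)+\hM(v)$ incident on the endpoints $u,v$ of $e$, and then prove a high-probability bound on that quantity. On the reduction side, $\OracleEdge(e=(u,v))$ enumerates the $O(d)$ edges incident to $e$ (this costs $O(d)$ queries) and calls $\MatchingCandidate$ once on $e$ and once on each incident edge; every such call is, up to $O(1)$ extra work, a single invocation of $\LCAAlgETest$. By \cref{lemma:LCAAlgETest-query-complexity}, an invocation $\LCAAlgETest(e')$ that returns $M_{e'}$ costs at most $4M_{e'}d$ queries, so the query complexity of $\OracleEdge(e)$ is at most $O(d)+4d\sum_{e'}M_{e'}$, where the sum ranges over $e$ together with its incident edges. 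Using the consistency guarantees for the oracles from \cref{sec:consistent-oracles}, for each fixed vertex $a$ the outcomes of $\LCAAlgVTest{i}(a)$ are the same in every call, so the level $\hL(a)$ reached by $a$ is a single well-defined random variable, with the joint law analysed in \cref{correctness-algo}, and $\LCAAlgETest(a,b)$ returns exactly $\hM(a,b)=\sum_{i=0}^{\min\{\hL(a),\hL(b)\}}c^i/d$. The edges in the above sum are precisely those incident on $u$ or on $v$, so $\sum_{e'}M_{e'}\le\hM(u)+\hM(v)$, and hence the query complexity of $\OracleEdge(e)$ is $O(d(1+\hM(u)+\hM(v)))$. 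It therefore suffices to show that for a fixed vertex $v$ we have $\hM(v)\le x$ with probability at least $1-n^{-6}$ for some $x=\Theta(c\log n)$; a union bound over $u$ and $v$ then yields the failure probability $n^{-5}$ in the statement.

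For the concentration part I would write $\hM(v)=\hM_{\hL(v)}(v)$ and split over the value of $\hL(v)$. If $\hL(v)\le 1$ then $\hM(v)\le c+1$ deterministically, so these cases are vacuous once $x>c+1$. For $j\ge 2$, the event $\hL(v)=j$ forces $v\in\hV_j$, which by \cref{eq:vjhat-def} forces the counter $S_{j-1}(v)$ of the level-$j$ test to have stayed below $\delta$. Two facts make this useful: (i) conditioned on the (fixed) levels of the neighbors of $v$, the variable $S_{j-1}(v)$ is a sum of independent contributions, one per sampled edge, each lying in $[0,c/(c-1)]\subseteq[0,2]$, with mean exactly $\hM_{j-1}(v)$; and (ii) $\hM_j(v)\le(c+1)\hM_{j-1}(v)$ pathwise, by the same elementary comparison underlying \cref{obs:relation-between-Sj-and-Sj+1} (each neighbor $w$ with $\hL(w)\ge j$ adds $c^j/d$ to the gap $\hM_j(v)-\hM_{j-1}(v)$ but already contributes at least $c^{j-1}/d$ to $\hM_{j-1}(v)$). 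Hence on the event $\hM_j(v)\ge x$ we have $\hM_{j-1}(v)\ge x/(c+1)\ge 2\delta$, and the Chernoff lower-tail bound (\cref{lemma:chernoff}, with parameter $a=2$ and deviation fraction $\ge 1/2$) applied conditionally on the neighbor levels gives $\prob{S_{j-1}(v)<\delta\mid\text{neighbor levels}}\le\exp(-\hM_{j-1}(v)/16)\le\exp(-x/(16(c+1)))$. Taking expectation over the neighbor levels and summing over the at most $J\le\log_c n$ relevant values of $j$ yields $\prob{\hM(v)\ge x}\le(\log_c n)\exp(-x/(16(c+1)))$, which is below $n^{-6}$ once $x=\Theta(c\log n)$ with a large enough constant and $n$ exceeds an absolute constant.

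Combining the two parts: off an event of probability at most $2n^{-6}\le n^{-5}$ we have $\hM(u),\hM(v)=O(c\log n)$, and then the query complexity of $\OracleEdge(e)$ is $O(d(1+\hM(u)+\hM(v)))=O(cd\log n)=O(d\log n)$, as $c$ is a constant. The cases where $n$ or $d$ is below an absolute constant are trivial, since then the graph has $O(1)$ edges and $\OracleEdge$ uses $O(1)=O(d\log n)$ queries deterministically.

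I expect the main obstacle to be making the reduction step fully rigorous: one must invoke the consistency mechanism carefully to ensure that all calls $\LCAAlgVTest{i}(a)$ agree, so that $\hL(a)$ is a single random variable whose distribution matches the one used in \cref{correctness-algo}, and then that $\LCAAlgETest$ on a single edge outputs exactly $\hM$ of that edge, so that the per-edge query costs telescope to $\hM(u)+\hM(v)$. The concentration step is comparatively routine; the one point to watch is that we do not assume any lower bound on the number of sampled edges incident on $v$ — this is harmless, because the Chernoff lower tail only uses that $\ee{S_{j-1}(v)\mid\text{neighbor levels}}=\hM_{j-1}(v)$, not the sample count. (Note that the weaker quadratic tail of \cref{corollary:Mv-greater-than-x} is not enough here, which is exactly why we argue via the exponential lower tail of $S_{j-1}(v)$.)
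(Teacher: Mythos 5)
Your reduction step is valid, and it is in fact a different route from the paper's: the paper never bounds $\hM(u)+\hM(v)$, but instead exploits the early termination of $\OracleEdge$ on \cref{line:early-return} — the set $E'$ of edges actually passed to $\LCAAlgETest$ cannot accumulate total weight $W\gg\lambda\log n$, because each tested edge $e'$ independently fires $X_{e'}=1$ with probability $\LCAAlgETest(e')/(10\lambda)$ and the loop stops at the first firing; a Chernoff bound then gives $W=O(\lambda\log n)$ with probability $1-n^{-5}$, and \cref{lemma:LCAAlgETest-query-complexity} converts this to $O(Wd)=O(d\log n)$ queries. Your route instead requires a high-probability exponential tail for $\hM(v)$, which the paper never states (it only records the quadratic tail of \cref{corollary:Mv-greater-than-x}), so the whole burden falls on your concentration step.

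That step has a genuine error in your fact (i). Under the paper's definitions (\cref{eq:definition-Sjv-pedestrian} and the remarks following \cref{eq:definition-Sjv}), the levels $L_a$ appearing inside $S_{j-1}(v)$ are \emph{fresh independent samples} from the distributions of $\hL(w_a)$, not the realized values $\hL(w)$ that define $\hM_{j-1}(v)$; the same holds in the LCA implementation, where consistency is enforced only for top invocations and recursive calls reuse the caller's seed $B(v)$. Consequently $S_{j-1}(v)$ is \emph{independent} of the actual neighbor levels — this independence is precisely what \cref{obs:Mj-and-hLj-independent}(a) rests on — and its mean conditioned on them is $\ee{\hM_{j-1}(v)}$, not the realized $\hM_{j-1}(v)$. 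So from the event $\{\hM_j(v)\ge x\}$ you cannot conclude that $S_{j-1}(v)$ has large conditional mean: when $\ee{\hM_j(v)}$ is small, $\ee{S_{j-1}(v)}$ is small too, $\prob{S_{j-1}(v)<\delta}$ is close to $1$, and your claimed bound $\exp(-x/(16(c+1)))$ fails. The tail bound you want is nevertheless true, but it needs the two-case split from the proof of \cref{lemma:prob-of-Mv-and-Lv} in \cref{app:concentration-Mv}: when $\ee{\hM_j(v)}\ge x/2$, the lower-tail Chernoff bound applies to $S_{j-1}(v)$ around its \emph{true} mean $\ee{\hM_{j-1}(v)}\ge x/(2(c+1))$, as in \cref{eq:final-case1}; when $\ee{\hM_j(v)}<x/2$, one must instead apply the upper-tail Chernoff bound to $\hM_j(v)$ itself, which is a sum of independent per-neighbor contributions each bounded by $2c^j/d\le 2/c$, as in \cref{eq:final-case2}. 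With both cases and $x=\Theta(c\log n)$ your plan does go through, but it is substantially heavier than the paper's one-paragraph argument via early termination.
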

\begin{proof}
	Each loop of $\OracleEdge(e=\{u, v\})$ queries one of the edges incident to $u$ or $v$. Hence, obtaining these incident edges takes $O(d)$ queries.
	
	Let $E'$ be the set of edges on which $\LCAAlgETest$ is called from $\MatchingCandidate$ as a result of running $\OracleEdge(e)$ of \cref{line:round-we-to-Xe}. Let $W$ be the sum of outputs of $\LCAAlgETest$ on these edges.
	First, by \cref{lemma:LCAAlgETest-query-complexity}, the query complexity of all the tests performed on \cref{line:round-we-to-Xe} is $O(W \cdot d)$.
	Second, following that $X_{e'}$ is obtained by rounding $\LCAAlgETest(e') / (10 \lambda)$ in $\MatchingCandidate$, we have that
	\[
		\ee{\sum_{e' \in E'} X_{e'}} = \frac{W}{10 \lambda}.
	\]
	As soon as $X_{e'} = 1$ for any $e'\in E'$, the algorithm terminates and returns $\false$ on \cref{line:early-return}.
	Since $X_{e'}$s are independent random variables, by Chernoff bound (\cref{lemma:chernoff}~\eqref{item:at-least-1}), with probability at least $1 - n^{-5}$ we have $W / (10 \lambda) \le 20 \log{n}$.
	Hence, we conclude that the loop of $\OracleEdge$ requires $O(d \log{n})$ queries with probability at least $1 - n^{-5}$.
\end{proof}

\begin{lemma}\label{lemma:complexity-OracleVertex}
	For every $c\ge2$, $0<\delta\le1/2$ and $G = (V, E)$ be a graph of maximum degree at most $d$. Then, for any vertex $v \in V$, with high probability $\OracleVertex(v)$ requires $O(d \log{n})$ queries.
\end{lemma}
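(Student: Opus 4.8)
The plan is to split the query cost of $\OracleVertex(v)$ into three contributions: (i) enumerating the at most $d$ edges incident to $v$, which costs $O(d)$ queries; (ii) the calls to $\LCAAlgETest$ performed through $\MatchingCandidate$ as the loop scans these incident edges; and (iii) at most one terminal call to $\OracleEdge$. Contribution (iii) is $O(d\log n)$ with probability at least $1-n^{-5}$ by \cref{lemma:complexity-OracleEdge}. For (ii), \cref{lemma:LCAAlgETest-query-complexity} gives that $\LCAAlgETest(e)$ uses at most $4M_e\cdot d$ queries, where $M_e$ is its (random) output, so the cost of (ii) is $O\!\big(d\sum_e M_e\big)$, the sum ranging over the incident edges actually visited before the loop halts. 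Hence it suffices to show that this total visited weight is $O(\log n)$ with high probability.

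To control it I would view the loop as a sequential process. Fix any order $e_1,\dots,e_{\deg(v)}$ of the edges incident to $v$; upon reaching $e_k$ the algorithm first reveals $M_{e_k}\eqdef\LCAAlgETest(e_k)$ and then, using fresh independent randomness, sets $X_{e_k}=1$ with probability $M_{e_k}/(10\lambda)$, where $\lambda$ is the constant from \cref{corollary:bad_vertices}; the loop stops at the first $k$ with $X_{e_k}=1$ (after which it invokes $\OracleEdge$), or after exhausting all incident edges. Let $T$ be this stopping time and $P_k\eqdef\sum_{i\le k}M_{e_i}$; the target is $P_T=O(\log n)$. Here I use that a single edge-test output is bounded by an absolute constant $\kappa$ (the same bound on $M_e$ used, e.g., in the proof of \cref{theorem:sample-complexity-of-algIID}, which follows from $J=\lfloor\log_c d\rfloor-1$).

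Let $\mathcal F_k$ be the $\sigma$-algebra generated by the randomness of the first $k$ invocations of $\LCAAlgETest$; it makes $P_k$ measurable but does \emph{not} contain the coins $X_{e_1},X_{e_2},\dots$. Let $\sigma$ be the first index with $P_\sigma>C\lambda\log n$, a stopping time for $(\mathcal F_k)_k$, where $C$ is a large absolute constant (if no such index exists, $P_T\le P_{\deg(v)}\le C\lambda\log n$ and we are done, so assume $\sigma\le\deg(v)$). Conditioned on $\mathcal F_k$, the coins $X_{e_1},\dots,X_{e_k}$ are independent, so $\prob{T>k\mid\mathcal F_k}=\prod_{i\le k}(1-M_{e_i}/(10\lambda))\le\exp(-P_k/(10\lambda))$, which on the event $\{\sigma=k\}$ is at most $\exp(-C\log n/10)\le n^{-10}$. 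Summing,
\[
\prob{T>\sigma}=\sum_{k}\ee{\ind{\sigma=k}\,\prob{T>k\mid\mathcal F_k}}\le\sum_k\prob{\sigma=k}\cdot n^{-10}\le n^{-10}.
\]
On the complementary event $T\le\sigma$, whence $P_T\le P_\sigma\le C\lambda\log n+\kappa=O(\log n)$ because the increments $M_{e_i}$ never exceed $\kappa$. Thus contribution (ii) is $O(d\log n)$ except with probability $n^{-10}$, and a union bound with (iii) yields that $\OracleVertex(v)$ uses $O(d\log n)$ queries with high probability.

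The main obstacle is this interleaving of the random weights $M_{e_i}$ with the rounding coins $X_{e_i}$: one cannot simply condition on all the weights up front, since which edges the loop even reaches depends on the earlier coins. The stopping-time argument above is what circumvents this, the key structural fact being that the coins are conditionally independent of the filtration $(\mathcal F_k)_k$ generated by the $\LCAAlgETest$ randomness; the only other ingredient is that one edge-test output is $O(1)$, so that $P_k$ cannot overshoot the threshold $C\lambda\log n$ by more than a constant.
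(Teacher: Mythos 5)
Your proof is correct and follows the same route as the paper: decompose the cost into the $O(d)$ neighborhood scan, the $\LCAAlgETest$ calls whose query cost is $O(d)$ times the total visited edge-test weight, and at most one call to $\OracleEdge$, then argue that the visited weight is $O(\log n)$ because the loop halts at the first successful rounding coin. The only difference is in rigor: the paper's proof (via \cref{lemma:complexity-OracleEdge}) informally applies a Chernoff bound to the coins over the random set of visited edges, whereas your stopping-time argument with the filtration generated by the $\LCAAlgETest$ randomness is a more careful rendering of the same idea that explicitly handles the interleaving of weights and coins which the paper's write-up glosses over.
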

\begin{proof}
	Each loop of $\OracleVertex(v)$ queries one of the edges incident to $v$. Hence, obtaining these incident edges takes $O(d)$ queries.

	Following the same arguments as in \cref{lemma:complexity-OracleEdge} we have that with probability at least $1 - n^{-5}$ the total query complexity of all invocations of $\MatchingCandidate$ on \cref{line:round-neighbor-e} is $O(d \log{n})$. In addition, the algorithm invokes $\OracleEdge$ at most once. Hence, from \cref{lemma:complexity-OracleEdge}, the total query complexity of $\OracleVertex$ is $O(d \log{n})$ with high probability.
\end{proof}

\subsection{Approximation Guarantee}\label{sec:LCA-approximation}
We now prove that outputs of $\OracleEdge$ correspond to a $\Theta(1)$-approximate maximum matching of $G$.
%, which together with \cref{lemma:complexity-OracleEdge,lemma:complexity-OracleVertex} proves \cref{thm:LCA}.
\begin{lemma}\label{lemma:LCA-approximation-guarantee}
	\disclaimer Let $G = (V, E)$ be a graph whose maximum degree is at most $d$. Let $M$ be the set of edges for which $\OracleEdge$ outputs $\true$. Then, $M$ is a matching and $\ee{|M|} = \Theta(\mm{G})$.
\end{lemma}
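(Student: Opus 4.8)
The plan is to treat the two claims — that $M$ is a matching and that $\ee{|M|}=\Theta(\mm{G})$ — separately. That $M$ is a matching is immediate from the definition of $\OracleEdge$ together with the consistency of $\MatchingCandidate$ established in \cref{sec:consistent-oracles}: if two edges $e,e'$ sharing an endpoint both belonged to $M$, then $e'$ would be encountered in the loop of $\OracleEdge(e)$ with $\MatchingCandidate(e')=1$, so $\OracleEdge(e)$ would return \false, a contradiction. The upper bound $\ee{|M|}\le\mm{G}$ is then trivial since $M$ is a matching, so the work is in the lower bound $\ee{|M|}=\Omega(\mm{G})$.

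Write $w_e=\LCAAlgETest(e)$ and, for a vertex $v$, $\wh{M}(v)=\sum_{e\ni v}w_e$. I would first record the two facts inherited from the iid analysis through the consistency construction: (i) $w_e$ has the same distribution as $\AlgETest(e)$, hence by \cref{thm:alledge-correctness}, $\sum_{e\in E}\ee{w_e}=\Theta(\mm{G})$; and (ii) $\wh{M}(v)$ has the distribution analyzed in \cref{correctness-algo}, so \cref{corollary:bad_vertices} applies and $\ee{\wh{M}(v)\cdot\ind{\wh{M}(v)\ge\lambda}}\le\tfrac14\ee{\wh{M}(v)}$, where $\lambda$ is the constant used in \cref{alg:round-fractional}. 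Next, condition on all the randomness $\mathcal{R}$ used by the $\LCAAlgETest$ invocations, so that every $w_e$ is determined; the matching-candidate indicators $X_e=\MatchingCandidate(e)$ are then independent $\Ber(w_e/10\lambda)$ variables, the rounding bit of each edge being fresh. Since $e=\{u,v\}\in M$ exactly when $X_e=1$ and $X_{e'}=0$ for every edge $e'$ sharing an endpoint with $e$, and using $\prod(1-x_i)\ge 1-\sum x_i$ together with the bound $\sum w_{e'}\le\wh{M}(u)+\wh{M}(v)$ over incident $e'$, we get
\[
\prob{e\in M\ \middle|\ \mathcal{R}}=\frac{w_e}{10\lambda}\prod_{e'\,\cap\,e\neq\emptyset}\left(1-\frac{w_{e'}}{10\lambda}\right)\ \ge\ \frac{w_e}{10\lambda}\left(1-\frac{\wh{M}(u)+\wh{M}(v)}{10\lambda}\right).
\]

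On the event $\{\wh{M}(u)<\lambda,\ \wh{M}(v)<\lambda\}$ the bracket exceeds $4/5$, so $\prob{e\in M}\ge\tfrac{2}{25\lambda}\ee{w_e\cdot\ind{\wh{M}(u)<\lambda,\ \wh{M}(v)<\lambda}}$. Rearranging the discarded part by vertices and invoking fact (ii) together with $\sum_v\wh{M}(v)=2\sum_e w_e$,
\[
\sum_{e=\{u,v\}}\ee{w_e\left(\ind{\wh{M}(u)\ge\lambda}+\ind{\wh{M}(v)\ge\lambda}\right)}=\sum_{v}\ee{\wh{M}(v)\cdot\ind{\wh{M}(v)\ge\lambda}}\le\tfrac14\sum_v\ee{\wh{M}(v)}=\tfrac12\sum_e\ee{w_e},
\]
hence $\sum_{e=\{u,v\}}\ee{w_e\cdot\ind{\wh{M}(u)<\lambda,\ \wh{M}(v)<\lambda}}\ge\tfrac12\sum_e\ee{w_e}$. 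Summing the per-edge bound and using fact (i) gives $\ee{|M|}\ge\tfrac1{25\lambda}\sum_e\ee{w_e}=\Omega(\mm{G})$, which with $\ee{|M|}\le\mm{G}$ completes the proof.

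The main obstacle is the rounding step itself: passing from the fractional weights $w_e$ to an integral matching must cost only a constant factor despite (a) edges competing inside a common neighborhood and (b) vertices carrying anomalously large fractional weight. The scaling by $10\lambda$ is chosen precisely so that (a) costs only a $(1-O(1/\lambda))$ factor, handled by the elementary product inequality above, while (b) is absorbed by \cref{corollary:bad_vertices} — which is exactly why $\lambda$ is fixed to that constant. A secondary point requiring care is justifying facts (i) and (ii), i.e.\ that the consistent-randomness implementation of \cref{sec:consistent-oracles} does not distort the per-edge output distribution of $\LCAAlgETest$ nor the one-hop weight distribution $\wh{M}(v)$ relative to the iid algorithm analyzed in \cref{sec:sample-complexity,correctness-algo}.
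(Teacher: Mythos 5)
Your proposal is correct and follows essentially the same route as the paper's proof: matching-ness from the definition of \OracleEdge, the product inequality $\prod(1-w_{e'}/10\lambda)\ge 1-\sum w_{e'}/10\lambda\ge 4/5$ for edges with no heavy endpoint, and \cref{corollary:bad_vertices} to show the heavy-vertex mass is at most half the total. The only difference is that you carry out the heavy-mass accounting explicitly via the double-counting identity $\sum_{e=\{u,v\}}\ee{w_e(\ind{\wh{M}(u)\ge\lambda}+\ind{\wh{M}(v)\ge\lambda})}=\sum_v\ee{\wh{M}(v)\ind{\wh{M}(v)\ge\lambda}}$, which the paper leaves implicit.
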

\begin{proof}
	We first argue that $M$ is a matching. For an edge $e$, let $X_e$ be the output of  $\MatchingCandidate(e)$. The edge $e$ is a candidate to be a matching edge (but $e$ will not necessarily be added to the matching $M$) only if $X_e = 1$. Hence, if $X_e = 0$, then the $\OracleEdge(e)$ returns \false on \cref{line:not-in-the-matching}. Otherwise, \cref{line:in-neighbor-a-candidate} verifies whether $X_{e'} = 1$ for any $e'$ incident to $e$. If for at least one such edge $X_{e'} = 1$, then $e$ and $e'$ are both candidates to be matching edges. However, adding both $e$ and $e'$ would lead to a collision and hence not a valid matching. This collision is resolved by adding neither $e$ nor $e'$ to $M$, implying that the set of edges added to $M$ indeed forms a matching.
	
	We now argue that $\ee{M} = \Theta(\mm{G})$. We use the fact that $\LCAAlgVTest{j}$ and \LCAAlgETest are perfect simulations of $\AlgVTest{j}$ and \AlgETest respectively. Therefore, the fractional pseudo-matching defined by the return values of \LCAAlgETest is identical to $\wh M$ defined in \cref{eq:mhatv-def} of \cref{correctness-algo}, and obeys the same properties.

	If a matching weight incident to a vertex $v$ is at least $\lambda$, (recall that $\lambda$ is the constant from \cref{corollary:bad_vertices}), then we say $v$ is \emph{heavy}. An edge is incident to a heavy vertex if at least one of its endpoints is heavy. Hence, by \cref{corollary:bad_vertices}, at most $1/2$ of the matching mass is incident to heavy vertices in expectation. Let $e$ be an edge neither of whose endpoints are heavy. The edge $e$ is in the matching if $X_e = 1$ and $X_{e'} = 0$ for each other edge $e'$ incident to $e$. Recall that $X_{e'} = 1$ with probability $\LCAAlgETest(e') / (10 \lambda)$. Since no endpoint of $e$ is heavy, it implies that $X_{e'} = 0$ for every $e'$ incident to $e$ with probability
	\begin{eqnarray*}
		& & \prod_{e'\in\delta(e)}\rb{1 - \frac{\LCAAlgETest(e')}{10 \lambda}} \\
		& \ge & 1 - \sum_{e'\in\delta(e)}\frac{\LCAAlgETest(e')}{10 \lambda} \\
		& \ge & 1 - \frac{2 \lambda}{10 \lambda} = \frac{4}{5}.
	\end{eqnarray*}
	Hence, if $e$ is not incident to a heavy vertex, $\OracleEdge(e)$ returns $\true$ and hence adds $e$ to $M$ with probability at least $\tfrac{4}{5} \LCAAlgETest(e) / (10 \lambda)$. Also, by \cref{thm:alledge-correctness} and our discussion about the matching mass of the edges incident to heavy vertices, we have that
	\[
		\sum_{\text{$e = \{u, v\}$ : $u$ and $v$ are not heavy}} \LCAAlgETest(e) = \Theta(\mm{G}).
	\]	
	This together with the fact that $\lambda$ is a constant implies that $\ee{|M|} =\Theta( \mm{G})$, as desired.
\end{proof}

\subsection{Memory Complexity and Consistent Oracles}
\label{sec:consistent-oracles}
In this section we discuss about the memory requirement of our LCA algorithms, and also describe how to obtain oracles that provide consistent outputs.

Each of our methods maintains $O(1)$ variables. Observe that by definition the depth of our recursive method $\LCAAlgVTest{(j + 1)}$ is $O(\log{d})$. Hence, the total number of variables that our algorithms have to maintain at any point of execution is $O(\log{d})$ requiring $O(\log{d} \cdot \log{n})$ bits.

The way we described $\OracleEdge$ above, when it is invoked with an edge $e$ two times, it could potentially provide different outputs. This is the case for two reasons: $D$ on \cref{line:sample-neighbor-count} and $e$ on \cref{line:sample-a-random-neighbor} of \LCAAlgVTest{(j + 1)} are chosen randomly, and this choice may vary from iteration to iteration; and the random variable $X_e$ drawn by $\MatchingCandidate$ may be different in different invocations of $\MatchingCandidate(e)$. It is tempting to resolve this by memorizing the output of $\OracleEdge(e)$ and the corresponding invocations of $\LCAAlgVTest{(j + 1)}$ (as we will see shortly, not all the invocations of \LCAAlgVTest{(j + 1)} should be memorized). Then, when $\OracleEdge(e)$ is invoked the next time we simply output the stored value. Unfortunately, in this way our algorithm would potentially require $\Theta(Q)$ memory to execute $Q$ oracle queries, while our goal is to implement the oracles using $O(d \cdot \poly \log{n})$ memory. To that end, instead of memorizing outputs, we will use $k$-wise independent hash functions.

\if 0

\begin{lemma}[\cite{alon1986fast}]\label{lemma:k-wise-independent}
	For $1 \le k \le B$, there is a construction of $k$-wise independent random bits $x_1, \ldots, x_B$ with seed length $O(k \log{B})$. Furthermore, for each $1 \le i \le B$, $x_i$ can be computed using space $O(k \log{B})$.
\end{lemma}

\paragraph{Randomness in $\LCAAlgVTest{(j + 1)}$.} 
Before we explain how to apply \cref{lemma:k-wise-independent} to obtain consistency of our methods, we recall a part of our analysis and recall how \LCAAlgVTest{(j + 1)} is used in our algorithms. Our analysis crucially depends on \cref{corollary:bad_vertices} (see \cref{lemma:LCA-approximation-guarantee}) that provides a statement about heavy vertices, i.e., about vertices whose incident edges have the matching mass of at least $\lambda$. Heavy vertices are defined with respect to $\wh{M}$, while $\wh{M}$ variables are defined with respect to $\wh{L}$ (see \cref{eq:mhatv-def,eq:definition-whM-edge}). Finally, $\wh{L}(v)$ is defined/sampled by repeatedly invoking $\LCAAlgVTest{j}(v)$ for $j = 0, 1, 2, \ldots$ while the tests return $\true$ (see the discussion above \cref{eq:definition-whV}). $\LCAAlgETest(e)$ effectively samples $\wh{L}$ for its endpoints by \cref{line:invoke-AlgVTest-for-endpoints}. Since we provide our analysis by assuming that $\wh{L}(v)$ is sampled only once, the invocations of $\LCAAlgVTest{i}$ directly from $\LCAAlgETest$ have to be consistent (\cref{line:invoke-AlgVTest-for-endpoints}). We call this invocation as \emph{the top} invocation of $\LCAAlgVTest{i}$.

Based on this discussion, the top invocation of $\LCAAlgVTest{i}(v)$ will use a fixed sequence $B(v)$ of random bits to execute this call (including all the recursive invocations of $\LCAAlgVTest{i}$ performed therein). We emphasize that the output of the top invocation of $\LCAAlgVTest{i}(v)$ does not have to be the same as the output of $\LCAAlgVTest{i}(v)$ invoked recursively via some other top invocation. That is, for each vertex $v$, the top invocation $\LCAAlgVTest{i}(v)$ uses its set $B(v)$ of bits for drawing random variables.

Next, recall that by \cref{lemma:LCA-single-test-sample-complexity} $\LCAAlgVTest{(j + 1)}$ has query complexity $O(d)$. Each query is a random edge-sample. Also, recursively via \cref{line:LCA-invoke-recursive-VTest}, each query requires sampling $O(\log{d})$ times variable $D$ on \cref{line:sample-neighbor-count} of $\LCAAlgVTest{(j + 1)}$. Hence, the total number of random bits required for the execution of $\LCAAlgVTest{(j + 1)}$ is $O(d \log^2{n})$, i.e., $|B(v)| \in O(d \log^2{n})$. For proper execution, all these bits should be independent. On one hand, as we state our results in expectation, for two vertices $u$ and $v$, $\wh{M}(v)$ and $\wh{M}(u)$ do not have to be independent. On the other hand, $\wh{M}(v)$ is computed by sampling $\wh{L}(u)$ for all the neighbors $u$ of $v$. So, to properly compute $\wh{M}(v)$, we want these variables $\wh{L}(u)$ to be mutually independent. As $v$ has at most $d$ neighbors, and $|B(u)| \in O(d \log^2{n})$ for each neighbor $u$ of $v$, it implies that we could expect that computation of $\wh{M}(v)$ requires $\Theta(d^2 \log^2{n})$ independent random bits. This in turn would translate into requirement of $\Omega(d^2 \log^2{n})$ space for our algorithms. However, in our analysis we only care about $v$ such that $\wh{M}(v) < \lambda$, where $\lambda$ is a constant. But now by \cref{lemma:LCAAlgETest-query-complexity} and our discussion above, we have that computing such $\wh{M}(v)$ requires $O(d \log^2{n})$ independent random bits. An alternative way of viewing this computation is as follows: as long as the computation of $\wh{M}(v)$ requires $C \cdot d \log^2{n}$ bits, for some constant $C$, we keep providing independent bits; once it start requiring more than $C \cdot d \log^2{n}$ bits, then we start providing potentially dependent bits.

This now implies that consistency of the top invocations of $\LCAAlgVTest{(j+1)}(v)$ when $\wh{M}(v) < \lambda$ can be guaranteed given a sequence of $O(d \log^2{n})$-wise independent bits of total length $O(n d \log^2{n})$ (set $B(v)$ for each vertex $v \in V$). By \cref{lemma:k-wise-independent}, this can be obtained in space $O(d \log^3{n})$.

\paragraph{Randomness in $\MatchingCandidate$.} 
By applying \cref{lemma:k-wise-independent} as described above, we assure that in each invocation of $\MatchingCandidate(e)$ the output of $\LCAAlgETest(e)$ on \cref{line:round-Xe} is the same. In addition, our goal is to round $X_e$ such that it has the same value in each invocation of $\MatchingCandidate(e)$. To achieve that, we will use a $k$-wise independent construction of random bits. Now we discuss how to set $k$ and $B$ from \cref{lemma:k-wise-independent}. From our analysis (see \cref{lemma:LCA-approximation-guarantee}), we require that $e$ and its incident edges are all rounded independently. There are at most $2d - 1$ such edges. The rounding of each edges can be performed by using $O(\log{n})$ bits. Therefore, we need $((2d - 1) \log{n})$-wise independent random bits of length $B = \Theta(n^2 \log{n})$ ($\log{n}$ bits per each vertex-vertex pair). Now \cref{lemma:k-wise-independent} implies that these bits can be obtained in space $O(d \log^2{n})$.
\fi

\begin{lemma}\label{lemma:k-wise-independent}
	For $k,b,N\in\mathbb N$, there is a hash family $\mathcal H$ of $k$-wise independent hash functions such that all $h\in\mathcal H$ maps $\{0,1\}^{N}$ to $\{0,1\}^b$. Any hash function in the family $\mathcal H$ can be stored using $O(k\cdot(N+b))$ bits of space. 
\end{lemma}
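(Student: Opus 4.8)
The plan is to use the classical polynomial hashing construction over a binary field. First I would set $\ell \eqdef \max(N,b)$ and work in $\mathbb{F}_{2^\ell}$, which exists for every $\ell\ge1$ and whose elements can be represented by $\ell$-bit strings once we fix (hard-code) an irreducible polynomial of degree $\ell$ over $\mathbb{F}_2$. Since $2^\ell\ge2^N$, fix an injection $\iota\colon\{0,1\}^N\hookrightarrow\mathbb{F}_{2^\ell}$; since $\ell\ge b$, fix the projection $\pi\colon\mathbb{F}_{2^\ell}\to\{0,1\}^b$ that keeps the first $b$ coordinates of the $\ell$-bit representation. Then I would take $\mathcal H \eqdef \{\,h_a : a=(a_0,\dots,a_{k-1})\in\mathbb{F}_{2^\ell}^k\,\}$, where $h_a(x)\eqdef\pi\!\left(\sum_{i=0}^{k-1}a_i\,\iota(x)^i\right)$.

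For the storage bound, an index $a$ consists of $k$ field elements, i.e.\ $k\ell=k\max(N,b)\le k(N+b)$ bits; moreover $h_a(x)$ can be evaluated in space $O(\ell)=O(N+b)$ by Horner's rule, reading the $a_i$ one at a time and maintaining a single running field element (this is not required by the statement but is convenient for the application).

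For $k$-wise independence, fix distinct $x_1,\dots,x_k\in\{0,1\}^N$, so that $\iota(x_1),\dots,\iota(x_k)$ are distinct in $\mathbb{F}_{2^\ell}$. The linear map sending $a$ to $\bigl(p_a(\iota(x_1)),\dots,p_a(\iota(x_k))\bigr)$, where $p_a(y)=\sum_i a_i y^i$, has matrix the $k\times k$ Vandermonde matrix of the $\iota(x_j)$, which is invertible since the $\iota(x_j)$ are distinct. Hence for $a$ uniform over $\mathbb{F}_{2^\ell}^k$ the tuple $\bigl(p_a(\iota(x_1)),\dots,p_a(\iota(x_k))\bigr)$ is uniform over $\mathbb{F}_{2^\ell}^k$. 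Since $\pi$ is balanced (every point of $\{0,1\}^b$ has exactly $2^{\ell-b}$ preimages), applying $\pi$ coordinatewise sends the uniform distribution on $\mathbb{F}_{2^\ell}^k$ to the uniform distribution on $(\{0,1\}^b)^k$; thus $\bigl(h_a(x_1),\dots,h_a(x_k)\bigr)$ is uniform over $(\{0,1\}^b)^k$, which is exactly $k$-wise independence.

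I do not expect a genuine obstacle, as this is essentially the textbook construction; the only points needing a little care are handling $b>N$ by choosing the field of size $2^{\max(N,b)}$ rather than $2^N$ (so that the output projection $\pi$ is defined and balanced), and checking that coordinatewise truncation preserves uniformity and independence, which is immediate from balancedness.
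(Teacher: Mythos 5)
Your proof is correct and is exactly the classical polynomial-evaluation construction over $\mathbb{F}_{2^{\max(N,b)}}$; the paper itself states this lemma without proof as a standard fact (the construction goes back to Joffe and to Alon--Babai--Itai), so your argument simply supplies the textbook justification the paper implicitly relies on. The two points you flag as needing care --- enlarging the field to size $2^{\max(N,b)}$ when $b>N$, and the balancedness of coordinatewise truncation --- are handled correctly, and the storage bound $k\max(N,b)\le k(N+b)$ matches the claimed $O(k(N+b))$.
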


\begin{lemma}[Nisan's PRG,~\cite{Nisan90}]{\label{nisan}}
For every $s,R>0$, there exists a PRG that given a seed of $s\log R$ truly random bits can produce $\Omega(R)$ pseudo random bits such that any algorithm of space at most $s$ requiring $O(R)$ random bits will succeed using the pseudo random bits with probability at least $2^{-\Omega(s)}$. Each bit can be extracted in $O(s\log R)$ time. 
\end{lemma}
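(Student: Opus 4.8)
The statement is Nisan's classical pseudorandom generator against space-bounded computation \cite{Nisan90}, so the plan is to recall the recursive hashing construction and its mixing analysis rather than to invent something new. First I would reduce the problem to fooling ordered read-once branching programs: a randomized algorithm using space at most $s$ that reads $O(R)$ random bits in one left-to-right pass is captured by an oblivious read-once branching program of width $W=2^{O(s)}$ and length $O(R)$, and after grouping the random bits into $t=O(R/\ell)$ consecutive blocks of length $\ell=\Theta(s)$ this becomes a layered graph with $t+1$ layers of $W$ states, each transition reading one block. It therefore suffices to build a generator whose output, split into $t$ blocks, fools every such program to within error $2^{-\Omega(s)}$.

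The construction is recursive. Fix a pairwise-independent hash family $\mathcal H_\ell$ on $\{0,1\}^\ell$, where a function is described by $O(\ell)$ bits and evaluated in $O(\ell)$ time. Set $G_0(x)=x$ for $x\in\{0,1\}^\ell$, and $G_{i+1}(x;h_1,\dots,h_{i+1})=\bigl(G_i(x;h_1,\dots,h_i),\,G_i(h_{i+1}(x);h_1,\dots,h_i)\bigr)$. After $k=\lceil\log_2 t\rceil$ levels, $G_k$ emits $2^k\ge t$ blocks, i.e.\ $\Omega(R)$ bits, from a seed consisting of one block $x$ together with $k$ hash descriptions: total $\ell+k\cdot O(\ell)=O(\ell\log R)=O(s\log R)$ truly random bits, as required. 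To output the $j$-th block one walks the depth-$k$ recursion tree along the path given by the binary expansion of $j$, applying at most $k=O(\log R)$ hash functions to the stored block, so each pseudorandom bit is produced in $O(s\log R)$ time within $O(s\log R)$ space.

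For correctness I would run Nisan's hybrid/mixing argument. Its core is the lemma that pairwise independence suffices to glue two halves of a sub-program: if $A$ is the stochastic transition matrix of the first half and $B(z)$ that of the second half run on seed $z$, then for all but a $2^{-\Omega(s)}$ fraction of $h\in\mathcal H_\ell$ the matrix $A\,B(h(x))$ is within $\mathrm{poly}(1/W)$ (hence $2^{-\Omega(s)}$) of $A\,\bar B$, where $\bar B$ averages over a fresh uniform block; this follows because the column-stochastic structure converts the pairwise-independence guarantee into a variance bound on each matrix entry. Applying this at each of the $k$ levels and telescoping — the level-$i$ error is at most one mixing-step error plus twice the level-$(i-1)$ error — keeps the total error at $O(k)\cdot 2^{-\Omega(s)}=2^{-\Omega(s)}$ after retuning the constant in $\ell=\Theta(s)$. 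Hence $G_k$'s output distribution is $2^{-\Omega(s)}$-indistinguishable from uniform for the branching program, which is the asserted conclusion.

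The main obstacle is exactly this error bookkeeping: one must show the per-step hashing error is inverse-polynomial in the width $W=2^{\Theta(s)}$, and small enough relative to the factor-$2$ branching per recursion level that summing over $\Theta(\log R)$ levels does not overwhelm it — which is precisely why the block length must be taken to be a sufficiently large constant multiple of $s$.
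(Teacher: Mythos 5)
The paper does not prove this lemma at all: it is stated as a black-box citation to Nisan's paper \cite{Nisan90} and used as such, so there is no in-paper argument to compare against. Your reconstruction is the standard proof of Nisan's generator — reduction to width-$2^{O(s)}$ read-once branching programs, the recursive pairwise-independent-hashing construction with seed length $O(s\log R)$, and the mixing/hybrid analysis — and it is essentially correct. One internal inconsistency worth fixing: the recursion ``level-$i$ error $\le$ one mixing-step error plus twice the level-$(i-1)$ error'' telescopes to $O(2^k)\cdot 2^{-\Omega(\ell)}=O(t)\cdot 2^{-\Omega(\ell)}$, not $O(k)\cdot 2^{-\Omega(s)}$ as you write; your own closing sentence about the factor-$2$ branching shows you know this, and the conclusion survives because a space-$s$ machine uses at most $R=2^{O(s)}$ random bits, so taking $\ell=Cs$ for a large enough constant $C$ absorbs the $2^k$ factor. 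You might also note that the lemma as printed is slightly garbled (``will succeed \dots with probability at least $2^{-\Omega(s)}$'' should read that the success probability changes by at most $2^{-\Omega(s)}$); you correctly prove the intended indistinguishability statement.
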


\paragraph{Randomness and consistency in the algorithms.} 
Before we explain how to apply \cref{lemma:k-wise-independent,nisan} to obtain consistency of our methods, we recall a part of our analysis and recall how \LCAAlgVTest{j} is used in our algorithms. Our analysis crucially depends on \cref{corollary:bad_vertices} (see \cref{lemma:LCA-approximation-guarantee}) that provides a statement about heavy vertices, i.e., about vertices whose incident edges have the matching mass of at least $\lambda$. Heavy vertices are defined with respect to $\wh{M}$, while $\wh{M}$ variables are defined with respect to $\wh{L}$ (see \cref{eq:mhatv-def,eq:definition-whM-edge}). Finally, $\wh{L}(v)$ is defined/sampled by repeatedly invoking $\LCAAlgVTest{j}(v)$ for $j = 0, 1, 2, \ldots$ while the tests return $\true$ (see the discussion above \cref{eq:definition-whV}). $\LCAAlgETest(e)$ effectively samples $\wh{L}$ for its endpoints by \cref{line:invoke-AlgVTest-for-endpoints}. Since we provide our analysis by assuming that $\wh{L}(v)$ is defined consistently, the invocations of $\LCAAlgVTest{j}$ directly from $\LCAAlgETest$ have to be consistent (\cref{line:invoke-AlgVTest-for-endpoints}). We call this invocation as \emph{the top} invocation of $\LCAAlgVTest{j}$.

Based on this discussion, the top invocation of $\LCAAlgVTest{j}(v)$ will use a fixed sequence $B(v)$ of random bits to execute this call (including all the recursive invocations of $\LCAAlgVTest{j}$ performed therein). We emphasize that the output of the top invocation of $\LCAAlgVTest{j}(v)$ does not have to be the same as the output of $\LCAAlgVTest{j}(v)$ invoked recursively via some other top invocation. That is, for example, if a top level invocation $\LCAAlgVTest{j}(v)$ recursively calls $\LCAAlgVTest{i}(w)$ (for some $w$ neighbor of $v$ and some $i<j$) then the recursive call $\LCAAlgVTest{i}(w)$ continues to use $B(v)$ for its randomness.

Next, recall that by \cref{lemma:LCA-single-test-sample-complexity} $\LCAAlgVTest{j}$ has query complexity $O(d)$ even at the highest level. Each query is a random edge-sample. Also, recursively via \cref{line:LCA-invoke-recursive-VTest}, each query requires sampling $O(d)$ times variable $D$ on \cref{line:sample-neighbor-count} of $\LCAAlgVTest{j}$. Hence, the total number of random bits required for the execution of $\LCAAlgVTest{(j)}$ is $O(d \log^2{n})$. However, the test itself uses only $\log(d)\cdot\log(n)$ space, therefore, by \cref{nisan}, a seed of $O(\log^3 n)$ truly random bits suffice, that is $|B(v)|=O(\log^3 n)$. Furthermore, our runtime is only increased by a factor of $\log^3 n$ from using Nisan's PRG.

The rounding of the fractional matching by $\MatchingCandidate(e)$ should also be consistent across queries and should never depend on where the call to $\MatchingCandidate(e)$ came from, (unlike with $\LCAAlgVTest{j}$). To that end, each edge $e$ should have its own random seed $B(e)$ to use in $\MatchingCandidate(e)$. Here $|B(e)|=O(\log n)$ suffices.

\paragraph{Independence.} We have shown that our LCA algorithm would work correctly if all seeds, $B(v)$ and $B(e)$ for $v\in V$ and $e\in E$, were truly independent. However, storing $\Omega(m)$ random seeds would be extremely inefficient. Instead we will use an $k$-wise independent hash family, $\mathcal H$, mapping $V\cup E$ to $\{0,1\}^{O(\log^3 n)}$. That is we will sample a hash function $h\in\mathcal H$ up front and calculate $B(v)=h(v)$ during queries. Consider an edge $e\in E$. Note that whether or not $e$ is in the integral matching depends only on the levels of vertices in the $1$-hop neighborhood of $e$, as well as the rounding of edges in its $1$-hop neighborhood. This is $O(d)$ vertices and edges in total, and so an $O(d)$-wise independent hash family mapping $V\cup E$ to $\{0,1\}^{O(\log^3 n)}$ with uniform marginal distributions on each vertex and edge would suffice to guarantee that $e$ is in the integral matching with exactly the same probability as in the truly independent case. This shows that $\mathbb E[|M|]=\mm{G}$ would still hold. By \cref{lemma:k-wise-independent} such a family exists, and any element of it can be stored in space $O(d\log^3 n)$ space.

\subsection{Proof of \cref{thm:LCA}}
\label{sec:LCA-final-proof}

\begin{proof}[Proof of \cref{thm:LCA}]
We are now ready to prove the main result of this section. In \cref{sec:LCA-algorithms} we provided two oracles, $\OracleEdge$ and $\OracleVertex$. \cref{lemma:complexity-OracleEdge,lemma:complexity-OracleVertex} show that these oracles have the desired query complexity. \cref{lemma:LCA-approximation-guarantee} proves that $\OracleEdge$ outputs a $\Theta(1)$-approximate maximum matching. Observe that $\OracleVertex$ is consistent with $\OracleEdge$. That is, for any vertex $v$, $\OracleVertex(v)$ output $\true$ iff there is an edge incident to $v$ for which $\OracleEdge(e)$ outputs true. This implies that the outputs of $\OracleVertex$ also correspond to a $\Theta(1)$-approximate maximum matching. Finally, in \cref{sec:consistent-oracles} we discussed how these oracles can be implemented by using space $O(d \log^3{n})$.
\end{proof}
\paragraph{Proof sketch of \cref{hp-remark}.} We observe that the random variable $|M|$ concentrates around its expectation since it is the sum of many bounded variables:
$$|M|=\sum_{e\in E}\mathbbm1(e\in M).$$
Although these variables are not independent, their dependence graph has bounded degree, as observed in \cref{sec:consistent-oracles} under the heading {\bf Independence}. Indeed, for any specific edge $e=(u,v)\in E$, the variable $\mathbbm1(e\in M)$ depends on the levels of the vertices in the one-hop neighborhood of $e$, as well as the level and rounding of edges in the one-hope neighborhood of $e$. Overall, random bits that influence the rounding of $e$ include $B(w)$ and $B(f)$ for all vertices $w$ and edges $f$ in the neighborhood of $e$. If $e$ and $e'$ are at least distance $3$ away, they are completely independent (disregarding the dependences introduced by the hash functions we use), therefore the dependence graph of the variables $\mathbbm1(e\in M)$ has maximum degree $d'=O(d^3)$.

Theorem 1.~of~\cite{pemmaraju2001equitable} states that the when the independence graph of Bernoulli variables $X_i$ has degree bounded by $d'$, then
$$\prob{\sum_{i}X_i\ge(1-\epsilon)\mu}\le\frac{4(d'+1)}{\epsilon}\exp\left(-\mu\epsilon^2/2(d'+1)\right),$$
where $\mu = \ee{\sum_iX_i}$. By applying this with $\epsilon=1/2$, $\mu=O(\text{MM}(G))=O(n/d)$, $d'=O(d^3)$ and $d=O((n/\log n)^{1/4})$, we get that indeed $|M|$ is at least half of its expectation. It is also not hard to see that the same bound follows even if the variables $\mathbbm1(e\in M)$ for edges more than distance $2$ away are merely $\log n$-wise independent instead of being truly independent.

%!TEX root = ./000-main.tex
\section{Lower Bound of $\widetilde{\Omega}(d^2)$ for Simulation of Randomized Greedy}
\label{sec:worst-case-greedy}

In this section we analyze the result of Yoshida at al.~\cite{yoshida2009improved} for constructing a constant fraction approximate maximum matching in the LOCAL model. It was proven in \cite{yoshida2009improved} that one can return whether some edge is in a maximal matching or not in time only $\Theta(d)$ when expectation is taken over both the randomness of the algorithm {\it and the choice of edge}. If it could be proven that this (or even a slightly weaker) bound holds for a worst case edge, the algorithm could be simply transformed into an LCA algorithm more efficient than the one we present in \cref{sec:LCA}. However, we proceed to prove that this is not the case.

The algorithm we consider is a simulation of the greedy algorithm for maximal matching in LCA. Given a graph $G = (V,E)$ and a permutation $\pi$ of $E$, a natural way to define a maximal matching of $G$ with respect to $\pi$ is as follows: process the edges of $E$ in the ordering as given by $\pi$; when edge $e$ is processed, add $e$ to the matching is none of its incident edges has been already added. Motivated by this greedy approach, Yoshida et al.~proposed and analyzed algorithm \YYIMatching (see \cref{alg:Yoshida}) that tests whether a given edge $e$ is in the greedy maximal matching defined with respect to $\pi$.

\begin{algorithm}[H]
\caption{Implementation of the greedy algorithm for maximal matching in LCA.}{\label{alg:Yoshida}}
\begin{algorithmic}[1]
\Procedure{\YYIMatching$(e,\pi)$}{}
    \For{$f\in\delta(e)\text{ such that $f$ precedes $e$, in order of }\pi$}\Comment{$\delta(e)$ is the edge-neighbourhood of $e$.}\label{line:Yoshida-for}
        \If{\textsc{\YYIMatching}$(f,\pi)$ returns \true}
            \State \Return \false
        \EndIf
    \EndFor
    \State \Return \true
\EndProcedure
\end{algorithmic}
\end{algorithm}

Pictorially, \YYIMatching can be viewed as a process of walking along neighboring edges (as defined via the recursive calls), and hence exploring the graph adaptively based on $\pi$.
%As the edges are visited in the decreasing order with respect to $\pi$, it is not hard to see that this exploration forms a tree\xxx[MK]{that's not true though if the graph contains cycles?}.
Yoshida et al.~showed that the size of this exploration graph of $\YYIMatching(e, \pi)$ is in expectation at most $d$, where the expectation is taken over all the starting edges $e$ and all possible permutations $\pi$. It remained an open question whether it was necessary to take the expectation over the starting edge. If the size of the exploration tree could be shown to be $O(d)$ (or $O(d\log n)$) for even the worst case edge, this would yield an extremely efficient LCA algorithm for approximate maximum matching.

However, the main result of the section is that this is not the case:

%\stodo{Comment that our $\log{s}$ in the complexity are much smaller than the bound up to which $d$ can go. Also, our alg. gives $O(d \min\{\log{n}, d\})$ queries. Probably should comment somewhere. Ashkan: I added a note after the theorem.}
\thmyoshidalb*

Notice that potentially $d = \exp(c\sqrt{ \log n}) \gg \log n$. Therefore the $O(d \log n)$ bound that we achieve in \cref{thm:LCA} is a factor $O(\frac{\exp(c\sqrt{ \log n})}{\log n})$ better.

\paragraph{Overview of Our Approach} We first construct a simple infinite tree (\cref{def:hd}): a tree in which each vertex has exactly $d$ children with one extra special edge connected to the root. Then we prove that the number of queries made by $\YYIMatching$ is bigger than $d$ for the special edge. Afterwards, we extend the graph by merging the end points of the special edges of $\epsilon d$ independent copies of these trees which creates another infinite tree. In this tree, beside the root that has $\epsilon d$ children, the rest of vertices has $d$ children (\cref{def:hde}). We also add an edge to the root and show that $\YYIMatching$ uses almost $d^2$ queries for this edge. Infinite trees ease the computations due to the fact that each subtree is isomorphic to the main tree. In \cref{depth-var-ana}, we carefully analyze the probability and variance of $\YYIMatching$ reaching high depths and show that it is unlikely that it passes depth $O(\log n)$. Later, we use that to truncate the tree after depth $O(\log n)$. This enables us to get the graph with desired bounds in \cref{trunc-de}. Notice that, throughout this section for the sake of simplicity, we assume that $\e d$ is an integer.
\subsection{Lower Bound for Infinite Graphs}

We begin by analyzing the behavior of $\YYIMatching$ on infinite $d$-regular trees. We implement the random permutation $\pi$ be assigning to each edge $e$ a rank $r(e)$ chosen independently and uniformly at random from the interval $[0,1]$. Edges are then implicitly ordered by increasing rank. Then, \cref{line:Yoshida-for} of \cref{alg:Yoshida} can be thought of as being "{\bf for} $f\in\delta(e)$ such that $r(f)<r(e)$, in increasing order of rank {\bf do}". The behavior of the algorithm on any edge can be described by the two functions $p_e(\lambda)$ and $t_e(\lambda)$, where $\lambda \in [0, 1]$. The function $p_e(\lambda)$ denotes the probability that $e$ is in the matching, given only that $r(e) = \lambda$. Therefore, $p_e(\lambda) \in [0, 1]$ and $p_e(0)=1$. The function $t_e(\lambda)$ denotes the expected size of the exploration tree when exploring from $e$, given only that $r(e) = \lambda$. Therefore, we have $t_e(\lambda) \ge 1$, with equality only if $\lambda = 0$.

\begin{definition}[Graph $H^d$, see \cref{fig:d-tree} for illustration]\label{def:hd}
For an integer $d\geq 1$, the graph $H^d$ is defined as  an infinite $d$-regular tree rooted in an edge $e_0=(u_0,v_0)$. Let $u_0$ have no neighbor other than $v_0$, and let $v_0$ have $d$ neighbors other than $u_0$. In general, let all vertices other than $u_0$ have $d+1$ neighbors. For an edge $e$, \emph{level} of $e$ is defined as its distance from $e_0$ and denoted by $\ell(e)$. In particular, $\ell(e_0) = 0$ and the level of any other edge adjacent to $v_0$ is $1$. Every edge $e =\{u, v \}\neq e_0$ has exactly $2d$ edge-neighbors ($d$ incident to $v$ and $d$ incident to $u$); $d$ of these neighbors have a higher level than $e$, we call them $e$'s children; $d-1$ have equal level to $e$, we call them $e$'s siblings; exactly $1$ has lower level than $e$, we call it $e$'s parent.
\end{definition}
\begin{figure}
	\centering
	\includegraphics[scale=0.8]{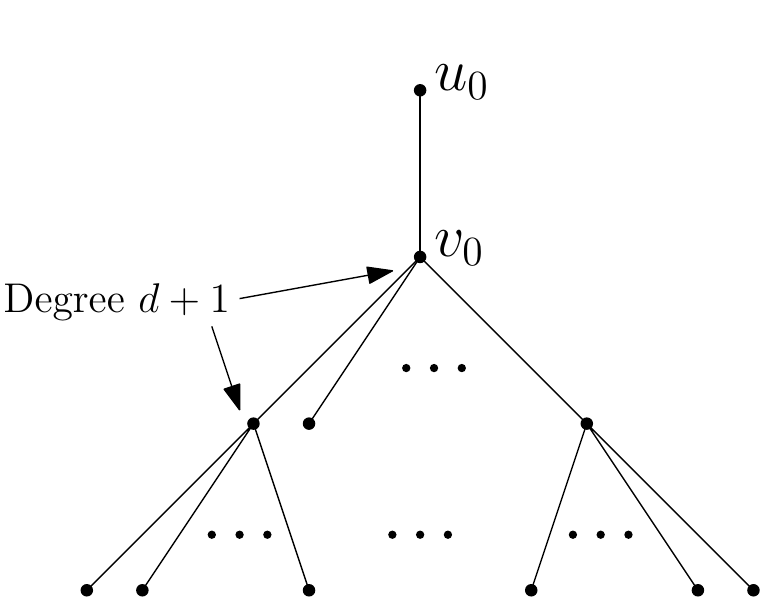}
	\caption{Construction of $H^{d}$.}
	\label{fig:d-tree}
\end{figure}
\begin{definition}[Graph $H_e$]\label{def:he}
For every edge $e\in H^d$ let $H_e$ be the set of edges whose unique path to $e_0$ goes through $e$ (including $e$ itself).
\end{definition}

In the rest of this section, we analyze the behavior of \YYIMatching$(e_0,\cdot)$. Specifically, we calculate the expectation and the variance of its size, and upper-bound its depth over the randomness of the ranks. It will be convenient to consider a slightly more efficient version of \cref{alg:Yoshida}, one in which the algorithm memorizes the results of queries across the recursive tree. That is, if in the tree of recursive calls from \YYIMatching$(e_0,\pi)$ some edge $e$ appears multiple times, it is counted only once in the size of the exploration tree. This memorization can lead to a great saving in the query complexity. For instance, suppose that $e$ is the parent of $f$ and $g$, which are therefore siblings. Let their ranks be $r(e)=\lambda$, $r(f)=\mu$ and $r(g)=\nu$ with $\lambda>\mu>\nu$. If the algorithm queries $e$, it first explores the subtree $H_g$. Then, if \YYIMatching$(g,\pi)$ returns \false, the algorithm proceeds to querying $f$, only to immediately return to $g$ and explore the same subtree of $H_g$, as $g\in\delta(f)$. Since the output of \YYIMatching$(g,\pi)$ is memorized, the algorithm does not have to explore $H_g$ again.

Thanks to memorization, in \cref{line:Yoshida-for} we can ignore the parent of $e$, as well as all its siblings. We can ignore the parent $p$, since $e$ must have been recursively queried from $p$, therefore $r(p)>r(e)$. As for any sibling $f$ of $e$, either $r(f)>r(e)$, in which case $f$ can be safely ignored, or $r(f)<r(e)$, in which case $f$ must have already been queried from $p$ and can be ignored due to memorization. This alteration to \cref{alg:Yoshida} can only reduce the size of the exploration tree $T_{e_0}(\lambda)$. Since in this section we are concerned with a lower-bound on the size of a specific exploration tree, we will analyze this altered version of \cref{alg:Yoshida}. %\xxx[MK]{Do Yoshida et al and Onak-Runbinfel et al use memorization in their analysis? We should say how this relates to their work.}

\iffalse
The following facts about integral equations will be useful.

\begin{fact}\label{intfact1}
Let $\alpha$ be a real number, and $\beta(t)$ and $u(t)$ be continuous functions on $[0,1]$. If $u(t)$ satisfies
$$u(t)=\alpha+\int_0^t\beta(s)u(s)ds,$$
then
$$u(t)=\alpha\exp\left(\int_0^t\beta(s)ds\right).$$
\end{fact}

\begin{fact}[Grönwall's inequality]\label{intfact2}
Let $\alpha$ be a real number, and $\beta(t)$ and $u(t)$ be continuous functions on $[0,1]$. Furthermore, let $\beta(t)$ be nonnegative. If $u(t)$ satisfies
$$u(t)\le\alpha+\int_0^t\beta(s)u(s)ds,$$
then
$$u(t)\le\alpha\exp\left(\int_0^t\beta(s)ds\right).$$
\end{fact}

\fi

\begin{lemma}\label{lemma:P&T}
Let $e_0$ be the root edge of the graph $H^d$, as defined in \cref{def:hd}. Then
\begin{align*}
    &p_{e_0}(\lambda)=x(\lambda)^{\frac d{1-d}}\\
    &t_{e_0}(\lambda)=x(\lambda)^{\frac d{d-1}},
\end{align*}
where $x(\lambda)=1+(d-1)\lambda$.
\end{lemma}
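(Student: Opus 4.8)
The plan is to exploit the self-similar structure of $H^d$ under the memorised, ``skip parent and siblings'' version of \cref{alg:Yoshida}. Write $p:=p_{e_0}$ and $t:=t_{e_0}$. The first step is to observe that for \emph{any} edge $f$ of $H^d$, the (altered) recursion started at $f$ descends only into the $d$ children of $f$, and the subtree $H_f$ explored this way is structurally identical to $H^d$ explored from $e_0$ — which also has exactly $d$ children, since $u_0$ has no other neighbour and $v_0$ has $d$ of them. Hence $p_f\equiv p$ and $t_f\equiv t$ for every edge $f$. The second preliminary observation is that, because $H^d$ is a tree, the subtrees hanging off distinct children of a fixed edge are edge-disjoint and carry independently drawn ranks; consequently the outcomes of the recursive calls on different children, and the order in which those children are processed, are mutually independent.

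Next I would derive an integral equation for $p$. Conditioned on $r(e_0)=\lambda$, the edge $e_0$ is reported in the matching iff every child $f$ either has $r(f)\ge\lambda$ (so $f$ is never processed) or has $r(f)<\lambda$ and the recursive call on $f$ returns \false. For one child, integrating over its rank and using $p_f\equiv p$, the probability of this ``non-blocking'' event equals $1-\int_0^\lambda p(\mu)\,d\mu$; multiplying over the $d$ independent children gives
\[
p(\lambda)=\Big(1-\int_0^\lambda p(\mu)\,d\mu\Big)^{d}.
\]
Setting $P(\lambda):=\int_0^\lambda p(\mu)\,d\mu$ — which is $C^1$, since $p$ is continuous, which in turn follows from the displayed equation because $P$ is Lipschitz ($0\le p\le1$) — this reads $P'(\lambda)=(1-P(\lambda))^{d}$ with $P(0)=0$. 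Separating variables yields $(1-P(\lambda))^{1-d}=1+(d-1)\lambda=x(\lambda)$, and therefore $p(\lambda)=(1-P(\lambda))^{d}=x(\lambda)^{d/(1-d)}$, as claimed.

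For $t$ I would count the explored edges as $e_0$ together with, for each child $f$ that actually gets queried, the number of edges explored inside $H_f$. A child $f$ of rank $\mu<\lambda$ is queried precisely when, among the other $d-1$ children, none that has rank strictly below $\mu$ returns \true; the probability that a given other child does not interfere in this way (either its rank is $\ge\mu$, or it is $<\mu$ and the call returns \false) is $1-P(\mu)$, so by independence $f$ is queried with probability $(1-P(\mu))^{d-1}$, and conditioned on this $H_f$ contributes $t(\mu)$ edges in expectation. Using $(1-P(\mu))^{d-1}=x(\mu)^{-1}$ from the previous step and summing over the $d$ children,
\[
t(\lambda)=1+d\int_0^{\lambda}\frac{t(\mu)}{x(\mu)}\,d\mu ,
\]
whence $t'(\lambda)=\tfrac{d}{x(\lambda)}\,t(\lambda)$ with $t(0)=1$, whose solution is $t(\lambda)=x(\lambda)^{d/(d-1)}$.

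The main obstacle is not the calculus but pinning down these two integral equations. One has to argue that the simplified algorithm genuinely makes $H^d$ self-similar, so the subtree functions equal $p$ and $t$ \emph{exactly} rather than merely approximately, and — the delicate point — that the early \false return inside the for loop of \cref{alg:Yoshida} makes a child of rank $\mu$ get explored exactly when all strictly-lower-ranked children return \false. This is what produces the exponent $d-1$ (not $d$) in $(1-P(\mu))^{d-1}$, which is precisely what collapses the $t$-equation to the clean power $x(\lambda)^{d/(d-1)}$. Both equations rely essentially on $H^d$ being a tree, which is what licenses multiplying the per-child probabilities.
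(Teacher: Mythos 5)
Your proof is correct and follows essentially the same route as the paper: the same self-similarity argument yielding $p(\lambda)=\bigl(1-\int_0^\lambda p\bigr)^d$, the same decomposition $T=1+\sum_f I_f T_f$ with $\Pr[I_f=1\mid r(f)=\mu]=(1-P(\mu))^{d-1}=x(\mu)^{-1}$, and the same resulting integral equations. The only difference is cosmetic — you solve the first equation by separating variables in $P=\int_0^\lambda p$ rather than differentiating $p^{1/d}$ directly — and your explicit attention to why the exponent is $d-1$ and to the conditional independence of $I_f$ and $T_f$ given $r(f)$ matches the paper's reasoning.
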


\begin{proof}
Let $p(\lambda)=p_{e_0}(\lambda)$ and $t(\lambda)=t_{e_0}(\lambda)$. For ease of notation we denote $\YYIMatching(e,\pi)$ by $\textsc{MM}(e,\pi)$.

\paragraph{A closed form expression for $p(\lambda)$.} We first derive a recursive formula for $p(\lambda)$:
\begin{equation}
\begin{split}
    p(\lambda)&= \prob{\textsc{MM}(e_0,\pi)\text{ returns \true}|r(e_0)=\lambda)}\\
  %  &=\mathbb P(\forall e\in\delta(e_0):r(e)>\lambda\text{ or }\textsc{MM}(e,\pi)\text{ returns False}|r(e_0)=\lambda)\\
    &= \prob{\forall e\in\delta(e_0):r(e)>\lambda\text{ or }\textsc{MM}(e,\pi)\text{ returns \false on }H_e}\label{line:P-formula}\\
    &=\prod_{e\in\delta(e_0)}\left(1-\int_0^\lambda \prob{\textsc{MM}(e,\pi)\text{ returns \true on }H_e|r(e)=\mu}d\mu\right)\\
    &=\left(1-\int_0^\lambda p(\mu)d\mu\right)^d,
\end{split}
\end{equation}
since $H_e$ is isomorphic to $H^d$ (as per \cref{def:hd,def:he}).

We can now solve this recursion and get a closed form formula for $p(\lambda)$. By raising both sides of \cref{line:P-formula} to power $1/d$, we derive that $p^{1/d}(\lambda)=1-\int_0^\lambda p(\mu)d\mu$. Differentiating both sides of this equation, we get $\frac1d\cdot p^{(1/d)-1}(\lambda)\cdot p'(\lambda)=p(\lambda)$, which implies that $p^{(1/d)-2}(\lambda)\cdot p'(\lambda)=d$ and hence

\begin{align*}
    &p(\lambda)=(C+(d-1)\lambda)^{\frac{d}{1-d}}=x^{\frac{d}{1-d}}(\lambda),
\end{align*}
as claimed, where $C=1$ due to the initial condition of $p(0)=1$.

\paragraph{A closed form expression for $t(\lambda)$.} We now derive a recursive formula for $t(\lambda)$. Let $T_e$ be a random variable denoting the size of the exploration tree when running \cref{alg:Yoshida} from $e$ in $H_e$. Note that $T_e$ is distributed identically for all $e$, and $\mathbb E(T_e|r(e)=\lambda)=t(\lambda)$, since $H_e$ is always isomorphic to $H^d$. Let $T=T_{e_0}$. Furthermore, let $I_e$ denote the indicator variable of $e$ being explored through the recursive calls, when the algorithm is originally initiated from $e_0$. Then $T$ satisfies
$$T=1+\sum_{e\in\delta(e_0)}I_e\cdot T_e,$$
and hence,
$$t(\lambda)=\ee{T|r(e_0)=\lambda}=1+\sum_{e\in\delta(e_0)}\ee{I_e\cdot T_e|r(e_0)=\lambda}$$
The expression $\ee{I_e\cdot T_e|r(e_0)=\lambda}$ can be nicely taken apart if we condition on the rank of $e$, as long as it is less than $\lambda$. (If it is more than $\lambda$, $I_e=0$.) Indeed, note that $I_e$ depends only on the ranks and outcomes of the siblings of $e$, as well as the rank of $e$ itself. Meanwhile, $T_e$ depends only on the ranks in $H_e$. The only intersection between these is the rank of $e$, meaning that $I_e$ and $T_e$ are independent conditioned on $r(e)$. These observations lead to
\begin{align*}
    t(\lambda)=1+\sum_{e\in\delta(e_0)}\int_0^\lambda\prob{I_e=1|r(e_0)=\lambda,r(e)=\mu} \cdot \ee{T_e|r(e)=\mu}d\mu.
\end{align*}
The expected size of the exploration tree from $e$ is simply $t(\mu)$, again since $H_e$ is isomorphic to $H^d$. Consider the probability that $e$ is explored at all. This happens exactly when for any sibling of $e$, $f$, either $r(f)>r(e)$ or \textsc{MM}$(f,\pi)$ returns \false. This condition is very similar to the condition for \textsc{MM}$(e_0,\pi)$ returning \true (recall \cref{line:P-formula}). The only difference is that the condition must hold only for $\delta(e_0)\backslash e$ as opposed to $\delta(e_0)$. Hence we have
\begin{align*}
    \prob{I_0=1|r(e_0)=\lambda,r(e)=\mu}=\left(1-\int_0^\mu p(\nu)d\nu\right)^{d-1}=p^{\frac{d}{d-1}}(\mu)=x^{-1}(\mu).
\end{align*}
In particular, the rhs does not depend on $\lambda$. Therefore,
\begin{align*}
    t(\lambda)=1+d\int_0^\lambda x^{-1}(\mu)t(\mu)d\mu.
\end{align*}

We can now solve this recursion and get a closed form formula for $t(\lambda)$.
\begin{align*}
    &t(\lambda)=1+d\int_0^\lambda x^{-1}(\mu)t(\mu)d\mu\\
    &t'(\lambda)=dx^{-1}(\lambda)t(\lambda)\\
    &\frac{t'(\lambda)}{t(\lambda)}=\frac d{x(\lambda)}\\
    &\log\left(t(\lambda)\right)=\frac d{d-1}\cdot\log\left(x(\lambda)\right)+C_1\\
    &t(\lambda)=C_2\cdot x^{\frac{d}{d-1}}(\lambda)=x^{\frac{d}{d-1}}(\lambda),
\end{align*}
as claimed, due to the initial condition of $t(0)=1$.
\end{proof}

\begin{corollary}\label{cor:t-bound}
Let $e_0$ be the root edge of the graph $H^d$, as defined in \cref{def:hd}. Then,
$$\ee{T_{e_0}} \le \expectedtwo{\lambda}{t_{e_0}(\lambda)}\le t_{e_0}(1)=x^{\frac d{d-1}}(1)=d^{\frac d{d-1}}\le2d,$$
for $d\ge5$.
\end{corollary}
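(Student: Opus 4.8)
The plan is to deduce this corollary essentially for free from the closed-form expressions established in \cref{lemma:P&T}; no genuinely new argument is needed, and the only work is unpacking definitions and bounding an elementary function of $d$.

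First I would note that the first inequality is actually an equality. By construction $r(e_0)$ is drawn uniformly from $[0,1]$, and $t_{e_0}(\lambda)$ is precisely the conditional expectation $\ee{T_{e_0}\mid r(e_0)=\lambda}$, so the law of total expectation gives $\ee{T_{e_0}}=\int_0^1 t_{e_0}(\lambda)\,d\lambda=\expectedtwo{\lambda}{t_{e_0}(\lambda)}$. (Writing $\le$ is harmless and keeps the chain uniform.)

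Next I would invoke \cref{lemma:P&T}, which gives $t_{e_0}(\lambda)=x(\lambda)^{d/(d-1)}$ with $x(\lambda)=1+(d-1)\lambda$. For $d\ge 2$ the base $x(\lambda)$ is positive and strictly increasing on $[0,1]$ and the exponent $d/(d-1)$ is positive, so $t_{e_0}$ is monotonically increasing on $[0,1]$; hence $\expectedtwo{\lambda}{t_{e_0}(\lambda)}\le\max_{\lambda\in[0,1]}t_{e_0}(\lambda)=t_{e_0}(1)$. Evaluating at $\lambda=1$ gives $x(1)=1+(d-1)=d$, so $t_{e_0}(1)=d^{d/(d-1)}$.

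Finally I would bound $d^{d/(d-1)}=d\cdot d^{1/(d-1)}\le 2d$, which is equivalent to $d^{1/(d-1)}\le 2$, i.e.\ $d\le 2^{d-1}$; this holds for every integer $d\ge 2$ (and in particular for $d\ge 5$) by a one-line induction. There is no real obstacle in this proof — the only point worth a moment's care is that the monotonicity step needs $d\ge 2$ so that $x(\lambda)$ is increasing, which is consistent with (indeed much weaker than) the hypothesis $d\ge 5$.
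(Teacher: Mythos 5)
Your proposal is correct and follows exactly the route the paper intends: the corollary is stated without a separate proof because it is immediate from \cref{lemma:P&T}, and your unpacking (law of total expectation for the first step, monotonicity of $x(\lambda)^{d/(d-1)}$ on $[0,1]$ for the second, and $d\le 2^{d-1}$ for the last) is precisely the intended justification of each link in the chain. Your observation that the final bound already holds for all integers $d\ge 2$ is a harmless strengthening of the stated hypothesis $d\ge 5$.
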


We next construct an infinite graph with an edge whose expected exploration tree size has nearly quadratic dependence on $d$. 

\begin{definition}[Graph $H^{d, \e}$, see \cref{fig:de-tree} for illustration]\label{def:hde}
Fix some small positive number $\epsilon$. Take $\epsilon d$ disjoint copies of $H^d$, call them $H^{(1)},H^{(2)},\ldots,H^{(\epsilon d)}$. Let the root edge of $H^{(i)}$ be $e^{(i)}=(u^{(i)},v^{(i)})$. We merge $u^{(1)},u^{(2)},\ldots,u^{(\epsilon d)}$ into a supernode $u_0$, and add a new node $w_0$ along with an edge $e_0=(w_0,u_0)$. This creates the infinite graph $H^{d,\epsilon}$; we call the edge $e_0$ the root edge.
\end{definition}

\begin{figure}
	\centering
	\includegraphics[scale=0.8]{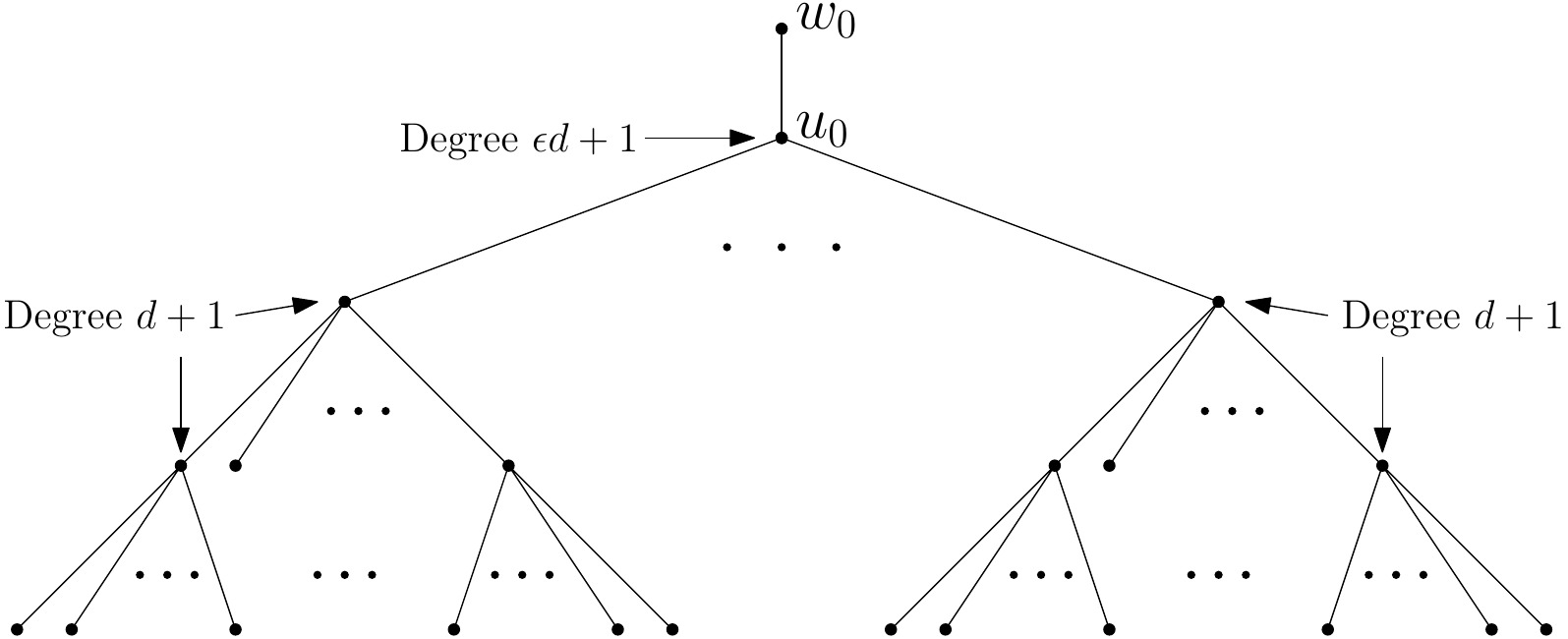}
	\caption{Construction of $H^{d, \e}$.}
	\label{fig:de-tree}
\end{figure}

We will now show that querying $e_0$ in $H^{d,\epsilon}$ with \cref{alg:Yoshida} produces an exploration tree of nearly quadratic size in expectation.

\begin{lemma}\label{lm:hard-edge-hde}
   For every $\e\in (0, 1)$, every integer $d\geq 5$, in the graph $H^{d,\epsilon}$ (see \cref{def:hde}), where $e_0$ is the root edge, it holds
    $$t_{e_0}(\lambda)\ge\epsilon\cdot x^{2-\epsilon}(\lambda)/2.$$
\end{lemma}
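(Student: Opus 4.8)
The plan is to mirror the argument of \cref{lemma:P&T}, but now for the root edge $e_0$ of $H^{d,\epsilon}$. I would again work with the rank-based implementation and write $p(\lambda)=p_{e_0}(\lambda)$, $t(\lambda)=t_{e_0}(\lambda)$ for the root of $H^{d,\epsilon}$. The root edge $e_0$ has exactly $\epsilon d$ neighbors, namely $e^{(1)},\dots,e^{(\epsilon d)}$, each of which is the root edge of a disjoint copy of $H^d$; crucially, after memorization the subtree explored from each $e^{(i)}$ behaves exactly like the root of $H^d$, so I may use the closed forms $p_{e^{(i)}}(\mu)=x(\mu)^{d/(1-d)}$ and $t_{e^{(i)}}(\mu)=x(\mu)^{d/(d-1)}$ from \cref{lemma:P&T}, with $x(\mu)=1+(d-1)\mu$. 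Repeating the computation of $t(\lambda)$ in \cref{lemma:P&T} verbatim with $d$ replaced by $\epsilon d$ in the outer sum (but keeping $x$ as defined via the inner trees, which still have branching $d$) gives
\begin{align*}
    t(\lambda)=1+\sum_{i=1}^{\epsilon d}\int_0^\lambda \prob{I_{e^{(i)}}=1\mid r(e_0)=\lambda, r(e^{(i)})=\mu}\cdot t_{e^{(i)}}(\mu)\,d\mu.
\end{align*}
As in \cref{lemma:P&T}, $I_{e^{(i)}}=1$ requires that every sibling $e^{(i')}$ ($i'\neq i$) of $e^{(i)}$ either has larger rank or returns \false, which has probability $\bigl(1-\int_0^\mu p_{e^{(i')}}(\nu)\,d\nu\bigr)^{\epsilon d-1}$; since $1-\int_0^\mu p_{e^{(i')}}(\nu)\,d\nu = x(\mu)^{(1-d)/d}\cdot$(something)$\ldots$ — more precisely, from the proof of \cref{lemma:P&T} we have $1-\int_0^\mu p_{H^d}(\nu)d\nu = p_{H^d}(\mu)^{1/d}=x(\mu)^{1/(1-d)}$. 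Hence this probability equals $x(\mu)^{(\epsilon d-1)/(1-d)}$.

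**Solving the integral equation and extracting the bound.** Substituting, and using $t_{e^{(i)}}(\mu)=x(\mu)^{d/(d-1)}$, I get
\begin{align*}
    t(\lambda)=1+\epsilon d\int_0^\lambda x(\mu)^{\frac{\epsilon d-1}{1-d}+\frac{d}{d-1}}\,d\mu
    =1+\epsilon d\int_0^\lambda x(\mu)^{\frac{d-\epsilon d+1}{d-1}-1+\ldots}\,d\mu,
\end{align*}
so the job is to simplify the exponent $\frac{\epsilon d-1}{1-d}+\frac{d}{d-1}=\frac{d-\epsilon d+1}{d-1}$ and then integrate $x(\mu)^{(d-\epsilon d+1)/(d-1)}$, which (since $x'=d-1$) yields a closed form $\propto x(\lambda)^{(d-\epsilon d+1)/(d-1)+1}=x(\lambda)^{(2d-\epsilon d)/(d-1)}$. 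I then want to show this is at least $\tfrac{\epsilon}{2}x(\lambda)^{2-\epsilon}$. The key inequalities to check are: (i) the leading constant, after carrying the $\epsilon d/(d-1)$ and the $1/\bigl((d-\epsilon d+1)/(d-1)+1\bigr)$ factor through, is at least $\epsilon/2$ — here one uses $d\geq 5$ and $\epsilon\in(0,1)$ to bound $\frac{\epsilon d/(d-1)}{(2d-\epsilon d)/(d-1)} = \frac{\epsilon d}{2d-\epsilon d}=\frac{\epsilon}{2-\epsilon}\geq \frac{\epsilon}{2}$, and (ii) the exponent satisfies $\frac{2d-\epsilon d}{d-1}\geq 2-\epsilon$, i.e. $2d-\epsilon d\geq (2-\epsilon)(d-1)=2d-\epsilon d-2+\epsilon$, which holds since $2\geq \epsilon$. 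One also discards the additive $1$ and the lower-order pieces of the integration constant, which only helps since $x(\lambda)\geq 1$.

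**Main obstacle.** The main technical care is in verifying that the memorization-based reduction really does make each $H^{(i)}$ behave like an independent copy of $H^d$ rooted at $e^{(i)}$, so that $p_{e^{(i)}},t_{e^{(i)}}$ are exactly the functions from \cref{lemma:P&T}, and that the siblings $e^{(i')}$ all share only the vertex $u_0$ and hence the events "$e^{(i')}$ returns \false on $H_{e^{(i')}}$" are mutually independent conditioned on their ranks — this is what licenses raising to the power $\epsilon d-1$ and the clean product form. This is essentially the same independence/isomorphism bookkeeping as in \cref{lemma:P&T} (the $H_e$'s are edge-disjoint subtrees), so I expect it to go through, but it is the place where one must be most careful. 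The remaining work — solving the separable ODE $t'(\lambda)=\epsilon d\, x(\lambda)^{(d-\epsilon d+1)/(d-1)}$ with $t(0)=1$ and the two elementary inequalities above — is routine calculation of the kind already done in \cref{lemma:P&T} and I would not spell it out in full.
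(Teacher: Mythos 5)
Your proposal is correct and follows essentially the same route as the paper: the same recursion over the $\epsilon d$ sibling copies of $H^d$, the same identification of $\prob{I_{e^{(i)}}=1\mid r(e^{(i)})=\mu}=x(\mu)^{(\epsilon d-1)/(1-d)}$ and $t_{e^{(i)}}(\mu)=x(\mu)^{d/(d-1)}$ from \cref{lemma:P&T}, and the same two elementary inequalities $\frac{\epsilon}{2-\epsilon}\ge\frac{\epsilon}{2}$ and $\frac{2d-\epsilon d}{d-1}\ge 2-\epsilon$. The only (immaterial) difference is order of operations: the paper first lower-bounds the integrand by $x^{1-\epsilon}(\mu)$ and then integrates, whereas you integrate exactly and then bound the resulting exponent and coefficient.
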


\begin{proof}
    Recall the definitions of $I_e$ and $T_e$ from the proof of \cref{lemma:P&T}: Let $I_{e^{(i)}}$ be the indicator variable of $e^{(i)}$ being explored when \cref{alg:Yoshida} is called from $e_0$; let $T_{e^{(i)}}$ be the size of the exploration tree from $e^{(i)}$ in $H^{(i)}$. For simplicity let $T_i=T_{e^{(i)}}$ and $I_i=I_{e^{(i)}}$. We can derive a formula for $t_{e_0}(\lambda)$ similarly to \cref{lemma:P&T}:
    \begin{align*}
        t_{e_0}(\lambda)=1+\sum_{i=1}^{\epsilon d}\int_0^\lambda\prob{I_i|r(e^{(i)})=\mu} \cdot \ee{T_i|r(e^{(i)})=\mu}d\mu.
    \end{align*}
    $\ee{T_i|r(e^{(i)})=\mu}$ is simply $t(\mu)=x^{\frac{d}{d-1}}(\mu)$ by \cref{lemma:P&T}, since the subtree of $e^{(i)}$ in $H^{d,\epsilon}$ is isomorphic to $H^d$. Similarly to \cref{lemma:P&T}, the probability that $e^{(i)}$ is explored is the probability that for any sibling $e^{(j)}$ of $e^{(i)}$ we have that either $r(e^{(j)})>r(e^{(i)})$ or \textsc{MM}$(e^{(j)},\pi)$ returns \false. $e^{(i)}$ has $\epsilon d-1$ neighbors, each of whose subtree is isomorphic to $H^d$, hence we have
    \begin{align*}
        \prob{I_i|r(e^{(i)})=\mu}&=\prod_{i=1}^{\epsilon d}\left(1-\int_0^\mu p_{e^{(i)}}(\nu)d\nu\right)\\
        &=\left(1-\int_0^\mu p(\nu)d\nu\right)^{\epsilon d-1}\\
        &=p^{\frac{\epsilon d-1}{d}}(\mu)\\
        &=x^{\frac{\epsilon d-1}{1-d}}(\mu).
    \end{align*}
    Therefore,
    \begin{align*}
        t_{e_0}(\lambda)&=1+\epsilon d\int_0^\lambda x^{\frac{\epsilon d-1}{1-d}}\cdot x^{\frac{d}{d-1}}(\mu)d\mu\\
        &\ge1+\epsilon d\int_0^\lambda x^{1-\epsilon}(\mu)d\mu\\
        &=1+\frac{\epsilon d}{(2-\epsilon)(d-1)}\cdot\left(x^{2-\epsilon}(\lambda)-1\right)\\
        &\ge\epsilon\cdot x^{2-\epsilon}(\lambda)/2,
    \end{align*}
    as claimed.
%\jakab{we don't really need this calculation, there's a known formula}
\end{proof}
\begin{corollary}\label{cor:bad-edge}
For every $\e \in (0, 1)$, every integer $d \geq 5$, for the root $e_0$ of the graph $H^{d,\epsilon}$ (see \cref{def:hde}) we have:
\[
	\ee{T_{e_0}} = \expectedtwo{\lambda}{t_{e_0}(\lambda)}\ge\frac12\cdot \e \cdot t_{e_0}(1/2) \ge \e\cdot \frac14\cdot\left(\frac d2\right)^{2-\epsilon}.
\]
\end{corollary}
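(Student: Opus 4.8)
The plan is to combine three one-line ingredients: the tower rule for conditional expectation, monotonicity of $t_{e_0}$ in the rank of $e_0$, and \cref{lm:hard-edge-hde} evaluated at $\lambda=1/2$.

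First I would unwind the equality. By construction the rank $r(e_0)$ is uniform on $[0,1]$, and by the very definition of $t_{e_0}$ we have $t_{e_0}(\lambda)=\ee{T_{e_0}\mid r(e_0)=\lambda}$; hence, by the tower rule,
\[
\ee{T_{e_0}}=\expectedtwo{\lambda}{t_{e_0}(\lambda)}=\int_0^1 t_{e_0}(\lambda)\,d\lambda .
\]
Next I would record that $t_{e_0}$ is nondecreasing on $[0,1]$. This is immediate either from the closed integral form $t_{e_0}(\lambda)=1+\epsilon d\int_0^\lambda x^{\frac{\epsilon d-1}{1-d}}(\mu)\,x^{\frac{d}{d-1}}(\mu)\,d\mu$ derived inside the proof of \cref{lm:hard-edge-hde} (the integrand is nonnegative since $x(\mu)=1+(d-1)\mu>0$), or from the coupling observation that raising $r(e_0)$ only enlarges the set of lower-ranked neighbours over which the loop on \cref{line:Yoshida-for} of \cref{alg:Yoshida} recurses, so the explored edge set — and hence $T_{e_0}$ — is pointwise nondecreasing in $r(e_0)$ under the obvious coupling of all other ranks. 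Consequently
\[
\int_0^1 t_{e_0}(\lambda)\,d\lambda\ \ge\ \int_{1/2}^1 t_{e_0}(\lambda)\,d\lambda\ \ge\ \tfrac12\,t_{e_0}(1/2)\ \ge\ \tfrac{\epsilon}{2}\,t_{e_0}(1/2),
\]
the last step just using $\epsilon\le 1$; this gives the first inequality of the corollary.

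Finally I would substitute $\lambda=1/2$ into \cref{lm:hard-edge-hde}, which yields $t_{e_0}(1/2)\ge\frac{\epsilon}{2}\,x^{2-\epsilon}(1/2)$; since $x(1/2)=1+(d-1)/2=(d+1)/2\ge d/2$ and $2-\epsilon>0$, we get $x^{2-\epsilon}(1/2)\ge(d/2)^{2-\epsilon}$, hence $t_{e_0}(1/2)\ge\frac{\epsilon}{2}(d/2)^{2-\epsilon}$. Chaining this with the $\tfrac12 t_{e_0}(1/2)$ term of the previous display gives $\ee{T_{e_0}}\ge\tfrac12\,t_{e_0}(1/2)\ge\frac{\epsilon}{4}(d/2)^{2-\epsilon}$, the claimed bound. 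I expect no real obstacle here: the only non-bookkeeping point is the monotonicity of $t_{e_0}$, which is already contained in the integral formula proved for \cref{lm:hard-edge-hde}; the remainder is just tracking the constants and the lower bound $x(1/2)\ge d/2$.
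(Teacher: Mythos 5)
Your proof is correct and matches the (implicit) argument the paper intends: the paper states \cref{cor:bad-edge} without proof, and the only content is exactly what you supply — monotonicity of $t_{e_0}$ from the nonnegative integrand in the formula derived in \cref{lm:hard-edge-hde}, the tower rule, and the evaluation $x(1/2)=(d+1)/2\ge d/2$. You also correctly handle the slightly loose middle term $\tfrac12\,\e\, t_{e_0}(1/2)$ (which follows from your stronger bound $\tfrac12\,t_{e_0}(1/2)$ since $\e\le 1$) while deriving the final bound from the stronger quantity, which is what makes the constants come out right.
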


Therefore this is a construction in which an edge has expected exploration tree size which is nearly quadratic in $d$. However, the graph $H^{d,\epsilon}$ is infinite, so we proceed to finding a finite graph that has an edge whose exploration tree is also near quadratic. To obtain such a finite graph, we perform the following natural modification of $H^{d,\epsilon}$: we cut off the $H^{d,\epsilon}$ graphs at some depth $\ell$, that is we discard all edges $e$ such that $\ell(e)>\ell$, along with vertices that become isolated as a result. (Recall that $\ell(e)$ is the level of the edge $e$.) We will prove that the exploration tree usually does not explore edges beyond a depth of $O(\log d)$, and so intuitively we should be able to cut the graph at that depth. We make this precise in the next section.

\subsection{Depth and Variance Analysis for Infinite Graphs}\label{depth-var-ana}

We will now study the depth of the exploration tree, that is the highest level of any edge in it. Let the depth of the exploration tree from $e_0$ be $D$.

\begin{lemma}\label{lem:depth}
For every $\ell\geq 1$, if $D$ is the depth of the exploration tree in the graph $H^d$ (see \cref{def:hd}), one has
$$ \prob{D\ge\ell}\le2^{1-\ell}d^2.$$
\end{lemma}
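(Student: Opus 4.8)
I would run the analysis in terms of the i.i.d.\ uniform ranks $r(e)\in[0,1]$ already used for \cref{lemma:P&T}, and, as there, work with the memorized variant of \cref{alg:Yoshida} in which processing an edge $e$ only recurses into children of $e$ of smaller rank; this only shrinks the exploration, so bounding its depth suffices. The case $d=1$ is immediate: $H^1$ is a path, the exploration from $e_0$ walks down a strictly decreasing rank sequence, hence $\prob{D\ge\ell}=1/(\ell+1)!\le 2^{1-\ell}$, which is the claim. So assume $d\ge 2$. For an edge $e$ of $H^d$ (every subtree $H_e$ is isomorphic to $H^d$, so what follows is the same function for every $e$), let $q_\ell(\lambda)$ be the probability, given $r(e)=\lambda$, that the exploration triggered by $\YYIMatching(e,\pi)$ reaches an edge at distance $\ge\ell$ from $e$; then $q_0\equiv 1$ and $\prob{D\ge\ell}=\int_0^1 q_\ell(\lambda)\,d\lambda$.

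The heart of the proof is a depth recursion. Fix $\ell\ge 1$. Reaching depth $\ge\ell$ below $e$ forces some child $f$ of $e$ to be processed \emph{and} to have exploration depth $\ge\ell-1$ below it. Conditioned on $r(f)=\mu<\lambda$, the child $f$ is processed exactly when none of the $d-1$ other children of $e$ of rank $<\mu$ returns \true; by \cref{lemma:P&T} one has $1-\int_0^\mu p_{e_0}(\nu)\,d\nu=p_{e_0}(\mu)^{1/d}$, so this probability equals $p_{e_0}(\mu)^{(d-1)/d}=x(\mu)^{-1}$ with $x(\mu)=1+(d-1)\mu$. This event depends only on the siblings' subtrees, whereas ``depth $\ge\ell-1$ below $f$'' has probability $q_{\ell-1}(\mu)$ and depends only on $H_f$, so they are conditionally independent; a union bound over the $d$ children of $e$ then gives
\[
  q_\ell(\lambda)\ \le\ d\int_0^\lambda\frac{q_{\ell-1}(\mu)}{1+(d-1)\mu}\,d\mu .
\]
Since $\frac{d}{d\mu}\ln(1+(d-1)\mu)=\frac{d-1}{1+(d-1)\mu}$, an easy induction gives $q_\ell(\lambda)\le\frac{1}{\ell!}\bigl(\tfrac{d}{d-1}\ln(1+(d-1)\lambda)\bigr)^{\ell}$. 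Integrating over $\lambda\in[0,1]$, substituting $y=1+(d-1)\lambda$ and then $u=\ln y$, and using $\int_0^{\ln d}u^{\ell}e^{u}\,du=d\int_0^{\ln d}(\ln d-s)^{\ell}e^{-s}\,ds\le d(\ln d)^{\ell}$, I obtain
\[
  \prob{D\ge\ell}\ \le\ \frac{d}{d-1}\cdot\frac{1}{\ell!}\Bigl(\frac{d\ln d}{d-1}\Bigr)^{\ell}.
\]

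It remains to see this is at most $2^{1-\ell}d^2$. Writing $B=\frac{d\ln d}{d-1}$, this is equivalent to $\frac{(2B)^{\ell}}{\ell!}\le 2d(d-1)$ for all $\ell\ge 1$. Now $2B\ge 4\ln 2\ge 2$, and for any $\mu\ge 2$ the maximal term $M=\max_\ell \mu^\ell/\ell!$ is attained at $\ell=\lfloor\mu\rfloor$ while the two neighbouring terms (at $\lfloor\mu\rfloor\pm 1$) are each at least $\tfrac23 M$; since these three distinct terms sum to at most $e^\mu=\sum_\ell \mu^\ell/\ell!$, we get $M\le\frac37 e^{\mu}$. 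Hence $\frac{(2B)^{\ell}}{\ell!}\le\frac37 e^{2B}=\frac37 d^{2d/(d-1)}$; for $d\ge 3$ one has $d^{2/(d-1)}\le 3$, so this is at most $\frac97 d^{2}\le 2d(d-1)$, and the single case $d=2$ (where $\max_\ell (2B)^\ell/\ell!<3.85<4$) is a one-line check. The only delicate point in the argument is extracting the $1/\ell!$ factor: the cheap bounds — $\ee{T_{e_0}}\le 2d$ from \cref{cor:t-bound}, or a level-by-level union bound — only yield $\prob{D\ge\ell}\le 2d$ with no decay in $\ell$. What rescues us is that a child of rank $\mu$ is processed with probability only $x(\mu)^{-1}$ (later/deeper children are progressively unlikely to be reached), so integrating $x(\mu)^{-1}$ contributes merely $\tfrac{\ln d}{d-1}$ per level, and the $d^{\ell}$ edges at level $\ell$ combine with it into $\bigl(\tfrac{d\ln d}{d-1}\bigr)^{\ell}/\ell!\approx(\ln d)^{\ell}/\ell!$, which decays super‑geometrically once $\ell\gg\log d$ and therefore comfortably beats $2^{1-\ell}d^{2}$ once the extra $d^{2}$ absorbs the warm‑up phase.
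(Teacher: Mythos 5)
Your proof is correct, but it takes a genuinely different route from the paper's. The paper proves this lemma by an exponential-weighting (potential function) argument: it assigns weight $2^{\ell(e)}$ to each explored edge, derives for the expected weighted tree size $t_2(\lambda)$ the same integral recursion as for $t(\lambda)$ in \cref{lemma:P&T} but with an extra factor of $2$, solves it to get $t_2(\lambda)=x(\lambda)^{2d/(d-1)}\le 2d^2$, and then applies Markov's inequality to $\ee{2^D}\ge 2^{\ell}\prob{D\ge\ell}$ — which immediately produces the bound in exactly the stated $2^{1-\ell}d^2$ form. You instead set up a recursion directly for the tail probability $q_\ell(\lambda)$ of reaching depth $\ell$, using the same two ingredients the paper uses for $t(\lambda)$ (the processing probability $x(\mu)^{-1}$ of a child of rank $\mu$, and conditional independence of "child processed" and "child's subtree is deep" given the child's rank), plus a union bound over the $d$ children. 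Solving that recursion gives $q_\ell\lesssim \frac{1}{\ell!}\bigl(\tfrac{d}{d-1}\ln x(\lambda)\bigr)^{\ell}$, a quantitatively much stronger bound (factorial rather than geometric decay in $\ell$, and only $O(\ln d)$ rather than $O(d^2)$ at $\ell=1$), at the cost of the extra elementary calculation needed to verify that $\frac{d}{d-1}\cdot\frac{1}{\ell!}\bigl(\frac{d\ln d}{d-1}\bigr)^{\ell}\le 2^{1-\ell}d^2$ for all $\ell\ge 1$; I checked that calculation (including the $\frac37 e^{\mu}$ bound on the maximal Poisson-type term and the $d=2$ and $d=1$ edge cases) and it is sound. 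So each approach buys something: the paper's Markov-on-$2^D$ trick lands directly on the target inequality with no case analysis, while yours yields a sharper intermediate estimate that would, for instance, let one truncate $H^{d,\e}$ at a slightly smaller depth.
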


\begin{proof}
Consider the weight of the exploration tree, defined as follows: For each edge $e$ in the exploration tree we count it with weight $2^{\ell(e)}$. Let the expected weighted size of an exploration tree from $e_0$, conditioned on $e_0=\lambda$ be $T_2(\lambda)$. We can derive a recursive formula for $t_2(\lambda)$ with the same technique as was used to derive a recursive formula for $t(\lambda)$ in \cref{lemma:P&T}, the only difference is the additional factor of $2$ on the right hand side (we do not repeat the almost identical proof here). We get
\begin{align*}
    t_2(\lambda)=1+d\int_0^\mu x^{-1}(\mu)(2t_2(\mu))d\mu.\\
\end{align*}

We can then solve this recursion in a similar manner to the one in \cref{lemma:P&T} (again, the only difference is the extra factor of $2$ in the exponent):
\begin{align*}
    &t_2(\lambda)=1+2d\int_0^\lambda x^{-1}(\mu)t_2(\mu)d\mu\\
    &t_2'(\lambda)=2dx^{-1}(\lambda)t_2(\lambda)\\
    &\frac{t_2'(\lambda)}{t_2(\lambda)}=\frac{2d}{x(\lambda)}\\
    &\log\left(t_2(\lambda)\right)=\frac{2d}{d-1}\cdot\log\left(x(\lambda)\right)+C_1\\
    &t_2(\lambda)=C_2\cdot x^{\frac{2d}{d-1}}(\lambda)=x^{\frac{2d}{d-1}}(\lambda),
\end{align*}
due to the initial condition of $t_2(0)=1$. Specifically $t_2(\lambda)<t_2(1)=x^{\frac{2d}{d-1}}\le2d^2$ for $d\ge5$.

We can now complete the proof of the lemma. Note that if the depth $D$ of the exploration tree in $H^d$ satisfies $D\ge\ell$  then the tree contains at least an edge of level $\ell$, the weight of the tree must be at least $2^\ell$. Therefore:
\begin{align*}
    &2d^2\ge t_2(1)\ge \ee{2^D}\ge2^\ell\cdot\prob{D\ge\ell}\\
    &\prob{D\ge\ell}\le2^{1-\ell}d^2,
\end{align*}
as claimed.

\end{proof}
\begin{corollary}\label{cor:d-bound}
In the graph $H^{d,\epsilon}$ (see \cref{def:hde}),
$$\prob{D\ge\ell+1}\le2^{1-\ell}\epsilon d^3.$$
\end{corollary}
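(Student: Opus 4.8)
The plan is to derive \cref{cor:d-bound} from \cref{lem:depth} by a union bound over the $\epsilon d$ copies of $H^d$ that make up $H^{d,\epsilon}$. Recall from \cref{def:hde} that $H^{d,\epsilon}$ is the root edge $e_0=(w_0,u_0)$ together with vertex-disjoint (away from $u_0$) copies $H^{(1)},\dots,H^{(\epsilon d)}$ of $H^d$, where the root edge $e^{(i)}$ of $H^{(i)}$ sits at level $1$, and an edge at distance $j$ from $e^{(i)}$ inside $H^{(i)}$ sits at level $j+1$ in $H^{d,\epsilon}$. The first thing I would record is that, in the memoized version of \cref{alg:Yoshida} that we analyze, a recursive call $\textsc{MM}(e^{(i)},\pi)$ ignores the parent ($e_0$) and siblings ($e^{(j)}$, $j\neq i$) of $e^{(i)}$ and only recurses into children of $e^{(i)}$, all of which lie in $H^{(i)}$; since the copies are disjoint away from $u_0$, the recursion from $e^{(i)}$ never escapes $H^{(i)}$, and memoized answers from $H^{(i)}$ are never reused in a different copy. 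Consequently, conditioned on the ranks inside $H^{(i)}$, the set of edges of $H^{(i)}$ that the exploration from $e_0$ touches (when $e^{(i)}$ is explored at all) is exactly the exploration tree of the standalone run of the algorithm on $e^{(i)}$ in $H^{(i)}\cong H^d$. Let $D_i$ denote the depth (measured from $e^{(i)}$) of that standalone exploration tree; then $D_i$ is distributed as the quantity $D$ in \cref{lem:depth}, so $\prob{D_i\ge\ell}\le 2^{1-\ell}d^2$ for every $\ell\ge 1$.

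The second step is the union bound. Write $D$ for the depth of the exploration tree from $e_0$ in $H^{d,\epsilon}$. If $D\ge\ell+1$ (with $\ell\ge 1$), the tree contains an edge of level $\ell+1\ge 2$; this edge is neither $e_0$ (level $0$) nor any $e^{(i)}$ (level $1$), so it lies strictly inside some $H^{(i)}$ at internal distance $\ge\ell$ from $e^{(i)}$, and its presence in the tree forces $e^{(i)}$ to have been explored and the standalone exploration from $e^{(i)}$ to reach depth $\ge\ell$, i.e.\ $D_i\ge\ell$. Hence
\begin{align*}
\prob{D\ge\ell+1}\ \le\ \sum_{i=1}^{\epsilon d}\prob{D_i\ge\ell}\ \le\ \epsilon d\cdot 2^{1-\ell}d^2\ =\ 2^{1-\ell}\epsilon d^3,
\end{align*}
which is the claim. (If desired one can insert the indicator that $e^{(i)}$ is actually explored; since that event is a function of $r(e_0)$, $r(e^{(i)})$ and the ranks/outcomes in the other copies, it is conditionally independent of $D_i$ given $r(e^{(i)})$, so it only improves the bound by a factor $\prob{e^{(i)}\text{ explored}}\le 1$.)

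There is no real obstacle here — all the analytic work (solving the integral recursion for the $2$-weighted exploration-tree size and extracting $\prob{D\ge\ell}\le 2^{1-\ell}d^2$) is already done in \cref{lem:depth}. The only points that need a sentence of care are the $+1$ shift between "internal level in $H^{(i)}$" and "level in $H^{d,\epsilon}$", and the observation that under memoization the recursion inside a copy $H^{(i)}$ coincides with the standalone run on $H^d$, so that $D_i$ genuinely has the distribution covered by \cref{lem:depth}.
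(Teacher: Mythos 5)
Your proposal is correct and is essentially the paper's own argument: a union bound over the $\epsilon d$ copies of $H^d$, applying \cref{lem:depth} to each copy and accounting for the level shift of $1$ between $H^{(i)}$ and $H^{d,\epsilon}$. The extra care you take with memoization and the independence of the exploration within each copy is a fine (if slightly more detailed) justification of the same step the paper takes implicitly.
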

\begin{proof}
Indeed, in order for $T_{e_0}$ in $H^{d,\epsilon}$ to have depth $\ell+1$, $T_{e^{(i)}}$ in $H^{(i)}$ must have depth at least $\ell$ for at least on of the $i$'s. We know from \cref{lem:depth} that the probability of this is at most $2^{1-\ell}d^2$ as $H^{(i)}$ is isomorphic to $H^d$. By simple union bound over all values of $i$ we get that $\prob{D\ge\ell+1}\le2^{1-\ell}\epsilon d^3$.
\end{proof}

We have shown in \cref{cor:d-bound} that the depth of the exploration tree from the root of $H^d$ or $H^{d,\epsilon}$ doesn't exceed $O(\log d)$ with high probability. It would be intuitive to truncate these graphs at depth $\Theta(\log d)$ to get a finite example for a graph with an edge $e_0$ from which exploration takes quadratic time. However, \cref{cor:d-bound} does not by itself  rule out the possibility that $T_{e_0}$ concentrated extremely badly around its expectation, i.e. the exploration tree is extremely large with very small probability, and most of the work is done beyond the $O(\log d)$ first levels of the tree. We rule this out by exhibiting a $d^{O(1)}$ upper bound on the second moment $\ee{T_{e_0}^2}$ in $H^d$ and $H^{d,\epsilon}$. Afterwards, we show that combining these second moment bounds with \cref{cor:d-bound} and \cref{lm:hard-edge-hde} shows that truncating $H^{d, \e}$ indeed yields a hard instance.  Our variance bound is given by:

\begin{lemma}\label{lemma:v-bound}
In the graph $H^d$ (see \cref{def:hd}) for the root edge $e_0$ one has $\ee{T_{e_0}^2}\le10d^5$.
\end{lemma}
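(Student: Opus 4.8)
The plan is to repeat the recursive argument that produced the closed form for $t_{e_0}(\lambda)$ in \cref{lemma:P&T} (and for the weighted size in \cref{lem:depth}), but now for the second moment. I work throughout with the memoized/optimized version of \cref{alg:Yoshida} described just before the lemma, and I abbreviate $p(\lambda)\eqdef p_{e_0}(\lambda)=x(\lambda)^{d/(1-d)}$ and $t(\lambda)\eqdef t_{e_0}(\lambda)=x(\lambda)^{d/(d-1)}$, where $x(\lambda)=1+(d-1)\lambda$, and I set $s(\lambda)\eqdef\ee{T_{e_0}^2\mid r(e_0)=\lambda}$; the target is $\ee{T_{e_0}^2}=\int_0^1 s(\lambda)\,d\lambda$. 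Write $e_1,\dots,e_d$ for the children of $e_0$, let $T_{e_i}$ be the exploration size inside the subtree $H_{e_i}$ (which is isomorphic to $H^d$, so its conditional moments match those of $e_0$), and let $J_i\in\{0,1\}$ be the indicator that $e_i$ is reached by the exploration from $e_0$. Then $T_{e_0}=1+\sum_i J_iT_{e_i}$, so, using $J_i^2=J_i$,
\[
T_{e_0}^2=1+2\sum_i J_iT_{e_i}+\sum_i J_iT_{e_i}^2+\sum_{i\ne j}J_iJ_jT_{e_i}T_{e_j}.
\]
Conditioned on $r(e_i)=\mu$, the indicator $J_i$ looks only at $\mu$ and at the ranks and outcomes of the siblings of $e_i$, hence is independent of $T_{e_i}$, and $\prob{J_i=1\mid r(e_i)=\mu}=x^{-1}(\mu)$ for $\mu<\lambda$ — exactly the probability computed in the proof of \cref{lemma:P&T}. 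Taking conditional expectations, the first three terms therefore contribute $1$, $2(t(\lambda)-1)$, and $d\int_0^\lambda x^{-1}(\mu)s(\mu)\,d\mu$.

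The only real work is the cross term. Fix $i\ne j$ and condition on $r(e_i)=\mu_i$, $r(e_j)=\mu_j$; by symmetry assume $\mu_i<\mu_j$. I bound $J_iJ_j\le J_j$. Two facts make the resulting expectation factor: (i) $T_{e_j}$ depends only on ranks strictly below $e_j$, so it is independent of $(J_j,T_{e_i})$ given $(\mu_i,\mu_j)$ and factors out as $t(\mu_j)$; and (ii) writing $J_j$ as a product of one Bernoulli factor per other child of $e_0$ (the factor for a child $f$ being the indicator that either $r(f)\ge\mu_j$ or $\textsc{MM}(f,\pi)$ returns \false), exactly one of these factors — the one for $e_i$ — touches $H_{e_i}$, while the other $d-2$ factors are independent of $T_{e_i}$ given the ranks and each averages to $p^{1/d}(\mu_j)$. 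Dropping the $e_i$-factor and using $\ee{T_{e_i}\mid r(e_i)=\mu_i}=t(\mu_i)$ gives $\ee{J_jT_{e_i}\mid\mu_i,\mu_j}\le t(\mu_i)\,p^{(d-2)/d}(\mu_j)$, hence $\ee{J_iJ_jT_{e_i}T_{e_j}\mid\mu_i<\mu_j}\le x^{d/(d-1)}(\mu_i)\,x^{2/(d-1)}(\mu_j)$ after substituting the closed forms, and symmetrically on $\mu_i>\mu_j$. Summing over the $d(d-1)$ ordered pairs and integrating $\mu_i,\mu_j$ over $[0,1]$ bounds the cross term by $2d^2\big(\int_0^1 x(\mu)^{d/(d-1)}d\mu\big)\big(\int_0^1 x(\mu)^{2/(d-1)}d\mu\big)$; the first one-dimensional integral is $O(d)$ and the second is $O(1)$ for $d\ge5$, so the cross term is at most, say, $6d^3$ for every $\lambda\in[0,1]$.

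Assembling the four pieces and using $t(\lambda)\le t(1)\le 2d$ (from \cref{cor:t-bound}) gives $s(\lambda)\le 7d^3+d\int_0^\lambda x^{-1}(\mu)s(\mu)\,d\mu$. Grönwall's inequality then yields $s(\lambda)\le 7d^3\exp\!\big(d\int_0^\lambda x^{-1}(\mu)\,d\mu\big)=7d^3\,x(\lambda)^{d/(d-1)}$, so $\ee{T_{e_0}^2}=\int_0^1 s(\lambda)\,d\lambda\le s(1)=7d^3\,d^{d/(d-1)}\le 14d^4\le 10d^5$ for $d\ge5$. One technical caveat: Grönwall needs $s$ finite a priori, which is not evident on an infinite tree, so I would first run the entire argument for $T^{(\ell)}_{e_0}$, the number of explored edges of level at most $\ell$ — deterministically at most $2d^\ell$, hence with finite second moment, and satisfying the same inequality (with $s^{(\ell-1)}\le s^{(\ell)}$ under the integral) — and then let $\ell\to\infty$ by monotone convergence.

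The step I expect to be the main obstacle is the cross-term bookkeeping in the third paragraph: one has to be careful about exactly which subtree each factor of $J_iJ_j$ depends on, so that the conditional-independence factorizations in (i) and (ii) are legitimate. Everything else is the same calculus used in \cref{lemma:P&T,lem:depth}, and the large gap between the bound we actually obtain ($O(d^4)$) and the claimed $10d^5$ means none of the integral estimates need to be tight.
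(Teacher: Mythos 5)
Your proof is correct and follows the same skeleton as the paper's: decompose $T_{e_0}=1+\sum_e I_eT_e$, square, bound the linear, diagonal, and cross terms separately, obtain the integral inequality $s(\lambda)\le \mathrm{const}+d\int_0^\lambda x^{-1}(\mu)s(\mu)\,d\mu$, and close with Gr\"onwall. The one place you diverge is the cross term: the paper simply drops both indicators and uses that $T_{e_i}$ and $T_{e_j}$ live in disjoint subtrees, giving $d(d-1)\,\ee{T}^2\le 4d^4$, which is all that is needed for the $10d^5$ target; your factorization of $J_j$ into per-sibling Bernoulli factors is a legitimate but unnecessary refinement that buys a $4d^4\to O(d^3)$ improvement (and correspondingly a final bound of $O(d^4)$ rather than $10d^5$). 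Your remark that Gr\"onwall requires a priori finiteness of $s$ on the infinite tree, handled by truncating at level $\ell$ and passing to the limit, is a genuine gap in the paper's own write-up that your argument patches cleanly.
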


\begin{corollary}\label{cor:v-bound}
In the graph $H^{d,\epsilon}$ (see \cref{def:hde}) for the root edge $e_0$ one has $\ee{ T_{e_0}^2}\le11\epsilon d^6.$
\end{corollary}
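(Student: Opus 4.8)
The plan is to bound $T_{e_0}$ in $H^{d,\epsilon}$ by a sum of independent copies of the exploration size already controlled in $H^d$, and then simply expand the square; everything else is arithmetic, exactly in the spirit of how \cref{cor:d-bound} was deduced from \cref{lem:depth}.

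First I would record the structural decomposition. Write $e^{(i)}=(u^{(i)},v^{(i)})$ for the root edges of the $\epsilon d$ disjoint copies $H^{(1)},\dots,H^{(\epsilon d)}$ of $H^d$ glued at the supernode $u_0$. Running the (memorized) $\YYIMatching$ from $e_0$ first recurses, in increasing order of rank, into those $e^{(i)}$ with $r(e^{(i)})<r(e_0)$, stopping at the first that returns \true. Let $I_i\in\{0,1\}$ indicate that $e^{(i)}$ gets explored, and let $T_i$ be the number of edges of $H^{(i)}$ visited during that exploration. Two observations are key. First, because of memorization the recursive call $\YYIMatching(e^{(i)},\pi)$ ignores the parent $e_0$ and the siblings $e^{(j)}$ of $e^{(i)}$ (exactly as in the discussion preceding \cref{lemma:P&T}), so conditioned on $I_i=1$ the exploration inside $H^{(i)}$ behaves precisely like $\YYIMatching$ started at the root of $H^d$; hence $T_i$ has the same distribution as $T_{e_0}$ in $H^d$. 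Second, the copies $H^{(i)}$ are edge-disjoint, so $T_1,\dots,T_{\epsilon d}$ are mutually independent. Since $T_{e_0}=1+\sum_{i=1}^{\epsilon d} I_i T_i$ with each $I_i\le 1$, this yields the pointwise bound $0\le T_{e_0}\le 1+\sum_{i=1}^{\epsilon d} T_i$.

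Next I would square and take expectations, using monotonicity of $x\mapsto x^2$ on $[0,\infty)$. Writing $\mu_1\eqdef\ee{T_{e_0}}$ and $\mu_2\eqdef\ee{T_{e_0}^2}$ for the quantities in $H^d$, and expanding using that the $T_i$ are i.i.d.,
\[
\ee{T_{e_0}^2}\ \le\ \ee{\big(1+\textstyle\sum_{i} T_i\big)^2}\ =\ 1+2(\epsilon d)\mu_1+(\epsilon d)\mu_2+(\epsilon d)(\epsilon d-1)\mu_1^2 .
\]
By \cref{cor:t-bound} we have $\mu_1\le 2d$ for $d\ge 5$, and by \cref{lemma:v-bound} we have $\mu_2\le 10 d^5$, so the right-hand side is at most $1+4\epsilon d^2+10\epsilon d^6+4\epsilon^2 d^4$. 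Since $\epsilon\le 1$ and, because $\epsilon d$ is a positive integer, $\epsilon\ge 1/d$, a short calculation shows $1+4\epsilon d^2+4\epsilon^2 d^4\le \epsilon d^6$ for all $d\ge 5$, whence $\ee{T_{e_0}^2}\le 11\epsilon d^6$, as claimed.

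There is no real obstacle here: the only point that needs care is the justification of the two observations above — that $T_i$ is genuinely a copy of the $H^d$ exploration size and that the $T_i$ are independent — which rests on the memorization convention and the tree structure of $H^{d,\epsilon}$ rather than on any new estimate. (Keeping the $I_i$'s and insisting on equality would force one to analyze correlations between $I_i$, $I_j$ and the various $T_i$'s across subtrees; dropping the $I_i$'s via $I_i\le 1$ sidesteps this entirely, at no cost since the lower‑order terms are dominated by $\epsilon d^6$ anyway.)
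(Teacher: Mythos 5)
Your proof is correct and follows essentially the same route as the paper's: bound $T_{e_0}\le 1+\sum_i T_i$ by dropping the indicators, expand the square using independence of the $T_i$ across the disjoint copies of $H^d$, and plug in \cref{cor:t-bound} and \cref{lemma:v-bound}. Your write-up is slightly more careful in justifying why each $T_i$ is distributed as the exploration size in $H^d$ and in noting that $\epsilon\ge 1/d$ is needed to absorb the additive $1$, but the argument is the same.
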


The proofs follow along the lines of previous analysis, but are more technical and hence both are deferred to \cref{app:worst-case-greedy}.

\subsection{The Lower Bound Instance (Truncated $H^{d, \e}$)} \label{trunc-de}

We are now ready to truncate our graph $H^{d,\epsilon}$:
\begin{definition}[Truncated graph $H^{d, \e}_\ell$]\label{def:hdel}
 Define $H^{d,\epsilon}_\ell$ as the graph $H^{d,\epsilon}$ reduced to only edges of level at most $\ell$. 
 \end{definition}
 
Note that $H^{d, \e}_\ell$ is a finite graph, specifically with approximately $n=d^\ell$ vertices.  We now show that for some $\ell=O(\log d)$  the size of the exploration tree from the root edge $e_0$ in the graph $H^{d,\epsilon}_\ell$ is essentially the same as in the graph $H^{d,\epsilon}$ (see \cref{cor:bad-edge} below). This yields our final lower bound instance.

\begin{lemma}\label{lemma:final}
For every integer $d\geq 5$, every $\e \in [1/d, 1/2]$  and  $\ell\ge7\log_2(3d)+1$, in graph $H^{d,\epsilon}_\ell$ one has
$\ee{T_\ell}\ge\frac18\cdot\e\cdot\left(\frac d2\right)^{2-\epsilon},$ where $T_\ell$ is the size of the exploration tree started at the root edge $e_0$ in $H^{d, \e}_\ell$.
\end{lemma}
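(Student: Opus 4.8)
The plan is to couple the exploration process on the finite graph $H^{d,\epsilon}_\ell$ with the one on the infinite graph $H^{d,\epsilon}$ and to argue that, outside of a rare event, the two explorations are identical, so that the near‑quadratic lower bound of \cref{cor:bad-edge} essentially transfers. Concretely, I would assign to every edge of $H^{d,\epsilon}$ an independent uniform rank in $[0,1]$ and use the restriction of these ranks to edges of level at most $\ell$ as the ranks on $H^{d,\epsilon}_\ell$. Write $T_{e_0}$ for the (memorized) exploration‑tree size of \YYIMatching started from the root edge $e_0$ on $H^{d,\epsilon}$ under these ranks, write $D$ for its depth (the largest level of any edge it contains), and write $T_\ell$ for the corresponding quantity on $H^{d,\epsilon}_\ell$.

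The key observation is that on the event $\{D<\ell\}$ one has $T_\ell=T_{e_0}$. Indeed, thanks to memorization each recursive call of \YYIMatching on an edge $e$ only recurses into the children of $e$ that have smaller rank (its parent and its siblings are ignored), so every edge the infinite exploration visits on the event $\{D<\ell\}$ has level at most $\ell-1$; and for every edge of level at most $\ell-1$ its set of children — the only neighbors the algorithm ever inspects — is the same in $H^{d,\epsilon}$ and in $H^{d,\epsilon}_\ell$, since the children of a level‑$(\ell-1)$ edge sit at level $\ell$, which is retained in the truncation. Hence the two executions agree edge by edge on $\{D<\ell\}$. Consequently
$$\ee{T_\ell}\ \ge\ \ee{T_\ell\cdot\mathbbm1[D<\ell]}\ =\ \ee{T_{e_0}\cdot\mathbbm1[D<\ell]}\ =\ \ee{T_{e_0}}-\ee{T_{e_0}\cdot\mathbbm1[D\ge\ell]},$$
and by \cref{cor:bad-edge} the first term is at least $\tfrac14\,\epsilon\,(d/2)^{2-\epsilon}$, so it remains to bound the error term $\ee{T_{e_0}\cdot\mathbbm1[D\ge\ell]}$ by $\tfrac18\,\epsilon\,(d/2)^{2-\epsilon}$.

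For this I would apply Cauchy–Schwarz, $\ee{T_{e_0}\cdot\mathbbm1[D\ge\ell]}\le\sqrt{\ee{T_{e_0}^2}}\cdot\sqrt{\prob{D\ge\ell}}$, and plug in the second‑moment bound $\ee{T_{e_0}^2}\le 11\epsilon d^6$ from \cref{cor:v-bound} together with the depth tail bound $\prob{D\ge\ell}\le 2^{2-\ell}\epsilon d^3$ from \cref{cor:d-bound}, obtaining $\ee{T_{e_0}\cdot\mathbbm1[D\ge\ell]}\le \sqrt{11}\cdot 2^{1-\ell/2}\,\epsilon\,d^{9/2}$. A routine but slightly tedious calculation using $\ell\ge 7\log_2(3d)+1$ (so that $2^{\ell/2}\ge \sqrt2\,(3d)^{7/2}$), $d\ge 5$, and $\epsilon\in[1/d,1/2]$ then shows this last quantity is at most $\tfrac18\,\epsilon\,(d/2)^{2-\epsilon}$; combining with the previous display yields $\ee{T_\ell}\ge\tfrac14\,\epsilon\,(d/2)^{2-\epsilon}-\tfrac18\,\epsilon\,(d/2)^{2-\epsilon}=\tfrac18\,\epsilon\,(d/2)^{2-\epsilon}$, as claimed. (Monotonicity of the bound in $\ell$ is automatic: the good event $\{D<\ell\}$ only grows, and $T_\ell=T_{e_0}$ there.)

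The part requiring the most care is the coupling claim in the second paragraph: one must argue precisely that the memorized version of \cref{alg:Yoshida} never inspects any neighbor other than the smaller‑ranked children, and get the off‑by‑one exactly right, so that truncating at level $\ell$ leaves intact all children of every edge that the infinite exploration actually touches on $\{D<\ell\}$. Once this is nailed down, the remaining steps are elementary arithmetic built on \cref{cor:bad-edge,cor:d-bound,cor:v-bound}. A minor point worth recording is that, since the memorized exploration tree is never larger than the unmemorized one, the lower bound on $T_\ell$ also lower‑bounds the exploration tree of the original \cref{alg:Yoshida}, which is what feeds into \cref{thm:yoshida-lb}.
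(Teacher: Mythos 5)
Your proposal is, in substance, the paper's own proof: the same rank coupling between $H^{d,\epsilon}$ and $H^{d,\epsilon}_\ell$, the same shallow/deep decomposition of $\ee{T}$, and the same use of the second-moment bound (\cref{cor:v-bound}) together with the depth tail (\cref{cor:d-bound}) — note that the paper's step $p\cdot\ee{T\mid\mathcal D}\le\sqrt{\ee{T^2}\,p}$ is exactly your Cauchy--Schwarz.

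One quantitative caveat: your coupling event $\{D<\ell\}$ is more conservative than necessary, and the resulting tail bound $\prob{D\ge\ell}\le 2^{2-\ell}\epsilon d^3$ costs a factor $\sqrt2$ after the square root. The ``routine calculation'' you defer then does not quite close at the corner of the parameter range: with $\ell=7\log_2(3d)+1$ your error term evaluates to $\sqrt{22}\cdot 3^{-7/2}\,\epsilon d\approx 0.1003\,\epsilon d$, which at $d=5$, $\epsilon=1/2$ is about $0.5015\,\epsilon$, slightly exceeding $\tfrac18\epsilon(d/2)^{2-\epsilon}\approx 0.494\,\epsilon$. The fix is to run the coupling on the larger event $\{D\le\ell\}$, as the paper does: on that event any level-$\ell$ edge in the exploration tree has no smaller-ranked child (otherwise the tree would reach level $\ell+1$), so the call on it returns \true{} immediately in both the infinite and the truncated graph, and the two executions still coincide. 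This lets you use $\prob{D\ge\ell+1}\le 2^{1-\ell}\epsilon d^3$, giving an error term of about $0.071\,\epsilon d$, and the final inequality then holds for all $d\ge5$ and $\epsilon\in[1/d,1/2]$.
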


\begin{proof}

Consider a graph $H^{d,\epsilon}$ and its truncated version $H^{d,\epsilon}_\ell$ for $\ell\ge7\log_2(3d)$. Let $T=T_{e_0}$ in $H^{d,\epsilon}$ and $T_\ell=T_{e_0}$ in $H^{d,\epsilon}_\ell$. We consider the ranks of edges in $H^{d,\epsilon}$ to be identical to the ranks of the corresponding edges in $H^{d,\epsilon}$ (this is in a way a coupling of the ranks of the two graphs). Consider the events $\mathcal S$ and $\mathcal D$ referring to a shallow or a deep exploration tree respectively. Specifically, $\mathcal S$ refers to the event that the exploration tree in $H^{d,\epsilon}$ does not exceed a depth of $\ell$; $\mathcal D$ is the compliment of $\mathcal S$. Note that given $\mathcal S$, $T=T_\ell$ thanks to the coupling of the ranks.

We know from \cref{cor:bad-edge} that
$$\e\cdot \frac14\cdot\left(\frac d2\right)^{2-\epsilon}\le \ee{T} = \ee{T\cdot\mathbbm1(\mathcal S)} + \ee{T\cdot\mathbbm1(\mathcal D)} = \ee{T_\ell\cdot\mathbbm1(\mathcal S)} + \ee{T\cdot\mathbbm1(\mathcal D)}.$$
Thus, in order to prove that $\ee{T_\ell\cdot\mathbbm1(\mathcal S)} \geq \frac18\cdot\e\cdot \left(\frac d2\right)^{2-\epsilon}$ it suffices to show that
\begin{equation}\label{eq:9023ygt932t}
\ee{T\cdot\mathbbm1(\mathcal D)} \le \frac18\cdot\e\cdot\left(\frac d2\right)^{2-\epsilon}.
\end{equation}

Let $p=\mathbb P(\mathcal D)$; by \cref{cor:d-bound}  and our choice of $\ell\geq 7\log_2 (3d)$ we know that this is at most $2\epsilon\cdot3^{-7}\cdot d^{-4}\le(11\cdot64)^{-1}\cdot d^{-4}$, for $\epsilon\le1$. Let us upper bound $\ee{T\cdot\mathbbm1(\mathcal D)}=p\cdot\ee{T|\mathcal D}$: We know from \cref{cor:v-bound} that
\begin{align*}
    11\epsilon d^6\ge\ee{T^2}\ge\ee{\mathbbm1(\mathcal D)\cdot T^2}=p\cdot\ee{T^2|\mathcal D}\ge p \cdot\ee{T|\mathcal D}^2=\frac{(p\cdot\ee{T|\mathcal D})^2}p,
\end{align*}
and thus, rearranging, we get
$$p\cdot\ee{T|\mathcal D}\le\sqrt{11\epsilon d^6p}\le\frac18\cdot\e \cdot\left(\frac d2\right)^{2-\epsilon},$$
since $d\ge5$ and $p\le(11\cdot64)^{-1}\cdot d^{-4}$ by assumption of the lemma and setting of parameters.

\end{proof}

We can now prove the main theorem of this section. 

\begin{proofof}{\cref{thm:yoshida-lb}}
Indeed, let $G=H^{d,\epsilon}_\ell$ with $e$ being the root edge and $\ell=\Theta(\log_d n)$. Then the theorem holds by \cref{lemma:final} as long as $d\leq\exp(b\sqrt{\log n})$ for a sufficiently small absolute constant $b$.
\end{proofof}

%!TEX root = ./000-main.tex
\section{Lower-bound on the Number of Sampled Edges}\label{sec:lower-bound}

\subsection{Overview}

In this section we prove that our algorithm is nearly optimal with respect to sample complexity, even disregarding the constraint on space. That is, we show that it is impossible to obtain a constant-factor approximation of the maximum matching size with polynomially fewer than $n^2$ samples.
\begin{restatable}{theorem}{theoremlowerbound}\label{thm:main-lower-bound}
	There exists a graph $G$ consisting of $\Theta(n^2)$ edges such that no algorithm can compute a constant-factor approximation of $\mm{G}$ with probability more than $6/10$ while using iid edge stream of length $n^{2-\epsilon}$. More generally, for every constant $C$, every $m$ between $n^{1+o(1)}$ and $\Omega(n^2)$ it is information theoretically impossible to compute a $C$-approximation to maximum matching size in a graph with high constant probability using fewer than $m^{1-o(1)}$ iid samples from the edge set of $G$, even if the algorithm is not space bounded.

\end{restatable}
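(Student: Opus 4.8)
The plan is to prove the lower bound by a standard distinguishing argument: exhibit two input distributions $\DNo$ and $\DYes$ whose maximum matching sizes differ by the target factor $C$, but which are statistically indistinguishable to any algorithm that sees only $m^{1-o(1)}$ iid edge samples. The whole argument reduces to building two small \emph{gadget graphs} $G$ and $H$ on $N = m^{o(1)}$ vertices such that (a) $\mm{H} \ge C\cdot\mm{G}$, and (b) for some slowly growing $k$ (chosen large enough that a polynomial-in-$N$ gap $N^{\Theta(1/k)}$ exceeds $C$, yet still $k = m^{o(1)}$), there is a bijection $V(G)\to V(H)$ under which the rooted depth-$k$ neighborhood of every vertex of $G$ is isomorphic to that of its image. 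Given such gadgets, the hard instance on $n$ vertices consists of many vertex-disjoint copies of $G$ (in $\DNo$) or of $H$ (in $\DYes$), padded with one disjoint clique on a few vertices to bring the total edge count up to the prescribed $m$ (the clique contributes a fixed, publicly known amount to the matching and does not disturb the gap or the local isomorphism), with all vertices relabeled by a uniformly random permutation. Since every copy looks the same and labels are random, the only way an edge-sampling algorithm can gain information is to reconstruct a depth-$\le k$ ball inside a single copy, which by property (b) requires at least $k+1$ sampled edges to land inside one and the same copy. A balls-into-bins / birthday-type estimate shows that with $q = m^{1-o(1)}$ samples thrown among the polynomially-many gadget copies, with probability $1-o(1)$ no copy receives $k+1$ of them, so a coupling of the samples gives $\tvd{\DYes}{\DNo}\le o(1)$ on transcripts and no algorithm can succeed with constant probability bounded away from $1/2$; specializing to $m=\Theta(n^2)$ and some fixed $C$ gives the first assertion of the theorem.

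The gadget construction itself follows the two-step outline sketched in the introduction. First I would build graphs $G'$ and $H'$ on $N'$ vertices with $N'^{\,2-O(1/k)}$ edges that have \emph{identical $k$-level degree sequences} — i.e.\ the iterated degree-refinement up to depth $k$ agrees under a bijection — while $\mm{H'}\ge N'^{\,\Omega(1/k)}\cdot\mm{G'}$; these are exactly the near-tight instances for peeling-type algorithms, with the polynomial matching gap certified by an explicit large matching in $H'$ against a fractional-matching (König-type) upper bound on $G'$. Second, I would apply a \emph{lifting} operation that upgrades ``identical $k$-level degrees'' to ``isomorphic depth-$k$ neighborhoods'': substitute the local wiring with a high-girth (girth $> 2k$) bounded-degree graph, e.g.\ a Cayley graph of a suitable group, so that within radius $k$ every vertex sees a tree whose shape is determined solely by its $k$-level degree profile; the bijection from $G'$ to $H'$ then preserves depth-$k$ neighborhoods exactly. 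One must verify that this lift changes $\mm{\cdot}$ by at most a constant factor (high-girth, bounded-degree lifts have matching number within a constant of the base graph via a local-ratio / fractional argument), so the polynomial gap survives. Choosing $k$ so that $N^{\Theta(1/k)}\ge C$ while keeping $N=m^{o(1)}$ (hence $k=m^{o(1)}$) then yields gadgets satisfying (a) and (b), and this is precisely the quantity that surfaces as the $o(1)$ in the $m^{1-o(1)}$ bound.

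The step I expect to be the main obstacle is the lifting: making rigorous that the high-girth substitution produces depth-$k$ neighborhoods that are \emph{isomorphic as rooted graphs} between corresponding vertices of $G$ and $H$ — not merely ``locally tree-like'' in some soft sense — while \emph{simultaneously} keeping the maximum matching number under control to within a constant factor. These two goals are in tension: large girth pushes the graph toward being tree-like (which tends to inflate matching-like substructures), whereas the degree-refinement bookkeeping that certifies the local isomorphism wants enough structural regularity to be clean. Nailing down a quantitative statement of the form ``identical $k$-level degrees $\Rightarrow$ isomorphic depth-$k$ neighborhoods after the Cayley-graph lift, with $\mm{\cdot}$ preserved up to $\Theta(1)$'' is the technical heart of the lower bound; by comparison, the balls-into-bins indistinguishability computation, the clique-padding edge-count bookkeeping, and the final reduction from ``cannot distinguish $\DYes$ from $\DNo$'' to ``cannot $C$-approximate $\mm{G}$ with probability $>6/10$'' are routine.
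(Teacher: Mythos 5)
Your proposal follows essentially the same route as the paper: gadgets $G$ and $H$ with a neighborhood-preserving bijection, built by first constructing graphs with identical $k$-level degrees and then lifting via high-girth Cayley graphs, followed by a distribution of many randomly permuted disjoint copies plus a padding clique, and a union-bound/birthday argument that no gadget receives more than $k$ samples. You have also correctly identified the lift (girth $\ge 2k+2$ while preserving $\mm{\cdot}$ up to a factor $2$) as the technical heart.

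One step in your indistinguishability argument is looser than it looks and is where the paper needs a genuine lemma. You claim that with at most $k$ sampled edges landing in a single copy the algorithm "cannot reconstruct a depth-$\le k$ ball," but the $\le k$ edges observed inside one gadget need not form a ball, nor even a connected subgraph — they are an arbitrary (possibly disconnected) $\le k$-edge subgraph, and what must be shown is that for \emph{every} graph $K$ with at most $k$ edges the number of subgraphs of $G$ isomorphic to $K$ equals the number in $H$, so that under the random vertex relabeling the isomorphism class of the observed sample is identically distributed in the two cases. Deriving this subgraph-count equality from "isomorphic depth-$k$ neighborhoods" is not immediate: the paper proves it by induction on the number of connected components of $K$, where the base case (connected $K$) sums rooted embedding counts over the neighborhood-preserving bijection, and the inductive step carefully subtracts the "clashing" embeddings in which the two parts of a bipartition of $K$ collide (these collisions correspond to graphs with fewer components, handled by the inductive hypothesis). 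Without this counting lemma your coupling of the conditioned transcripts does not go through as stated, so you should add it explicitly; everything else in your outline matches the paper's proof.
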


%\theoremlowerbound*

\cref{thm:main-lower-bound} follows directly from the following result.
\begin{theorem}\label{lower-mainmain}
\if 0 For any $\epsilon>0$ and $C>0$ and large enough $n$  and $m$ between $n^{1+o(1)}$ and $\Omega(n^2)$ there exists a pair of distributions of graphs on $n$ vertices, $\mathcal D^{\text{YES}}$ and $\mathcal D^{\text{NO}}$, such that the sizes of the maximum matchings of all graphs in $\mathcal D^{\text{NO}}$ are $M$ and the sizes of the maximum matchings of all graphs in $\mathcal D^{\text{YES}}$ are at least $CM$. However, the total variation distance between an iid edge stream of length $n^{2-\epsilon}$ of a random graph in $\mathcal D^{\text{YES}}$ and one in $\mathcal D^{\text{NO}}$ is at most $1/10$.
\fi

For any $\epsilon>0$, any $C>0$, any $m$ between $n^{1+o(1)}$ and $\Omega(n^2)$, and large enough $n$ there exists a pair of distributions of graphs on $n$ vertices, $\mathcal D^{\text{YES}}$ and $\mathcal D^{\text{NO}}$, such that the sizes of the maximum matchings of all graphs in $\mathcal D^{\text{NO}}$ are $M$ and the sizes of the maximum matchings of all graphs in $\mathcal D^{\text{YES}}$ are at least $CM$. However, the total variation distance between an iid edge stream of length $m^{1-\epsilon}$ of a random graph in $\mathcal D^{\text{YES}}$ and one in $\mathcal D^{\text{NO}}$ is at most $1/10$.

\if 0	More generally, for every constant $C$, every $m$ between $n^{1+o(1)}$ and $\Omega(n^2)$ it is information theoretically impossible to compute a $C$-approximation to maximum matching size in a graph with high constant probability using fewer than $m^{1-o(1)}$ iid samples from the edge set of $G$, even if the algorithm is not space bounded.
\fi

\end{theorem}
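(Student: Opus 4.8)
The plan is to reduce the problem to distinguishing many vertex-disjoint copies of a fixed ``gadget'' pair of graphs. Concretely, I would first produce, for a parameter $k$ chosen at the end, a pair $(G_0,H_0)$ on a common vertex set of size $s$ together with a bijection $\phi\colon V(G_0)\to V(H_0)$ such that (i) the rooted depth-$k$ neighbourhood of every $v$ in $G_0$ is isomorphic to the rooted depth-$k$ neighbourhood of $\phi(v)$ in $H_0$; (ii) $|E(G_0)|=|E(H_0)|$; and (iii) $\mm{H_0}\ge 2C\cdot\mm{G_0}$. Property (iii) with a ratio of exactly $2C$ is all the theorem needs. I would obtain this pair exactly as sketched in the introduction, in two steps: the construction of \cref{sec:k-level} gives a pair with \emph{identical $k$-level degree sequences} (hence equal edge counts) and matching sizes differing by an $s^{\Theta(1/k)}$ factor but whose neighbourhoods fail to be isomorphic due to short cycles; then the high-girth Cayley-graph lifting of \cref{thm:lifting} upgrades ``identical $k$-level degrees'' to ``isomorphic depth-$k$ neighbourhoods'' while changing matching size only by a constant factor and keeping edge counts equal. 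In the sparse regime $m=n^{1+o(1)}$ one takes $k$ a large constant depending only on $C$ and $\e$, so $s=O(1)$; as $m$ grows one lets $s$ (and, in the densest regime $m=\Theta(n^2)$, also $k\asymp\log n/\log C$) grow so that $r:=\lfloor n/s\rfloor$ copies already carry close to $m$ edges.

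Given the gadget, the hard distributions are: $\DNo$ is the graph consisting of $r$ vertex-disjoint copies of $G_0$ together with a disjoint clique $K_t$, where $t$ is chosen so the total edge count equals some $m'=\Theta(m)$, the remaining $n-rs-t$ vertices isolated, and finally a uniformly random bijection used to label the $n$ vertices; $\DYes$ is defined identically with $H_0$ in place of $G_0$ (same $t$, since $|E(G_0)|=|E(H_0)|$, so each graph has exactly $m'$ edges). Every graph in $\DNo$ has matching number exactly $M:=r\,\mm{G_0}+\lfloor t/2\rfloor$, and every graph in $\DYes$ has matching number $r\,\mm{H_0}+\lfloor t/2\rfloor\ge 2C\,r\,\mm{G_0}\ge C\,M$ provided the clique contribution $\lfloor t/2\rfloor$ is at most $r\,\mm{G_0}$; the choice of $s$ and $k$ across the range of $m$ above is precisely what guarantees this domination (in the dense regime $r=O(1)$, but the single gadget's matching is then $\Theta(n)\gg t$). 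Thus the first two clauses of \cref{lower-mainmain} hold by construction.

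For the total variation bound, couple the two random labelings so that the clique vertices and the isolated vertices receive identical labels and the $i$-th copy of $G_0$ is labeled by the same label set as the $i$-th copy of $H_0$, through $\phi$. Let $\mathcal B$ be the event that some one of the $r$ copies receives at least $k$ of the $T:=m^{1-\e}$ samples; note $\Pr[\mathcal B]$ is the same under $\DYes$ and $\DNo$, since it depends only on the allocation of samples among the parts, which is identically distributed. Conditioned on $\lnot\mathcal B$, the sampled edges falling into any fixed copy number at most $k-1$, so they span at most $k$ vertices and every connected component of that sampled sub-configuration is contained in a ball of radius less than $k$; property (i) then implies that, summed over roots, the number of occurrences of any such connected configuration is the same in $r$ disjoint copies of $G_0$ as in $r$ disjoint copies of $H_0$, and, together with $|E(G_0)|=|E(H_0)|$ and the label coupling (which makes clique samples and repeated samples identically distributed), this forces the entire transcript --- viewed as a randomly labeled multiset of connected configurations --- to be identically distributed under $\DYes$ and $\DNo$ conditioned on $\lnot\mathcal B$. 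Hence $\tvd{\DYes}{\DNo}\le\Pr[\mathcal B]$. Finally, a single sample lands in a fixed copy with probability $q\le|E(G_0)|/m'=O(s^2/m)$, so by a union bound over the $r\le n$ copies and the binomial tail, $\Pr[\mathcal B]\le n\binom Tk q^k\le n\,(Tq)^k\le n\,\big(O(m^{-\e}s^2)\big)^k$, which is at most $1/10$ once $k\big(\e\log m-2\log s-O(1)\big)\ge\log(20n)$; since $\log s=o(\log m)$ throughout, $k=\Theta(1/\e)$ suffices in the sparse/intermediate regimes and $k=\Theta(\log n/\log C)$ certainly suffices in the dense regime, completing the proof.

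The genuinely hard step is the first one: building a pair $(G_0,H_0)$ whose depth-$k$ neighbourhoods are \emph{isomorphic} (not merely degree-indistinguishable) while the matching numbers differ by a prescribed factor --- the equal-$k$-level-degree construction inevitably contains short cycles that destroy neighbourhood isomorphism, and excising them via the Cayley-graph lifting without collapsing the matching gap is the novel part. A secondary, more routine but fiddly point is the balancing of the parameters $s$, $k$, and the padding-clique order $t$ over the whole range $n^{1+o(1)}\le m\le O(n^2)$; in the densest regime the clique can no longer be treated as negligible and one must instead let the gadget itself carry almost all the edges by taking $k$ mildly super-constant.
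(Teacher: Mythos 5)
Your construction and overall route are the same as the paper's: a constant-size gadget pair with isomorphic depth-$k$ neighbourhoods obtained via the $k$-level-degree construction of \cref{sec:k-level} followed by the Cayley-graph lifting of \cref{thm:lifting}, then $r$ randomly labeled disjoint copies plus a padding clique, then a bad event ``some gadget receives $\ge k$ samples'' whose probability bounds the total variation distance. The parameter balancing also matches (the paper keeps the gadget constant-size across the whole range of $m$ and lets the clique absorb the excess edges, at the cost of halving the matching gap from $2C$ to $C$; your alternative of growing the gadget in the dense regime would also work but is not needed).

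There is, however, one genuine gap in your indistinguishability step. Conditioned on $\lnot\mathcal B$, the sampled subgraph inside a single gadget has at most $k-1$ edges but may be \emph{disconnected}, and the probability of observing a given isomorphism class $[K]$ is proportional to the number of subgraphs of $G_0$ (resp.\ $H_0$) isomorphic to $K$ --- for the full, possibly disconnected $K$, not component by component. Equal counts of each connected configuration (which is what the depth-$k$ neighbourhood isomorphism gives you directly) do \emph{not} immediately force equal counts of disconnected configurations: the number of embeddings of $K_1\cup K_2$ is the product of the individual embedding counts \emph{minus} the ``clashing'' embeddings whose images overlap, and these correction terms must also be shown to agree. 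This is precisely the content of the paper's \cref{lower-main}, proved in \cref{lm:forest-counts} by induction on the number of connected components with an inclusion--exclusion over clashing pairs of embeddings (the clashing configurations have fewer components, so the induction closes). Your sentence ``this forces the entire transcript \ldots to be identically distributed'' asserts exactly this nontrivial implication without an argument; once you insert the statement and proof of \cref{lm:forest-counts} (or an equivalent lemma), the rest of your proof goes through.
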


\if 0
\begin{remark}
We give the proof of \cref{thm:main-lower-bound} for $m=\Theta(n^2)$ below for simplicity, and note that the more general result follows readily by reducing the clique on $\mu n$ vertices to a clique on $\mu \sqrt{m}$ vertices in our lower bound instances below.
\end{remark}
\fi

\paragraph{Overview of the approach.} Our lower bound is based on a construction of two graphs $G$ and $H$ on $n$ vertices such that for a parameter $k$ {\bf (a)} matching size in $G$ is smaller than matching size in $H$ by a factor of $n^{\Omega(1)/k}$ but {\bf (b)} there exists a bijection from vertices of $G$ to vertices of $H$ that preserves $k$-depth neighborhoods up to isomorphism. To the best of our knowledge, this construction is novel. Related constructions have been shown in the literature (e.g., cluster trees of~\cite{kuhn2016local}), but these constructions would not suffice for our lower bound, since they do not provide a property as strong as {\bf (b)} above. For example, the construction of~\cite{kuhn2016local} only produces one graph $G$ with a large matching together with two subsets of vertices $S, S'$ of $G$ whose neighborhoods are isomorphic. This suffices for proving strong lower bounds on finding near-optimal matchings in a distributed setting~\cite{kuhn2016local}, but not for our purpose. Indeed, it is crucial for us to have a gap (i.e., two graphs $G$ and $H$) and have the strong indistinguishability property provided by {\bf (b)}. 

Our construction proceeds in two steps. We first construct two graphs $G'$ and $H'$ that have identical $k$-level degrees (see \cref{sec:k-level}). This produces two graphs $G'$ and $H'$ that are indistinguishable based on $k$-level degrees (but whose neighborhoods are not isomorphic due to cycles) but whose matching size differs by an $n^{\Omega(1/k)}$ factor. These graphs have $n^{2-O(1/k)}$ edges and provide nearly tight instances for peeling algorithms that we hope may be useful in other contexts.  We note that a similar step is used in the construction of cluster trees of~\cite{kuhn2016local}, but, as mentioned above, these graphs provide neither the indistinguishability property for all vertices nor a gap in matching size. Furthermore, the number of edges in the corresponding instances of~\cite{kuhn2016local} is $\wt{O}(n^{3/2})$, i.e., the graphs do not get denser with large $k$, whereas our construction appears to have the optimal behaviour. The second step of our construction is a lifting map (see \cref{thm:lifting}) that relies on high girth Cayley graphs and allows us to convert graphs with identical $k$-level vertex degrees to graphs with isomorphic depth-$k$ neighborhoods without changing matching size by much. The details are provided in \cref{ssec:lift}.

Finally, the proof of the sampling lower bound proceeds as follows. To rule out factor $C$ approximation in $m^{1-o(1)}$ space, take a pair of constant (rather, $m^{o(1)}$) size graphs $G$ and $H$ such that {\bf (a)} matching size in $G$ is smaller than matching size in $H$ by a factor of $C$ and {\bf (b)} for some large $k$ one has that $k$-depth neighborhoods in $G$ are isomorphic to $k$-depth neighborhoods in $H$. Then the actual hard input distribution consists of a large number of disjoint copies of $G$ in the {\bf NO} case and a large number of copies of $H$ in the {\bf YES}  case, possibly with a small disjoint clique added in both cases to increase the number of edges appropriately. Since the vertices are assigned uniformly random labels in both cases, the only way to distinguish between the {\bf YES} and the {\bf NO} case is to ensure that at least $k$ edge-samples land in one of the small copies of $H$ or $G$. Since $k$ is small, the result follows.

We now give the details. Formally, our main tool will be a pair of constant sized graphs that are indistinguishable if only some given constant number of edges are sampled from either. This is guaranteed by the following theorem, proved in \cref{subgraph-stats}.

%\stodo{Say that their sizes are only a function of $k$ and $\lambda$}
\begin{theorem}{\label{lower-main}}
For every $\lambda>1$ and every $k$, there exist graphs $G$ and $H$ such that $\mm G\ge\lambda\cdot\mm H$, but for every graph $K$ with at most $k$ edges, the number of subgraphs of $G$ and $H$ isomorphic to $K$ are equal.
\end{theorem}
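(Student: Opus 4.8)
The plan is to build the pair $G,H$ in two stages, exactly along the lines sketched in the overview, and then translate "isomorphic depth-$k$ neighborhoods'' into "equal subgraph counts''. In \textbf{Stage 1} (the content of \cref{sec:k-level}) I would construct two graphs $G'$ and $H'$ on few vertices with \emph{identical $k$-level degree profiles} --- the iterated degree-refinement labels produced by $k$ rounds of color refinement agree vertex-by-vertex under some bijection $\psi:V(G')\to V(H')$ --- while $\mm{G'}\ge \Lambda\cdot\mm{H'}$, where $\Lambda$ is chosen larger than $\lambda$ by the constant factor that will be lost in Stage 2. The gap should come from a parity/unbalancedness obstruction planted deep in the graph (e.g.\ a layer that forms odd components, or very unbalanced bicliques, in $H'$ but balanced ones in $G'$), so that refinement up to depth $k$ cannot detect it yet the maximum matching is forced to be much smaller on one side. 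I would verify the profile equality by induction on the refinement round, and establish the matching gap by exhibiting a large (fractional) matching in $G'$ and a small vertex cover in $H'$.

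In \textbf{Stage 2} (the lifting theorem, \cref{thm:lifting}) I would apply the high-girth lift: replace each of $G',H'$ by a covering graph of girth $>2k$ built from high-girth Cayley graphs, so that every radius-$k$ ball becomes the tree-like unfolding of the corresponding ball of the base graph, and hence is determined \emph{solely} by the $k$-level degree profile of its center. Consequently the bijection $\psi$ lifts to a bijection $\phi:V(G)\to V(H)$ with $(B_G(v,k),v)\cong(B_H(\phi(v),k),\phi(v))$ as rooted graphs for every $v$, which is precisely property (b) from the overview. Two things must be checked: that the lift of a bijection preserving $k$-level degrees is again a bijection preserving rooted $k$-balls (this follows from the fiber structure of the cover), and that lifting changes matching size by at most a constant factor (the cover behaves transitively on fibers, so $\mm{\cdot}$ scales with the fiber size up to a constant) --- this is exactly why the constant was absorbed into $\Lambda$, and it yields $\mm{G}\ge\lambda\cdot\mm{H}$.

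It remains to deduce from property (b) that $\mathrm{sub}(K,G)=\mathrm{sub}(K,H)$ for every graph $K$ with $|E(K)|\le k$ (the bookkeeping lemmas for this are collected in \cref{subgraph-stats}). For connected $K$ fix $x_0\in V(K)$; since $K$ is connected with at most $k$ edges it has diameter $\le k$, so every injective homomorphism $f:K\to G$ with $f(x_0)=v$ has image inside $B_G(v,k)$, and the number of such $f$ depends only on the rooted isomorphism type of $(B_G(v,k),v)$. Summing over $v$ and using that $\phi$ is a type-preserving bijection gives $\mathrm{inj}(K,G)=\mathrm{inj}(K,H)$, hence $\mathrm{sub}(K,G)=\mathrm{inj}(K,G)/|\mathrm{Aut}(K)|=\mathrm{sub}(K,H)$. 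For disconnected $K=K_1\sqcup\cdots\sqcup K_t$, write $\hom(K,\cdot)=\prod_i\hom(K_i,\cdot)$; each factor $\hom(K_i,\cdot)$ with $K_i$ connected and $\le k$ edges is a fixed integer combination of $\mathrm{sub}(K_i/\pi,\cdot)$ over vertex-contractions $K_i/\pi$ (all connected with $\le k$ edges), hence equal for $G$ and $H$; therefore $\hom(K',\cdot)$ agrees for all $K'$ with $\le k$ edges, and Möbius inversion over vertex partitions turns this into agreement of $\mathrm{inj}(K,\cdot)$ and thus of $\mathrm{sub}(K,\cdot)$. This establishes \cref{lower-main}.

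I expect the \textbf{main obstacle} to be Stages 1 and 2 together: producing a bounded-depth-indistinguishable pair with an \emph{arbitrarily} large matching gap, and then realizing property (b) exactly while losing only a constant factor in matching size. Controlling the matching size of a high-girth lift is the delicate point, since high-girth covers are rigid; one must simultaneously keep the matching gap, keep the graphs small, and still push the girth above $2k$. The final combinatorial translation, by contrast, is routine once the algebraic identities relating $\mathrm{sub}$, $\mathrm{inj}$, and $\hom$ are in place.
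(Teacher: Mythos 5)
Your proposal is correct and follows essentially the same route as the paper: the recursive construction of a pair with identical $k$-level degree profiles but a large matching gap (certified by a matching in one graph and a vertex cover in the other), the high-girth Cayley-graph lift to upgrade degree-profile equality to isomorphism of rooted $k$-balls while losing only a factor $2$ in matching size, and the rooted-ball counting argument for connected $K$. The only divergence is in the disconnected case of the subgraph-counting lemma, where you pass through homomorphism counts and M\"obius inversion over vertex partitions, whereas the paper runs a direct induction on the number of components via inclusion--exclusion over ``clashing'' pairs of embeddings; both are standard and correct, and the difference is purely one of bookkeeping.
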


\paragraph{Defining distributions $\DYes$ and $\DNo$.}

All the graphs from our distributions will have the same vertex set $V \eqdef [n]$. Let $G=(V_G, E_G)$ and $H=(V_H, E_H)$ be the two graphs provided by \cref{lower-main} invoked with parameters $\lambda=2C$ and $k=2/\epsilon$.  Let $q \eqdef \max(|V_G|, |V_H|)$ (our construction in fact guarantees that $|V_G|=|V_H|$). Let $s \eqdef \mm{H}$, and hence $\mm{G} \ge \lambda\cdot s=2C\cdot s$.

Partition $V$ into the following:
\begin{enumerate}[1.]
	\item $r = \frac n{2 q}$ sets of size $q$, denoted by $V_1,\ldots,V_r$;
	\item a set $V_K$ consisting of $w$ vertices, for $w\in[0,ns/q]$; and
	\item set $I$ containing the remaining vertices.
\end{enumerate}
	The sets $V_1, \ldots, V_r$ will serve as the vertex sets of copies of $G$ or $H$, where $V_i$ equals $V_G$ or equals $V_H$ depending on whether we are constructing $\DYes$ or $\DNo$. The set $V_K$ will be a clique, while $I$ will be a set of isolated vertices in the construction. The distributions $\DYes$ and $\DNo$ are now defined as follows:
\begin{description}
\item $\mathcal D^\text{YES}$: Take $r$ independently uniformly random permutations $\pi_1,\ldots,\pi_r$ on $V_1,\ldots, V_r$ respectively, and construct a copy  $G_i$ of $G$ embedded into $V_i$ via $\pi_i$. Then construct a clique $K_w$ on $V_K$.

\item $\mathcal D^\text{NO}$: Take $r$ independently uniformly random permutations $\pi_1,\ldots,\pi_r$ on $V_1,\ldots, V_r$ respectively and construct a copy $H_i$ of $H$ embedded into $V_i$ via $\pi_i$. Then construct a clique $K_w$ on $V_K$. 
\end{description}
We will refer to copies of $G$ and $H$ in the two distributions above as {\em gadgets}. We now give an outline of the proof of \cref{lower-mainmain} assuming \cref{lower-main}. The full proof follows the same steps, but is more involved, and is deferred to \cref{app:lower}.
%\xxx[MK]{Perhaps be a bit more specific about what an embedded copy is?}

{\noindent{\bf Proof outline (of \cref{lower-mainmain}).}
Naturally, our distribution-pair will be $\mathcal D^\text{YES}$ and $\mathcal D^\text{NO}$ as defined above. Note that the maximum matching size of any element of the support of $\mathcal D^\text{YES}$ is at least $r\cdot2Cs$ as it contains $r$ copies of $G$. On the other hand, any element of the support of $\mathcal D^\text{NO}$ contains $r$ copies of $H$ as well as a clique of size $w\le ns/q$ which means its maximum matching size is at most $r\cdot s+ ns/2q  \le 2r\cdot s$. Hence, the sizes of maximum matchings in $\DYes$ and maximum matchings in $\DNo$ differ by at least factor $C$, as desired.

Note that the number of edges in the construction is at least $\binom{w}2$ and at most $\binom{w}2+r\cdot\binom{q}2\le w^2+qn$. Thus the number of edges can be set to be (within a constant factor of) anything from $n^{1+o(1)}$ to $\Omega(n^2)$.

Next, we compute the total variation distance between iid edge streams of length $m^{1-\epsilon}$ of a graph sampled from $\mathcal D^\text{YES}$ and $\mathcal D^\text{NO}$ respectively. Denote these random iid edge streams by $C_1$ and $C_2$, respectively.
%and let their common set of possible outcomes be $\mathcal C$.
Consider the following event, that we call \emph{bad},
\[
	\cE \eqdef \{\exists i\in[r]:\text{edges between vertices of $V_i$ appear more than $k$ times in the stream}\}.
\]
We show below that distributions of $C_1$ conditioned on $\bar{\mathcal E}$ is identical to the distribution of  $C_2$  conditioned on $\bar{\mathcal E}$. Then the total variation distance is bounded by the probability of $\mathcal E$. We first bound this probability, and then prove the claim above.

\paragraph{Upper bounding the probability of $\mathcal E$.} Consider a realization of $\mathcal D^\text{YES}$ or a realization of $\mathcal D^\text{NO}$. Since we have $m^{1-\epsilon}$ edge samples each chosen uniformly at random from a possible $m$ edges, each specific edge, $e$, appears exactly $m^{-\epsilon}$ times throughout the stream in expectation. Therefore, by Markov's inequality the probability that $e$ ever appears in the stream is at most $\frac{3n^{-\epsilon}}{\mu^2}$. Also, the event that an edge $e_1$ appears in the stream and the event that an edge $e_2 \neq e_1$ appears in the stream are negatively associated.

Fix a single gadget of the realized graph; this gadget has at most $q^2$ edges. Therefore, by union bound and from the negative association outlined above, the probability that at least $k+1$ edges of the gadget will be sampled is at most
\[
	\binom{q^2}{k+1}\cdot\left(m^{-\epsilon}\right)^{k+1}\le\binom{q^2}{k+1}\cdot m^{-2},
\]
where we used the assuption that $k=2/\epsilon$. Thus, again by union bound, the probability that any of the $r = \frac{n}{2q}$ gadgets has at least $k + 1$ of its edges occurring in the stream is at most
\[
	\frac{n}{2q} \cdot \binom{q^2}{k+1}\cdot m^{-2}\le1/10,
\]
for large enough $n$ and $m>n$. So, $\prob{\cE}\le1/10$ under both distributions.

\paragraph{Analyzing conditional distributions.} It remains to show that $C_1$ and $C_2$ are identically distributed when conditioned on $\overline{\cE}$. We now give an overview of this proof, and defer details to \cref{app:lower}. Since the cliques are identical across the two distributions, edges sampled from them are no help in distinguishing between $\DYes$ and $\DNo$. Consider a single gadget. (As a reminder, a gadget is a copy of $G$ or $H$.) By conditioning on $\overline{\mathcal E}$ we have that only at most $k$ distinct edges of this gadget are observed. Since the gadgets are randomly permuted in both $\mathcal D^{YES}$ and $\mathcal D^{NO}$, only the isomorphism-class of the sampled subgraph of the gadget provides any information. However, each isomorphism-class's probability is proportional to the number of times such a subgraph appears in the gadget. This is equal across the \textsc{YES} and \textsc{NO} cases thanks to the guarantee of \cref{lower-main} on $G$ and $H$.
$\qed$\par}

From here it remains to prove \cref{lower-main}. We organize the rest of this section as follows. In \cref{sec:k-level} we define and analyze our main construction, which is a pair of graphs isomorphic with respect to $k$-level degrees while having greatly different matching numbers. That is to say, we produce two graphs and a bijection between them such tha the bijection preserves degree structure up to a depth of $k$, disregarding cycles. (For the definition of $k$-level degree see \cref{sec:lower-bound-prelims}.) This is the main technical result of our lower bound. Then in \cref{ssec:lift} we use a graph lifting construction to increase the girths of our graphs, resulting in a pair of graphs that have truly isomorphic $k$-depth neighborhoods. Finally in \cref{subgraph-stats} we prove \cref{lower-main}, concluding the lower bound.

\if 0 
\xxx[MK]{The para below needs a place somewhere:}

\begin{observation}
Let $G$ and $H$ be two graphs and $u$ and $v$ vertices of $G$ and $H$ respectively. Then the following two conditions are sufficient for the $i$-depth neighborhoods of $u$ and $v$ being isomorphic.
\begin{itemize}
	\item $d_i^G(u)=d_i^H(v)$
	\item neither $G$ nor $H$ contains a cycle of length less than $2i+2$.
\end{itemize}
\end{observation}

Our aim is to construct a pair of graphs $G$ and $H$ such that size of their maximum matching differ greatly but there exists a bijection between them that preserves the $k$-depth neighborhood of vertices. This would allow us to argue that sampling only a constant $k$ number of edges from either graph would make it impossible to distinguish the two instances from each other. Dues to the above observation the following two conditions on $G$ and $H$ would suffice:

\begin{description}
\item[(A)] There exists a bijection between the verice of $G$ and $H$ that preserves $k$-level degrees.
\item[(B)] neither $G$ nor $H$ contains a cycle of length less  than $2k+2$. 
\end{description}

\fi

\subsection{Preliminaries}\label{sec:lower-bound-prelims}

We begin by stating a few definition which will be used in the rest of \cref{sec:lower-bound}. First, we recall a few basic graph theoretical definitions:

\begin{definition}[$c$-star]
We call a graph with a single degree $c$ vertex connected to $c$ degree $1$ vertices a $c$-star. We call the degree $c$ vertex the center and degree $1$ vertices petals.
\end{definition}

\begin{definition}[girth of a graph]
The girth of a graph is the length of its shortest cycle.
\end{definition}

\begin{definition}[permutation group of $V$]
Let $G=(V,E)$ be a graph. We call the subgroup of $S_V$ (the permutation group of $V$) containing all permutations that preserve edges the automorphism group of $G$. That is, a permutation $\pi\in S_V$ is in the automorphism group if the following holds:
$$\forall v,w\in V:(v,w)\in E\iff(\pi(v),\pi(w))\in E.$$
We denote it $\text{Aut}(G)$.
\end{definition}
We now define two local property of a vertex, i.e., $k$-hop neighborhood and $k$-level degree. 
\begin{definition}[$k$-hop neighbourhood of a vertex]
Let the $k$-hop neighborhood of a vertex $v$ in graph $G$ be defined as the subgraph induced by vertices of $G$ with distance at most $k$ from $v$.
\end{definition}

\begin{definition}[$k$-level degree of a vertex]
Let the $k$-level degree of a vertex $v$ in a graph $G$ denoted by $d_k^G(v)$ be a multiset defined recursively as follows:
\begin{itemize}
    \item $d_1^G(v)\eqdef d(v)$, the degree of $v$ in $G$.
    \item For $k>1$, $d_k^G(v)\eqdef\left\{d_{k-1}^G(w)|w\in N(v)\right\}$, where this is a multiset.
\end{itemize}
For ease of presentation, in the future we will use the following less intuitive but more explicit formulation:
$$d_k^G(v)=\biguplus_{w\in N(v)}\{d_{k-1}^G(w)\},$$
where $\biguplus$ denotes multiset union. Moreover, if $G$ is clear from context, we will omit the superscript and write $d_k(v)$ instead of $d_k^G(v)$.
\end{definition}
Note that the above two definitions are similar but distinct, with the $k$-hop neighborhood containing the more information of the two. Imagine a vertex $v$ of a $C_3$ cycle and a vertex $w$ of a $C_4$ cycle. Their arbitrarily high level degrees are identical, but their $2$-hop neighborhoods contain their respective graphs fully, and so they are clearly different.

\begin{observation}\label{ob:lower}
The following conditions are sufficient for the $k$-hop neighborhoods of vertices $v$ and $w$ (from graphs $G$ and $H$ respectively) to be isomorphic:
\begin{enumerate}
\item $d_k^G(v)=d_k^H(w)$
\item $G$ and $H$ both have girth at least $2k+2$, that is neither graph contains a cycle shorter than $2k+2$.
\end{enumerate}
\end{observation}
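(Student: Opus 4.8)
The plan is to prove that conditions (1) and (2) suffice by passing through an intermediate structural statement: under (2) the relevant neighborhoods are genuine trees, and for trees the $k$-level degree is a complete invariant. Throughout, write $B_k^G(v)$ for the $k$-hop neighborhood of $v$ (the subgraph of $G$ induced on the vertices at distance at most $k$ from $v$), viewed as rooted at $v$; a rooted-tree isomorphism is in particular a graph isomorphism, so it is enough to produce one of those.

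First I would show that condition (2) forces $B_k^G(v)$ (and, symmetrically, $B_k^H(w)$) to be a tree. Running a BFS in $G$ from $v$, every vertex $u\in B_k^G(v)$ has all of its BFS-ancestors strictly closer to $v$, hence also inside $B_k^G(v)$, so the BFS tree restricted to $B_k^G(v)$ spans it; it remains to rule out a non-tree edge $\{x,y\}$ with $x,y\in B_k^G(v)$. Letting $z$ be the least common BFS-ancestor of $x$ and $y$, the two tree-paths $z\to x$ and $z\to y$ together with $\{x,y\}$ form a cycle of length $(\operatorname{dist}_G(v,x)-\operatorname{dist}_G(v,z)) + 1 + (\operatorname{dist}_G(v,y)-\operatorname{dist}_G(v,z)) \le 2k+1$, which is shorter than $2k+2$ and contradicts the girth bound. (One checks this closed walk is a genuine cycle, using that $\{x,y\}$ is not a tree edge, so $z\ne x$ and $z\ne y$.) Hence both $k$-hop neighborhoods are trees.

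The second and main step is to show that if $B_k^G(v)$ and $B_k^H(w)$ are both trees and $d_k^G(v)=d_k^H(w)$, then $(B_k^G(v),v)\cong(B_k^H(w),w)$ as rooted trees. The starting point is that any vertex at distance at most $k-1$ from $v$ has all its neighbors inside $B_k^G(v)$, so its degree in $G$ agrees with its degree in the tree $B_k^G(v)$; since $d_k^G(v)$ only reads degrees of such vertices, $d_k^G(v)$ is a function of the rooted tree $(B_k^G(v),v)$ alone. I would then prove by induction on $k$ that this function is injective. The naive inductive statement (``$d_k$ determines the $k$-hop tree'') does not recurse cleanly, because the outer union $d_k^G(v)=\biguplus_{a\in N(v)}\{d_{k-1}^G(a)\}$ records $d_{k-1}^G(a)$, which ``sees'' the parent $v$ as a neighbor of $a$ and is therefore richer than the invariant of the subtree hanging off $a$. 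The fix is a parent-annotated statement: for a vertex $a$ with $B_k^G(a)$ a tree and a marked neighbor $p$, the rooted tree obtained from $(B_k^G(a),a)$ by deleting the subtree at $p$ is determined by the multiset $d_k^G(a)\setminus\{d_{k-1}^G(p)\}$ (and $d_k^G(a)$ itself determines $d_j^G(a)$ for all $j\le k$, so these quantities are available). In the inductive step one verifies the structural identity that the subtree hanging off a child $a$ of $v$ in $(B_k^G(v),v)$ equals $(B_{k-1}^G(a),a)$ with the subtree at its child $v$ deleted, applies the inductive hypothesis at level $k-1$ (with marked neighbor $v$) to each such subtree, and reassembles; taking the marked neighbor to be empty recovers the bare statement, and plugging in the hypothesis $d_k^G(v)=d_k^H(w)$ yields the desired rooted-tree isomorphism. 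Combining the two steps finishes the observation.

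The routine part is Step 1. I expect the main obstacle to be the bookkeeping in Step 2: $d_k^G(v)$ is essentially the depth-$(k-1)$ walk-unfolding of $G$ at $v$, in which each recursive layer walks ``back'' to the parent as well as forward, so it is a priori a strictly richer object than the $k$-hop tree; the girth hypothesis is exactly what makes the backtracking branches recognizable and strippable, and carrying this out without index errors is what the parent-annotated induction is for.
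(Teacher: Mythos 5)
Your proof is correct. The paper itself offers no proof of this observation --- it is stated and then used as self-evident --- so there is no argument of theirs to compare against; your write-up supplies the missing details, and it does so soundly. Step 1 is the standard ``balls of radius $k$ in a graph of girth $\ge 2k+2$ are trees'' argument, and your parenthetical about why the least common ancestor $z$ differs from $x$ and $y$ is exactly the right check (if $z=x$ then $x$ is a BFS-ancestor of $y$, which together with the edge $\{x,y\}$ forces $\{x,y\}$ to be a tree edge). Step 2 correctly isolates the one genuine subtlety: $d_k^G(v)$ is the depth-$(k{-}1)$ \emph{walk} unfolding, so each $d_{k-1}^G(a)$ for a child $a$ contains a backtracking branch through $v$, and the naive induction does not close. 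Your parent-annotated induction --- carrying along $d_k^G(a)$ together with the identity of the element $d_{k-1}^G(p)$ to be stripped, and using the fact that $d_k$ determines $d_j$ for all $j\le k$ (by induction, with base case $d_1(v)=|d_2(v)|$) to make $d_{k-2}^G(a)$ available when recursing into a child --- resolves this cleanly, and the structural identity equating the subtree below a child $a$ with $B_{k-1}^G(a)$ minus its $v$-branch follows from the same girth bound. This is a complete and correct justification of a step the paper leaves implicit.
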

This observation will be crucial to our construction as our main goal will be to construct two graphs $G$ and $H$ such that a bijection between their vertex-sets preserves $k$-depth neighborhoods. We achieve this by first guaranteeing condition 1. above, then improving our construction to guarantee condition 2. as well.

\subsection{Constructing Graphs with Identical $k$-Level Degrees}\label{sec:k-level}

As stated before, our first goal is to design a pair of graphs with greatly differing maximum matching sizes such that, nonetheless, there exists a bijection between them preserving $k$-level degrees for some large constant $k$. We will later improve this construction so that the bijection also preserves $k$-hop neighborhoods.

\begin{definition}[Degree padding $\bar G$ of a graph $G$ and special vertices]\label{def:padding}
For every graph $G=(V, E)$, let $d_l$ and $d_h$ be minimum and maximum vertex degrees in $G$ respectively. Define the degree padding $\bar G=(\bar V, \bar E)$ of $G$ as the graph obtained from $G$ by adding a bipartite clique between vertices of degree $d_l$ and a new set of $d_h-d_l$ vertices. More formally, let $S$ be a set of $d_h-d_l$ nodes disjoint from $V$, let $\bar V:=V\cup S$, and  $\bar E:=E\cup (S\times \{v\in V|d(v)=d_l\})$ (see \cref{fig:padding}). We refer to the set $S$ in $\bar G$ as the set of {\em special vertices}.
\end{definition}

\begin{definition}[Recursive construction of $k$-depth similar graphs $G^{(k)}$ and $H^{(k)}$]\label{1st-def} For every integer $k\geq 1$ and integer $c\geq 1$, define graphs $G^{(k)}$ and $H^{(k)}$ recursively as follows. For $k=1$, let $G^{(1)}$ be a graph that is the disjoint union of a $c+1$-clique and $(c+1)c/2$ isolated edges. Let $H^{(1)}$ be $c+1$ disjoint copies of $c$-stars (see \cref{fig:gh-graphs}).  For $k>1$, let $\bar G^{(k-1)}$ denote a degree padding of $G^{(k-1)}$ as per \cref{def:padding}, and let $G^{(k)}=\bar G^{(k-1)}_1\cup \ldots \cup \bar G^{(k-1)}_c$ denote the disjoint union of $c$ copies of $\bar G^{(k-1)}$. Similarly let $\bar H^{(k)}$ denote the degree padding of $H^{(k-1)}$, and let $H^{(k)}=\bar H^{(k-1)}_1\cup \ldots \cup \bar H^{(k-1)}_c$ denote the disjoint union of $c$ copies of $\bar H^{(k-1)}$.
\end{definition}

\begin{figure}
	\centering
	\begin{minipage}{0.5\textwidth}
		\centering
		\includegraphics[scale=0.6]{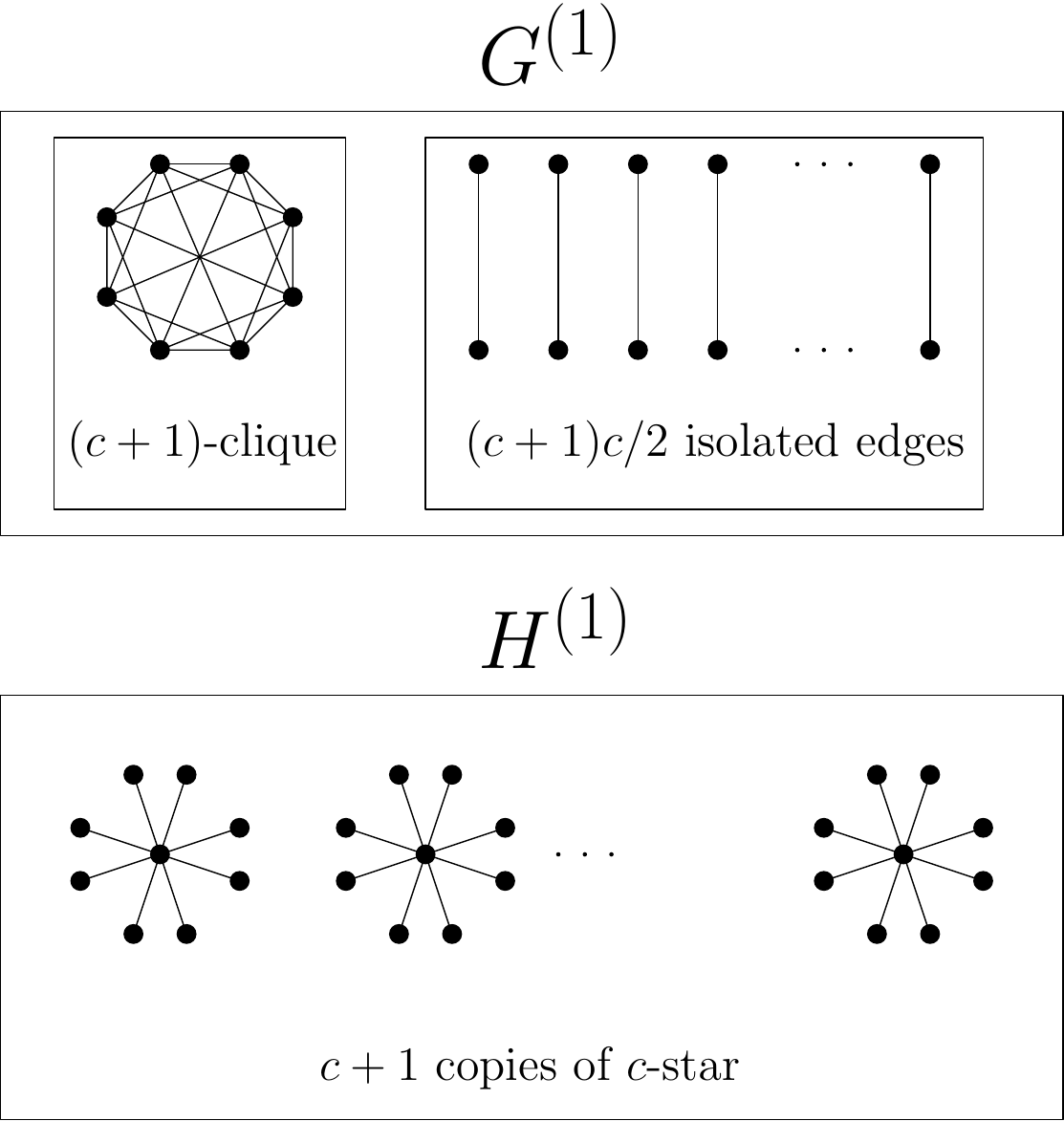}
		\caption{Construction of $G^{(1)}$ and $H^{(1)}$.}
		\label{fig:gh-graphs}
	\end{minipage}%
	\begin{minipage}{.5\textwidth}
		\centering
		\includegraphics[scale=0.6]{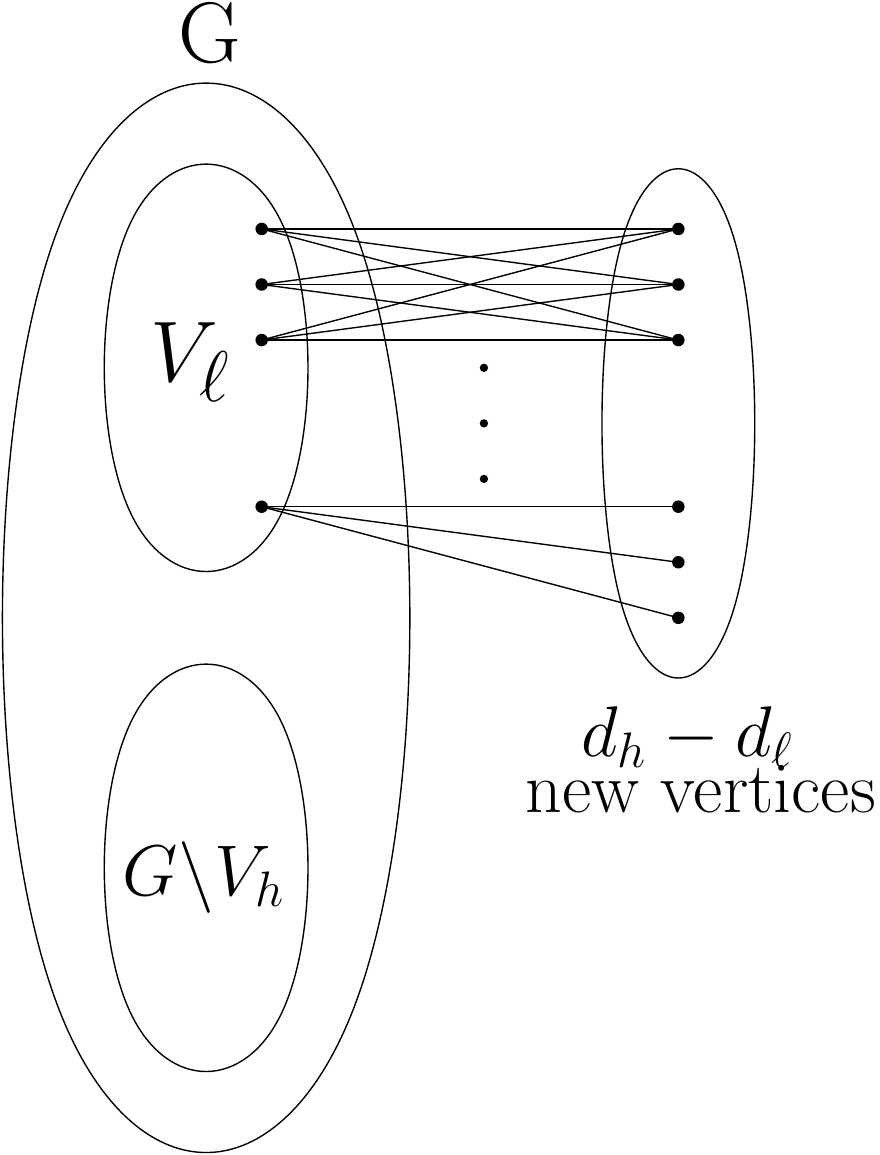}
		\caption{Padding of a graph $G$.}
		\label{fig:padding}
	\end{minipage}
\end{figure}

We first prove some simple structural claims about $G^{(j)}$ and $H^{(j)}$.

\begin{lemma}\label{structure1}

For graphs $G^{(j)}$ and $H^{(j)}$ as defined in \cref{1st-def} there exist numbers $N_h^{(j)}$, $N_l^{(j)}$ and $d_h^{(j)}>d_l^{(j)}$ such that both $G^{(j)}$ and $H^{(j)}$ contain exactly $N_h^{(j)}$ vertices of degree $d_h^{(j)}$, $N_l^{(j)}$ vertices of $d_l^{(j)}$ and no other vertices. Let $V^{(j)}$ and $W^{(j)}$ denote the vertex-sets of $G^{(j)}$ and $H^{(j)}$, respectively. Furthermore, let $V_h^{(j)}\subseteq V^{(j)}$ and $W_h^{(j)}\subseteq W^{(j)}$ be the vertices of degree $d_h^{(j)}$; let $V_l^{(j)}\subseteq V^{(j)}$ and $W_l^{(j)}\subseteq W^{(j)}$ be vertices of degree $d_l^{(j)}$.

\end{lemma}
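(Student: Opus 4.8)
The plan is to prove \cref{structure1} by a single induction on $j$, carrying along explicit formulas for the four quantities $N_h^{(j)}, N_l^{(j)}, d_h^{(j)}, d_l^{(j)}$ together with an auxiliary invariant $N_l^{(j)} > d_h^{(j)}$, which guarantees that the two degree classes never collapse into one.

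First I would dispatch the base case $j = 1$ directly from \cref{1st-def}. In $G^{(1)}$ the $c+1$ vertices of the $(c+1)$-clique have degree $c$ and the $c(c+1)$ endpoints of the isolated edges have degree $1$; in $H^{(1)}$ the $c+1$ star centers have degree $c$ and the $c(c+1)$ petals have degree $1$. So both graphs realize $N_h^{(1)} \eqdef c+1$, $d_h^{(1)} \eqdef c$, $N_l^{(1)} \eqdef c(c+1)$, $d_l^{(1)} \eqdef 1$, and (for $c \ge 2$, the relevant regime) the invariants $d_h^{(1)} > d_l^{(1)}$ and $N_l^{(1)} > d_h^{(1)}$ both hold.

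For the inductive step the observation I would lean on is that both operations appearing in \cref{1st-def} --- the degree padding of \cref{def:padding} and the disjoint union of $c$ copies --- transform a graph in a way that depends only on its degree sequence. Hence if $G^{(j-1)}$ and $H^{(j-1)}$ share a degree sequence (the inductive hypothesis), then so do $\bar G^{(j-1)}$ and $\bar H^{(j-1)}$, and so do $G^{(j)}$ and $H^{(j)}$; this is exactly what lets one induction control both families simultaneously. Concretely, padding raises each of the $N_l^{(j-1)}$ minimum-degree vertices to degree $d_l^{(j-1)} + (d_h^{(j-1)} - d_l^{(j-1)}) = d_h^{(j-1)}$ and adds $d_h^{(j-1)} - d_l^{(j-1)}$ special vertices, each joined to all those minimum-degree vertices and hence of degree $N_l^{(j-1)}$; using the invariant $N_l^{(j-1)} > d_h^{(j-1)}$, the graph $\bar G^{(j-1)}$ therefore has exactly $N_h^{(j-1)} + N_l^{(j-1)}$ vertices of degree $d_h^{(j-1)}$ and exactly $d_h^{(j-1)} - d_l^{(j-1)}$ vertices of the strictly larger degree $N_l^{(j-1)}$. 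Taking $c$ disjoint copies multiplies both counts by $c$, so
\[
d_h^{(j)} = N_l^{(j-1)}, \qquad N_h^{(j)} = c\,(d_h^{(j-1)} - d_l^{(j-1)}), \qquad d_l^{(j)} = d_h^{(j-1)}, \qquad N_l^{(j)} = c\,(N_h^{(j-1)} + N_l^{(j-1)})
\]
witnesses the claim at level $j$. It then remains to re-verify the invariants: $d_h^{(j)} = N_l^{(j-1)} > d_h^{(j-1)} = d_l^{(j)}$ by the inductive invariant, and $N_l^{(j)} = c\,(N_h^{(j-1)} + N_l^{(j-1)}) > N_l^{(j-1)} = d_h^{(j)}$ since $c \ge 2$ and $N_h^{(j-1)} \ge 1$.

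I do not expect a genuine obstacle here; the statement is pure bookkeeping about degree sequences. The only two points that need care are (a) making explicit that padding and disjoint union act on a graph only through its degree sequence, which is what legitimizes running one induction for both $G^{(j)}$ and $H^{(j)}$, and (b) threading the side invariant $N_l^{(j)} > d_h^{(j)}$ through the induction, so that $\bar G^{(j-1)}$ --- and hence $G^{(j)}$ and $H^{(j)}$ --- really does have two genuinely distinct degrees with $d_h^{(j)} > d_l^{(j)}$, exactly as the statement asserts.
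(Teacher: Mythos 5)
Your proof is correct and follows essentially the same induction as the paper: same base case, same recurrences $d_h^{(j)}=N_l^{(j-1)}$, $d_l^{(j)}=d_h^{(j-1)}$, $N_h^{(j)}=c(d_h^{(j-1)}-d_l^{(j-1)})$, $N_l^{(j)}=c(N_h^{(j-1)}+N_l^{(j-1)})$. The one (minor, and arguably cleaner) difference is that you establish the key separation $N_l^{(j-1)}>d_h^{(j-1)}$ by carrying it as an explicit arithmetic invariant of the recursion, whereas the paper derives it at each level from a structural observation that the high-degree vertices of $G^{(j-1)}$ are special vertices adjacent only to low-degree vertices of their own copy.
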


\begin{proof}

We the prove the lemma by induction on $j$. Clearly, for $j=1$, the claim of the lemma is satisfied with $N_h^{(1)}=c+1$, $N_l^{(1)}=(c+1)c$, $d_h^{(1)}=c$ and $d_l^{(1)}=1$.

\paragraph{Inductive step: $j-1 \to j$.} Consider $\bar G^{(j-1)}$ and $\bar H^{(j-1)}$. It is clear that the vertices of both of these graphs fall into two categories: the old vertices of $G^{(j-1)}$ and $H^{(j-1)}$ respectively as well as the newly introduced special vertices. The old vertices used to be of degree either $d_h^{(j-1)}$ or $d_l^{(j-1)}$ according to the inductive hypothesis. After the padding, the degree of those vertices is uniform and equal to $d_h^{(j-1)}$. Since the special vertices are connected to all the old vertices that used to have low degree in $G^{(j-1)}$ and $H^{(j-1)}$ respectively, the degrees of each special vertex is $N_l^{(j-1)}$. By definition, the number of special vertices in each of $\bar G^{(j-1)}$ and $\bar H^{(j-1)}$ is $d_h^{(j-1)} - d_l^{(j-1)}$.

All that remains to be proven is that the special vertices have strictly higher degree than the old vertices in, say $\bar G^{(j-1)}$, that is $N_l^{(j-1)}>d_h^{j-1}$. For $j=2$ one can simply verify that this is true from the base construction of \cref{1st-def}. For $j>2$ consider the structure of $G^{(j-1)}$: $G^{(j-1)}=\bar G_1^{(j-2)}\cup\ldots\cup\bar G_c^{(j-2)}$ where $\bar G_i^{(j-2)}$ are disjoint copies of the degree padded version of $G^{(j-2)}$. Hence, any high degree vertex of $G^{(j-1)}$ corresponds to a special vertex used in the padding of $G^{(j-2)}$ and so, no two high degree vertices of $G^{(j-1)}$ are connected to each other. So a high degre vertex of $G^{(j-1)}$ is only connected to its low degree vertices, and not even all of them, since $G^{(j-1)}$ falls into $c$ disjoint copies. Therefore its degree, $d_h^{(j-1)}$ must be smaller than the number of low degree vertices in the same graph, $N_l^{(j-1)}$.

In conclusion, we have proved the equivalent of the lemma's statement for $\bar G^{(j-1)}$ and $\bar H^{(j-1)}$. It is easy to see that duplicating this $c$ times will not disrupt this.

%A few simple bounds on the quantities defined in the lemma above:
\end{proof}

\begin{lemma}
For graphs $G^{(j)}$ and $H^{(j)}$ as defined in \cref{1st-def} and $c\ge2k$, the following inequalities hold for the quantities from \cref{structure1}:
\begin{itemize}
            \item $N_h^{(j)}\le c^j+2jc^{j-1}$
            \item $N_l^{(j)}\le c^{j+1}+2jc^j$
            \item $d_h^{(j)}\le c^j+2jc^{j-1}$
\end{itemize}
\end{lemma}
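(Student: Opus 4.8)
The plan is to prove all three bounds simultaneously by induction on $j$ (for $1\le j\le k$), driven by the structural recurrences that are implicit in the proof of \cref{structure1}. First I would record those recurrences explicitly. The proof of \cref{structure1} shows that $\bar G^{(j-1)}$ (and likewise $\bar H^{(j-1)}$) consists of $N_h^{(j-1)}+N_l^{(j-1)}$ ``old'' vertices, now all of degree $d_h^{(j-1)}$, together with $d_h^{(j-1)}-d_l^{(j-1)}$ special vertices, each of degree $N_l^{(j-1)}$, and moreover $N_l^{(j-1)}>d_h^{(j-1)}$, so that in $\bar G^{(j-1)}$ the special vertices are exactly the high-degree ones. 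Since $G^{(j)}$ is the disjoint union of $c$ copies of $\bar G^{(j-1)}$ (which multiplies all vertex counts by $c$ but leaves degrees unchanged), reading off the parameters gives
\[
  d_h^{(j)} = N_l^{(j-1)}, \qquad N_h^{(j)} = c\bigl(d_h^{(j-1)}-d_l^{(j-1)}\bigr) \le c\, d_h^{(j-1)}, \qquad N_l^{(j)} = c\bigl(N_h^{(j-1)}+N_l^{(j-1)}\bigr),
\]
and the same identities hold for $H^{(j)}$.

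The base case $j=1$ is immediate from \cref{1st-def}: $N_h^{(1)}=c+1\le c+2c^{0}$, $d_h^{(1)}=c\le c+2c^{0}$, and $N_l^{(1)}=(c+1)c=c^{2}+c\le c^{2}+2c$. For the inductive step, assuming the three bounds at level $j-1$, I would compute
\[
  d_h^{(j)} = N_l^{(j-1)} \le c^{j}+2(j-1)c^{j-1} \le c^{j}+2jc^{j-1},
\]
\[
  N_h^{(j)} \le c\, d_h^{(j-1)} \le c\bigl(c^{j-1}+2(j-1)c^{j-2}\bigr) = c^{j}+2(j-1)c^{j-1}\le c^{j}+2jc^{j-1},
\]
\[
  N_l^{(j)} = c\bigl(N_h^{(j-1)}+N_l^{(j-1)}\bigr) \le c\bigl(c^{j-1}+2(j-1)c^{j-2}+c^{j}+2(j-1)c^{j-1}\bigr) = c^{j+1}+(2j-1)c^{j}+2(j-1)c^{j-1}.
\]
To conclude $N_l^{(j)}\le c^{j+1}+2jc^{j}$ it then suffices that $2(j-1)c^{j-1}\le c^{j}$, i.e. $2(j-1)\le c$; this is exactly where the hypothesis $c\ge 2k$ enters, since $j\le k$ gives $2(j-1)<2k\le c$.

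The argument is essentially bookkeeping, so there is no substantive obstacle; the one point requiring care is to set up the level-$(j-1)\to j$ recurrences correctly — in particular identifying which vertices of $\bar G^{(j-1)}$ are the high-degree ones (using $N_l^{(j-1)}>d_h^{(j-1)}$ from \cref{structure1}), and noting that the extra $+c^{j}$ term in the $N_l$ recurrence is precisely absorbed by the slack $2jc^{j}-2(j-1)c^{j}=2c^{j}\ge 2(j-1)c^{j-1}$, which again uses $c\ge 2k\ge 2j$.
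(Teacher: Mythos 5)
Your proof is correct and follows essentially the same route as the paper's: the same recurrences $d_h^{(j)}=N_l^{(j-1)}$, $N_h^{(j)}=c\bigl(d_h^{(j-1)}-d_l^{(j-1)}\bigr)\le c\,d_h^{(j-1)}$, and $N_l^{(j)}=c\bigl(N_h^{(j-1)}+N_l^{(j-1)}\bigr)$, the same base case, and the same use of $2(j-1)\le 2k\le c$ to absorb the lower-order term in the $N_l$ bound. Your write-up is in fact slightly more careful than the paper's in making the recurrences and the restriction $j\le k$ explicit.
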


\begin{proof}
We prove the claims by induction. As mentioned before $N_h^{(1)}=c+1$, $N_l^{(1)}=(c+1)c$ and $d_h^{(1)}=c$ which satisfies the inequalities.

\paragraph{Inductive step: $j-1 \to j$.} Since these quantities can be derived equivalently from either $G^{(j)}$ or $H^{(j)}$ by \cref{structure1}, we will be looking at $G^{(j)}$ for simplicity. The set of high degree vertices in $G^{(j)}$ are the copies of the special vertices from degree padding $G^{(j-1)}$, which number $d_h^{(j-1)}-d_l^{(j-1)}$. So $N_h^{(j)}=c\left(d_h^{(j-1)}-d_l^{(j-1)}\right)\le cd_h^{(j-1)}\le c^j+2(j-1)c^{j-1}\le c^j+2jc^{j-1}$ as claimed.

The set of low degree vertices in $G^{(j)}$ are copies of the old vertices from the unpadded $G^{(j-1)}$, which number $|V^{(j-1)}|=N_h^{(j-1)}+N_l^{(j-1)}$. So $N_l^{(j)}=c\left(N_h^{(j-1)}+N_l^{(j-1)}\right)\le c^j+2(j-1)c^{j-1} + c^{j+1}+2(j-1)c^j\le c^{j+1}+2jc^j$ as claimed, since $2(j-1)\le 2k\le c$.

The high degree vertices in $G^{(j)}$ are copies of the special vertices added during the degree padding of $G^{(j-1)}$; their degree is $N_l^{(j-1)}$ as prescribed in \cref{def:padding}. So $d_h^{(j)}=N_l^{(j-1)}=c^j+2(j-1)c^{(j-1)}\le c^j+2jc^{j-1}$ as claimed.

\end{proof}

We will now show that the discrepancy in the matching numbers of $G^{(j)}$ and $H^{(j)}$ persists throughout the recursive construction.

\begin{lemma}\label{MM-disc}
For $c\ge 2k$ the graph $G^{(j)}$ as constructed in \cref{1st-def} has maximum matching size at least $(c+1)c^j/2$ while $H^{(j)}$ from the same construction has maximum matching size at most $2jc^j$.
\end{lemma}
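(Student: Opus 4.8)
The plan is to induct on $j$, using the two operations from \cref{1st-def} — degree padding and $c$-fold disjoint union. Disjoint union is transparent: the maximum matching of a graph is the sum of the maximum matchings of its connected components, so $\mm{G^{(j)}}=c\cdot\mm{\bar G^{(j-1)}}$ and $\mm{H^{(j)}}=c\cdot\mm{\bar H^{(j-1)}}$. Thus the whole argument reduces to controlling how padding changes matching size.

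For the lower bound on $\mm{G^{(j)}}$ I would argue as follows. The base case is immediate: $G^{(1)}$ contains $(c+1)c/2$ isolated edges, so $\mm{G^{(1)}}\ge (c+1)c/2$. For the inductive step, $\bar G^{(j-1)}$ is obtained from $G^{(j-1)}$ by adding vertices and edges only, hence it is a supergraph and $\mm{\bar G^{(j-1)}}\ge \mm{G^{(j-1)}}$; therefore $\mm{G^{(j)}}=c\cdot\mm{\bar G^{(j-1)}}\ge c\cdot\mm{G^{(j-1)}}\ge c\cdot (c+1)c^{j-1}/2=(c+1)c^j/2$.

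For the upper bound on $\mm{H^{(j)}}$ the key observation is a one-line bound on how much padding can increase a matching. In any matching $M$ of $\bar H^{(j-1)}$, the edges of $M$ incident to the set $S$ of special vertices number at most $|S|=d_h^{(j-1)}-d_l^{(j-1)}$, since each such edge uses a distinct vertex of $S$; the remaining edges of $M$ all lie inside $H^{(j-1)}$ and so form a matching of $H^{(j-1)}$. Hence $\mm{\bar H^{(j-1)}}\le \mm{H^{(j-1)}}+d_h^{(j-1)}$. The base case holds since $H^{(1)}$ is a disjoint union of $c+1$ stars, each of matching number $1$, so $\mm{H^{(1)}}=c+1\le 2c$. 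For the inductive step, plugging in the inductive hypothesis $\mm{H^{(j-1)}}\le 2(j-1)c^{j-1}$ together with the bound $d_h^{(j-1)}\le c^{j-1}+2(j-1)c^{j-2}$ from the preceding lemma yields
\[
\mm{H^{(j)}}=c\cdot\mm{\bar H^{(j-1)}}\le c\bigl(2(j-1)c^{j-1}+c^{j-1}+2(j-1)c^{j-2}\bigr)=2(j-1)c^j+c^j+2(j-1)c^{j-1},
\]
and it then remains only to verify $c^j+2(j-1)c^{j-1}\le 2c^j$, i.e. $2(j-1)\le c$, which is exactly what the hypothesis $c\ge 2k\ge 2j$ (using $j\le k$) provides.

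The argument is essentially routine and I do not expect a substantial obstacle. The one point I would be careful about is the bound on the number of matching edges that can touch the special vertices, together with confirming that the slack it leaves — the condition $2(j-1)\le c$ — is precisely guaranteed by $c\ge 2k$. I would also sanity-check the first padding step $j=2$ directly against the explicit base graphs, since there the estimates on $N_l^{(1)}$ and $d_h^{(1)}$ come from the concrete construction rather than from the recursion.
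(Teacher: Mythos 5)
Your proof is correct. The lower-bound half is the same as the paper's: duplicate matchings across the $c$ disjoint copies and use that padding only adds vertices and edges. The upper-bound half takes a slightly different route. The paper maintains a \emph{vertex cover} invariant: padding is handled by adding the special set $S$ to the cover (which covers every new edge), the cover is duplicated across copies, and the bound $\mm{H^{(j)}}\le 2jc^j$ then follows from weak duality. You instead bound the matching number directly, via the observation that a matching can contain at most $|S|$ edges incident to $S$, so $\mm{\bar H^{(j-1)}}\le \mm{H^{(j-1)}}+|S|$. Both arguments reduce to the identical quantitative input $|S|=d_h^{(j-1)}-d_l^{(j-1)}\le c^{j-1}+2(j-1)c^{j-2}$ and the identical arithmetic, closing with $2(j-1)\le c$, which $c\ge 2k$ supplies. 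The vertex-cover version has the mild advantage of producing an explicit dual certificate (which is in the spirit of the rest of the paper, where matchings and covers are paired throughout); your version is marginally more self-contained since it never invokes duality. Either is fine, and your stated points of caution (the $|S|$ bound and the $2(j-1)\le c$ slack) are exactly the two places where the hypotheses are actually used.
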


\begin{proof}
We prove the following claim by induction: $G^{(j)}$ has a feasible matching of size $(c+1)c^j/2$ while $H^{(j)}$ has a feasible vertex cover of size $2jc^j$. For the case of $j=1$ this is clear: the set of isolated edges in $G^{(1)}$ constitutes a matching and the centers of the stars in $H^{(1)}$ constitute a vertex cover.

\paragraph{Inductive step: $j-1\to j$.} Degree padding does not affect the feasibility of a matching. Hence, when constructing $G^{(j)}$ by duplicating $\bar G^{(j-1)}$ $c$ times, to construct a matching in $G^{(j)}$ we can simply duplicate the matchings from $\bar G^{(j-1)}$ as well. This results in a matching in $G^{(j)}$ that is $c$ times larger than the one in $\bar G^{(j-1)}$. Therefore, by the inductive hypothesis $G^{(j)}$ has a sufficiently large feasible matching.

Now we discuss the size of minimum vertex cover in $H^{(j)}$. Upon degree padding $H^{(j-1)}$ we add the special vertices to the vertex cover, increasing its size by $d_h^{(j-1)}-d_l^{(j-1)}\le c^{j-1}+2(j-1)c^{j-2}$. When duplicating $\bar H^{(j-1)}$ $c$ times we duplicate the vertex cover as well. This makes the size of our feasible vertex cover of $H^{(j)}$ at most $2(j-1)c^j+c^j+2(j-1)c^{j-1}\le2jc^j$ as claimed, since $2(j-1)\le2k\le c$.

\end{proof}

Next we define a sequence of bijections between vertex-sets of $G^{(j)}$ and $H^{(j)}$. We begin by giving the following definitions.

Each of the bijections between the vertex-sets of $G^{(j)}$ and $H^{(j)}$ that we define preserves $j$-level degree.

%We prove that the above defined bijection is satisfactory for our purposes.
%\stodo{In this statement we want to replace $j$ by $k$?}
\begin{lemma}\label{nestedInd}
For $c\ge2k$, let $\wt{G} \eqdef G^{(j)}$ and $\wt{H} \eqdef H^{(j)}$ be defined as in \cref{1st-def}. Then, there exists a bijection $\Phi^{(j)}:V^{(j)}\to W^{(j)}$ such that for every $v\in W^{(j)}$ it holds $d_j^{\wt{H}}(v)=d_j^{\wt{G}}(\Phi^{(j)}(v))$.

%is the bijection defined in \cref{phi-def}, 
\end{lemma}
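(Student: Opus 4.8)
The plan is to prove this by induction on $j$, constructing the bijection $\Phi^{(j)}$ explicitly from $\Phi^{(j-1)}$ in a way that mirrors the recursive construction of $G^{(j)}$ and $H^{(j)}$ in \cref{1st-def}. The base case $j=1$ is direct: by \cref{structure1} both $G^{(1)}$ and $H^{(1)}$ have $N_h^{(1)}=c+1$ vertices of degree $d_h^{(1)}=c$ and $N_l^{(1)}=(c+1)c$ vertices of degree $d_l^{(1)}=1$, so any bijection sending high-degree vertices to high-degree vertices and low-degree vertices to low-degree vertices has the property that $d_1^{\wt H}(v)=d(v)=d(\Phi^{(1)}(v))=d_1^{\wt G}(\Phi^{(1)}(v))$.

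For the inductive step, I would first build a bijection $\bar\Phi^{(j-1)}$ between the vertex sets of the degree paddings $\bar H^{(j-1)}$ and $\bar G^{(j-1)}$. The key observation is that degree padding is a ``symmetric'' operation: by \cref{structure1} the graphs $G^{(j-1)}$ and $H^{(j-1)}$ have the same degree profile $(N_h^{(j-1)}, N_l^{(j-1)}, d_h^{(j-1)}, d_l^{(j-1)})$, hence the padding adds the same number $d_h^{(j-1)}-d_l^{(j-1)}$ of special vertices to each, each special vertex being joined to all low-degree vertices. So extend $\Phi^{(j-1)}$ by mapping the special vertices of $\bar H^{(j-1)}$ to the special vertices of $\bar G^{(j-1)}$ via an arbitrary bijection between these two equal-size sets. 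Then $G^{(j)}$ (resp.\ $H^{(j)}$) is a disjoint union of $c$ copies of $\bar G^{(j-1)}$ (resp.\ $\bar H^{(j-1)}$), and I define $\Phi^{(j)}$ to act as $\bar\Phi^{(j-1)}$ on each of the $c$ parallel copies.

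It then remains to verify the degree condition $d_j^{\wt H}(v)=d_j^{\wt G}(\Phi^{(j)}(v))$ for every $v\in W^{(j)}$. Since $j$-level degree is a purely local quantity and $G^{(j)},H^{(j)}$ are disjoint unions of the padded copies, it suffices to check this inside a single copy $\bar H^{(j-1)}\to\bar G^{(j-1)}$, i.e.\ to show $d_j^{\bar H^{(j-1)}}(v)=d_j^{\bar G^{(j-1)}}(\bar\Phi^{(j-1)}(v))$. Here I would unfold the recursive definition of $j$-level degree: $d_j(v)=\biguplus_{w\in N(v)}\{d_{j-1}(w)\}$. A vertex $v$ in $\bar H^{(j-1)}$ is either an old vertex or a special vertex, and in either case its neighborhood in $\bar H^{(j-1)}$ splits into old-neighbors and special-neighbors; the matching split occurs at $\bar\Phi^{(j-1)}(v)$ in $\bar G^{(j-1)}$ because $\bar\Phi^{(j-1)}$ was defined to respect the padding structure. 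Crucially, after padding, every old vertex has degree exactly $d_h^{(j-1)}$ and every special vertex has degree exactly $N_l^{(j-1)}$ in \emph{both} padded graphs, and moreover the $(j-1)$-level degrees of old vertices agree across the two graphs by the inductive hypothesis applied to $G^{(j-1)},H^{(j-1)}$ (one has to note padding only adds edges to low-degree old vertices, and argue these $(j-1)$-level degrees are preserved — this is where a small amount of care is needed, since padding changes $(j-1)$-level degrees, so one actually wants to phrase the induction hypothesis and the padding lemma so that the new uniform degree structure after padding is what propagates). The $(j-1)$-level degree of a special vertex is determined entirely by the multiset of $(j-2)$-level degrees of the low-degree old vertices, which again agrees across the two graphs. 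Assembling these multiset equalities level by level gives the claim.

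The main obstacle I anticipate is the bookkeeping in that last verification: degree padding is exactly designed to ``erase'' the degree discrepancy between high- and low-degree vertices, but it does so only at the top level, and one must track carefully how an equality of $(j-1)$-level degrees in the \emph{unpadded} graphs, together with the uniformization of top-level degrees by padding, combines to yield equality of $j$-level degrees in the \emph{padded-and-duplicated} graphs. Getting the indices and the recursion base aligned (in particular handling the special vertices, whose $(j-1)$-level degree is a fresh object not covered by the hypothesis for $G^{(j-1)},H^{(j-1)}$ and must instead be computed directly from their common neighborhood structure) is the delicate part; the rest is a routine unfolding of the multiset recursion for $d_j$.
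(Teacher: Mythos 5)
Your proposal follows essentially the same route as the paper's proof: the same outer induction on $j$, the same construction of $\Phi^{(j)}$ by extending $\Phi^{(j-1)}$ with an arbitrary bijection between the special vertex sets and duplicating across the $c$ disjoint copies, and the same reduction to a single padded copy. The ``delicate part'' you flag—that padding changes level degrees, so one must propagate a level-shifted statement (equality of $(\iota-1)$-level degrees in the unpadded graphs implies equality of $\iota$-level degrees in the padded ones) separately for old and special vertices, computing the special vertices' level degrees directly from the common low-degree neighborhood via the outer hypothesis—is precisely the inner induction (\cref{claim:inner}) that the paper carries out.
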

\begin{proof}
We prove this lemma by induction on $j$.

\paragraph{Base case: $j=1$.} Consider the following bijection $\Phi^{(1)}$ that maps the vertices from $G^{(1)}$ to $H^{(1)}$: $\Phi^{(1)}$ maps vertices of degree $c$ (vertices in the clique in $G^{(1)}$) to the centers of stars in $H^{(1)}$, and endpoints of isolated edges in $G^{(1)}$ to petals in $H^{(1)}$. $\Phi^{(1)}$ can be arbitrary as long as it satisfies this constraint (and remains a bijection). Observe that $\Phi^{(1)}$ is a bijection such that the vertices of $G^{(1)}$ are mapped to vertices of the same degree in $H^{(1)}$.

\paragraph{Inductive step: $j-1 \to j$.} 
We first define a bijection $\Phi^{(j)}$ and then argue that it maps the vertices of $G^{(j)}$ to the vertices of $H^{(j)}$ with the same $j$-level degree.

We define $\Phi^{(j)}$ inductively on $j$ as follows. Recall that $G^{(j)}$ is a disjoint union of $c$ copies of the degree padding $\bar G^{(j-1)}$ of $G^{(j-1)}$, and $H^{(j)}$ is a disjoint union of $c$ copies of the degree padding $\bar H^{(j-1)}$ of $H^{(j-1)}$. To define $\Phi^{(j)}$, we ``reuse'' $\Phi^{(j - 1)}$ (which exists by the inductive hypothesis) and add the mapping for the vertices not included by $\Phi^{(j - 1)}$; these vertices are called \emph{special} (see \cref{def:padding}). By \cref{structure1} the number of special vertices in $\bar G^{(j-1)}_i$ equals the number of special vertices in $\bar H^{(j-1)}_i$, for every $i=1,\ldots, c$. So, we define $\Phi^{(j)}$ to map the special set $A_i$ of vertices in $\bar G^{(j-1)}_i$ bijectively (and arbitrarily) to the special set $B_i$ of vertices in $\bar H^{(j-1)}_i$. 
%and mapping the complement of the special set of vertices in $\bar G^{(j-1)}$ to the complement of the special set in $\bar H^{(j-1)}_i$ using $\Phi^{(j-1)}$.

            For the sake of brevity, let $d_\iota(v) \eqdef d_\iota^{G^{(j-1)}}$ for vertices $v$ of $G^{(j-1)}$ and $d_\iota(v) \eqdef d_\iota^{H^{(j-1)}}$ for vertices of $H^{(j-1)}$. Similarly, let $\bar d_\iota(v) \eqdef d_\iota^{\bar G^{(j-1)}}$ for vertices $v$ of $\bar G^{(j-1)}$ and $\bar d_\iota(v) \eqdef d_\iota^{\bar H^{(j-1)}}(v)$ for vertices of $\bar H^{(j-1)}$. Furthermore, let $N(v)$ and $\bar N(v)$ denote the neighborhood of vertex $v$ in $G^{(j-1)} \cup H^{(j-1)}$ and in $\bar G^{(j-1)} \cup \bar H^{(j-1)}$, respectively.
						
						Since $\wt G$ and $\wt H$ are $c$ disjoint copies of $\bar G^{(j-1)}$ and $\bar H^{(j-1)}$, respectively, it suffices to show that $\Phi^{(j)}$ maps a vertex of $\bar G^{(j-1)}$ to a vertx of $\bar H^{(j-1)}$ with the same $j$-level degree. Recall that $V(\bar G^{(j-1)}) = V^{(j-1)} \cup A$, where $V^{(j - 1)} = V(G^{(j-1)})$ and $A$ refers to the special vertices added to $G^{(j - 1)}$. Similarly, recall that $V(\bar H^{(j-1)}) = W^{(j-1)} \cup B$. For the rest of our proof, we show the following claim.
					\begin{claim}\label{claim:inner}
						We have that:
            \begin{enumerate}[(A)]
                \item\label{item:level-degrees-condition-1} For any $u\in V^{(j-1)}$ and $u'\in W^{(j-1)}$ (not necessarily mapped to each other by the bijection) if $d_{\iota-1}(u)=d_{\iota-1}(u')$, then $\bar d_\iota(u)=\bar d_\iota(u')$. (For the purposes of this statement, let $d_0^G(v)=\emptyset$ for every $v$.)
                \item\label{item:level-degrees-condition-2} For any $u\in A$ and $u'\in B$ it holds $\bar d_\iota(u)=\bar d_\iota(u')$.
            \end{enumerate}
					\end{claim}
					\begin{proof}
						We prove this claim by induction.
						
            \paragraph{Base case: $\iota=1$.}
            \begin{description}
                \item[Proof of \eqref{item:level-degrees-condition-1}] By construction of $\bar G^{(j-1)}$ and $\bar H^{(j-1)}$, all vertices in either $V^{(j-1)}$ or $W^{(j-1)}$ have degree exactly $d_h^{(j-1)}$, so their $1$-level degrees are equal.
                \item[Proof of \eqref{item:level-degrees-condition-2}] By construction of $\bar G^{(j-1)}$ and $\bar H^{(j-1)}$, all vertices in either $A$ or $B$ have degree exactly $d_l^{(j-1)}$, so their $1$-level degrees are equal.
            \end{description}
						
            \paragraph{Inductive step: $\iota-1\to\iota$.}
                
                \begin{description}
                
                \item[Proof of \eqref{item:level-degrees-condition-1}] Let $u$ and $u'$ be two vertices satisfying condition \eqref{item:level-degrees-condition-1} for $\iota-1$, i.e., $d_{\iota-1}(u)=d_{\iota-1}(u')$. Then, by definition
                $$\biguplus_{\omega\in N(u)}\{d_{\iota-2}(\omega)\}=\biguplus_{\omega'\in N(u')}\{d_{\iota-2}(\omega')\}.$$
                This means that there exists a bijection between the $G_i^{(j-1)}$-neighborhoods of $u$ and $u'$, denoted by $\Phi_u^*:N(u)\to N(u')$ that preserves $\iota-2$-level degrees. For any $\omega\in N(u)$ and $\omega'\in N(u')$ such that $\Phi_u^*(\omega)=\omega'$, by \eqref{item:level-degrees-condition-1} of the inductive hypothesis it is also true that $\bar d_{\iota-1}(\omega)=\bar d_{\iota-1}(\omega')$. Therefore
								\begin{equation}\label{eq:Nu-neighbors}
									\biguplus_{\omega\in N(u)}\{\bar d_{\iota-1}(\omega)\}=\biguplus_{\omega'\in N(u')}\{\bar d_{\iota-1}(\omega')\}.
								\end{equation}
                To prove the inductive step for \eqref{item:level-degrees-condition-1}, we will show that
                \begin{equation}\label{eq:bar-Nu-neighbors}
									\biguplus_{\omega\in\bar N(u)}\{\bar d_{\iota-1}(\omega)\}=\biguplus_{\omega'\in\bar N(u')}\{\bar d_{\iota-1}(\omega')\}
								\end{equation}
								as follows.
                Note that in \eqref{eq:bar-Nu-neighbors} the neighbors $w$ iterate over $\bar N$, while in \eqref{eq:Nu-neighbors} they iterate over $N$. Now we consider two cases.
%That is, in order to finish we need to take the multiset-union over the neighborhood in $\bar G_i^{(j-1)}$ not in $G_i^{(j-1)}$.
                
                \textbf{Case $u \in V_h^{(j-1)}$:} It follows that $u' \in W_h^{(j-1)}$. Then we are done, since $u$ is not connected to $A$ and its neighborhoods in $G_1^{(j-1)}$ and $\bar G_1^{(j)}$ are identical. Similarly for $u'$.
                
                \textbf{Case $u \in V_l^{(j-1)}$:} It follows that $u' \in W_l^{(j-1)}$. Then $\bar N(u)\backslash N(u)=A$. Similarly $\bar N(u')\backslash N(u')=B$. It holds that $|A|=|B|$ and, by \eqref{item:level-degrees-condition-2} of the inductive hypothesis, any vertex of $A$ and any vertex of $B$ have identical $\bar d_{\iota-1}$-degrees. Therefore, extending the multiset-union from $N(u)$ and $N(u')$ to $\bar N(u)$ and $\bar N(u')$, respectively, preserves the equality of $\bar d_{\iota}$-degrees.
                
                Hence, in both cases it holds that $\bar d_\iota(u)=\bar d_\iota(u')$, as claimed.
								
                \item[Proof of \eqref{item:level-degrees-condition-2}] Consider vertices $u\in A$ and $u'\in B$. In this case $\bar N(u)=V_l^{(j-1)}$ and $\bar N(u')=W_l^{(j-1)}$, so our goal is to prove that
                $$\biguplus_{\omega\in V_l^{(j-1)}}\{\bar d_{\iota-1}(\omega)\}=\biguplus_{\omega'\in W_l^{(j-1)}}\{\bar d_{\iota-1}(\omega')\}$$
                or, equivalently, we aim to show that there exists a bijection between $V_l^{(j-1)}$ and $W_l^{(j-1)}$ that preserves $\bar d_{\iota-1}$-degree.
                
								%\stodo{Here you are referring to the main inductive hypothesis?}
                By the claim of the {\it outer} inductive hypothesis, i.e., by \cref{nestedInd}, there exists a bijection $\Phi^{(j-1)}$ that preserves the $d_{j-1}$-degree between $V^{(j-1)}$ and $W^{(j-1)}$.
								%Clearly, $\Phi^{(j-1)}$ must map $V_l^{(j-1)}$ to $W_l^{(j-1)}$.
								Since $\Phi^{(j-1)}|_{V_l^{(j-1)}}$ preserves $d_{j-1}$ degree it also preserves $d_{\iota-2}$-degree and by \eqref{item:level-degrees-condition-1} it also preserves $\bar d_{\iota-1}$-degree. Thus, \eqref{item:level-degrees-condition-2} holds.
                \end{description}
							\end{proof}
            To conclude the proof of \cref{nestedInd} consider again a vertex $v\in V^{(j-1)}\cup A$ to which $\Phi^{(j)}$ can be applied. If $v \in V^{(j-1)}$, then the claim holds by \cref{claim:inner}~\eqref{item:level-degrees-condition-1}; if it is in $A$, then the claim holds by \cref{claim:inner}~\eqref{item:level-degrees-condition-2}.
     \end{proof}

\begin{corollary}{\label{cor-con}}
For large enough $c\ge2k$ there exists a bijection $\Phi^{(k)}:V^{(k)}\to W^{(k)}$ such that for any $v\in V^{(k)}$ $d_k(v)=d_k(\Phi^{(k)}(v))$. However, $\mm{G_1}$ and $\mm{G_2}$ differ by at least a factor $\frac{c+1}{4k}$.
\end{corollary}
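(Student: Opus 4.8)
The statement of \cref{cor-con} is essentially a packaging of the three preceding lemmas, so the plan is to assemble them. First I would invoke \cref{nestedInd} with $j=k$: this directly produces a bijection $\Phi^{(k)}:V^{(k)}\to W^{(k)}$ with $d_k^{H^{(k)}}(v)=d_k^{G^{(k)}}(\Phi^{(k)}(v))$ for every $v$, which is exactly the $k$-level degree preservation claimed (note the hypothesis $c\ge 2k$ matches the standing assumption of \cref{nestedInd}, and "large enough $c$" absorbs any base-case sanity conditions on the construction in \cref{1st-def}).

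Second, for the matching-size gap, I would apply \cref{MM-disc} with $j=k$: it gives $\mm{G^{(k)}}\ge (c+1)c^k/2$ and $\mm{H^{(k)}}\le 2kc^k$. Dividing, $\frac{\mm{G^{(k)}}}{\mm{H^{(k)}}}\ge \frac{(c+1)c^k/2}{2kc^k}=\frac{c+1}{4k}$, which is the claimed factor. Here I should be careful that the corollary writes $\mm{G_1}$ and $\mm{G_2}$ — I would read these as $G^{(k)}$ and $H^{(k)}$ respectively (the naming is slightly inconsistent with the lemma statements, but the intended objects are the recursively constructed pair), and I would state this identification explicitly at the start of the proof to avoid ambiguity.

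The proof is therefore just: "By \cref{nestedInd} applied with $j=k$, the desired bijection $\Phi^{(k)}$ exists and preserves $k$-level degrees. By \cref{MM-disc} applied with $j=k$, $\mm{G^{(k)}}\ge (c+1)c^k/2$ while $\mm{H^{(k)}}\le 2kc^k$, so the ratio is at least $(c+1)/(4k)$." There is no real obstacle — the only mild subtlety is bookkeeping: making sure the hypotheses $c\ge 2k$ (needed by both \cref{nestedInd} and \cref{MM-disc}) are in force, and that "large enough $c$" is interpreted as also guaranteeing $c+1>4k$ so the gap is a genuine (greater than $1$) separation, which is what makes the corollary useful downstream in \cref{lower-main}. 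I would not need to revisit any of the internal inductions; they are already done in the three lemmas.

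If a referee wanted more, the one place worth a sentence is why $d_k$ preservation for $k=k$ is the right notion to feed into \cref{ob:lower} later — but that is not part of this corollary, so I would leave it. The whole argument is a two-line deduction from \cref{nestedInd} and \cref{MM-disc}.
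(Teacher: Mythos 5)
Your proposal matches the paper's own proof exactly: the corollary is deduced by combining \cref{nestedInd} (for the $k$-level-degree-preserving bijection) with \cref{MM-disc} (giving $\mm{G^{(k)}}\ge (c+1)c^k/2$ and $\mm{H^{(k)}}\le 2kc^k$, hence the ratio $(c+1)/(4k)$), under the hypothesis $c\ge 2k$. Your extra care about the $G_1/G_2$ versus $G^{(k)}/H^{(k)}$ naming is a reasonable clarification but does not change the argument.
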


\begin{proof}
By \cref{MM-disc} and \cref{nestedInd} the graphs $G^{(k)}$ and $H^{(k)}$ constructed in \cref{1st-def} satisfy these requirements when $c\ge2k$
\end{proof}

\subsection{Increasing Girth via Graph Lifting}{\label{ssec:lift}}
%\jakab{Use https://link.springer.com/content/pdf/10.1007\%2FBF02126799.pdf. Guarantees $r$-regular, $N$-sized Cayley graph with girth $\log_rN$.}
Let us start this section by introducing Cayley graphs, we will later use them in order to increase the girth. Girth refers to the minimum length of a cycle within a graph.

\begin{definition}\label{def:cayley_group}
	Let $\mathcal G$ be a group and $S$ a generating set of elements. The Cayley graph associated with $\mathcal G$ and $S$ is defined as follows: Let the vertex set of the graph be $\mathcal G$. Let any two elements of the vertex set, $g_1$ and $g_2$ be connected by an edge if and only if $g_1\cdot s=g_2$ or $g_1\cdot s^{-1}=g_2$ for some element $s\in S$. We can think of the edges of a Cayley graph as being directed and labeled by generator elements from $S$.
	
\end{definition}

\begin{remark}
	
	Consider traversing a (undirected) walk in a Cayley graph. Let the sequence of edge-labels of the walk be $s_1,s_2,\ldots,s_l$. Further, let $\epsilon_1,\epsilon_2,\ldots,\epsilon_l$ be $\pm1$ variables, where $\epsilon_i$ corresponds to whether we have crossed the $i^\text{th}$ edge in the direction associated with right-multiplication by $s_i$ ($\epsilon_i=1$) or in the direction associated with right-multiplication by $s_i^{-1}$ ($\epsilon_i=-1$). Then traversing the walk corresponds to multiplication by $s_1^{\epsilon_1}\cdot s_2^{\epsilon_2}\cdot\ldots\cdot s_l^{\epsilon_l}$, that is the final vertex of the walk corresponds to the starting vertex multiplied by this sequence.
	
\end{remark}

\begin{definition}
	
	Let $\mathcal G$ be a group. A sequence of elements $s_1^{\epsilon_1}\cdot s_2^{\epsilon_2}\cdot\ldots\cdot s_l^{\epsilon_l}$, from some generator set $S$, is considered irreducible if no two consecutive elements cancel out. That is
	$$	\nexists i:s_i^{\epsilon_i}\cdot s_{i+1}^{\epsilon_{i+1}}=\mathbbm1.$$

\end{definition}

\begin{remark}
	
	By the previous remark, circuits of a Cayley graph associated with $\mathcal G$ and $S$ correspond to irreducible sequences from $S$ multiplying to $\mathbbm1$. A circuit is a closed walk with no repeating edges. (Note that the other direction is not true: Not all irreducible sequences from $S$ multiplying to $\mathbbm1$ correspond to circuits in the Cayley graph, as they could have repeating edges.)
	
\end{remark}

Taking $(G_1^{(k)},G_2^{(k)})$ from the previous section we now have a pair of graphs that differ greatly in their maximum matching size but are identical with respect to their $k$-level degree composition. In order to turn this result into a hard instance for approximating the maximum matching size, we need the additional property that both graphs are high girth (particularly $\ge 2k+2$).

\begin{theorem}\label{thm:lifting}
For every graph $G=(V,E), |V|=n,$ every integer $g\geq 1$, there exists an integer $R=R(n, g)=n^{O(g)}$ and a graph $L=(V_L, E_L), V_L=V\times [R]$, such that the following conditions hold.
\begin{enumerate}[(1)]
    \item\label{item:lifting-1} Size of the maximum matching of $L$ is multiplicatively close to that of $G$: $R\cdot\mm{G}\le\mm L\le2R\cdot\mm{G}$;
    \item\label{item:lifting-2} $L$ contains no cycle shorter than $g$ (i.e., $L$ has high girth);
    \item\label{item:lifting-3} For every $k\in \mathbb N$ and every $v\in V$ one has, for all $r\in R$ that $d_k^H((v,r))=d_k^G(v)$.
\end{enumerate}
We refer to $L$ as the lift of $G$.
\end{theorem}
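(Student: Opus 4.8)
The plan is to construct $L$ as a \emph{covering graph} (a ``lift'') of $G$ obtained by attaching, to each edge of $G$, a permutation that is derived from the edge-labelling of a high-girth Cayley graph. First I would fix a high-girth Cayley graph $\Gamma$: there are classical constructions (e.g.\ Ramanujan graphs, or more elementarily $\mathrm{SL}_2$ over a finite field) giving, for any degree bound and any target girth $g$, a Cayley graph on a group $\mathcal G$ of size $n^{O(g)}$ whose girth exceeds $g$. Set $R \eqdef |\mathcal G|$. Orient and properly edge-colour $E$ with colours $1,\dots,|E|$ and assign to colour $i$ a distinct generator $s_i \in S$ of $\Gamma$ (enlarging $S$ if necessary; $|S|$ only needs to be at least $|E|$, which keeps $R = n^{O(g)}$). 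Now define $V_L \eqdef V \times \mathcal G$, and for each edge $e=\{u,v\}\in E$ oriented $u\to v$ with colour $i$, put an edge in $E_L$ between $(u,h)$ and $(v, h\cdot s_i)$ for every $h\in\mathcal G$. This is the standard ``permutation voltage graph'' construction, and it is by design an $R$-fold covering of $G$.

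With $L$ defined, the three properties are verified as follows. For \eqref{item:lifting-3}: since $L \to G$ is a graph covering, the projection $\pi\colon (v,h)\mapsto v$ restricts to an isomorphism from the $k$-ball around $(v,h)$ onto the $k$-ball around $v$ \emph{provided there is no short cycle through $(v,h)$}; but more simply, a covering map is a local isomorphism on stars, so by induction on $k$ the multiset $d_k^L((v,h))$ equals $d_k^G(v)$ for every $h$ — this needs only that every vertex $(v,h)$ has exactly the same neighbour-degrees as $v$, which is immediate from the construction and does not even require high girth. For \eqref{item:lifting-2}: a closed walk in $L$ of length $\ell$ projects to a closed walk in $G$ of length $\ell$, whose sequence of colours, read with orientations $\epsilon_1,\dots,\epsilon_\ell$, must have $s_{i_1}^{\epsilon_1}\cdots s_{i_\ell}^{\epsilon_\ell} = \mathbbm 1$ in $\mathcal G$ (this is exactly the voltage/lift condition for the walk to close up in $L$). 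A genuine cycle in $L$ yields, after deleting backtracks, an irreducible nontrivial relation in $S$ of length $\le \ell$, hence a circuit in $\Gamma$ of length $\le \ell$ (using the Remark in the excerpt connecting irreducible relations to circuits); since $\Gamma$ has girth $>g$, we get $\ell \ge g$. The one subtlety is handling backtracking walks that are still cycles in $L$ but reduce to the trivial relation — here I would argue that a shortest cycle in $L$ cannot backtrack (a backtrack could be shortcut, contradicting minimality), so the reduction step is vacuous for the shortest cycle, which is all we need.

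For \eqref{item:lifting-1}, the matching bound: the lower bound $\mm L \ge R\cdot \mm G$ is immediate — pull back a maximum matching $M$ of $G$ to $L$, i.e.\ for each $e\in M$ take all $R$ lifted copies; these are vertex-disjoint in $L$ because distinct edges of $M$ project to vertex-disjoint edges and the fibres over distinct vertices are disjoint. The upper bound $\mm L \le 2R\cdot \mm G$ is the place to be slightly careful: I would use the fractional relaxation. Given any matching $M_L$ in $L$, its projection defines a fractional assignment $x_e \eqdef \tfrac1R\,|\{\text{lifted copies of }e\text{ in }M_L\}|$ on $E$, and since $L\to G$ is $R$-regular on fibres, $\sum_{e\ni v} x_e \le 1$ for every $v\in V$, so $x$ is a feasible fractional matching of $G$ of size $|M_L|/R$. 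By the integrality-gap bound $\le \tfrac32 \mm G$ quoted in the Preliminaries, $|M_L| \le \tfrac32 R\,\mm G \le 2R\,\mm G$.

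The main obstacle I expect is purely bookkeeping rather than conceptual: making sure the Cayley graph can be chosen with enough generators (at least $|E|$ of them) while keeping both the girth above $g$ \emph{and} the size $R = n^{O(g)}$, and then being careful in the girth argument about backtracking closed walks — the cleanest route is to prove directly that a \emph{shortest} cycle in $L$ is backtrack-free and projects to a backtrack-free closed walk in $G$, which then gives a genuine irreducible relation of the same length in $S$ and hence a circuit of that length in $\Gamma$. Everything else (covering $\Rightarrow$ local degree preservation, lifting matchings, fractional upper bound) is routine.
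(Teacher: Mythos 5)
Your proposal is correct and follows essentially the same route as the paper: the same edge-indexed Cayley-graph lift with $|S|\ge|E|$ generators and girth $\ge g$ on a group of size $n^{O(g)}$, the same induction for preservation of $k$-level degrees, and the same projection-of-a-cycle-to-an-irreducible-relation argument for girth (the paper dispatches the backtracking worry exactly as you do, by noting a genuine cycle cannot traverse the same edge twice consecutively). The only divergence is the upper bound in (1): you project a matching of $L$ down to a feasible fractional matching of $G$ of size $|M_L|/R$ and invoke the $3/2$ integrality gap, whereas the paper lifts a minimum vertex cover of $G$ to $V\times\mathcal G$ and uses $\vc{G}\le 2\,\mm{G}$ — both are one-line arguments yielding the stated factor $2$ (yours in fact gives $3/2$).
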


%\jakab{$R=\mathcal O(n^g)$ also holds, not sure if that's worth proving.}

Before proving this theorem, let us state a key lemma that we use in proving it. The proof of this lemma is deferred to \cref{sec:lift-cor}.
\begin{restatable}{lemma}{lemmaliftcor}\label{lift:cor}
For any parameters $g$ and $l$, there exists a group $\mathcal G$ of size $l^{O(g)}$ along with a set of generator elements $S$ of size at least $l$, such that the associated Cayley graph (\cref{def:cayley_group}) has girth at least $g$.
\end{restatable}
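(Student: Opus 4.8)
## Proof Proposal for Lemma \ref{lift:cor}

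The plan is to exhibit an explicit family of high-girth Cayley graphs with the stated parameters, using the classical construction of high-girth Cayley graphs coming from free-group quotients (or, equivalently, from the expander-type constructions of Margulis/Lubotzky--Phillips--Sarnak, but here we only need girth, not expansion, so a softer argument suffices). First I would fix the generating set to have exactly $l$ free generators $x_1,\dots,x_l$, so that the Cayley graph of the free group $F_l$ on these generators is the infinite $2l$-regular tree, which has infinite girth. The task is then to find a finite quotient $\mathcal G = F_l / N$ by a normal subgroup $N$ such that (a) $N$ contains no nontrivial reduced word of length less than $g$, which guarantees girth at least $g$ in the Cayley graph of $\mathcal G$ with respect to the images of $x_1,\dots,x_l$ (a reduced word of length $<g$ mapping to the identity is exactly a short circuit, by the remarks preceding the lemma), and (b) $|\mathcal G| = l^{O(g)}$.

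For step (b) there are two standard routes, and I would take whichever is cleanest to write. The first is the \emph{residual finiteness / congruence quotient} route: embed $F_l$ into $\mathrm{SL}_2(\mathbb Z)$ (for $l=2$ the matrices $\begin{pmatrix}1&2\\0&1\end{pmatrix}$, $\begin{pmatrix}1&0\\2&1\end{pmatrix}$ generate a free group; for larger $l$ one passes to a free subgroup of higher rank, which exists inside $F_2$ and hence inside $\mathrm{SL}_2(\mathbb Z)$) and then reduce modulo a prime $p$. A short reduced word of length $\le g$ evaluates to a matrix whose entries are bounded by $C^g$ for an absolute constant $C$; if it is not the identity in $\mathrm{SL}_2(\mathbb Z)$ then choosing $p > C^g$ ensures it is not the identity mod $p$ either, so the Cayley graph of the image in $\mathrm{SL}_2(\mathbb F_p)$ has girth $> g$. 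Since $|\mathrm{SL}_2(\mathbb F_p)| = p(p^2-1) = O(p^3) = 2^{O(g)}$, and we may take $p$ to be any prime in $(C^g, 2C^g]$ by Bertrand's postulate, we get a group of size $2^{O(g)} \le l^{O(g)}$ whenever $l \ge 2$. The only wrinkle is getting the generating set to have size \emph{at least} $l$ rather than exactly $2$: here one simply works with a free group of rank $l$ inside $\mathrm{SL}_2(\mathbb Z)$ (free subgroups of every finite rank exist, e.g. as finite-index subgroups of $F_2$, which are themselves free by Nielsen--Schreier), with the same entry-growth bound up to adjusting the constant $C$ to absorb the word length of the chosen generators; the image then has $l$ generators and size still $l^{O(g)}$.

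Alternatively — and this may be the shorter write-up — I would use the \emph{Cayley graphs of $\mathrm{SL}_2(\mathbb F_p)$ with respect to random or explicit generators} but truncated to the girth statement only, or even more elementarily a counting argument: in $\mathrm{SL}_2(\mathbb Z/p^2)$ or a similar group of size $\mathrm{poly}(p)$, the number of reduced words of length $\le g$ is at most $(2l)^g$, and a dimension/probabilistic argument shows a generating set can be chosen avoiding all short relations once the group order exceeds $(2l)^{\Theta(g)}$. I expect the main obstacle to be purely expository: making the "reduced word of length $<g$ mapping to identity $\Leftrightarrow$ short circuit" correspondence airtight (one must be careful that circuits, as edge-disjoint closed walks, correspond to irreducible sequences multiplying to $1$, as the remarks in \cref{ssec:lift} spell out, and that an irreducible \emph{word} in the free group is automatically reduced so cancellation issues do not arise), and tracking the constant in the exponent so that the bound reads exactly $l^{O(g)}$ uniformly in both $l$ and $g$. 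None of this is deep, but it requires stating the free-group embedding and the entry-growth estimate precisely. I would relegate the matrix-entry bookkeeping to the appendix (\cref{sec:lift-cor}), as the excerpt already indicates.
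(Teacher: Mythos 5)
Your route is genuinely different from the paper's. The paper proves \cref{lift:cor} in \cref{sec:lift-cor} by quoting Theorem 3.4 of Lubotzky--Phillips--Sarnak as a black box (\cref{thm:LPS}): for distinct primes $p,q\equiv 1\pmod 4$ there is a group of order $\Theta(q^3)$ with $p+1$ generators whose Cayley graph has girth at least $2\log_p(q/4)$; it then uses a Bertrand-type statement for primes $\equiv 1\pmod 4$ (\cref{theorem:primes-congruent-1-mod-4}) to pick $p\in[l,2l]$ and $q\in[4p^g,8p^g]$, which immediately gives girth $\ge 2g$ and order $l^{O(g)}$. Your congruence-quotient argument (a free subgroup of $\mathrm{SL}_2(\mathbb Z)$ reduced modulo a large prime, with entry-growth controlling short relations) is more elementary in that it needs only Sanov's free subgroup, ping-pong, and Bertrand's postulate rather than the LPS machinery, and it is a perfectly legitimate way to prove the lemma. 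One small non-issue worth noting: the generating set you produce need not generate the whole quotient group, so the Cayley graph may be disconnected; this is harmless because \cref{def:cayley_group} and the lifting construction never use connectivity, only the girth and the group structure.

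There is, however, one step where your justification as written does not deliver the stated bound. You claim that after passing to a rank-$l$ free subgroup of $F_2\le \mathrm{SL}_2(\mathbb Z)$ one keeps the size bound $l^{O(g)}$ ``up to adjusting the constant $C$ to absorb the word length of the chosen generators.'' The free generators of a rank-$l$ subgroup of $F_2$ necessarily have word length $\Omega(\log l)$, and for the standard choice (conjugates $a^iba^{-i}$, or Schreier generators of a finite-index subgroup) the word length is $\Theta(l)$. If you bound matrix entries through word length in $a,b$, a reduced word of length $g$ in the new generators has entries up to $C^{\Theta(gl)}$, forcing $p=2^{\Theta(gl)}$ and a group of order $2^{\Theta(gl)}$, which is \emph{not} $l^{O(g)}$ for large $l$. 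The fix is to bound the actual matrix entries of the chosen generators rather than their word length: taking $a=\left(\begin{smallmatrix}1&2\\0&1\end{smallmatrix}\right)$, $b=\left(\begin{smallmatrix}1&0\\2&1\end{smallmatrix}\right)$ and the free generators $X_i=a^iba^{-i}$ for $i=0,\dots,l-1$, the unipotency of $a$ gives $X_i^{\pm1}$ integer entries of magnitude $O(l^2)$, so a nonempty reduced word of length at most $g$ is a non-identity integer matrix with entries bounded by $2^{g}(Cl^2)^g=l^{O(g)}$; any prime $p$ exceeding twice this bound then works, and $|\mathrm{SL}_2(\mathbb F_p)|=l^{O(g)}$ as required. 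With that correction your first route is sound. Your second, purely counting-based alternative is not as routine as you suggest (for a fixed reduced word the evaluation map at random generators is not uniform over the group, so a naive union bound does not close), and I would not rely on it without invoking the known results on girth of random Cayley graphs of $\mathrm{SL}_2(\mathbb F_p)$.
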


Equivalently, this means that no irreducible sequence of elements from $S$ and their inverses, shorter than $g$, equates to the identity. We are now ready to prove \cref{thm:lifting}.

\begin{proof}[Proof of \cref{thm:lifting}]

Let $\mathcal G$ be a group according to \cref{lift:cor} with parameter $l=|E|$, and let $S\subseteq \mathcal G$ denote the set of elements of $\mathcal G$ whose Cayley graph has girth at least $g$, as guaranteed by \cref{lift:cor}.  We think of the elements of $S$ as indexed by the edges of the graph $G$, and write $S=(s_e)_{e\in E}$.  We direct the edges of $G$ arbitrarily, and define the edge-set $E_L$ of $L$ as
$$
E_L \eqdef \left\{((v_1,g_1), (v_2,g_2))\in V_L\times V_L| e=(v_1, v_2)\in E\text{~and~}g_1\cdot s_e=g_2\right\}.
$$

That is we connect vertices $(v_1,g_1), (v_2,g_2)\in V_L$ by an edge if and only if $e=(v_1,v_2)$ is an (directed) edge in $E$ and $g_1\cdot s_e=g_2$ in $\mathcal G$. In this construction $R=|\cG|=m^{O(g)}=n^{O(g)}$ as stated in the theorem.

Note that every vertex $v$ in the original graph $G$ corresponds to an independent set of $R$ vertices $v\times\mathcal G$ in $L$. For any pair of vertices $v_1$ and $v_2$ in the original graph if $(v_1,v_2)\in E$, then the subgraph induced by $(v_1 \times \cG)\cup (v_2 \times \cG)$ is a perfect matching; if not, the union $(v_1 \times \cG)\cup (v_2 \times \cG)$ forms an independent set. Overall, every edge $e=(v_1, v_2)\in E$ can be naturally mapped to $R$ edges among the edges of $L$, namely 
$$
\left\{((v_1,g_1), (v_2,g_1\cdot s_e))\in V_L\times V_L| g_1 \in \cG\right\}.
$$

\paragraph{Property~\eqref{item:lifting-1}.} Any matching $M$ of $G$ can be converted into a matching $M^L$ of size $R|M|$ in $L$, for instance $M^L=\{((v_1,g_1),(v_2,g_2))\in E_L| (v_1,v_2)\in M\text{~and~}g_1\in\mathcal G\}$. As mentioned above, $g_2$ is uniquely defined here. The same statement is true for vertex covers: If $V$ is a vertex cover in $G$, then $V\times\mathcal G$ is a vertex cover in $L$. Therefore,
$$R\cdot\mm G\le\mm L\le\vc L\le R\cdot\vc G\le2R\cdot\mm G,$$
where the last inequality is due to the fact that the minimum vertex cover of a graph is at most twice the size of its maximum matching.

\paragraph{Property~\eqref{item:lifting-2}.} Toward contradiction, suppose $C$ is a short cycle in $L$, that is it has length $f<g$. Let $C$ be 
$$
((v_1,g_1)(v_2,g_2),\ldots,(v_h,g_h)(v_{f+1},g_{f+1})=(v_1,g_1)).
$$ 
Let $e_i:=((v_i,g_i),(v_{i+1},g_{i+1}))\in E_L$. Let $\epsilon_i$ be a $\pm1$ variable indicating whether the direction of the edge $e_i$ is towards $v_{i+1}$ ($\epsilon_i=1$) or towards $v_i$ ($\epsilon_i=-1$). (Recall that the edges of $G$ were arbitrarily directed during the construction of $L$.)
        
        If $C$ as described above is indeed a cycle, this means that $g_1\cdot s_{e_1}^{\epsilon_1}\cdot s_{e_2}^{\epsilon_2}\cdot\ldots\cdot s_{e_f}^{\epsilon_f}=g_1$. Therefore $s_{e_1}^{\epsilon_1}\cdot s_{e_2}^{\epsilon_2}\cdot\ldots\cdot s_{e_f}^{\epsilon_f}$ is an irreducible sequence of elements from $S$ and their inverses shorter than $g$ that equates to unity. Indeed, if it was not irrdeducible, that is an element and its inverse appeared consecutively, then that would mean $C$ crossed an edge twice consecutively ($e_i=e_{i+1}$ for some $i$) and therefore it would not be a true cycle. This is a contradiction of theorem \cref{lift:cor}, so $L$ must have girth at least $g$.
        
\paragraph{Property~\eqref{item:lifting-3}.} The third statement is proven by induction on $k$. The {\bf base case} is provided by $k=1$. Fix $v$ and $h\in\mathcal G$. Every neighbor of $v$ in $G$ corresponds to exactly one neighbor of $(v,h)$ in $L$. Indeed, $w\in N(v)$ (such that $e=(v,w)\in E$) corresponds to $(w,h\cdot s_e^\epsilon)$ where $\epsilon$ indicates the direction of $e$. Therefore $d^G(v)=d^L((v,h))$.
     
        We now show the {\bf inductive step ($k-1\to k$).} Again, fix $v$ and $h$. Similarly to the base case, every neighbor $w\in N(v)$ corresponds to a single neighbor of $(v,h)$ in $L$: $(w,h_w)$ for some $h_w\in\mathcal G$. By the inductive hypothesis $d_{k-1}^G(w)=d_{k-1}^L((w,h_w))$. So
        $$d_k^G(v)=\biguplus_{w\in N(v)}\{d_{k-1}^G(w)\}=\biguplus_{w\in N(v)}\{d_{k-1}^L((w,h_w))\}=d_k^L((v,h))$$
        
\end{proof}

Thus, there exists a pair of graphs $G_1$ and $G_2$ such that there is a bijection between their vertex-sets that preserves high level degrees up to level $k$ and such that neither $G_1$ nor $G_2$ contains a cycle shorter than $2k+2$. Furthermore, $\mm{G_1}$ and $\mm{G_2}$ differ by a factor of at least $\frac{c+1}{8k}$, which can be set to be arbitrarily high by the choice of $c$.

\begin{corollary}{\label{iso}}
    For $c\ge2k$, there exists a pair of graphs $G$ and $H$ with vertex sets $V$ and $W$, respectively, such that there is a bijection $\Phi:V\to W$ with the property that the $k$-depth neighborhoods of $v$ and $\Phi(v)$ are isomorphic. Also, $\mm{G} \ge \frac{c+1}{8k} \mm{H}$.
\end{corollary}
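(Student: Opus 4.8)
The plan is to combine the two ingredients already in hand. From \cref{1st-def} and \cref{cor-con} we have the pair $G^{(k)},H^{(k)}$ together with a bijection $\Phi^{(k)}:V^{(k)}\to W^{(k)}$ that preserves $k$-level degrees, i.e.\ $d_k(v)=d_k(\Phi^{(k)}(v))$ for every $v$, while $\mm{G^{(k)}}\ge\frac{c+1}{4k}\mm{H^{(k)}}$ (this last bound being exactly \cref{MM-disc}). These graphs do not yet have high girth, so I would apply the girth-amplifying lift of \cref{thm:lifting} with parameter $g\eqdef 2k+2$, once to $G^{(k)}$ and once to $H^{(k)}$. The one point that needs care is that the two lifts should use the \emph{same} blow-up factor $R$: by \cref{structure1} the graphs $G^{(k)}$ and $H^{(k)}$ have the same number of vertices (namely $N_h^{(k)}+N_l^{(k)}$), and $R=R(n,g)$ in \cref{thm:lifting} depends only on the vertex count $n$ and on $g$, so a common $R$ may indeed be chosen. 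Write $L_G$ and $L_H$ for the two lifts, so that $V(L_G)=V^{(k)}\times[R]$ and $V(L_H)=W^{(k)}\times[R]$, and set $G\eqdef L_G$, $H\eqdef L_H$.

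Next I would exhibit the bijection. Define $\Phi:V(L_G)\to V(L_H)$ by $\Phi\big((v,r)\big)\eqdef\big(\Phi^{(k)}(v),r\big)$ for $v\in V^{(k)}$ and $r\in[R]$; this is a bijection because $\Phi^{(k)}$ is. To check that the $k$-depth neighborhoods of $(v,r)$ and $\Phi((v,r))$ are isomorphic, I would invoke property~\eqref{item:lifting-3} of \cref{thm:lifting} on both sides together with the defining property of $\Phi^{(k)}$:
\[
d_k^{L_G}\big((v,r)\big)=d_k^{G^{(k)}}(v)=d_k^{H^{(k)}}\big(\Phi^{(k)}(v)\big)=d_k^{L_H}\big((\Phi^{(k)}(v),r)\big)=d_k^{L_H}\big(\Phi((v,r))\big).
\]
By property~\eqref{item:lifting-2} of \cref{thm:lifting} both $L_G$ and $L_H$ have girth at least $g=2k+2$, so \cref{ob:lower} applies and yields that the $k$-hop (equivalently $k$-depth) neighborhoods of $(v,r)$ and $\Phi((v,r))$ are isomorphic, which is the first assertion of the corollary.

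Finally, for the matching gap I would chain the inequalities. By property~\eqref{item:lifting-1} of \cref{thm:lifting}, $\mm{L_G}\ge R\cdot\mm{G^{(k)}}$ and $\mm{L_H}\le 2R\cdot\mm{H^{(k)}}$; combining with $\mm{G^{(k)}}\ge\frac{c+1}{4k}\mm{H^{(k)}}$ from \cref{cor-con} gives
\[
\mm{G}=\mm{L_G}\ge R\cdot\mm{G^{(k)}}\ge \frac{c+1}{4k}\cdot R\cdot\mm{H^{(k)}}=\frac{c+1}{8k}\cdot 2R\cdot\mm{H^{(k)}}\ge \frac{c+1}{8k}\cdot\mm{L_H}=\frac{c+1}{8k}\mm{H},
\]
as claimed. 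I do not expect a genuine obstacle here: the statement is essentially a packaging of \cref{cor-con}, \cref{thm:lifting} and \cref{ob:lower}, and the only mildly delicate point is the observation above that a single $R$ works for both lifts because the two base graphs have equal vertex counts; everything else is mechanical substitution.
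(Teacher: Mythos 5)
Your proposal is correct and follows essentially the same route as the paper's own (very terse) proof, which likewise just combines \cref{cor-con}, \cref{thm:lifting} and \cref{ob:lower}. You additionally spell out the details the paper leaves implicit — the choice of a common blow-up factor $R$ for the two lifts and the explicit chain $\mm{L_G}\ge R\cdot\mm{G^{(k)}}\ge\frac{c+1}{4k}R\cdot\mm{H^{(k)}}\ge\frac{c+1}{8k}\mm{L_H}$ that accounts for the factor-of-two loss from \cref{thm:lifting} — all of which are accurate.
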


\begin{proof}
This follows directly from \cref{cor-con}, \cref{thm:lifting} and \cref{ob:lower}. Indeed, consider graphs $G'$ and $H'$ guaranteed by \cref{cor-con} with the same parameters. Apply to each \cref{thm:lifting} to get the lifted graphs $G$ and $H$ respectively. The bijection $\Phi$ guaranteed in \cref{cor-con} extends naturally to $G$ and $H$. By the guarantee of \cref{thm:lifting} this bijection still preserves $k$-level degrees, and furthermore, both $G$ and $H$ have girth at least $2k+2$. By \cref{ob:lower} this is sufficient to show \cref{iso}.
\end{proof}

%\jakab{it remains to show that these two graphs cannot be distinguished between with at most $k$ sampled edges}

\subsection{$k$-Edge Subgraph Statistics in $G$ and $H$}{\label{subgraph-stats}}

The main result of this section is the equality of numbers of subgraphs in $G$ and $H$. This will result in proving \cref{lower-main}.

\begin{definition}[Subgraph counts]
For a graph $G=(V, E)$ and any graph $K$ we let 
$$\#(K:G)=\enum{U\subseteq E}{U\cong K},$$
where we write $U\cong K$ to denote the condition that $U$ is isomorphic to $K$.
\end{definition}

\begin{lemma}\label{lm:forest-counts}
Let $k\ge1$ be an integer and let $G=(V_G, E_G)$ and $H=(V_H, E_H)$ be two graphs such that a bijection $\Phi:V_G\to V_H$ between their vertex-sets preserves the $k$-depth neighborhoods. That is, for every $v\in V_G$, the $k$-depth neighborhood of $v$ is isomorphic to that of $\Phi(v)$. Then for any graph $K=(V_K,E_K)$ of at most $k$ edges $\#(K:G)=\#(K:H)$.
\end{lemma}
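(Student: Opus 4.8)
The plan is to derive the lemma from the classical dictionary relating subgraph counts, injective‑homomorphism counts and (unrestricted) homomorphism counts, together with the observation that a homomorphism count from a \emph{connected} pattern with few edges is a \emph{local} quantity. First I would dispose of the degenerate cases: if $K$ has an isolated vertex but at least one edge, then no edge‑subset of any graph is isomorphic to $K$ and both counts are $0$; if $K$ is edgeless both counts are $0$ (and $1$ if $K$ is the empty graph). So assume $K$ has no isolated vertices. Then each $U\subseteq E_G$ with $(V(U),U)\cong K$ corresponds to exactly $|\mathrm{Aut}(K)|$ injective homomorphisms $K\to G$, so $\#(K:G)=\mathrm{inj}(K,G)/|\mathrm{Aut}(K)|$, and it suffices to prove $\mathrm{inj}(K,G)=\mathrm{inj}(K,H)$, where $\mathrm{inj}$ denotes the number of injective homomorphisms.

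Next I would invoke the Möbius‑inversion identity $\mathrm{inj}(F,G)=\sum_{\pi}\mu(\pi)\,\hom(F/\pi,G)$, the sum ranging over partitions $\pi$ of $V(F)$, where quotients creating loops contribute $0$ and every $F/\pi$ has at most $|E(F)|\le k$ edges. Hence it is enough to show $\hom(F,G)=\hom(F,H)$ for \emph{every} graph $F$ with at most $k$ edges. Since $\hom$ is multiplicative over connected components and each component of such an $F$ again has at most $k$ edges, it suffices to treat connected $F$ with $|E(F)|\le k$.

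So fix a connected $F$ with $|E(F)|\le k$ and choose a root $x\in V(F)$; writing $h(v)$ for the number of homomorphisms $\phi\colon F\to G$ with $\phi(x)=v$, we have $\hom(F,G)=\sum_{v\in V_G}h(v)$. Because $F$ is connected with at most $k$ edges, $x$ has eccentricity at most $k$ in $F$, so the image of any such $\phi$ is contained in the radius‑$k$ ball $B_G(v,k)$ (the subgraph induced on vertices within distance $k$ of $v$), and therefore $h(v)$ depends only on the \emph{rooted} isomorphism type of $(B_G(v,k),v)$. The hypothesis that $\Phi$ preserves $k$‑depth neighborhoods says precisely that $(B_G(v,k),v)\cong(B_H(\Phi(v),k),\Phi(v))$ as rooted graphs, so $h(v)=h(\Phi(v))$ for all $v$; summing over $v$ and using that $\Phi$ is a bijection gives $\hom(F,G)=\sum_v h(v)=\sum_v h(\Phi(v))=\hom(F,H)$, and walking back up the chain of reductions yields $\#(K:G)=\#(K:H)$.

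The one point I would be most careful about is the rooted‑versus‑unrooted distinction in the last paragraph: $h(v)$ really does depend on where the root sits inside the $k$‑ball, so the argument needs the isomorphism between the $k$‑hop neighborhoods of $v$ and of $\Phi(v)$ to send $v$ to $\Phi(v)$ — which is the intended reading of ``$\Phi$ preserves $k$‑depth neighborhoods'' and is exactly the property delivered by the bijections constructed in \cref{cor-con} and \cref{iso}. Everything else (the $\mathrm{inj}$/$\hom$/subgraph‑count correspondence, multiplicativity of $\hom$ over components, and the locality of $h(v)$) is routine and I would not belabor it.
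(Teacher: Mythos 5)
Your proposal is correct, and its heart coincides with the paper's: for a \emph{connected} pattern with at most $k$ edges you root it, observe that the count of (injective) homomorphisms sending the root to $v$ is determined by the rooted isomorphism type of the $k$-ball around $v$, and sum over the bijection $\Phi$ — this is exactly the paper's base case, including your (warranted) caveat that the neighborhood isomorphism must send $v$ to $\Phi(v)$, which is what the construction actually provides. Where you diverge is in reducing general patterns to connected ones. The paper stays entirely at the level of injective embeddings and inducts on the number of connected components of $K$: it splits $K=K_1\cup K_2$, counts pairs of embeddings, and subtracts the ``clashing'' pairs, which are in bijection with embeddings of quotient graphs $K'$ having fewer components and at most $k$ edges, so the inductive hypothesis applies. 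You instead route through unrestricted homomorphism counts via the standard identity $\mathrm{inj}(F,G)=\sum_{\pi}\mu(\pi)\hom(F/\pi,G)$ and then use multiplicativity of $\hom$ over components. The two are morally the same inclusion--exclusion, but yours delegates the bookkeeping to a citable classical fact and cleanly isolates the only nontrivial claim ($\hom(F,\cdot)$ is a local quantity for connected $F$ with $\le k$ edges), at the cost of introducing $\hom$-count machinery the paper avoids; you also handle the isolated-vertex degeneracies explicitly, which the paper's phrasing glosses over. Either route is a complete proof.
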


%The proof of \cref{lm:forest-counts} relies on an auxiliary lemma on tree counts that we prove first.

%For a vertex $v$ in a graph $G$ we refer to the subgraph induced by vertices at distance at most $k$ from $v$ as the $k$-depth neighborhood of $v$. 

%We need to prove that sampling at most $k$ edges randomly from either $G_1$ or $G_2$ will not allow us to recognize the instance. For this we will count all subgraphs of at most $k$ edges in $G_1$ and $G_2$ and verify that they are the same. Recall that we call $H$ a subgraph of $G$ if the vertex-set and edge-set of $H$ are subsets of the vertex-set and edge-set of $G$ respectively; we denote this $H\subseteq G$. Since both graphs have girth at least $2k+2$, we need only to worry about forests. We will be counting isomorphic subgraphs of $G_1$ and $G_2$. For this we introduce some notation.

%\begin{lemma}
%For every integer $k\geq 1$, if $G$ and $H$ are subgraphs given by \cref{}, then for every forest $F=(V_F, E_F)$ with at most $k$ edges one has $\#(F:G)=\#(F:H)$.
%\end{lemma}

\begin{proof}

We prove below that if
    $$\alpha_K:=\enum{\Psi:V_K\hookrightarrow V_G}{\forall(u,w)\in E_K:(\Psi(u),\Psi(w))\in E_G}$$
and     
    $$\beta_K:=\enum{\Psi:V_K\hookrightarrow V_H}{\forall(u,w)\in E_K:(\Psi(u),\Psi(w))\in E_H},$$
    then $\alpha_K=\beta_K$. Since $\alpha_K=\#(K:G)\cdot |\text{Aut}(K)|$ and $\beta_K=\#(K:H)\cdot |\text{Aut}(K)|$, where $|\text{Aut}(K)|$ is the number of automorphisms of $K$, then result then follows.

%    Every subgraph $H$ of $G_i$ can be used to embed $T$ in several different ways, specifically if we count embeddings, every subgraph will be counted $|\text{Aut}(G)|$ times. $\text{Aut}(G)$ is the automorphism group of $G$ and $|\text{Aut}(G)|$ is the number of different ways vertex labels of $T$ can be permuted while maintaining the graph's structure. So now we have that     $$|\text{Aug}(G)|\cdot\Gamma_i^T=\widetilde\Gamma_i^T$$     Hence it is enough to prove that $\widetilde\Gamma_1^T=\widetilde\Gamma_2^T$.

We proceed by induction on the number of connected components $q$ in $K$. We start with the {\bf base case ($q=1$)}, which is when $K$ is connected, i.e. $K$ has one connected component. Select arbitrarily a root $r\in V_K$ of $K$. Define further, for all $v\in V_G$ and $v\in V_H$ respectively
    $$\#(K:G|v)=\enum{\Psi:V_K\hookrightarrow V_G}{\forall(u,w)\in E_K:(\Psi(u),\Psi(w))\in E_G\ \wedge\ \Psi(r)=v}$$
and 
    $$\#(K:H|v)=\enum{\Psi:V_K\hookrightarrow V_H}{\forall(u,w)\in E_K:(\Psi(u),\Psi(w))\in E_H\ \wedge\ \Psi(r)=v}$$    
so that $\#(K:G)=\sum_{v\in V_G}\#(K:G|v)$ and $\#(K:H)=\sum_{v\in V_H}\#(K:H|v)$. Recall that there exists a bijection $\Phi: V_G\to V_H$ such that 
for every $v\in V_G$ the $k$-depth neighborhood of $v$ in $G$ is identical to the $k$-depth neighborhood of  $\Phi(v)$ in $H$. Since the number of edges in $K$ is at most $k$, and $K$ is connected, we get that $\#(K:G|v)=\#(K:H|\Phi(v))$, and hence
    \begin{align*}
    \#(K:G)=\sum_{v\in V_G}\#(K:G|v)=\sum_{v\in V_G}\#(K:H|\Phi(v))=\sum_{v\in V_H}\#(K:H|v)=\#(K:H),
    \end{align*}
    as required.

We now provide the {\bf inductive step ($q-1\to q$).}  Since $q\ge 2$, we let $K_1=(V_{K_1},E_{K_1})$ and $K_2=(V_{K_2},E_{K_2})$ be a bipartition of $K$ into two disjoint non-empty subgraphs, i.e., $V_K$ is the disjoint union of $V_{K_1}$ and $V_{K_2}$ and $E_K$ is the disjoint union of $E_{K_1}$ and $E_{K_2}$. Since the number of components of $K_1$ and $K_2$ are both smaller than $q$, we have by the inductive hypothesis that
    $\alpha_{K_1}=\beta_{K_1}$ and $\alpha_{K_2}=\beta_{K_2}$.
    
    We will write the number of embeddings of $K$ into $G$ (resp. $H$) in terms of the number of embeddings of $K_1$,  $K_2$ into $G$ (resp. $H$), as well as embeddings of natural derived other graphs. Indeed, every pair of embeddings $(\Psi_1, \Psi_2)$, where $\Psi_i: V_{K_i}\hookrightarrow V_G, i\in \{1, 2\},$ naturally defines a mapping $\Psi:V_F \to V_G$. However, this mapping is not necessarily injective. Indeed $\Psi_1(v_1)$ might clash with $\Psi_2(v_2)$ for some pairs $(v_1,v_2)\in V_{K_1}\times V_{K_2}$. In this case $\Phi$ defines different graph $K'$ which we get by merging all clashing pairs from $V_{K_1}\times V_{K_2}$, aling with an embedding of $K'$ into $G$. Note that $K'$ has strictly fewer than $q$ components. We call such a pair $(\Psi_1, \Psi_2)$ an {\em $K'$-clashing pair}. We now get
\begin{equation*}
\begin{split}
\alpha_K&=|\{\text{embedding~} \Psi\text{~of~}K\text{~into~}G\}|\\
&=|\{(\Psi_1, \Psi_2)| \text{embeddings~of~}K_i\text{~into~}G, i\in \{1, 2\}\}|\\
&~~~~~~~~~~-\sum_{\substack{K'\text{~graph with}\\\text{$<q$~components}}} |\{(\Psi_1, \Psi_2)| K'-\text{clashing embeddings of~}K_i, i\in \{1, 2\}\text{~into~}G\}|\\
&=|\{(\Psi_1, \Psi_2)| \text{embeddings~of~}K_i\text{~into~}H, i\in \{1, 2\}\}|\\
&~~~~~~~~~~-\sum_{\substack{K'\text{~graph with}\\\text{$<q$~components}}} |\{(\Psi_1, \Psi_2)| K'-\text{clashing embeddings of~}K_i, i\in \{1, 2\}\text{~into~}H\}|\\
&=\beta_K,
\end{split}
\end{equation*}    
where in the third transition above we used the fact that for any $K'$ with $<q$ connected components one has 
\begin{equation*}
\begin{split}
&|\{(\Psi_1, \Psi_2)| K'-\text{clashing embeddings of~}K_i, i\in \{1, 2\}\text{~into~}G\}|\\
=&|\{\Psi'| \text{embeddings of~}K'\text{~into~}G\}|\\
=&|\{\Psi'| \text{embeddings of~}K'\text{~into~}H\}|\text{~~~(by inductive hypothesis)}\\
&|\{(\Psi_1, \Psi_2)| K'-\text{clashing embeddings of~}K_i, i\in \{1, 2\}\text{~into~}H\}|.
\end{split}
\end{equation*}    
This completes the proof.
\end{proof}

\begin{corollary}[\cref{lower-main}]
For every $\lambda>1$ and every $k$, there exist graphs $G$ and $H$ such that $\mm G\ge\lambda\cdot\mm H$, but for every graph $K$ with at most $k$ edges, the number of subgraphs of $G$ and $H$ isomorphic to $K$ are equal.
\end{corollary}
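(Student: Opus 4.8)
The plan is to obtain the corollary as an immediate consequence of \cref{iso} together with \cref{lm:forest-counts}; all of the genuine work has already been done in the preceding subsections, so what remains is only the choice of parameters and the bookkeeping that stitches the two results together.

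First I would fix the free parameter $c$ of the recursive construction to be large enough. Concretely, set $c \eqdef \max\!\left(2k,\ \lceil 8k\lambda\rceil\right)$; this is $\ge 2k$, which is exactly the hypothesis of \cref{iso}, and it also guarantees $\tfrac{c+1}{8k} \ge \lambda$. Invoking \cref{iso} with this $c$ and the given $k$ then produces graphs $G$ and $H$ with vertex sets $V$ and $W$ and a bijection $\Phi:V\to W$ such that for every $v\in V$ the $k$-hop neighborhood of $v$ in $G$ is isomorphic to the $k$-hop neighborhood of $\Phi(v)$ in $H$, and moreover $\mm{G}\ge \tfrac{c+1}{8k}\,\mm{H}\ge\lambda\cdot\mm{H}$. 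This already delivers the matching-size half of the claim.

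For the subgraph-count half I would simply apply \cref{lm:forest-counts} to the bijection $\Phi$: since $\Phi$ preserves $k$-depth neighborhoods and the target graph $K$ has at most $k$ edges, the lemma yields $\#(K:G)=\#(K:H)$. By the definition of $\#(\cdot:\cdot)$ — it counts edge-subsets $U\subseteq E$ with $U\cong K$ — this is precisely the statement that $G$ and $H$ contain the same number of subgraphs isomorphic to $K$, which is what the corollary asks for. Since $K$ was an arbitrary graph on at most $k$ edges, this holds for all such $K$ simultaneously, and the proof is complete.

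As for where the difficulty lies: there is essentially no obstacle at the level of this corollary — it is pure assembly. The substantive steps are all upstream: \cref{cor-con}, which builds the pair $G^{(k)},H^{(k)}$ with identical $k$-level degree profiles but matching sizes differing by a $\Theta(c/k)$ factor via the recursive padding construction of \cref{1st-def}; \cref{thm:lifting}, which lifts these graphs through a high-girth Cayley graph so that the $k$-level degree data is upgraded to genuinely isomorphic $k$-hop neighborhoods while the matching size changes by at most a factor of $2$; and \cref{lm:forest-counts}, whose induction on the number of connected components of $K$ is the real engine converting ``isomorphic $k$-neighborhoods'' into ``equal subgraph counts''. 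The only point requiring any care in the present argument is checking that the two lower bounds on $c$ (namely $c\ge 2k$ and $c\ge 8k\lambda-1$) can be met at once, which they trivially can.
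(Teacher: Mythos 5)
Your proposal is correct and matches the paper's own proof: both simply choose $c$ large enough that $c\ge 2k$ and $\frac{c+1}{8k}\ge\lambda$, then combine \cref{iso} with \cref{lm:forest-counts}. Nothing further is needed.
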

\begin{proof}

By \cref{lm:forest-counts} the pair of graphs satisfying the guarantee of \cref{iso} will satisfy the guarantees of \cref{lower-main} as well. We just need to set $c$ such that $c\ge2k$ and $\frac{c+1}{8(k+1)}\ge\lambda$.

\end{proof}

%!TEX root = ./000-main.tex

\newcommand{\kl}[2]{D_{KL}\left( #1 \middle\| #2 \right)}
\newcommand{\ckl}[3]{D_{KL}\left(#1 \middle\| #2 \middle| #3 \right)}
\newcommand{\N}[1]{\text N(0,#1)}
\newcommand{\abs}[1]{\left|#1\right|}
\newcommand{\norm}[1]{\left\lVert#1\right\rVert}
\renewcommand{\Ber}[1]{\text{Ber}\left(#1\right)}

\section{Analysis of the algorithm on a random permutation stream}\label{sec:coupling}

\subsection{Introduction and Technical Overview}

In this section we focus on the setting in which the set of elements, e.g., edges, is presented as a random permutation. This has been a popular model of computation for graph algorithms in recent years with many results, including for matching size approximation~\cite{konrad2012maximum,kks2014,MonemizadehMPS17,PengS18,AssadiBBMS19}. As mentioned before, our goal is to show that \cref{alg2} is robust to the correlations introduced by replacing independent samples with a random permutation. This results in algorithm for approximating matching size to within a factor of $O(\log^2n)$, in polylogarythmic space.
\thmpi*

%\jakab{Need intro}

Recall that this improves the previous best-known approximation ratio (\cite{kks2014}) by at least a factor $O(\log^6 n)$.

Our overall strategy consists of showing that the algorithm behaves identically when applied to iid samples or a permutation. That is, we show that distribution of the state of the algorithm at each moment is similar under these two settings. To this end, we use total variation distance and KL-divergence as a measure of similarity of distributions. Furthermore, we break down the algorithm to the level of $\AlgVTest{j}$ tests to show that these behave similarly.

More specifically, consider an invocation of $\AlgVTest{j}(v)$ in either the iid or the permutation stream. In the permutation stream, we have already seen edges pass and therefore know that they will not reappear during the test; this biases the output of the test compared to the iid version which is oblivious to the prefix of the stream. However, we are able to prove in \cref{sec:kl} that KL-divergence between the output-distribution of these two versions is proportional to the number of samples used, with a factor of $O(\log^2n/m)$, see \cref{coupling-main-lemma}. Since this is true for all tests, intuitively, the algorithm should still work in the permutation setting as long as it uses $O(m/\log^2n)$ samples. Conversely, our original algorithm uses $\Theta(m)$ samples; however, we can reduce the number of samples used by slightly altering it, at the expense of a $O(\log^2n)$ factor in the approximation ratio.

In fact the algorithm we use in permutation stream setting is nearly identical to the one defined in \cref{sec:alg}; the one difference is that in the subroutine $\AlgETest$ we truncate the number of tests to $J-\Omega(\log^2 n)$ to reduce the number of edges used, (see \cref{alg:E-test-truncated} in \cref{alg:E-test-truncated}).  

%\stodo{A minor thing, but important in descriptions: I have replaced some ``consume'' by ``use''. Not sure what is the best to say, though. :)}

One technical issue that arises in carrying out this approach is the fact that KL-divergence does not satisfy the triangle inequality, not even an approximate one, when the distributions of the random variables in question can be very concentrated. 
%\stodo{Here would be good to first explain why it would be easier if KL would satisfy triangular inequality.}
(Or, equivalently, we can assume very small values, since we are thinking of Bernoulli variables). Essentially triangle inequality is not satisfied even approximately when some of the random variables in question are nearly deterministic (see \cref{notriineq} in \cref{sec9-prelim}). 
%\stodo{Then we should say how is this non-existence of triangular inequality bridged.}

We circumvent this problem using a mixture of total variation and KL-divergence bounds. Furthermore, we develop a weaker version of triangle inequality for KL-divergence, see \cref{kltri}. This both loses a constant factor in the inequality and is only true under the condition that the variables involved are bounded away from deterministic. However, it suffices for our proofs in \cref{sec:padding,sec:kl}.

Our proof strategy consists of two steps: we first modify the tests somewhat to ensure that all relevant variables are not too close to deterministic (see the padded tests presented in \cref{sec:padding}), at the same time ensuring that the total variation distance to the original tests is very small (see \cref{tvd-lemma} in \cref{sec:padding}). We then bound the KL-divergence between the padded tests on the iid and permutation stream in \cref{sec:kl}. An application of  Pinsker's inequality then completes the proof.

\subsection{Preliminaries}{\label{sec9-prelim}}

 Throughout this section, for sake of simplicity, we use $n$ as an upperbound on the maximum degree $d$ of $G$. This does not affect the guarantees that we provide.

Next we provide some notations that we will use in the sequel. Let $\Pi$ be the random variable describing the permutation stream. Let  the residual graph at time $t$ be $G^t$, that is the original graph, but with the edges that have already appeared in the stream up until time $t$ deleted, for $0\le t\le m$. In this case $G^0$ is simply the original input graph and $G^m$ is the empty graph of $n$ isolated vertices. Let the residual degree of $v$ at time $T$ be
$$d^t\eqdef\frac{d^{G^t}}{1-t/m}.$$

Let the outcome of a $j^\text{th}$ level test on vertex $v$ (recall \AlgVTest{j} from \cref{alg:E-test}) be $T^{\text{IID}}_j(v)$. Let of the same test on the permutation stream, performed at time $t$ will be denoted $T_j^{\pi,t}(v)$.

\begin{definition}[KL-divergence]
For two distributions $P$ and $Q$ over $\mathcal X$, the KL-divergence of $P$ and $Q$ is
$$\kl{P}{Q}=\sum_{x\in\mathcal X}-P(x)\log\left(\frac{Q(x)}{P(x)}\right).$$
\end{definition}

\begin{Remark}
Throughout this document, $\log$ is used to denote the natural logarithm in the definition of KL-divergence, even though it is conventionally the base $2$ logarithm. This only scales down $\kl{P}{Q}$ by a factor of $\log2$ and does not affect any of the proofs.
\end{Remark}

\begin{lemma}[Chain rule]
For random vectors $P=(P_i)_i$ and $Q=(Q_i)_i$,
$$\kl{P}{Q}=\sum_{i}\mathbb E_{x_1,\ldots,x_{i_1}}[\ckl{P_i}{Q_i}{x_1,\ldots,x_{i-1}}]$$
\end{lemma}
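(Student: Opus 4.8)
The plan is to prove this by the standard factorization argument, reducing the joint divergence to a telescoping sum of conditional divergences; an equivalent route is a short induction on the number of coordinates. First I would write both joint laws via the chain rule of probability: $P(x_1,\ldots,x_k)=\prod_i P_i(x_i\mid x_1,\ldots,x_{i-1})$ and likewise $Q(x_1,\ldots,x_k)=\prod_i Q_i(x_i\mid x_1,\ldots,x_{i-1})$, where $P_i(\cdot\mid x_{<i})$ denotes the conditional law of the $i$-th coordinate under $P$. Plugging these into the definition $\kl{P}{Q}=\sum_x P(x)\log\frac{P(x)}{Q(x)}$ and using that the log of a product is a sum of logs gives $\kl{P}{Q}=\sum_x P(x)\sum_i \log\frac{P_i(x_i\mid x_{<i})}{Q_i(x_i\mid x_{<i})}$.

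Next I would exchange the (finite) order of summation over $i$ and over $x$, and for each fixed $i$ marginalize out the coordinates $x_{i+1},\ldots,x_k$, which do not occur in the $i$-th summand; this collapses $\sum_x P(x)(\cdot)$ to $\sum_{x_{\le i}}P(x_{\le i})(\cdot)$. Then I would factor $P(x_{\le i})=P(x_{<i})\,P_i(x_i\mid x_{<i})$ and group the inner sum over $x_i$: for each prefix $x_{<i}$ the quantity $\sum_{x_i}P_i(x_i\mid x_{<i})\log\frac{P_i(x_i\mid x_{<i})}{Q_i(x_i\mid x_{<i})}$ is exactly $\ckl{P_i}{Q_i}{x_{<i}}$, and averaging it against $P(x_{<i})$ is precisely $\mathbb E_{x_1,\ldots,x_{i-1}}[\ckl{P_i}{Q_i}{x_1,\ldots,x_{i-1}}]$. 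Summing over $i$ yields the claimed identity. The inductive variant peels off the last coordinate and uses the two-variable case $\kl{P_{1,2}}{Q_{1,2}}=\kl{P_1}{Q_1}+\mathbb E_{x_1}[\ckl{P_2}{Q_2}{x_1}]$, which itself is just the one-line factorization above applied to two blocks.

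This is a routine computation rather than a hard step; the only points deserving a word of care are that the conditional laws, and hence $\ckl{P_i}{Q_i}{x_{<i}}$, are well defined $P$-almost surely (i.e.\ wherever $P(x_{<i})>0$, which is all the expectation over $x_{<i}\sim P$ sees), and that $Q\ll P$ on the relevant support so that both sides are simultaneously finite or equal $+\infty$ and the rearrangement of sums is justified. In the finite discrete setting used throughout this section — the outcomes of the $\AlgVTest{j}$ tests together with the observed prefix of the stream — all these caveats are automatic, so the argument reduces to: factor, swap sums, read off each conditional divergence.
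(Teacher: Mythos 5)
Your proof is correct: it is the standard factorization-and-telescoping argument (write $P$ and $Q$ as products of conditionals, expand the log of the ratio, swap sums, and recognize each term as an expected conditional divergence), and the caveats you raise about almost-sure well-definedness and absolute continuity are exactly the right ones, all of which are automatic in the finite discrete setting used here. The paper states this chain rule as a known fact without proof, so there is nothing to compare against; your argument is the canonical one and fills the gap correctly.
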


\begin{lemma}{\label{lm:kl-bernoulli}}
For every $p, \e\in [0, 1]$ such that $p+\e\in[0,1]$ one has $$\kl{\Ber{p+\epsilon}}{\Ber p}=\frac{16\epsilon^2}{p(1-p)}$$
\end{lemma}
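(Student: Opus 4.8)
The plan is to establish the stated bound $\kl{\Ber{p+\epsilon}}{\Ber p}\le\frac{16\epsilon^2}{p(1-p)}$, which is the form in which it is invoked inside the chain‑rule estimates of \cref{sec:padding,sec:kl}. (The two sides are of course not literally equal; for instance $p=\epsilon=\tfrac12$ gives left side $\log 2$ and right side $16$. The ``$=$'' should be read as ``$\le$'', and the lavish constant $16$ is there only to avoid having to track constants later --- the argument below in fact yields the sharper $\kl{\Ber{p+\epsilon}}{\Ber p}\le\frac{\epsilon^2}{p(1-p)}$.)

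First I would expand the left‑hand side from the definition of KL‑divergence for two‑point distributions:
\[
	\kl{\Ber{p+\epsilon}}{\Ber p}=(p+\epsilon)\log\frac{p+\epsilon}{p}+(1-p-\epsilon)\log\frac{1-p-\epsilon}{1-p},
\]
using the convention $0\log 0 = 0$ to cover $p+\epsilon\in\{0,1\}$; if $p\in\{0,1\}$ the right‑hand side of the lemma is $+\infty$ and there is nothing to prove. The key step is then to bound each logarithm via the elementary inequality $\log t\le t-1$ (valid for all $t>0$, and consistent with the $0\log 0$ convention in the limit), which gives $(p+\epsilon)\log\frac{p+\epsilon}{p}\le\epsilon+\frac{\epsilon^2}{p}$ and $(1-p-\epsilon)\log\frac{1-p-\epsilon}{1-p}\le-\epsilon+\frac{\epsilon^2}{1-p}$. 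Adding these, the $\pm\epsilon$ terms cancel and I obtain
\[
	\kl{\Ber{p+\epsilon}}{\Ber p}\le\frac{\epsilon^2}{p}+\frac{\epsilon^2}{1-p}=\frac{\epsilon^2}{p(1-p)}\le\frac{16\epsilon^2}{p(1-p)},
\]
as required. Conceptually this is nothing but the standard fact that KL‑divergence is dominated by $\chi^2$‑divergence, specialized to Bernoulli distributions: $\frac{\epsilon^2}{p(1-p)}$ is exactly $\chi^2\rb{\Ber{p+\epsilon}\,\|\,\Ber p}$.

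The argument has no real obstacle --- it is three lines of calculus once the right inequality is chosen. The only places needing a moment of care are (i) the degenerate endpoints $p\in\{0,1\}$ (bound vacuous) and $p+\epsilon\in\{0,1\}$ (handled by $0\log 0=0$ together with the limiting form of $\log t\le t-1$), and (ii) correctly tracking signs in the second summand, where $\log\frac{1-p-\epsilon}{1-p}$ has the opposite sign from $\log\frac{p+\epsilon}{p}$ when $\epsilon>0$ --- it is precisely this bookkeeping that makes the linear‑in‑$\epsilon$ terms cancel and leaves the clean quadratic bound. If one wanted the sharpest possible constant, a second‑order Taylor expansion of $\epsilon\mapsto\kl{\Ber{p+\epsilon}}{\Ber p}$ with Lagrange remainder would isolate the leading term $\frac{\epsilon^2}{2p(1-p)}$, but this refinement is unnecessary given the slack in the statement.
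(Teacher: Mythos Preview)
Your proof is correct and, in fact, cleaner than the paper's. Both establish the inequality $\kl{\Ber{p+\epsilon}}{\Ber p}\le\frac{16\epsilon^2}{p(1-p)}$ (you rightly flag that the ``$=$'' in the statement is a typo for ``$\le$''), but the routes differ.

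The paper's proof assumes $p\le 1/2$ by symmetry and then splits into three cases according to the size of $\epsilon$: when $\epsilon\le -p/3$ it bounds by $\kl{\Ber 0}{\Ber p}$, when $\epsilon\ge 1/4$ it bounds by $\kl{\Ber 1}{\Ber p}$, and in the intermediate range it uses the second-order Taylor estimate $\log(1+x)\ge x-4x^2$ on $[-1/2,\infty)$ to control both logarithmic terms, arriving at $\frac{4\epsilon^2}{(p+\epsilon)(1-p-\epsilon)}$ and then absorbing the shift in the denominator at the cost of another factor of $4$. The case split is there precisely because the Taylor bound only holds on a restricted interval, so the extreme $\epsilon$ regimes must be handled separately.

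Your argument sidesteps all of this by applying the single first-order inequality $\log t\le t-1$ to each summand, which is valid on all of $(0,\infty)$ and hence needs no case analysis, no symmetry reduction, and no restriction on the sign or magnitude of $\epsilon$. The linear terms cancel exactly, and you land directly on the $\chi^2$-divergence $\frac{\epsilon^2}{p(1-p)}$ with constant $1$ rather than $16$. This is the standard ``KL $\le\chi^2$'' bound specialized to two-point measures, and it is both shorter and sharper than what the paper does.
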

A proof of \cref{lm:kl-bernoulli} is provided in \cref{app:coupling}.

\if 0
\begin{Remark}
The other direction does not hold in all cases: $\kl{\Ber{p+\epsilon}}{\Ber{p}}$ may be much smaller than $\epsilon^2/p(1-p)$. Indeed let $\epsilon\gg p$ and let $p+\epsilon=1/2$ for simplicity.
\begin{align*}
    \kl{\Ber{p+\epsilon}}{\Ber{p}}&=-(p+\epsilon)\log\left(\frac p{p+\epsilon}\right)-(1-p-\epsilon)\log\left(\frac{1-p}{1-p-\epsilon}\right)\\
    &=\frac12\log\left(\frac1{2p}\right)+\frac12\log\left(\frac1{2(1-p)}\right)\gg\frac1p
\end{align*}
\end{Remark}
\fi
\begin{Remark}{\label{notriineq}}
The triangle inequality does not hold for KL-divergence, not even with a constant factor loss, as the following example shows. For sufficiently small $\epsilon$, let $A=\Ber{\epsilon}$,   $B=\Ber{\epsilon^2}$ and $C=\Ber{\epsilon^{1/\epsilon}}$. One can verify that $\kl{A}{B}\leq \e$, $\kl{B}{C}\leq \e$, but $\kl{A}{C}\ge\omega(\e)$. Nevertheless, we provide a restricted version of triangle inequality in \cref{kltri} that forms the basis of our analysis.

\begin{definition}[$\theta$-padding of a Bernoulli random variable]\label{def:padding}
We define the padding operation as follows. Given a Bernoulli random variable $X$ and a threshold $\theta\in (0, 1)$ we let 
\begin{equation*}
\textsc{Padding}(X,\theta) \eqdef \left\{
\begin{array}{ll}
X&\text{~if~}\E[X]\in [\theta, 1-\theta]\\
\Ber{\theta}&\text{~if~}\E[X]< \theta\\
\Ber{1-\theta}&\text{~if~}\E[X]>1-\theta\\
\end{array}
\right.
\end{equation*}
\end{definition}

\begin{lemma}[Triangle inequality for padded KL-divergence]\label{kltri}
Suppose $p,q,r\in[0,1]$ and consider the KL-divergence between Bernoulli variables $\Ber{p}$, $\Ber{q}$ and $\Ber{r}$. Then for some absolute constant $C$, any $\epsilon\in[0,1/32]$, if $\kl{\Ber{p}}{\Ber{q}}\le\epsilon$ and $\kl{\Ber{q}}{\Ber{r}}\leq \e$, then
$$\kl{\Ber{p}}{\Ber{\textsc{Padding}(r, \e)}}\le C\epsilon.$$
\end{lemma}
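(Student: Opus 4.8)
The plan is to derive the statement from two elementary facts about the Bernoulli KL divergence, chained together, while being careful about the fact that after padding the second argument $r':=\textsc{Padding}(r,\e)$ may lie close to $0$ (or $1$); hence a crude Pinsker estimate $|p-r'|=O(\sqrt{\e})$ does not suffice, and what is actually needed is $|p-r'|=O\big(\sqrt{\e\, r'(1-r')}\,\big)$. First I would make two harmless reductions. By the symmetry $\kl{\Ber a}{\Ber b}=\kl{\Ber{1-a}}{\Ber{1-b}}$ together with $\textsc{Padding}(1-r,\e)=1-\textsc{Padding}(r,\e)$, I may assume $r\le 1/2$, so that $r'=\max(r,\e)$, whence $\e\le r'\le 1/2$ and $|r-r'|\le\e$; and the degenerate case $\e=0$ forces $p=q=r$ and is trivial, so I assume $\e>0$, which guarantees $r'(1-r')>0$.

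The two facts I would establish, valid for all $a,b\in[0,1]$ (with the usual conventions when a divergence is infinite), are: \textbf{(i)} if $\kl{\Ber a}{\Ber b}\le\e$ then $a\le e^2 b+\e$; and \textbf{(ii)} $\kl{\Ber a}{\Ber b}\ge\frac{(a-b)^2}{2\max(a,b)}$, so that $\kl{\Ber a}{\Ber b}\le\e$ implies $|a-b|\le\sqrt{2\max(a,b)\,\e}$. Fact (ii) follows from Taylor's theorem applied to the negative entropy $\phi(x)=x\log x+(1-x)\log(1-x)$: since $\kl{\Ber a}{\Ber b}$ is the Bregman divergence of $\phi$, it equals $\frac{(a-b)^2}{2\xi(1-\xi)}$ for some $\xi$ strictly between $a$ and $b$, and $\xi(1-\xi)\le\xi\le\max(a,b)$ (the boundary values $a,b\in\{0,1\}$ being checked directly). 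For fact (i), if $a\le b$ it is immediate; if $a>b$, the elementary inequality $\log\frac{1-a}{1-b}\ge-\frac{a-b}{1-a}$ yields $\kl{\Ber a}{\Ber b}\ge a\log\frac ab-(a-b)$, hence $a\big(\log\frac ab-1\big)\le\e$; if $a/b\le e^2$ this gives $a\le e^2 b$, and if $a/b> e^2$ then $\log\frac ab-1\ge\frac12\log\frac ab$, forcing $a<\frac{2\e}{\log(a/b)}<\e$.

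With these in hand, the proof is a short chain. Applying (i) to $\kl{\Ber q}{\Ber r}\le\e$ and then to $\kl{\Ber p}{\Ber q}\le\e$ gives $q\le e^2 r+\e\le(e^2+1)r'$ and $p\le e^2 q+\e\le(e^4+e^2+1)r'$; writing $A:=e^4+e^2+1$, this means $\max(p,q,r)\le A r'$. Then (ii) gives $|p-q|\le\sqrt{2Ar'\e}$ and $|q-r|\le\sqrt{2Ar'\e}$, while $|r-r'|\le\e\le\sqrt{r'\e}$ (using $\e\le r'$). By the triangle inequality $|p-r'|\le(2\sqrt{2A}+1)\sqrt{r'\e}$, i.e.\ $(p-r')^2\le B\,r'\e$ for the absolute constant $B:=(2\sqrt{2A}+1)^2$. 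Finally \cref{lm:kl-bernoulli} (applied to $\Ber{r'}$ and its shift by $p-r'$; if $p<r'$, invoke the $x\mapsto 1-x$ symmetry once more) gives $\kl{\Ber p}{\Ber{r'}}\le\frac{16(p-r')^2}{r'(1-r')}\le\frac{16B\,r'\e}{r'(1-r')}=\frac{16B\e}{1-r'}\le 32B\e$, using $r'\le 1/2$. This proves the lemma with $C=32B$.

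The crux — and the only nontrivial point — is fact (i): the chain of small-KL hypotheses forces $p,q,r$ to lie within a fixed \emph{multiplicative} factor of $r'$, and only this turns the Pinsker-type bound (ii) into a deviation of order $\sqrt{r'\e}$ rather than the useless $\sqrt{\e}$. Without this multiplicative control the bound would degrade to $O(\e/r')$, which is unbounded as $r'\to 0$; \cref{notriineq} shows that this degradation is genuine for the unpadded statement, and the role of padding is precisely to keep $r'$ (hence $1-r'$ in the last step) bounded away from $0$ while moving $r$ by at most $\e$, a shift that is negligible beside the $\sqrt{r'\e}$ error already incurred.
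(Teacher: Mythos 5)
Your proof is correct. Both your argument and the paper's ultimately rest on the same quantitative core: the hypotheses force $p$, $q$, $r$ into a fixed \emph{multiplicative} window around $\wt r=\textsc{Padding}(r,\e)$, which upgrades the Pinsker-type deviation bound to $|p-\wt r|=O\bigl(\sqrt{\e\,\wt r}\bigr)$, after which \cref{lm:kl-bernoulli} finishes. The organization, however, is genuinely different. The paper runs a six-way case analysis on the ordering of $p,q,\wt r$: the four cases with $q$ extremal are dispatched by monotonicity of the KL divergence in each argument, and the two middle cases are handled by ad hoc multiplicative control ($q\le 6\wt r$ in one case, a subcase split at $p\le 4\e$ in the other, via \cref{trifact4}) together with deviation bounds like $|p-q|\le 2\sqrt{p\e}$ obtained from explicit second-order expansions of the logarithm. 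You instead isolate the two ingredients as uniform standalone facts — the multiplicative domination $a\le e^2b+\e$ and the reverse-Pinsker bound $\kl{\Ber a}{\Ber b}\ge\frac{(a-b)^2}{2\max(a,b)}$ derived from the Bregman/Taylor viewpoint on the negative entropy — and chain them once, with no case analysis. Your constant ($C\approx 32(2\sqrt{2A}+1)^2$ with $A=e^4+e^2+1$) is worse than the paper's, but the argument is shorter, and the two facts are cleanly reusable; your concluding remark correctly identifies that the multiplicative control is the step that makes the padded statement true while \cref{notriineq} kills the unpadded one.
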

The proof is provided in \cref{app:coupling}.
\begin{remark}
We note that the padding is crucial, due to the example in \cref{notriineq}.
\end{remark}

\if 0
\begin{align*}
    &A=\Ber{\epsilon}\\
    &B=\Ber{\epsilon^2}\\
    &C=\Ber{\epsilon^{1/\epsilon}}
\end{align*}
\begin{align*}
    \kl{A}{B}&=\epsilon\log\left(\frac{\epsilon}{\epsilon^2}\right)+(1-\epsilon)\log\left(\frac{1-\epsilon}{1-\epsilon^2}\right)\\
    &\le\epsilon\log\left(\frac1\epsilon\right)\\
    \kl{B}{C}&=\epsilon^2\log\left(\frac{\epsilon^2}{\epsilon^{1/\epsilon}}\right)+(1-\epsilon^2)\log\left(\frac{1-\epsilon^2}{1-\epsilon^{1/\epsilon}}\right)\\
    &\le\epsilon^2\log\left(\epsilon^{2-1/\epsilon}\right)\\
    &=\epsilon^2(1/\epsilon-2)\log\left(\frac1\epsilon\right)\\
    &\le\epsilon\log\left(\frac1\epsilon\right)\\
    \kl{A}{C}&=\epsilon\log\left(\frac\epsilon{\epsilon^{1/\epsilon}}\right)+(1-\epsilon)\log\left(\frac{1-\epsilon}{1-\epsilon^{1/\epsilon}}\right)\\
    &\ge\epsilon\log\left(\epsilon^{1-1/\epsilon}\right)+\log(1-\epsilon)\\
    &\ge\epsilon(1/\epsilon-1)\log\left(\frac1\epsilon\right)-2\epsilon\\
    &\ge\frac12\log\left(\frac1\epsilon\right)\\
\end{align*}
Again, all inequalities hold for $\epsilon$ sufficiently small.
\fi
\end{Remark}

%\stodo{Is this now a description of how is \cref{kltri} proved?}

\subsection{Padding and total variation distance}\label{sec:padding}

In this and the next section we compare the behavior of the $\AlgVTest{j}$'s on the iid and permutation streams. Recall the definition of $\AlgVTest{j+1}$ from \cref{alg:V-test} in \cref{sec:alg}. We call this the $T^\text{IID}_{j+1}$ test and restate it here for completeness:

\begin{algorithm}[H]
\caption{Vertex test in the i.i.d. stream}\label{alg:tj-iid-reg}
\begin{algorithmic}[1]
\Procedure{$T^\text{IID}_{j+1}$}{$v$}
\State $S\gets0$
\For{$k=1$ to $c^j \cdot \tfrac{m}{n}$}
    \State $e\gets$ next edge in the iid stream \label{line:random-edge} \Comment{Equivalent to sampling and iid edge}
    \If{$e$ is adjacent to $v$}
        \State $w\gets$ the other endpoint of $e$
        \State $i\gets0$
        \While{$i \le j$ {\bf and} $T^\text{IID}_{i}(w)$}
            \State $S \gets S + c^{i - j}$ 
            \If{$S\ge\delta$} 
                \State \Return \false
            \EndIf
            \State $i \gets i + 1$
        \EndWhile
    \EndIf
\EndFor
\EndProcedure
\end{algorithmic}
\end{algorithm}

Similarly, define the version of the $T_{j+1}$ tests on the permutation stream, starting at position $t$, which we call $T^{\pi,t}_{j+1}$ as follows: 

\begin{algorithm}[H]
\caption{Vertex test in the permutation stream}\label{alg:tj-pi-reg}
\begin{algorithmic}[1]
\Procedure{$T^{\pi,t}_{j+1}$}{$v$}
\State $S\gets0$
\For{$k=1$ to $c^j \cdot \tfrac{m}{n}$}
    \State $e\gets$ next edge in the permutation stream\Comment{We start using the stream from the $t+1^\text{th}$ edge}
    \If{$e$ is adjacent to $v$}
        \State $w\gets$ the other endpoint of $e$
        \State $i\gets0$
        \While{$i \le j$ {\bf and} $T^\pi_{i}(w)$}
            \State $S \gets S + c^{i - j}$ 
            \If{$S\ge\delta$} 
                \State \Return \false
            \EndIf
            \State $i \gets i + 1$
        \EndWhile
    \EndIf
\EndFor
%\State $S\gets0$
%\For{$c^jm/n$ edges $e$ of the permutation stream, from time $t$}
%    \If{$e=vw$}
%        \State Run $T^\pi_{i}(w)$ for $i=1\ldots j$ until it fails.
%        \State Increment $S$ according to the result.
%        \State \If{$S\ge\delta$} \State \Return False \EndIf
%    \EndIf
%\EndFor
\State \Return True
\EndProcedure

\end{algorithmic}
\end{algorithm}
\if 0
Recall also the corresponding algorithms for approximating maximum matching size. See below the simplified algorithm.

\begin{algorithm}
\begin{algorithmic}[H]
\Procedure{Alg$^\text{IID}$}{}
    \State $M\gets0$
    \While{Stream lasts}
        \State Recover edge $e=(u,v)$ from the iid stream.
        \State Run $T^\text{IID}_i(u)$ and $T^\text{IID}_i(v)$ for $i=1\ldots$ until one fails.
        \State Increment $M$ according to the result.
    \EndWhile
    \State Divide $M$ by the number of edges tested.
\EndProcedure
\end{algorithmic}
\end{algorithm}

And similarly:

\begin{algorithm}[H]
\begin{algorithmic}[1]
\Procedure{Alg$^\pi$}{}
    \State $M\gets0$
    \While{Stream lasts}
        \State Recover edge $e=(u,v)$ from the permutation stream.
        \State Run $T^\pi_i(u)$ and $T^\pi_i(v)$ for $i=1\ldots$ until one fails.
        \State Increment $M$ according to the result.
    \EndWhile
    \State Divide $M$ by the number of edges tested.
\EndProcedure
\end{algorithmic}
\end{algorithm}

We now define recursively two other types of tests $\overline T^\text{IID}_{j+1}(v)$ and $\widetilde T^\text{IID}_{j+1}(v)$, the padded versions of $T^\text{IID}_{j+1}(v)$.
\fi

We also define recursively padded versions of $T^\text{IID}_{j+1}$ define recursively
\begin{algorithm}[H]
\caption{Padded $T_1$ test}
\begin{algorithmic}[1]
\Procedure{$\wt{T}^\text{IID}_1$}{$v$}

\State \Return $T^\text{IID}_1(v)$.
\EndProcedure
\end{algorithmic}
\end{algorithm}

\begin{algorithm}[H]\label{alg:t-bar}
\caption{Recursively padded $T_{j+1}$ tests}
\begin{algorithmic}[1]
\Procedure{$\overline{T}^\text{IID}_{j+1}$}{$v$}
\State $S\gets0$
\For{$c^jm/n$ edges $e$ of the iid stream}
    \If{$e$ is adjacent to $v$}
        \State $w\gets$ the other endpoint of $e$.
        \State $i\gets0$
        \While{$i \le j$ {\bf and} $\wt T^\text{IID}_{i}(w)$}
            \State $S \gets S + c^{i - j}$ 
            \If{$S\ge\delta$} 
                \State \Return \false
            \EndIf
            \State $i \gets i + 1$
        \EndWhile
    \EndIf
\EndFor
\State \Return True
\EndProcedure
\end{algorithmic}
\end{algorithm}

and, using \cref{def:padding},
\begin{algorithm}[H]
\caption{Recursively padded $T_{j+1}$ tests}\label{alg:t-tilde}
\begin{algorithmic}[1]
\Procedure{$\wt{T}^\text{IID}_{j+1}$}{$v$}
\State \Return \textsc{Padding}($\overline{T}^\text{IID}_{j+1}(v), \frac{200 c^{j}\log^2 n}{n}$)
\EndProcedure
\end{algorithmic}
\end{algorithm}

Note that these alternate tests are not implementable and merely serve as a tool to proving that the $T^{\pi}$ tests work similarly to the $T^\text{IID}$ tests.

We begin by proving that padding the $T^\text{IID}$ tests, even recursively, only changes the output with probability proportional to the fraction of the stream (of length $m$) consumed by the test.

\begin{lemma}\label{tvd-lemma}
\disclaimer For all $v\in V$ and $j=[0,J]$ one has
$$\tvd{T^\text{IID}_{j+1}(v)}{\widetilde T^\text{IID}_{j+1}(v)}\le\frac{400\cdot c^j\log^2 n}{n}.$$
\end{lemma}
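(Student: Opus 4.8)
\textbf{Proof plan for \cref{tvd-lemma}.}

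The plan is to prove this by induction on $j$, mirroring the recursive structure of the $\wt T^\text{IID}$ tests. Two things can cause $\wt T^\text{IID}_{j+1}(v)$ to differ from $T^\text{IID}_{j+1}(v)$: (i) the outermost $\textsc{Padding}(\cdot, \theta_j)$ wrapper with $\theta_j = 200 c^j \log^2 n / n$ (in \cref{alg:t-tilde}) flips the outcome, or (ii) one of the recursive calls to $\wt T^\text{IID}_i(w)$ for $i \le j$ returns a different answer than $T^\text{IID}_i(w)$ would have, and this propagates up. I would handle these by a union bound: $\tvd{T^\text{IID}_{j+1}(v)}{\wt T^\text{IID}_{j+1}(v)} \le \tvd{T^\text{IID}_{j+1}(v)}{\overline T^\text{IID}_{j+1}(v)} + \tvd{\overline T^\text{IID}_{j+1}(v)}{\wt T^\text{IID}_{j+1}(v)}$, where $\overline T^\text{IID}_{j+1}$ is the intermediate test of \cref{alg:t-bar} that uses recursively padded sub-tests $\wt T^\text{IID}_i$ but is itself unpadded.

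For the first term, I would couple $T^\text{IID}_{j+1}(v)$ and $\overline T^\text{IID}_{j+1}(v)$ so that they use the same iid stream; then they differ only if some recursive call disagrees. By \cref{lemma:single-test-sample-complexity}, a run of $T^\text{IID}_{j+1}(v)$ examines at most $\tau_{j+1} \le 2 c^j m/d$ edges total (counting recursive calls); more precisely, following the accounting in that proof, the number of recursive invocations of level-$i$ tests satisfies $\sum_{i \le j} c^{i-1-j}\alpha_i \le \delta$, so $\sum_i \alpha_i \tau_i \le \delta \cdot c^j m / d$ style bounds hold. For each recursive call $\wt T^\text{IID}_i(w)$ versus $T^\text{IID}_i(w)$ the disagreement probability is, by the inductive hypothesis, at most $400 c^{i-1}\log^2 n / n$ (using $d \le n$). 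Summing over all recursive calls weighted by how often each level is invoked — and crucially using that the expected number of level-$i$ calls is controlled because each contributes $c^{i-1-j}$ to the counter $S$ which is capped at $\delta$ — gives a geometric sum dominated by its top term, yielding a bound of the form $O(c^j \log^2 n / n)$. For the second term, the padding wrapper changes the output only when $\E[\overline T^\text{IID}_{j+1}(v)] \notin [\theta_j, 1-\theta_j]$; in that case the total variation cost of replacing the variable by $\Ber{\theta_j}$ (or $\Ber{1-\theta_j}$) is at most $\theta_j = 200 c^j \log^2 n / n$, simply because TVD between a Bernoulli with mean $< \theta$ and $\Ber{\theta}$ is at most $\theta$. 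Adding the two contributions and checking the constants gives the claimed $400 c^j \log^2 n / n$.

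The base case $j = 0$ is immediate: $\wt T^\text{IID}_1(v) = T^\text{IID}_1(v)$ by definition, so the TVD is $0 \le 400 \log^2 n / n$.

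The main obstacle I anticipate is the bookkeeping in the first term: bounding the \emph{expected} number of recursive level-$i$ calls inside a run of $T^\text{IID}_{j+1}(v)$, since this is a random quantity and the worst-case bound from \cref{lemma:single-test-sample-complexity} (which caps $\sum_i c^{i-1-j}\alpha_i \le \delta$ deterministically) must be translated into a bound on $\sum_i \alpha_i \cdot (\text{disagreement prob at level } i)$. The key point is that the per-level disagreement probability $\sim c^{i-1}\log^2 n/n$ scales with $c^i$ in exactly the same way as the per-call contribution $c^{i-j}$ to the capped counter, so the deterministic cap $\sum_i c^{i-1-j}\alpha_i \le \delta$ directly bounds $\sum_i \alpha_i c^{i-1} \le \delta c^j$, and hence the union bound over disagreements is $O(\delta c^j \log^2 n / n)$ deterministically — no expectation argument is actually needed. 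I would want to double-check this alignment of exponents carefully, as it is the crux of why a union bound over $\Omega(\log n)$ levels does not blow up.
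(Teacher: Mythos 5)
Your proposal is correct and follows essentially the same route as the paper: the same triangle-inequality split through $\overline T^\text{IID}_{j+1}$, the same bound of $200c^j\log^2 n/n$ for the padding step, and for the recursive step the same induction in which the per-level disagreement bound $400c^{i-1}\log^2 n/n$ is charged against the deterministic counter cap $\sum_i c^{i-1-j}\alpha_i\le\delta$ from \cref{lemma:single-test-sample-complexity}. The "alignment of exponents" you flag as the crux is exactly the observation the paper's proof relies on (there phrased as the inductive TVD bound being proportional to the sample complexity of the recursive call, with the total recursive-phase sample budget bounded with probability one), so no further repair is needed.
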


\begin{proof}

The proof follows by triangle inequality of total variation distance: We will prove the following bounds:
\begin{align}
&\tvd{T^\text{IID}_{j+1}(v)}{\overline T^\text{IID}_{j+1}(v)}\le\frac{200\cdot c^j\log^2 n}{n}\label{tvd1}\\
&\tvd{\overline T^\text{IID}_{j+1}(v)}{\widetilde T^\text{IID}_{j+1}(v)}\le\frac{200\cdot c^j\log^2 n}{n}.\label{tvd2}
\end{align}

\cref{tvd2} follows easily from definitions, since $\widetilde T^{\text{IID}}_{j+1}(v)$ is $200c^j\log^2(n)/n$-padded version of $\overline T^{\text{IID}}_{j+1}(v)$. We proceed to proving \cref{tvd1}.

Consider the following $0,1$ vector $W_{j+1}(v)$ describing the process of a $T_{j+1}(v)$ test: The first $c^jm/n$ coordinates denote the search phase; specifically we write a $1$ if a neighbor was found and a $0$ if not. The following coordinates denote the outcomes of the recursive tests, $1$ for pass $0$ for fail, in the order they were performed. There could be at most $c^j$ recursive tests performed (if they were all $T_1$'s) so $W_{j+1}(v)$ has length $c^jm/n+c^j$ in total. However, often much fewer recursive tests are performed due to the early stopping rule, or simply because too few neighbors of $v$ were found. In this case, tests not performed are represented by a $0$ in $W_{j+1}(v)$. 

Like with $T_{j+1}(v)$, we will define different versions of $W_{j+1}(v)$, namely $W^\text{IID}_{j+1}(v)$ and $\overline W^\text{IID}_{j+1}(v)$. Since $W_{j+1}(v)$ determines $T_{j+1}(v)$ we can simply bound $\tvd{W^\text{IID}_{j+1}(v)}{\overline W^\text{IID}_{j+1}(v)}$.

The first $c^jm/n$ coordinates of $W^\text{IID}$ and $\overline W^\text{IID}$ are distributed identically and contribute nothing to the divergence. Consider the test corresponding to the $i^\text{th}$ coordinate of the recursive phase of $W_{j+1}(v)$. Let the level of the test be $\ell_i\in[0,j]$ (with $0$ representing no test) and let the vertex of the test be $u_i$. These are random variables determined by $\Pi$. Furthermore, let $t_i$ be the position in the stream where the recursive $T_{\ell_i}$ test is called on $u_i$. Then we have
\begin{align*}
\tvd{T^\text{IID}_{j+1}(v)}{\overline T^\text{IID}_{j+1}(v)}&\le\tvd{W^\text{IID}_{j+1}(v)}{\overline W^\text{IID}_{j+1}(v)}\\
&=\sum_i\mathbb E_{\ell_i,u_i}\tvd{\left(W^\text{IID}_{j+1,i}(v)\middle|\ell_i,u_i\right)}{\left(\overline W^\text{IID}_{j+1,i}(v)\middle|\ell_i,u_i\right)}\\
&=\sum_i\mathbb E_{\ell_i,u_i}\tvd{T^\text{IID}_{\ell_i}(u_i)}{\wt T^\text{IID}_{\ell_i}(u_i)}\\
&\le\sum_i\mathbb E_{\ell_i}\left(\frac{400c^{\ell_i-1}\log^2 n}{n}\right),
\end{align*}
by the inductive hypothesis. Here $\left(W_{j+1,i}(v)\middle|\ell_i,u_i\right)$ denotes conditional distribution of $W_{j+1,i}(v)$ on $\ell_i$ and $u_i$.

Note that the term in the sum is proportional to the number of edges used by the corresponding recursive test. Indeed, recall by \cref{lemma:single-test-sample-complexity} that a $T_{\ell_i}$ test runs for at most $c^{\ell_i-1}\cdot\frac mn\cdot(1+2\delta)$ samples with probability one. It is crucial that this result holds with probability one, and therefore extends to not just the iid stream it was originally proven on, but any arbitrary stream of edges.

Therefore, the term in the sum above is at most $200C\cdot c\log^2(n)/m$ times the number of edges used by the corresponding recursive test. So the entire sum is at most $400\log^2(n)/m$ times the length of the recursive phase of the original $T_{j+1}$ test. It was also derived in \cref{lemma:single-test-sample-complexity} that the recursive phase itself takes at most $c^j\cdot\frac mn\cdot2\delta$ edges. (Again, the result holds with probability one.) Therefore,
$$\tvd{T^\text{IID}_{j+1}(v)}{\wt T^\text{IID}_{j+1}(v)}\le\frac{400\log^2n}{m}\cdot\frac{2\delta c^jm}{n}\le\frac{400\cdot c^j\log^2 n}{n},$$
for small enough absolute constant $\delta$. This shows \cref{tvd1} and concludes the proof.

\end{proof}

\subsection{Bounding KL-divergence}\label{sec:kl}
In this section we will bound the KL-divergence between tests on the permutation stream and padded tests on the iid stream. We will use the padded triangle inequality of \cref{kltri} as well as the data processing inequality for kl-divergence:

\begin{lemma} \emph{(Data Processing Inequality)} \label{thm:dpi}
    For any random variables $X,Y$ and any function $f$ one has  $D_{KL}(f(X)||f(Y)) \le D(X||Y)$.
\end{lemma}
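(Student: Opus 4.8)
The plan is to obtain this from the chain rule for KL-divergence (stated above) together with non-negativity of KL-divergence, via the standard ``two ways of expanding a joint distribution'' argument. Write $P, Q$ for the laws of $X, Y$ and let $P_f, Q_f$ be the laws of $f(X), f(Y)$. If $P$ is not absolutely continuous with respect to $Q$ then $\kl{X}{Y} = +\infty$ and the inequality is trivial, so assume absolute continuity.

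First I would apply the chain rule to the joint law of the pair $(X, f(X))$ versus $(Y, f(Y))$, expanded in the order ``coordinate $X$ first''. Because $f$ is a deterministic function, the conditional law of $f(X)$ given $X = x$ is the point mass at $f(x)$, and identically for $Y$; hence every conditional divergence term $\ckl{f(X)}{f(Y)}{X = x}$ is zero and the chain rule gives $\kl{(X, f(X))}{(Y, f(Y))} = \kl{X}{Y}$. Next I would expand the same joint divergence (which is unchanged under permuting the two coordinates) in the order ``coordinate $f(X)$ first'', obtaining
\[
\kl{X}{Y} = \kl{f(X)}{f(Y)} + \mathbb{E}_{z \sim P_f}\left[\ckl{X}{Y}{f(X) = z}\right].
\]
The expectation is a nonnegative quantity, since each conditional divergence is nonnegative, so dropping it yields $\kl{X}{Y} \geq \kl{f(X)}{f(Y)}$, which is exactly the claim. (Alternatively, in the discrete setting one can avoid the chain rule entirely: write $\kl{f(X)}{f(Y)} = \sum_z \bigl(\sum_{x : f(x)=z} P(x)\bigr)\log\frac{\sum_{x:f(x)=z} P(x)}{\sum_{x:f(x)=z} Q(x)}$ and apply the log-sum inequality, a consequence of convexity of $t \mapsto t\log t$, to each inner sum, bounding it above by $\sum_{x:f(x)=z} P(x)\log\frac{P(x)}{Q(x)}$; summing over $z$ recovers $\kl{X}{Y}$.)

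Since this is a textbook fact there is no genuine obstacle; the only minor care needed is the bookkeeping of conditioning directions in the chain rule and the degenerate support case noted above. No property of the algorithm or of the permutation stream enters here — the lemma is invoked purely as a black box in the KL-divergence manipulations of \cref{sec:kl}.
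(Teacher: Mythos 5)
Your proof is correct. The paper states this lemma as a standard fact and does not include a proof of its own (it is invoked as a black box in the KL-divergence arguments of the random-permutation analysis), so there is nothing to compare against; both of your arguments --- the two-way chain-rule expansion of $\kl{(X,f(X))}{(Y,f(Y))}$ using that the conditional law of $f(X)$ given $X=x$ is a point mass, and the log-sum-inequality computation in the discrete case --- are the standard textbook proofs and are sound, including your handling of the non-absolutely-continuous case.
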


\begin{lemma}\label{coupling-main-lemma}
\disclaimer With high probability over $\Pi$, for all $v\in V$, $j\le J$, $t\le m/2-2c^j\cdot\frac mn$,
\begin{equation}\label{eq:coupling}
\ckl{T^{\pi,t}_{j+1}(v)}{\widetilde T^\text{IID}_{j+1}(v)}{G^t}\le\frac{200C\cdot c^j\log^2 n}{n},
\end{equation}
where $C$ is the constant from \cref{kltri}.
\end{lemma}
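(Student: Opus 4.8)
The plan is to prove \cref{eq:coupling} by induction on $j$, mirroring the structure of \cref{lemma:single-test-sample-complexity} and \cref{tvd-lemma}. First I would fix once and for all a high-probability event $\mathcal G$ over $\Pi$ on which every residual graph $G^t$ with $t\le m/2$ is ``well-behaved'': concretely, that for every vertex $u$ the residual degree $d^t(u)$ stays within a constant factor of $d(u)$ and the number of edges incident to $u$ still in the stream is $\Theta(d(u)(1-t/m))$. A standard martingale / negative-association argument gives $\Pr[\mathcal G]\ge 1-n^{-\Omega(1)}$, and every estimate below is made conditionally on $\mathcal G$. Throughout I work with the $0/1$ trace vector $W_{j+1}(v)$ of a level-$(j+1)$ test exactly as in the proof of \cref{tvd-lemma}: its first $c^jm/d$ coordinates record whether each sampled edge is incident to $v$, and the remaining coordinates record the outcomes of the recursive calls in the order they are made. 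Since the test outcome is a fixed deterministic function of the trace, the data processing inequality \cref{thm:dpi} reduces the task to bounding $\ckl{W^{\pi,t}_{j+1}(v)}{\overline W^\text{IID}_{j+1}(v)}{G^t}$, where $\overline W^\text{IID}$ is the trace of $\overline T^\text{IID}_{j+1}$ (which, recall, makes its recursive calls to the \emph{padded} lower-level tests $\widetilde T^\text{IID}_\ell$).

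Applying the chain rule splits this divergence into a search-phase contribution (the first $c^jm/d$ coordinates) and a recursive-phase contribution. For the recursive phase, the prefix of the trace determines for the $i$-th recursive call its level $\ell_i$, its vertex $u_i$, and the stream position $t_i$ at which it is started; on $\mathcal G$ the graph $G^{t_i}$ is again well-behaved (this is where the hypothesis $t\le m/2-2c^jm/n$ is used: by \cref{lemma:single-test-sample-complexity} the test consumes at most $2c^jm/d$ further edges, so $t_i\le m/2$), and the inductive hypothesis gives $\ckl{T^{\pi,t_i}_{\ell_i}(u_i)}{\widetilde T^\text{IID}_{\ell_i}(u_i)}{G^{t_i}}\le 200C\cdot c^{\ell_i-1}\log^2 n/n$. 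Grouping by level and using that the weighted number of recursive calls satisfies $\sum_{\ell=1}^{j}c^{\ell-1-j}\alpha_\ell\le\delta$ --- precisely the inequality established inside the proof of \cref{lemma:single-test-sample-complexity}, which holds deterministically for any stream --- the recursive phase contributes at most $\delta\cdot 200C\cdot c^j\log^2 n/n$, which is at most $\tfrac12\cdot 200 c^j\log^2 n/n$ once $\delta$ is chosen with $\delta C\le\tfrac12$.

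The search phase is the heart of the argument. Here I compare, coordinate by coordinate, the probability that the next permutation-stream edge is incident to $v$ --- sampling without replacement from $G^{t+k}$ --- with the corresponding iid probability $d(v)/m$; on $\mathcal G$ these differ by $O\!\left(\tfrac{k+(\text{edges consumed by recursion so far})}{m}\right)\cdot\tfrac{d(v)}{m}$, and \cref{lm:kl-bernoulli} turns this into a per-coordinate KL bound of order $\tfrac{(\text{prefix length})^2}{m^2}\cdot\tfrac{(d(v)/m)^2}{p(1-p)}$, where $p=\Theta(d(v)/m)$ is the ``hit $v$'' probability. Since a level-$(j+1)$ test touches at most $2c^jm/d$ stream positions in total (again \cref{lemma:single-test-sample-complexity}), and since the denominators $p(1-p)$ never become tiny --- this is exactly what the padding at the lower levels, and the final $\textsc{Padding}$ step at the top, are there to guarantee --- these terms sum to something comfortably below $\tfrac12\cdot 200 c^j\log^2 n/n$. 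Combining the two phases yields $\ckl{T^{\pi,t}_{j+1}(v)}{\overline T^\text{IID}_{j+1}(v)}{G^t}\le 200 c^j\log^2 n/n$. A final application of the padded triangle inequality \cref{kltri}, with $p=\mathbb E[T^{\pi,t}_{j+1}(v)\mid G^t]$ and $q=r=\mathbb E[\overline T^\text{IID}_{j+1}(v)]$, upgrades this to $\ckl{T^{\pi,t}_{j+1}(v)}{\textsc{Padding}(\overline T^\text{IID}_{j+1}(v),200c^j\log^2 n/n)}{G^t}=\ckl{T^{\pi,t}_{j+1}(v)}{\widetilde T^\text{IID}_{j+1}(v)}{G^t}\le 200C\cdot c^j\log^2 n/n$; the hypothesis $\epsilon\le 1/32$ of \cref{kltri} holds because in the truncated algorithm $c^j\log^2 n/n$ is polynomially small.

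I expect the main obstacle to be the search-phase estimate: controlling the sampling-without-replacement bias uniformly over the random, adaptively chosen sequence of recursive stream positions, and --- crucially --- arguing that the Bernoulli denominators $p(1-p)$ produced by \cref{lm:kl-bernoulli} stay bounded away from $0$. This is exactly the difficulty that motivated introducing the padded tests in the first place, so the induction must carefully interleave the padding thresholds at each level with the coordinate-wise KL accounting; and the base case $j=0$, where $T_1^\text{IID}$ is left unpadded, will need a separate direct verification comparing the all-pass probabilities $(1-d(v)/m)^{m/d}$ and $\prod_{k<m/d}(1-d^{t+k}(v)/(m-t-k))$ using the concentration of $d^t(v)$ on $\mathcal G$.
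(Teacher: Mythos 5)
Your overall scaffolding matches the paper's: induction on $j$, reduction to a trace vector via the data processing inequality (\cref{thm:dpi}), the inductive hypothesis combined with the deterministic sample-complexity accounting $\sum_\ell c^{\ell-1-j}\alpha_\ell\le\delta$ from \cref{lemma:single-test-sample-complexity} for the recursive phase, a residual-degree concentration argument for the base case, and a final application of \cref{kltri} to absorb the padding. However, there is a genuine gap in the inductive step: you compare $T^{\pi,t}_{j+1}(v)$ to $\overline T^{\text{IID}}_{j+1}(v)$ in a \emph{single} chain-rule decomposition, handling the search coordinates and the recursive coordinates within one pass. This cannot work as described. To apply the inductive hypothesis to the $i$-th recursive call you must condition on the identity of the vertex $u_i$ being recursed on; but $u_i$ is \emph{not} determined by the $0/1$ prefix of the trace, and the two measures you are comparing induce \emph{different} conditional laws on $u_i$ (uniform over the residual neighbors $N^{t'}(v)$ versus uniform over $N(v)$). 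If you enrich the trace to record $u_i$, the search phase must then pay $\kl{U(N^{t'}(v))}{U(N(v))}=\log\bigl(d(v)/d^{G^{t'}}(v)\bigr)$, which for $t'$ up to $m/2$ is $\Theta(1)$ \emph{per hit}; with $\Theta(c^j d(v)/d)$ hits this is astronomically larger than the target $200C\, c^j\log^2 n/n$. If you do not record $u_i$, the recursive coordinate is a mixture over differently-distributed neighbors on the two sides and the inductive hypothesis does not apply to it directly; bounding that mixture would itself require chaining two KL estimates, i.e., a triangle inequality — which is exactly what fails for KL without padding.

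This tension is precisely why the paper introduces the hybrid test $T^{\chi,t}_{j+1}$ as a pivot and uses \cref{kltri} to glue two legs together. In the leg $T^{\pi,t}_{j+1}$ versus $T^{\chi,t}_{j+1}$ the search randomness is \emph{shared} (both read the permutation stream), so $(\ell_i,u_i,t_i)$ are identically distributed functions of $\Pi$ on both sides, conditioning on them only increases the divergence, and the inductive hypothesis applies cleanly to the recursive calls. In the leg $T^{\chi,t}_{j+1}$ versus $\overline T^{\text{IID}}_{j+1}$ the recursive tests are identical ($\wt T^{\text{IID}}$ on both sides), so the comparison collapses to the push-forwards of $U(E^{t_i})$ and $U(E)$ under the coarse random map $\psi(e)\in\{0,\dots,j+1\}$; only because one never conditions on the neighbor's identity, and the aggregated level probabilities $p_k$ concentrate within $\sqrt{p_k\log n/m}$, does one get the $O(\log^2 n/m)$ per-coordinate bound. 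Your per-coordinate comparison of the Bernoulli hit indicators $\Ber{d^{t+k}(v)/(m-t-k)}$ versus $\Ber{d(v)/m}$ captures neither the bias in \emph{which} neighbor is found nor the resulting bias in the level at which it fails, and your use of \cref{kltri} only with $q=r$ at the very end does not substitute for the missing pivot. To repair the proof you would essentially have to reintroduce the hybrid test and split the inductive step into the two legs above.
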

Note that the choice of $t$ is such that we guarantee the $T_{j+1}$ tests finishing before half of the stream is up, by \cref{lemma:single-test-sample-complexity}.

Naturally, we will prove the above lemma by induction on $j$. Specifically, our inductive hypothesis will be that \cref{eq:coupling} holds up to some threshold $j$. Furthermore, recall that \cref{alg:t-tilde} satisfies
$$
\wt{T}^\text{IID}_{j+1}(v)=\textsc{Padding}\left(\overline{T}^\text{IID}_{j+1}(v), \frac{200\cdot c^j\log^2 n}{n}\right).
$$
Thus, to bound $\ckl{T^{\pi,t}_{j+1}(v)}{\widetilde T^\text{IID}_{j+1}(v)}{G^t}$ we can apply \cref{kltri} directly. By setting $\Ber{p}$ to $\left(T^{\pi,t}_{j+1}(v)\middle|G^t\right)$ and $\Ber{r}$ to $\overline T^{\text{IID}}_{j+1}(v)$ we get that $\wt T^{\text{IID}}_{j+1}(v)\sim\Ber{\wt r}$. So in order to bound $\kl{\Ber{p}}{\Ber{\wt r}}$ as required by the lemma, we need only to bound $\kl{\Ber{p}}{\Ber{q}}$ and $\kl{\Ber{q}}{\Ber{r}}$.

Our $q$, the midpoint of our triangle inequality, will be the outcome of a newly defined hybrid test using both the permutation and iid streams.

\begin{algorithm}
\begin{algorithmic}[1]
\Procedure{$T^{\chi,t}_{j+1}$}{$v$}
\State $S\gets0$
\For{$k=1$ to $c^j \cdot \tfrac{m}{n}$}
     \State $e\gets$ next edge in the permutation stream\Comment{We start using the stream from the $t+1^\text{th}$ edge}
    \If{$e$ is adjacent to $v$}
        \State $w\gets$ the other endpoint of $e$
        \State $i\gets0$ \Comment{Represents the last level that $w$ passes}
        \While{$i \le j$ {\bf and} $\wt{T}^\text{IID}_{i}(w)$}
            \State $S \gets S + c^{i - j}$ 
            \If{$S\ge\delta$} 
                \State \Return \false
            \EndIf
            \State $i \gets i + 1$
        \EndWhile
    \EndIf
\EndFor
\State \Return True
\EndProcedure
\end{algorithmic}
\end{algorithm}

We will begin by bounding $\ckl{T^{\chi,t}_{j+1}(v)}{\overline T^{\text{IID}}_{j+1}(v)}{G^t}$

\begin{lemma}\label{lemma:chi-to-iid}
\disclaimer With probability $1-n^{-5}$, for all, $v\in V$, $j\le J$, $t\le m/2-2c^j\cdot\frac mn$

\begin{equation}\label{eq:chi-to-iid}
\ckl{T^{\chi,t}_{j+1}(v)}{\overline T^{\text{IID}}_{j+1}(v)}{G^t}\le\frac{200\cdot c^j\cdot\log^2n}{n}.
\end{equation}

\end{lemma}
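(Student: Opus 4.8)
The plan is to bound the divergence at the level of the \emph{search transcript} of the two tests and then invoke the data processing inequality. Restrict attention to a high‑probability event $\mathcal{T}$ over $\Pi$ (with $\mathbb{P}[\mathcal{T}]\ge 1-n^{-5}$, by Chernoff/Bernstein, \cref{lemma:chernoff}, and a union bound over all $v$ and over a fine discretization of $t\le m/2$, using monotonicity of $t\mapsto d^{G^t}(v)$) on which every residual graph $G^t$ is ``typical'': $|d^{G^t}(v)-(1-t/m)d(v)|=O(\sqrt{d(v)\log n})$, together with a handful of analogous concentration bounds used below. Now observe that $T^{\chi,t}_{j+1}(v)$ and $\overline T^{\text{IID}}_{j+1}(v)$ invoke exactly the same recursive tests $\wt T^{\text{IID}}_i$, $i\le j$ (\cref{alg:t-tilde}), whose randomness comes from the independent iid stream and is hence independent of the search stream in both processes. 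Consequently the output of either test is a deterministic function of the sequence $(Z_1,\dots,Z_K)$ with $K=c^j m/n$, where $Z_k$ is the amount the $k$‑th sampled edge adds to the accumulator $S$ ($Z_k=0$ if the edge is not incident to $v$, and $Z_k=\sum_{i=0}^{L}c^{i-j}$ for the level $L$ that the found neighbour reaches otherwise), together with a shared independent copy of the recursive randomness; since $S$ is monotone, the early‑stopping rule does not change the output bit, so ``output'' truly is a function of $(Z_k)_k$. By \cref{thm:dpi} it therefore suffices to bound $\ckl{(Z_k)_k^{\chi,t}}{(Z_k)_k^{\overline{\text{IID}}}}{G^t}$.

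Apply the chain rule for KL‑divergence along $k=1,\dots,K$ and split each conditional term into (a) the discrepancy in the probability that the $k$‑th edge is incident to $v$, and (b) the discrepancy, conditioned on incidence, in the law of the contribution $Z_k$. For (a): in the iid process this probability is $d/m$, while in the permutation process, conditioned on the past, it is $(d^{G^t}(v)-c_k)/(m-t-k+1)$, where $c_k$ counts the $v$‑incident edges seen so far. Writing $d^{G^t}(v)=(1-t/m)d+\xi$ one gets the exact identity $a-b=\big([(k-1)d/m-c_k]+\xi\big)/(m-t-k+1)$, so the ``main terms'' cancel; since $\mathbb{E} c_k\approx (k-1)d/m$ and $t\le m/2-2K$ forces $m-t-k+1\ge m/2$ (consistent with \cref{lemma:single-test-sample-complexity}), on $\mathcal{T}$ we obtain $\mathbb{E}[(a-b)^2]=O(d\log n/m^2)$, and \cref{lm:kl-bernoulli} (using $d/m\le 1/3$ as $m\ge 3n$) gives an $O(\log n/m)$ contribution per slot, hence $O(c^j\log n/n)$ in total. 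For (b): conditioned on a hit, the found neighbour is uniform over all of $N(v)$ in the iid process, but uniform over the not‑yet‑seen residual neighbours $R_k$ in the permutation process; since the already‑seen residual neighbours form an essentially uniformly random subset, the relevant contribution law $\mu_k=\tfrac1{|R_k|}\sum_{w\in R_k}\nu_w$ (with $\nu_w$ the law of the level reached by $w$ under the padded recursive tests) is an average of the $\nu_w$ over a random subset and concentrates around the full average $\mu$. Quantitatively, for each of the $O(\log n)$ possible contribution values $z$, a relative Chernoff bound gives $|\mu_k(z)-\mu(z)|=\wt O(\sqrt{\mu(z)/d})$ on $\mathcal{T}$, so $\kl{\mu_k}{\mu}=\wt O(\log n/d)$; weighted by the hit probability $\Theta(d/m)$ and summed over $K$ slots this contributes $\wt O(c^j\log^2 n/n)$. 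Adding (a) and (b) yields the claimed bound after collecting constants.

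The main obstacle is part (b), for two reasons. First, contribution values $z$ corresponding to a neighbour reaching a deep level are made extremely unlikely by the recursive padding (probability as small as $\prod_i\theta_i$ with $\theta_i=\Theta(c^{i-1}\log^2 n/n)$), so a crude $\chi^2$ term $(\mu_k(z)-\mu(z))^2/\mu(z)$ can blow up; one must instead argue that $\mu_k(z)$ is itself correspondingly small (additive Chernoff) and bound the KL term $\mu_k(z)\log(\mu_k(z)/\mu(z))$ directly — here it is essential that the padding keeps $\mu(z)$ \emph{bounded below} (by at worst $n^{-O(\log n)}$), so $\log(1/\mu(z))=O(\log^2 n)$ and these rare levels contribute only a lower‑order amount. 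Second, strictly speaking the past values $(Z_{k'})_{k'<k}$ re‑weight the posterior over which residual neighbours remain, so $\mu_k$ is not literally the average over a uniformly random subset; this is handled by noting the re‑weighting is mild — each surviving subset still carries posterior mass within a bounded factor of uniform (each remaining $\nu_w$ is pointwise $\le \frac{|R_k|+c_k}{|R_k|}\,\mu^t$) — which changes the estimates only by constants. The remaining steps (defining $\mathcal{T}$, discretizing $t$, and the constant bookkeeping) are routine.
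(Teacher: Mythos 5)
Your high-level architecture (data processing inequality on the per-sample transcript, chain rule over the $K=c^jm/n$ slots, concentration over the permutation randomness) is the same as the paper's, but your decomposition into ``hit probability'' plus ``conditional contribution law given a hit'' contains a step that fails. In part~(b) you condition only on the past transcript $(Z_{k'})_{k'<k}$, so the law of the next found neighbour is a \emph{posterior}-weighted mixture over which residual neighbours remain, and your claim that this posterior ``carries mass within a bounded factor of uniform'' is false: the transcript records the levels reached by previously found neighbours, and $\prob{\text{levels}\mid\text{identities}}=\prod_i\nu_{w_i}(\ell_i)$ varies by unbounded factors (it can vanish) across choices of identities, so the posterior can collapse onto a small set of prefixes. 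The pointwise domination in your parenthetical, $\mu_k(z)\le\frac{|R_k|+c_k}{|R_k|}\,\mu^t(z)$, is true but far too weak on its own: it only gives $\kl{\mu_k}{\mu}\le\log\bigl(1+O(c_k/|R_k|)\bigr)=O(\delta/c)$ per hit, and with up to $\delta c^j$ hits per test this totals $O(\delta c^{j-1})$, which dwarfs the target $200c^j\log^2n/n$ for large $n$. To reach the stated bound you need genuine two-sided concentration of $\mu_k(z)$ at scale $\sqrt{\mu(z)\log n/d}$, and the posterior bias blocks exactly that. A secondary issue: your fallback for deep levels ($\log(1/\mu(z))=O(\log^2 n)$, so rare levels are ``lower order'') costs extra polylog factors, so even if repaired it would prove a weaker bound than \cref{eq:chi-to-iid}, which matters downstream since the final approximation ratio in \cref{thm:rand-stream} is read off from this exponent.

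The repair is to condition on \emph{more} than the transcript, namely on the entire residual graph $G^{t_i}$ at the time of the $i$-th slot; by convexity of KL in its first argument this only increases the bound. Given the \emph{set} of already-consumed edges, the next permutation-stream edge is exactly uniform on $E^{t_i}$ regardless of the transcript, so no posterior reweighting arises at all. This is what the paper does, and it also lets the paper avoid your (a)/(b) split entirely: it compares the single categorical outcome $\psi(e)$ for $e\sim U(E^{t_i})$ versus $e\sim U(E)$ (with ``no hit'' as the $0$ category), and concentrates each category probability
$p_k^{\chi,t_i}=\sum_{e\in E}\ind{e\in G^{t_i}}\,\prob{\psi(e)=k}/(m-t_i)$
as an average over all $m$ edges of terms bounded by $2/m$. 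This yields $|p_k-\wt p_k|\le\sqrt{100\,p_k\log n/m}$ uniformly in $k$, hence a $\chi^2$-type contribution of $100\log n/m$ per category with no case analysis for rare deep levels — precisely the two places where your argument loses control. I recommend adopting the conditioning-on-$G^{t_i}$ step and the whole-edge-set concentration; your part~(a) then becomes unnecessary, since the hit/no-hit discrepancy is absorbed into the $k=0$ category.
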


\begin{proof}

%Fix $v\in V$, $j\le J$ and $t\le m/2-2c^j\cdot\frac mn$.

We will deconstruct $T_{j+1}(v)$ a little differently than before, in the proof of \cref{tvd-lemma}. Consider the following integer vector, $Y_{j+1}(v)$, describing the process of $T_{j+1}(v)$: There are $c^jm/n$ coordinates in total, with the $i^\text{th}$ coordinate corresponding to the $i^\text{th}$ edge sampled from the stream. If this edge is not adjacent to $v$ the coordinate is $0$. If it is adjacent, with its other endpoint being $u_i$, the algorithm performs higher and higher level tests on $u_i$ until it fails or the level exceeds $j$. So let the $i^\text{th}$ coordinate of $Y_{j+1}(v)$ simply denote the level $\ell_i$ at which $T_{\ell_i}(u_i)$ failed, or $j+1$ if the vertex never failed.

Like with $T_{j+1}(v)$, we will define different versions of $Y_{j+1}(v)$, namely $Y^{\chi,t}_{j+1}(v)$ and $\overline Y^\text{IID}_{j+1}(v)$. Note that $Y_{j+1}(v)$ determines $T_{j+1}(v)$ and it suffices to bound $\ckl{Y^\chi_{j+1}(v)}{\overline Y^\text{IID}_{j+1}(v)}{G^t}$ by the data processing inequality (\cref{thm:dpi}).

\begin{align*}
    \ckl{T^{\chi,t}_{j+1}(v)}{\overline T^\text{IID}_{j+1}(v)}{G^t}&\le\ckl{Y^{\chi,t}_{j+1}(v)}{\overline Y^\text{IID}_{j+1}(v)}{G^t}\\
    &=\sum_i\mathbb E_{G^{t_i}}\ckl{Y^{\chi,t}_{j+1,i}(v)}{\overline Y^\text{IID}_{j+1,i}(v)}{G^{t_i}},
\end{align*}

where $G^{t_i}$ represents the residual graph right after we sample the $i^\text{th}$ edge for the $T_{j+1}$ test (that is $i^\text{th}$ excluding edges sampled during recursion). Consider the distribution of $Y^{\chi,t}_{j+1,i}(v)$ and $\overline Y^\text{IID}_{j+1,i}(v)$. In both cases the recursive tests run would use the iid stream. Consider for each neighbor of $v$ running all iid tests $\wt T^\text{IID}_\ell$ for $\ell$ from $1$ to $j$. 

Let $\psi:E\to[0,j+1]$ be the following random mapping: If $e$ is not adjacent on $v$ then $\psi(e)=0$. If $e=(u,v)$, $\psi(e)$ is distributed as the smallest level $\ell$ at which $u$ would fail when running $\overline T^{\text{IID}}_1(u),\ldots,\overline T^\text{IID}_j(u)$, and $j+1$ if it never fails. As described above, $\overline Y^\text{IID}_{j+1,i}(v)$ is distributed as $\psi(e):e\sim\text U(E)$ and $Y^\chi_{j+1,i}(v)$ is distributed as $\psi(e):e\sim\text U(E^{t_i})$. Here $\text U$ represents the uniform distribution and $E^{t_i}$ represents the residual edge-set.
$$\mathbb E_{G^{t_i}}\ckl{Y^\chi_{j+1,i}(v)}{\overline Y^\text{IID}_{j+1,i}(v)}{G^{t_i}}=\mathbb E_{G^{t_i},\psi}\ckl{\psi(e):e\sim\text U(E^{t_i})}{\psi(e):e\sim\text U(E)}{G^{t_i}}$$

We will now bound the right hand side with high probability over $G^{t_i}$. Fix $v$ and $j$, thereby fixing the distribution of $\psi$. Define  for every $k\in[0,j+1]$
\begin{equation}\label{eq:248ggf88GFLD}
\begin{split}
p_k^{\text{IID}}(v)&:=\mathbb{P}_{e\sim\text U(E)}[\psi(e)=k]\\
p_k^{\chi,t_i}(v)&:=\mathbb{P}_{e\sim \text U(E^{t_i})}[\psi(e)=k].
\end{split}
\end{equation}

We know by Chernoff bound that for every fixed choice of $v\in V$, $k\in[0,j+1]$ and $t_i\le m/2-2c^{k-1}\cdot\frac mn$  
\begin{equation}\label{eq:239hg92h3g}
|p_k^\text{IID}(v)-p_k^{\chi,t_i}(v)|\leq \sqrt{\frac{100p_k\log n}{m}}
\end{equation}
 with probability at least $1-n^{-10}$ over the randomness of $\Pi$. Indeed
$$p_k^{\chi,t_i}(v)=\sum_{e\in E}\frac{\mathbbm1(e\in G^{t_i})\cdot\mathbb P\left(\psi(e)=k\right)}{m-t_i},$$
so $\mathbb Ep_k^{\chi, t_i}(v)=p_k^\text{IID}(v)$ and $p_k^{\chi,t_i}$ is a sum of independent variables bounded by $2/m$. Therefore, by Chernoff bounds, specifically \cref{item:less-than-1,item:lower-tail} of \cref{lemma:chernoff},
$$\prob{|p_k^{\chi,t_i}(v)-p_k^\text{IID}(v)|\ge\sqrt{\frac{100p^\text{IID}_k\log n}{m}}}\le2\exp\left(\frac{100p_k^\text{IID}\log n}{3mp_k^\text{IID}\cdot(2/m)}\right)\le n^{-10}.$$
Therefore, with probability at least $1-n^{-5}$ this bound holds for all choices of $v$, $k$ and $t_i$ simultaneously. Let us restrict our analysis to this event from now on.

In the following calculation we denote $p_k^{\chi,t_i}(v)$ by $\wt p_k$ and $p_k^\text{IID}(v)$ by $p_k$ for simplicity of notation. We have
\begin{equation*}
\begin{split}
    \ckl{\psi(e):e\sim\text U(E^{t_i})}{\psi(e):e\sim\text U(E)}{\psi,G^{t_i}}=\sum_{k=0}^{j+1}-\wt{p}_k\log\left(\frac{p_k}{\wt{p}_k}\right)=\sum_{k=0}^{j+1}-\wt{p}_k\log\left(1+\frac{p_k-\wt{p}_k}{\wt{p}_k}\right).
\end{split}
\end{equation*}
Note that because $t_i\le m/2$, $\wt p\in[0,2p_k]$, so $(p_k-\wt p_k)/\wt p_k$ is in the range $[-1/2,\infty]$. In the range $x\in[-1/2,\infty]$, $\log(1+x)$ can be lower bounded by $x-x^2$. Therefore, the above sum can be upper bounded as follows:
\begin{align*}
	 \ckl{\psi(e):e\sim\text U(E^{t_i})}{\psi(e):e\sim\text U(E)}{\psi,G^{t_i}}&\le\sum_{k=0}^{j+1}-\wt p_k\cdot\left(\frac{p_k-\wt p_k}{\wt p_k}-\frac{(p_k-\wt p_k)^2}{\wt p_k^2}\right)\\
	&=\sum_{k=0}^{j+1}\left(\wt p_k-p_k+\frac{(p_k-\wt p_k)^2}{\wt p_k}\right)\\
	&\le \sum_{k=0}^{j+1} \frac{\left(\sqrt{100\wt{p}_k\log n/m}\right)^2}{\wt{p}_k}&& \text{due to \cref{eq:239hg92h3g},}\\
	&\le \sum_{k=0}^{j+1} \frac{100\log n}{m}\\
	&\le\frac{200\log^2 n}{m}.
\end{align*}

With this we can bound the whole sum, and thus $\kl{T^\chi_{j+1}(v)}{\overline T^\text{IID}_{j+1}(v)}$.

\begin{align*}
    \kl{T^\chi_{j+1}(v)}{\overline T^\text{IID}_{j+1}(v)}&=\frac{c^jm}n\cdot\frac{200\log^2 n}m\\
    &=\frac{200\cdot c^j\log^2 n}n,
\end{align*}
as claimed.

\end{proof}

We will now proceed to bound the divergence between $T^{\pi,t}_1(v)$ and $T^\text{IID}_1(v)$. This will serve as the base case of the induction in the proof of \cref{coupling-main-lemma}.

\begin{lemma}\label{lemma:base-case}
\disclaimer With probability $1-n^{-5}$, for all $v\in V$, $t\le m/2-2m/n$
\begin{equation}\label{eq:base-case}
\ckl{T^{\pi,t}_1(v)}{T^{\text{IID}}_1(v)}{G^t}\le\frac{200C\cdot\log^2n}{n},
\end{equation}
where $C$ is the constant from \cref{kltri}.
\end{lemma}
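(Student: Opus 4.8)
The plan is to exploit that $T_1$ performs no recursion, so that both $T^{\pi,t}_1(v)\mid G^t$ and $T^{\text{IID}}_1(v)$ are Bernoulli variables and their parameters can be compared directly. Write $d(v)$ for the degree of $v$ in $G$ and $d_t(v)$ for its residual degree in $G^t$. Since $T^{\text{IID}}_1(v)$ returns \true exactly when none of its $m/n$ i.i.d.\ samples is incident to $v$, it is $\Ber{q}$ with $q=(1-d(v)/m)^{m/n}$; and conditioned on $G^t$, $T^{\pi,t}_1(v)$ returns \true exactly when none of the next $m/n$ stream edges (a uniformly random size-$m/n$ subset of the $m-t$ residual edges) is incident to $v$, so it is $\Ber{p}$ with $p=\prod_{i=0}^{m/n-1}\frac{m-t-d_t(v)-i}{m-t-i}$. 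The hypotheses $t\le m/2-2m/n$, $m\ge 3n$, $d(v)\le n$ give $m-t-i\ge m/2$ for the relevant $i$ and $d(v)/m\le 1/3$, whence $\sum_{i<m/n}\frac{d_t(v)}{m-t-i}\le \frac{2d_t(v)}{n}\le 2$; from this one reads off that $p,q\ge c_0$ for an absolute constant $c_0>0$, and $1-q\ge 1-e^{-d(v)/n}\ge d(v)/(2n)$ whenever $d(v)\ge1$. (If $d(v)=0$ both tests are deterministically \true and the divergence is $0$.)

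Next I would prove a concentration bound for residual degrees: $d(v)-d_t(v)$ is hypergeometric with mean $d(v)t/m$, hence no less concentrated than the binomial with the same mean, so \cref{lemma:chernoff} gives, for each fixed $v$ with $d(v)\ge 4\log^2 n$ and each $t$, that $|d_t(v)-d(v)(1-t/m)|\le 6\sqrt{d(v)}\log n$ with probability $\ge 1-n^{-10}$ (splitting according to whether the relative deviation is below or above $1$, i.e.\ whether $t$ is small or large); for $d(v)<4\log^2 n$ this inequality holds deterministically since then $|d_t(v)-d(v)(1-t/m)|\le d(v)\le 4\sqrt{d(v)}\log n$. A union bound over the $\le n$ choices of $v$ and $\le m\le n^2$ choices of $t$ then shows that
\[
\mathcal E:=\bigl\{\,\forall v,\ \forall t\le m/2-2m/n:\ |d_t(v)-d(v)(1-t/m)|\le 6\sqrt{d(v)}\log n\,\bigr\}
\]
has probability $\ge 1-n^{-5}$; the rest of the argument is carried out on $\mathcal E$.

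On $\mathcal E$ I would compare $\log p$ and $\log q$ termwise. With $a_i=\frac{d_t(v)}{m-t-i}$ and $b=\frac{d(v)}{m}$, substituting $d_t(v)=d(v)(1-t/m)+\eta$, $|\eta|\le 6\sqrt{d(v)}\log n$, gives $a_i-b=\frac{d(v)i+m\eta}{m(m-t-i)}$, hence $|a_i-b|\le \frac{2d(v)i}{m^2}+\frac{2|\eta|}{m}$; summing over $i<m/n$ and using $|\log(1-a_i)-\log(1-b)|\le 3|a_i-b|$ (as $a_i,b\le 2/3$) gives $|\log p-\log q|\le \frac{3d(v)}{n^2}+\frac{6|\eta|}{n}=O\!\bigl(\frac{\sqrt{d(v)}\log n}{n}\bigr)$. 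Since this is $o(1)$, $|p-q|\le 2q\,|\log p-\log q|=O\!\bigl(\frac{\sqrt{d(v)}\log n}{n}\bigr)$, so $(p-q)^2=O\!\bigl(\frac{d(v)\log^2 n}{n^2}\bigr)$. Finally, \cref{lm:kl-bernoulli} yields
\[
\ckl{T^{\pi,t}_1(v)}{T^{\text{IID}}_1(v)}{G^t}\le \frac{16(p-q)^2}{q(1-q)}\le \frac{16\,O(d(v)\log^2 n/n^2)}{c_0\, d(v)/(2n)}=O\!\Bigl(\frac{\log^2 n}{n}\Bigr),
\]
and keeping track of the absolute constants makes this at most $200C\log^2 n/n$, as required.

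The only genuine obstacle I foresee is the low-degree regime: when $d(v)$ is at most polylogarithmic, $q$ is within $O(d(v)/n)$ of $1$, so the factor $q(1-q)$ in the Bernoulli KL estimate is as small as $\Theta(d(v)/n)$, and a careless bound on $|p-q|$ would leave the estimate $\Omega(1)$ rather than $O(\log^2 n/n)$. The fix, as above, is to retain the $d(v)$-dependence of both $|p-q|$ (which is $O(\sqrt{d(v)}\log n/n)$) and $1-q$ (which is $\Theta(d(v)/n)$) so that the $d(v)$ factors cancel — which in turn forces one to obtain the residual-degree concentration with additive error $O(\sqrt{d(v)}\log n)$ \emph{uniformly in $t$}, and this is exactly why the ``hypergeometric is at most as spread out as the matching binomial'' comparison (for large $t$) plus a trivial deterministic bound (for the smallest degrees) are both needed.
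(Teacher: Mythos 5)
Your proof is correct, but it takes a genuinely different route from the paper's. The paper never computes the output probabilities of the two tests: it considers the full boolean vector $Y\in\{0,1\}^{m/n}$ of adjacency indicators, applies the data processing inequality to reduce to $\ckl{Y^\pi}{Y^{\text{IID}}}{G^t}$, and then uses the chain rule to decompose this into $m/n$ per-sample Bernoulli comparisons $\kl{\Ber{d^{t+i-1}(v)/m}}{\Ber{d(v)/m}}$, each of which is $O(\log n/m)$ after the residual-degree concentration — the $d(v)$ cancellation happens coordinatewise inside \cref{lm:kl-bernoulli}. You instead compute the exact Bernoulli parameters $p$ and $q$ of the two test outputs in closed form and make a single application of \cref{lm:kl-bernoulli} at the end, with the cancellation arranged globally between $|p-q|^2=O(d(v)\log^2n/n^2)$ and $1-q=\Theta(d(v)/n)$. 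The shared ingredients are the residual-degree concentration (uniform in $t$, via the hypergeometric-vs-binomial comparison, which the paper elides) and the union bound over $(v,t)$. What each buys: the paper's per-coordinate decomposition yields the slightly stronger $O(\log n/n)$ and, more importantly, is the template reused verbatim for the recursive tests in \cref{coupling-main-lemma} and \cref{lemma:chi-to-iid}, where no closed form for the output distribution exists; your closed-form argument is special to $T_1$ (no recursion) but is self-contained, avoids the data-processing loss, and makes explicit exactly which degree-dependence must be tracked — your closing remark about the low-degree regime is precisely the right diagnosis of why a naive $|p-q|=O(\log n/\sqrt{n})$ bound would not suffice.
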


\begin{proof}

We will deconstruct the tests similarly to the previous proof. For both types of $T_1(v)$ test consider the random boolean vector $Y\in\{0,1\}^{m/n}$ denoting whether the next $m/n$ edges in the appropriate stream are adjacent to $v$ or not. That is $Y_i=1$ if and only if the $i^\text{th}$ edge in the appropriate stream is adjacent to $v$. Let $Y^{\pi}$ and $Y^\text{IID}$ denote such random variables for $T^{\pi,t}_1(v)$ and $T^\text{IID}_1(v)$ respectively.

Because $Y$ determines the outcome of $T_1(v)$, by data processing inequality (\cref{thm:dpi}) one has 
\begin{equation}
\ckl{T^{\pi,t}_1(v)}{T^\text{IID}_1(v)}{G^t}\leq \ckl{Y^\pi}{Y^\text{IID}}{G^t}. 
\end{equation}

\begin{equation}\label{eq:10eighiHC}
\begin{split}
    \ckl{Y^\pi}{Y^\text{IID}}{G^t}&=\sum_{i=1}^{m/n}\ckl{Y^\pi_i}{Y^\text{IID}_i}{G^t,Y^\pi_1,\ldots,Y^\pi_{i-1}}\\
    &\leq \sum_{i=1}^{m/n}\ckl{Y^\pi_i}{Y^\text{IID}_i}{G^{t+i-1}}\\
    &=\sum_{i=1}^{m/n}\kl{\Ber{d^{t+i-1}(v)/m}}{\Ber{d(v)/m}}.
\end{split}
\end{equation}

Recall that $d^t(v)$ is the residual degree of $v$ at time $t$. By \cref{lm:kl-bernoulli} one has  
\begin{equation}\label{eq:239hg340gg}
\ckl{\Ber{d^{t+i-1}(v)/m}}{\Ber{d(v)/m}}{G^{t+i-1}}\leq \frac{16\left(\frac{d^{t+i-1}(v)}{m}-\frac{d(v)}{m}\right)^2}{\frac{d(v)}{m}\left(1-\frac{d(v)}{m}\right)},
\end{equation}
and hence it suffices to upper bound the squared deviation of the residual degree $d^{t+i-1}(v)$ from $d(v)$. Note that we have $\mathbb E[d^t(v)]=d(v)$ for every $v$ and $t$, and by Chernoff bounds the residual degree concentrates around its expectation. More specifically: 
\begin{equation}\label{eq:240gh43g4g}
\mathbb P\left(|d^t(v)-d(v)|\ge\sqrt{100d(v)\log(n)}\right)\leq n^{-10}.
\end{equation}

Let us constrain ourselves to the event of probability at least $1-n^{-5}$ where this bound is satisfied for every $t$ and $v$. That is one has $|d^t(v)-d(v)|\le\sqrt{100d(v)\log(n)}$.

We thus have, using \cref{eq:240gh43g4g,eq:239hg340gg} together with the fact that $1\leq d(v) \leq m/3$
\begin{equation}\label{eq:239hg9h2g}
\begin{split}
\ckl{\Ber{d^{t+i-1}(v)/m}}{\Ber{d(v)/m}}{G^{t+i-1}}&\le\frac{16\left( \sqrt{100d(v)\log n}\right)^2}{m \cdot d(v)\left(1-\frac{d(v)}{m}\right)}\\
&\le \frac{1600 \cdot d(v)\log n}{m\cdot d(v)\cdot2/3}\\
&= \frac{2400\log n}{m}.
\end{split}
\end{equation}
(Note that we assumed $d(v)\le n\le m/3$.)

Therefore,
$$\ckl{Y^{\pi}}{Y^\text{IID}}{G^t}\le \frac{2400\log n}{m}\cdot\frac mn=\frac{2400\log n}{n},$$
which is stronger than the desired bound of \cref{eq:base-case}.

\end{proof}

We are finally ready to prove \cref{coupling-main-lemma}.

\begin{proof}[Proof of \cref{coupling-main-lemma}]

First, let us constrain ourselves to the high probability event where \cref{eq:chi-to-iid,eq:base-case} hold from \cref{lemma:chi-to-iid,lemma:base-case}. As mentioned before, we will proceed by induction on $j$. Our base case is simply \cref{lemma:base-case} with a slight modification. Indeed, \cref{lemma:base-case} bounds the divergence of $T^{\pi,t}_1(v)$ and $T^{\text{IID}}_1(v)$, whereas we need to bound the divergence of $T^{\pi,t}_1(v)$ and $\wt T^{\text{IID}}_1(v)$. Note however, that $\wt T^\text{IID}_1(v)$ is simply the padding of $\overline T^\text{IID}_1(v)$ by $200\log^2(n)/n$, which is itself identical to $T^\text{IID}_1(v)$ by definition.

If $\overline T^\text{IID}_1(v)$ is not close to deterministic, the padding does nothing and the base case holds. If $T^\text{IID}_1(v)$ is close enough to deterministic that it gets padded, the divergence from $T^{\pi,t}_1(v)$ either decreases or increases to at most $\kl{\Ber{0}}{\Ber{\epsilon}}$, where $\epsilon=200\log^2(n)/n$ is the padding parameter. In this case
\begin{align*}
\ckl{T^{\pi,t}_1(v)}{\widetilde T^\text{IID}_1(v)}{G^t}&\le\kl{\Ber{0}}{\Ber{\frac{200\log^2n}{n}}}\\
&=-\log\left(1-\frac{200\log^2n}{n}\right)\\
&\le\frac{400\log^2 n}{n},
\end{align*}
which suffices. (See \cref{trifact3} from the proof of \cref{kltri} in \cref{app:coupling}.)

For, the inductive step, we will use the triangle inequality of \cref{kltri}, pivoting on $T^{\chi,t}_{j+1}(v)$. Specifically, we invoke \cref{kltri} with 
\begin{equation*}
\begin{split}
p&=\mathbb P\left(T^{\pi,t}_{j+1}(v)\text{ succeeds}\right)\\
q&=\mathbb P\left(T^{\chi,t}_{j+1}(v)\text{ succeeds}\right)\\
r&=\mathbb P\left(\overline T^\text{IID}_{j+1}(v)\text{ succeeds}\right).
\end{split}
\end{equation*} 
and $\epsilon=200c^j\log^2n/n$. Since the $\epsilon$-padding of $r$, denoted by $\wt{r}=\textsc{Padding}(r, \e)$ is exactly $\mathbb P\left(\widetilde T^{\pi,t}_{j+1}(v)\text{ succeeds}\right)$, by \cref{kltri}  it suffices to prove
\begin{equation}\label{eq:pi-2-chi}
\ckl{T^{\pi,t}_{j+1}(v)}{T^{\chi,t}_{j+1}(v)}{G^t}\leq \frac{200 c^j \log^2 n}{n}\\
\end{equation}
and 
\begin{equation}\label{eq:chi-2-iid}
\ckl{T^{\chi,t}_{j+1}(v)}{\overline T^\text{IID}(v)}{G^t} \leq \frac{200 c^j \log^2 n}{n}.
\end{equation}

\cref{eq:chi-2-iid} holds by \cref{lemma:chi-to-iid}.

{\bf Comparing $T^{\pi,t}_{j+1}(v)$ and $T^{\chi,t}_{j+1}(v)$ (establishing \cref{eq:pi-2-chi}):} We will deconstruct $T_{j+1}$ identically to what was done in the proof of \cref{tvd-lemma}, and we will use techniques for bounding the divergence similar to the ones used in the proof of \cref{tvd-lemma} for bounding the total variation distance. Recall the $0,1$ vector $W_{j+1}(v)$ describing the process of a $T_{j+1}(v)$ test: The first $c^jm/n$ coordinates denote the search phase; specifically we write a $1$ if a neighbor was found and a $0$ if not. The following coordinates denote the outcomes of the recursive tests, $1$ for pass $0$ for fail, in the order they were performed. There could be at most $c^j$ recursive tests performed (if they were all $T_1$'s) so $W_{j+1}(v)$ has length $c^jm/n+c^j$ in total. However, often much fewer recursive tests are performed due to the early stopping rule, or simply because too few neighbors of $v$ were found. In this case, tests not performed are represented by a $0$ in $W_{j+1}(v)$. 

Since $W_{j+1}(v)$ determines $T_{j+1}(v)$ we can simply bound $\ckl{W^{\pi,t}_{j+1}(v)}{W^{\chi,t}_{j+1}(v)}{G^t}$ by the data processing inequality of \cref{thm:dpi}.

The first $c^jm/n$ coordinates of $W^\pi$ and $W^\chi$ are distributed identically and contribute nothing to the divergence. Consider the test corresponding to the $i^\text{th}$ coordinate of the recursive phase of $W_{j+1}(v)$. Let the level of the test be $\ell_i\in[0,j]$ (with $0$ representing no test) and let the vertex of the test be $u_i$. These are random variable determined by $\Pi$. Furthermore, let $t_i$ be the position in the stream where the recursive $T_{\ell_i}$ test is called on $u_i$. Then we have
\begin{align*}
\ckl{T^{\pi,t}_{j+1}(v)}{T^{\chi,t}_{j+1}(v)}{G^t}&\le\ckl{W^{\pi,t}_{j+1}(v)}{W^{\chi,t}_{j+1}(v)}{G^t}\\
&=\sum_i\ckl{W^{\pi,t}_{j+1,i}(v)}{W^{\chi,t}_{j+1,i}(v)}{G^t,W^{\pi,t}_{j+1,1}(v),\ldots,W^{\pi,t}_{j+1,i-1}(v)}\\
&\le\sum_i\mathbb E_{\ell_i,u_i}\ckl{W^{\pi,t}_{j+1,i}(v)}{W^{\chi,t}_{j+1,i}(v)}{G^{t_i},\ell_i,u_i}\\
&=\sum_i\mathbb E_{\ell_i,u_i}\ckl{T^{\pi,t_i}_{\ell_i}(u_i)}{\wt T^\text{IID}_{\ell_i}(u_i)}{G^{t_i}}\\
&\le\sum_i\mathbb E_{\ell_i}\left(\frac{200C\cdot c^{\ell_i-1}\log^2 n}{n}\right),
\end{align*}
by the inductive hypothesis. Here in the third line we condition on $\ell_i$ and $u_i$, thereby only increasing the divergence.

Note that the term in the sum is proportional to the number of edges used by the corresponding recursive test. Indeed, recall by \cref{lemma:single-test-sample-complexity} that a $T_{\ell_i}$ test runs for at most $c^{\ell_i-1}\cdot\frac mn\cdot(1+2\delta)$ samples with probability one. It is crucial that this result holds with probability one, and therefore extends to not just the iid stream it was originally proven on, but any arbitrary stream of edges.

Therefore, the term in the sum above is at most $200C\log^2(n)/m$ times the number of edges used by the corresponding recursive test. So the entire sum is at most $200C\log^2(n)/m$ times the length of the recursive phase of the original $T_{j+1}$ test. It was also derived in \cref{lemma:single-test-sample-complexity} that the recursive phase itself takes at most $c^j\cdot\frac mn\cdot2\delta$ edges. (Again, the result holds with probability one.) Therefore,
$$\ckl{T^{\pi,t}_{j+1}(v)}{T^{\chi,t}_{j+1}(v)}{G^t}\le\frac{200C\log^2(n)}{m}\cdot\frac{\delta c^jm}{n}\le\frac{200\cdot c^j\log^2 n}{n},$$
for small enough absolute constant $\delta$. This shows \cref{eq:pi-2-chi} and concludes the proof.

\end{proof}

\subsection{The full algorithm}{\label{full-alg}}

We proceed to derive an algorithm for approximating the matching size of a graph in a random permutation stream. The following well-known theorem will be useful for proving correctness:

\begin{theorem}[Pinsker's Inequality]

For two distributions $P$ and $Q$,
$$\tvd{P}{Q}\le\sqrt{2\log_2e\cdot\kl{P}{Q}}.$$

\end{theorem}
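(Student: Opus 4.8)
The plan is the classical two–step argument: first reduce to Bernoulli distributions using the data processing inequality \cref{thm:dpi}, then dispatch the Bernoulli case by a one–variable calculus argument, and finally observe that the constant $2\log_2 e$ demanded by the statement is in fact weaker than what the argument yields.

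\textbf{Step 1: reduction to the Bernoulli case.} If $\kl{P}{Q}=\infty$ the inequality is vacuous, so I may assume $P\ll Q$. Put $A\eqdef\{x:P(x)\ge Q(x)\}$ and let $f\eqdef\mathbbm1_A$ be its indicator. Then $f$ pushes $P$ forward to $\Ber{P(A)}$ and $Q$ forward to $\Ber{Q(A)}$, so \cref{thm:dpi} gives $\kl{P}{Q}\ge\kl{\Ber{P(A)}}{\Ber{Q(A)}}$. On the other hand $A$ is exactly the event maximizing the gap $P(\cdot)-Q(\cdot)$ — adding a point $x$ to an event increases that gap iff $P(x)\ge Q(x)$ — so $\tvd{P}{Q}=\sup_B|P(B)-Q(B)|=P(A)-Q(A)=\tvd{\Ber{P(A)}}{\Ber{Q(A)}}$ (the degenerate cases $Q(A)\in\{0,1\}$ force $P(A)=Q(A)$ and are trivial). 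Hence it suffices to prove, for $0<q\le p<1$,
\begin{equation}\label{eq:pinsker-bernoulli-goal}
(p-q)^2\le\tfrac12\,\kl{\Ber p}{\Ber q}.
\end{equation}

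\textbf{Step 2: the Bernoulli case.} Fix $p$ and consider $\phi(q)\eqdef\kl{\Ber p}{\Ber q}-2(p-q)^2=p\ln\frac pq+(1-p)\ln\frac{1-p}{1-q}-2(p-q)^2$ as a function of $q\in(0,1)$. Then $\phi(p)=0$, and
\[
\phi'(q)=-\frac pq+\frac{1-p}{1-q}+4(p-q)=\frac{q-p}{q(1-q)}+4(p-q)=(p-q)\Bigl(4-\frac1{q(1-q)}\Bigr).
\]
Since $q(1-q)\le\tfrac14$ for every $q\in(0,1)$, the factor $4-1/(q(1-q))$ is nonpositive, so $\phi'(q)$ has the sign opposite to $p-q$: $\phi$ is nonincreasing on $(0,p)$ and nondecreasing on $(p,1)$. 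Therefore $\phi(q)\ge\phi(p)=0$ for all $q$, which is precisely \cref{eq:pinsker-bernoulli-goal}.

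\textbf{Step 3: conclusion and constant.} Chaining the two steps, $\tvd{P}{Q}=P(A)-Q(A)\le\sqrt{\tfrac12\,\kl{\Ber{P(A)}}{\Ber{Q(A)}}}\le\sqrt{\tfrac12\,\kl{P}{Q}}$. With the natural–logarithm convention for KL-divergence used throughout the paper, $2\log_2 e=2/\ln2>\tfrac12$, so this implies the stated bound $\tvd{P}{Q}\le\sqrt{2\log_2 e\cdot\kl{P}{Q}}$; under the base–$2$ convention the constant only grows, so the bound holds a fortiori. I do not anticipate a real obstacle: the only points needing care are verifying that the single set $A$ simultaneously realizes the total variation distance and shrinks the KL-divergence (the former is elementary, the latter is \cref{thm:dpi}), and tracking the logarithm base so that the prescribed constant $2\log_2 e$ is seen to dominate the sharp constant $\tfrac12$.
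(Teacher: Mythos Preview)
The paper does not prove this statement; it is cited as a well-known result and used as a black box in \cref{full-alg}. Your proof is the standard two-step argument (reduce to Bernoullis via the data processing inequality, then a one-variable calculus computation) and is correct; in fact you establish the sharp constant $\tfrac12$ under the paper's natural-log convention, which is stronger than the $2\log_2 e$ appearing in the statement.
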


Recall the $\AlgETest$ from \cref{alg:E-test} in \cref{sec:alg}. We will run a similar edge test on the permutation stream, with one crucial difference: Our original algorithm from \cref{sec:alg} uses $\Theta(m)$ samples from the stream. Since our bound on the divergence between the iid and permutation variants of $\AlgVTest{j}$ scales with $\log^2 n$ times the sample complexity, it would grow past constant if we were to adapt our iid algorithm as is. Therefore, we will constrain the algorithm to only use a $\log^2 n$ fraction of the available stream. Specifically $\AlgETest$ will only calculate the level of an edge up to $J-2\log_c(\log n)$, as opposed to $J$. We will call this edge test $E^\text{IID}(e)$.

\begin{algorithm}[H]
\caption{Given an edge $e$, this algorithm returns a fractional matching-weight of $e$. \label{alg:E-test-truncated}}
\begin{algorithmic}[1]
\Procedure{$E^\text{IID}$}{$e=(u,v)$}
\State $w \gets 1 / n$ %\label{init-w}
\For{$i=1$ {\bf to} $J-2\log_c(\log n)$}\Comment{Recall that $J = \lfloor\log_cn\rfloor-1$}
    \If{$\AlgVTest{i}(u)$ {\bf and} $\AlgVTest{i}(v)$}\
        \State $w \gets w + c^i/n$ %\label{line:increase-M(e)}
    \Else
        \State \Return $w$ %\label{return1}
    \EndIf
\EndFor
\State \Return $w$ %\label{return2}
\EndProcedure
\end{algorithmic}
\end{algorithm}

As in the previous section we define the edge level test $E(e)$ for the permutation stream as well, namely $E^{\pi,t}(e)$. We also define $\widetilde{E}^\text{IID}$ using the padded $\widetilde T^\text{IID}_i$ tests in place of $\AlgVTest{i}$.

Although the truncated edge test $E^{\text{IID}}$ is not as powerful as the untruncated version, it is a $\Theta(\log^2 n)$-factor estimator for the matching number of the input graph, $\mm{G}$.

\begin{corollary}\label{log2approx}
For all $c$ large enough, there exists $\delta>0$ such that the following holds. For all $G=(V,E)$ and an edge $e\in E$, let $E^\text{IID}(e)$ denote the value returned by \cref{alg:E-test-truncated}. Then,
\begin{align*}
&\sum_{e\in E}E^\text{IID}(e)=O(\mm{G})\\
&\sum_{e\in E}E^\text{IID}(e)=\Omega\left(\frac{\mm{G}}{\log^2n}\right).
\end{align*}
\end{corollary}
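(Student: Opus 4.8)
The plan is to deduce \cref{log2approx} directly from \cref{thm:alledge-correctness} by relating the truncated edge test $E^\text{IID}$ to the (untruncated) pseudo-matching $\wh M$ analyzed in \cref{correctness-algo}; throughout, both sums are understood in expectation, exactly as in \cref{thm:alledge-correctness} (indeed the upper bound cannot hold deterministically). Write $J'\eqdef J-2\log_c(\log n)$ for the truncated depth, so that $J-J'=2\log_c(\log n)$, hence $c^{J-J'}=\log^2 n$; using $J=\lfloor\log_c n\rfloor-1$, for $n$ large enough we have $J'\ge 1$. The first step is to observe, by the same reasoning used in the proof of \cref{thm:alledge-correctness}, that the marginal law of $E^\text{IID}(e)$ on an edge $e=(u,v)$ coincides with that of the \emph{truncated} pseudo-matching weight $\wh M^{\mathrm{tr}}(u,v)\eqdef\sum_{i=0}^{\min\{\wh L(u),\wh L(v),J'\}}c^i/n$, where $\wh L(u),\wh L(v)$ are independent samples of the level distribution \cref{eq:definition-hLv}: the only change from \AlgETest\ is that the loop stops at $i=J'$, and the per-level tests consume disjoint fresh segments of the stream, so the marginal-distribution argument carries over verbatim. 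It therefore suffices to compare $\sum_{e}\ee{\wh M^{\mathrm{tr}}(e)}$ with $\sum_{e}\ee{\wh M(e)}=\Theta(\mm{G})$.

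The upper bound is then immediate: truncation only deletes nonnegative terms, so $\wh M^{\mathrm{tr}}(e)\le\wh M(e)$ pointwise, and summing and taking expectations gives $\sum_{e}\ee{E^\text{IID}(e)}\le\sum_{e}\ee{\wh M(e)}=O(\mm{G})$ by \cref{thm:alledge-correctness}.

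For the lower bound I would establish the elementary geometric-series estimate that for every integer $0\le\ell\le J$,
\[
\sum_{i=0}^{\min\{\ell,J'\}}c^i\;\ge\;\frac{1}{2\log^2 n}\sum_{i=0}^{\ell}c^i .
\]
Indeed $\sum_{i=0}^{\min\{\ell,J'\}}c^i\ge c^{\min\{\ell,J'\}}$, while $\sum_{i=0}^{\ell}c^i\le 2c^{\ell}$ for $c\ge 2$, and $\ell-\min\{\ell,J'\}\le J-J'=2\log_c(\log n)$ since $\ell\le J$; dividing gives a ratio of at least $c^{-2\log_c(\log n)}/2=1/(2\log^2 n)$. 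Applying this with $\ell=\min\{\wh L(u),\wh L(v)\}\le J$ yields $\wh M^{\mathrm{tr}}(e)\ge\frac{1}{2\log^2 n}\wh M(e)$ pointwise, and hence $\sum_{e}\ee{E^\text{IID}(e)}\ge\frac{1}{2\log^2 n}\sum_{e}\ee{\wh M(e)}=\Omega(\mm{G}/\log^2 n)$.

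There is no substantial obstacle here — the corollary is essentially a one-line consequence of \cref{thm:alledge-correctness} together with the estimate above — but three points deserve a sentence of care. First, spelling out that $E^\text{IID}(e)$ really has the marginal law of $\wh M^{\mathrm{tr}}(e)$, which mirrors the corresponding step in the proof of \cref{thm:alledge-correctness}. Second, the quantity $2\log_c(\log n)$ must be rounded to an integer in the loop bound of \cref{alg:E-test-truncated}; this perturbs the truncation depth by at most one, hence all bounds by at most an absolute constant factor, and we need $n$ large enough that $J'\ge 1$. Third, since \cref{thm:alledge-correctness} holds only ``for sufficiently small $\delta>0$ and large enough $c$'', the same restriction on $\delta$ and $c$ is inherited, which is precisely the hypothesis stated in \cref{log2approx}.
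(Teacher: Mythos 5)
Your proposal is correct and takes essentially the same route as the paper: the paper's own proof simply observes that truncation can only lower an edge's assigned level, by at most $2\log_c(\log n)$ levels, so $E^\text{IID}(e)$ is never larger than $M_e$ and at most a $\Theta(\log^2 n)$ factor smaller. Your version just fills in the details the paper leaves implicit (the marginal-law identification with the truncated pseudo-matching and the geometric-series ratio bound), both of which check out.
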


\begin{proof}
Recall that \cref{thm:alledge-correctness} guarantees that 
$$\sum_{e\in E}M_e=\Theta(\mm{G}),$$
where $M_e$ is the is the output of the untruncated edge level test $\AlgETest$ from \cref{alg:E-test}. Note that, by stopping $2\log_c(\log n)$ levels early, $E^\text{IID}$ may misclassify edges by assigning them to levels up to $2\log_c(\log n)$ levels lower, but never misclassifies them by putting them higher. Therefore, $E^\text{IID}(e)$ is no greater than $M_e$ and at most $\Theta(\log^2 n)$ factor lower.
\end{proof}

Finally, recall \cref{alg2}, our main algorithm for estimating the matching size.

\begin{algorithm}[H]
\caption{\cref{alg2} estimating the value of $\mm{G}$. \label{alg2-restated}}

\begin{algorithmic}[1]
\Procedure{\AlgIID}{$G=(V,E)$}
\State $M'\gets0$
\State $s\gets1$ 
\While{samples lasts} \label{alg2:while}
    \State $M'\gets \AlgSample(G,s)$ 
    \State $s \gets 2 s$
\EndWhile
\State \Return $M'$
\EndProcedure
\\

\Procedure{\AlgSample}{$G=(V,E),s$} 
\State $M' \gets 0$
\For{$k=1$ {\bf to} $s$}
    \State $e\gets$ \text{ iid edge from the stream} 
    \State $M' \gets M' + \AlgETest(e)$
\EndFor
\State \Return $m \cdot \tfrac{M'}{s}$
\EndProcedure
\end{algorithmic}
\end{algorithm}

Note that the variable denoting the number of edges sampled in \textsc{Sample}, originally $t$, has been changed to $s$ here so as to not be confused with the variable used to denote our position in the stream.

We will now describe the permutation variant, which can be used to approximate the maximum matching size to an $O(\log^2 n)$ factor, in a random permutation stream with $O(\log^2 n)$ bits of space.

\begin{definition}
Let \textsc{Permutation-Peeling}$(G=(V,E))$ be a variant of \AlgIID that uses the permutation stream in \cref{line:SAMPLE-sampling-edge} and the subroutine $E^{\pi,t}(e)$ in \cref{line:etest}. Furthermore, it should only continue until a $\Theta(1/\log^2 n)$ fraction of the stream is exhausted, as opposed to \cref{alg2} which exhausts the entire $\Theta(m)$ sized stream in \cref{alg2:while}.
\end{definition}

We define further variants of \cref{alg2-restated} that will be useful in the proof of correctness of \textsc{Permutation-Peeling}.

\begin{definition}
We define the following three variants of \cref{alg2-restated}:
\begin{itemize}
	\item \textsc{Truncated-IID-Peeling}$(G=(V,E))$ uses the iid stream in \cref{line:SAMPLE-sampling-edge} and $E^\text{IID}$ in \cref{line:etest}.
	\item \textsc{Hybrid-Peeling}$(G=(V,E))$ uses the permutation stream in \cref{line:SAMPLE-sampling-edge} and $E^\text{IID}$ in \cref{line:etest}.
	\item \textsc{Padded-Peeling}$(G=(V,E))$ uses the permutation stream in \cref{line:SAMPLE-sampling-edge} and $\wt E^\text{IID}$ in \cref{line:etest}.
\end{itemize}
All three algorithms terminate after exhausting a $\Theta(1/\log^2 n)$ fraction of the stream, like \textsc{Permutation-Peeling}.
\end{definition}

\begin{theorem} \label{thm:rand-stream}
\disclaimer For any graph $G=(V,E)$, \textsc{Permutation-Peeling} (\cref{alg2-restated}) with $3/4$ probability outputs a $O(\log^2n)$ factor approximation of $\mm{G}$ by using a single pass over a randomly permuted stream of edges.
\end{theorem}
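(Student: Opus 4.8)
The plan is to show that the output of \textsc{Permutation-Peeling} has a distribution within small total variation distance of that of a purely i.i.d.\ algorithm that we first verify is correct, namely \textsc{Truncated-IID-Peeling} run under the same $\Theta(m/\log^2 n)$ sample cutoff. So the first step is to establish that \textsc{Truncated-IID-Peeling}, cut off after $\epsilon_0 m/\log^2 n$ samples for a sufficiently small constant $\epsilon_0$, outputs an $O(\log^2 n)$-factor approximation of $\mm{G}$ with probability at least $4/5$. This is essentially a rerun of the proof of \cref{theorem:sample-complexity-of-algIID} with \cref{log2approx} in place of \cref{thm:alledge-correctness}: writing $\mu \eqdef \eetwo{e\sim U(E)}{E^\text{IID}(e)}$, \cref{log2approx} gives $\mu m = \Omega(\mm{G}/\log^2 n)$ and $\mu m = O(\mm{G})$, so a constant-factor estimate of $\mu m$ is an $O(\log^2 n)$-approximation of $\mm{G}$; and since the truncation forces $E^\text{IID}(e) \le 2/(c\log^2 n)$ for every $e$, the variance estimate improves to $\var{E^\text{IID}(e)} \le \tfrac{2}{c\log^2 n}\,\mu$, so Chebyshev's inequality shows $s^\star = \Theta\big(1/(c\mu\log^2 n)\big)$ edge tests suffice to estimate $\mu$ up to a constant factor with probability $9/10$; by the truncated analogue of \cref{lemma:algetest-output-and-sample-complexity} these tests consume at most $4\mu s^\star m = O(m/(c\log^2 n))$ samples in expectation, so Markov's inequality together with $\epsilon_0 = \Theta(1/c)$ gives that the cutoff is not reached with probability $9/10$. (Note $\epsilon_0 m/\log^2 n < m/2 - 2c^j m/n$ since $m\ge 3n$, so the hypotheses of \cref{coupling-main-lemma,lemma:base-case} hold throughout.)

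Next I would couple \textsc{Truncated-IID-Peeling} to \textsc{Permutation-Peeling} through the chain \textsc{Truncated-IID-Peeling} $\to$ \textsc{Hybrid-Peeling} $\to$ \textsc{Padded-Peeling} $\to$ \textsc{Permutation-Peeling}, all run with the common cutoff, bounding the total variation distance of the induced output distributions by a small constant at each step. For \textsc{Hybrid-Peeling} $\to$ \textsc{Padded-Peeling} (which only replaces each internal $\AlgVTest{i}$ by its recursively padded version $\wt T^\text{IID}_i$, leaving the permutation top-level sampling untouched), a hybrid argument over the $s^\star$ edge tests, using \cref{tvd-lemma} and amortizing each test's total-variation contribution against the samples it consumes exactly as in the proof of \cref{tvd-lemma}, bounds the total variation distance by $O(\log^2 n / m)$ times the total number of samples consumed, i.e.\ $O(\epsilon_0)$. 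For \textsc{Padded-Peeling} $\to$ \textsc{Permutation-Peeling} (which replaces the padded internal i.i.d.\ tests by genuine permutation-stream tests $T^{\pi,t}_i$), the chain rule for KL-divergence applied level by level over the recursion tree, taking expectations over the residual graphs $G^t$, together with \cref{coupling-main-lemma}, bounds the KL-divergence between the two traces by the same $O(\log^2 n/m)\cdot(\text{samples consumed}) = O(\epsilon_0)$, which Pinsker's inequality converts to total variation distance $O(\sqrt{\epsilon_0})$. Finally, for \textsc{Truncated-IID-Peeling} $\to$ \textsc{Hybrid-Peeling} (which only changes the top-level edge source from i.i.d.\ to a prefix of the permutation), I would argue as in \cref{lemma:chi-to-iid}: the value returned by $E^\text{IID}$ on an edge depends on the edge only through one of $\poly\log n$-many ``type'' classes, so data processing reduces the comparison to the residual edge-class frequencies, which by a Chernoff bound deviate from the true frequencies by $O(\sqrt{p_k \log n/m})$, yielding per-edge KL-divergence $O(\log^2 n/m)$ and hence, over the $s^\star = O(m/\log^2 n)$ top-level edges, total KL-divergence $O(1/c)$, again converted via Pinsker.

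Putting the pieces together, the output distributions of \textsc{Truncated-IID-Peeling} and \textsc{Permutation-Peeling} under the common cutoff differ in total variation by at most $1/20$ once $\epsilon_0$ (equivalently $1/c$) is chosen small enough, also absorbing the $n^{-\Omega(1)}$-probability bad events over $\Pi$ from \cref{coupling-main-lemma,lemma:chi-to-iid,lemma:base-case}. Since ``the output is an $O(\log^2 n)$-approximation of $\mm{G}$'' is a function of the output alone, its probability under \textsc{Permutation-Peeling} is at least $4/5 - 1/20 > 3/4$. For the space bound, exactly as in \cref{theorem:upper-bound}, every procedure keeps $O(1)$ counters and the recursion depth is $O(\log n)$, so $O(\log^2 n)$ bits suffice, and the algorithm makes a single pass over the permutation stream.

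The main obstacle is not a new idea but the bookkeeping in the middle step: one must set up the chain-rule decomposition of KL-divergence over the random recursion tree of the edge tests while conditioning on the residual graphs $G^t$ at the correct moments, and — crucially — route the argument through the padded tests $\wt T^\text{IID}$ and \cref{kltri} rather than chaining KL-divergences directly, since, as \cref{notriineq} shows, KL-divergence violates even an approximate triangle inequality when the distributions involved are near-deterministic, as is the case for the high-level $\AlgVTest{j}$ outcomes. The second delicate point, already forced on us, is keeping the total consumed-sample count at $O(m/\log^2 n)$: this is precisely why the edge test is truncated in \cref{alg:E-test-truncated}, and it requires re-checking — as in the first step above — that the weaker estimator of \cref{log2approx} still concentrates within the reduced budget.
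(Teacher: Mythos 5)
Your proposal is correct and follows essentially the same route as the paper: establish correctness of \textsc{Truncated-IID-Peeling} by rerunning the proof of \cref{theorem:sample-complexity-of-algIID} with \cref{log2approx} and the improved variance bound $\var{E^\text{IID}(e)}\le\tfrac{2}{c\log^2 n}\mu$, then couple through the hybrid chain \textsc{Truncated-IID} $\to$ \textsc{Hybrid} $\to$ \textsc{Padded} $\to$ \textsc{Permutation}, amortizing the per-test distances of \cref{tvd-lemma} and \cref{coupling-main-lemma} against samples consumed and finishing with Pinsker. The one place you diverge is the \textsc{Truncated-IID-Peeling} $\to$ \textsc{Hybrid-Peeling} step: the paper disposes of it in one line by observing that moving the \emph{top-level} edge sampling from i.i.d.\ to a permutation prefix is just sampling without replacement, which preserves the mean and only decreases the variance of the empirical average, so the Chebyshev argument is unaffected; you instead rerun the \cref{lemma:chi-to-iid}-style data-processing/Chernoff argument on residual type frequencies to get a $O(1/c)$ KL bound. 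Your version is valid and has the virtue of being uniform with the other two coupling steps, at the cost of being heavier than necessary and of adding another $n^{-\Omega(1)}$ bad event over $\Pi$ to the union bound.
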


\begin{proof}

{\bf Correctness of \textsc{Truncated-IID-Peeling}.} We will first prove the same claim for \textsc{Truncated-IID-Peeling}. Recall first the proof of \cref{theorem:sample-complexity-of-algIID} from \cref{sec:sample-complexity}. Let $M_e$ denote the outcome of $\AlgETest(e)$ and $\mu=\mathbb E_{e\sim U(E)}[M_e]$. Recall that the proof hinges on showing that $\mu$ is within a constant factor of $\mm{G}$. Also $M_e:e\sim U(E)$ has variance at most $2\mu/c$, so $\Theta(1/\mu c)$ independent edge tests suffice to be able to bound the deviation from the mean powerfully enough using Chebyshev's inequality. Furthermore, we know that in expectation, $\AlgETest(e):e\sim U(E)$ takes samples proportional to the number $\mu\cdot m$ (see \cref{lemma:algetest-output-and-sample-complexity}). Putting all this together we get that with constant probability the output of \textsc{Sample}$(G,s)$ is within a constant factor of $\mm{G}$ for large enough $s$, and such a call of \textsc{Sample} fits into $O(m)$ edges from the stream.

We will have a very similar proof for the correctness of \textsc{Truncated-IID-Peeling}. Indeed, consider the random variable $S=E^{\text{IID}}(e):e\sim U(E)$. Let $\mathbb ES=\overline\mu$. In \cref{log2approx} we have shown that $\overline\mu$ is a $\Theta(\log^2 n)$-factor approximation of $\mm{G}$. Furthermore, $S$ is bounded by $\frac{2}{c\log^2 n}$, so its variance is at most $\frac{2\overline\mu}{c\log^2 n}$. By Chebyshev's inequality $s=\Theta(\frac{1}{c\overline\mu\log^2 n})$ edge tests suffice to get an accurate enough empirical mean. This many edge tests take $s\cdot\overline\mu\cdot m=\Theta\left(\frac{m}{c\log^2 n}\right)$ samples, as desired. If the algorithm doesn't terminate within $\frac{mR}{c\log^2 n}$ we consider it to have failed. This happens with probability less than $1/10$ for some large absolute constant $R$.

{\bf Correctness of \textsc{Hybrid-Peeling}.} The only difference between \textsc{Truncated-IID-Peeling} and \textsc{Hybrid-Peeling} is that the latter we use the permutation stream for sampling edges to run $E^{\text{IID}}$ on. This guarantees no repetitions which only improves our variance bound and does not hurt the previous proof.

{\bf Comparing \textsc{Hybrid-Peeling} and \textsc{Padded-Peeling}.} Note that these two algorithms differ only in the type of vertex level tests they use: Specifically, \textsc{Hybrid-Peeling} uses $T^\text{IID}$ while \textsc{Padded-Peeling} uses $\widetilde T^\text{IID}$. By \cref{tvd-lemma} for all $v\in V$ and $j\in[0,J]$
$$\tvd{T^\text{IID}_{j+1}(v)}{\widetilde T^\text{IID}_{j+1}(v)}\le\frac{400\cdot c^j\log^2 n}{n}.$$
Note that this is proportional (with multiplicative factor $\frac{200\log^2 n}{m}$) to the sample complexity of the corresponding test. The output of the main peeling algorithm depends only on the outputs of vertex level tests called directly from edge level tests, which are disjoint (in the samples they use). Furthermore, the entire algorithm takes at most $\frac{mR}{c\log^2 n}$ samples, so the total variation distance between the outputs of \textsc{Hybrid-Peeling} and \textsc{Padded-Peeling} is at most
$$\frac{200\log^2n}{m}\cdot\frac{mR}{c\log^2 n}=\frac{200R}{c}.$$
For more details on this proof technique see the proofs of \cref{tvd-lemma,coupling-main-lemma}.

{\bf Comparing \textsc{Padded-Peeling} and \textsc{Permutation-Peeling}.} Note that these two algorithms again differ only in the type of vertex level test they use: Specifically, \textsc{Padded-Peeling} uses $\widetilde T^\text{IID}$ while \textsc{Permutation-Peeling} uses $T^{\pi,t}$. By \cref{coupling-main-lemma}, with high probability, for all $v\in V$, $j\le J$, $t\le m/2-2c^j\cdot\frac mn$,
\begin{equation}\label{eq:alg-coupling}
\ckl{T^{\pi,t}_{j+1}(v)}{\widetilde T^\text{IID}_{j+1}(v)}{G^t}\le\frac{200C\cdot c^j\log^2 n}{n},
\end{equation}
where $C$ is the constant from \cref{kltri}. Note that this is proportional (with multiplicative factor $\frac{100C\log^2n}{m}$) to the sample complexity of the corresponding test. Again we note that the output of the main peeling algorithm is a function of the outputs of the vertex level tests directly called from edge level tests, which are disjoint (in the samples they use). Furthermore, since the entire algorithm takes at most $\frac{mR}{c\log^2n}$ samples, the divergence between the outputs of \textsc{Padded-Peeling} and \textsc{Permutation-Peeling} is at most
$$\frac{100\log^2n}{m}\cdot\frac{mR}{c\log^2 n}=\frac{100R}{c}.$$
This translates to a total variation distance of at most $\sqrt{\frac{200R\log_2e}{c}}$ by Pinsker's inequality. Again, for more details on this proof technique see the proofs of \cref{tvd-lemma,coupling-main-lemma}.

In conclusion, \textsc{Permutation-Peeling} returns a $\log^2 n$-factor approximation of $\mm{G}$ with probability at least $4/5-\frac{200R}{c}-\sqrt{\frac{200R\log_2e}{c}}\ge3/4$ for large enough $c$.
\end{proof}

From here the proof of the main theorem follows.

\begin{proof}[Proof of \cref{thm:alg-pi}]
The proof follows from \cref{thm:rand-stream} and the fact that  \cref{alg2-restated} has recursion depth of $O(\log{n})$, where each procedure in the recursion maintains $O(1)$ variables, hence requiring $O(\log{n})$ bits of space. Therefore, the total memory is $O(\log^2 n)$.	
\end{proof}

\section*{Acknowledgments}
Michael Kapralov is supported in part by ERC Starting Grant 759471. Slobodan~Mitrovi{\' c} was supported by the Swiss NSF grant P2ELP2\_181772 and MIT-IBM Watson AI Lab. Jakab Tardos is supported by ERC Starting Grant 759471.

\bibliographystyle{alpha}
\bibliography{bibliography}

\appendix
%!TEX root = ./000-main.tex
\section{Proof of \cref{lemma:prob-of-Mv-and-Lv}}
\label{app:concentration-Mv}
In this section we prove the following result, stated in \cref{sec:analysis}.
\concentrationMvlemma*
\begin{proof}

We begin by rewriting the LHS and the RHS of \cref{eq:mhat-uppertail}.
%\prob{\wh{M}(v)\ge x\ \wedge\ \wh{L}(v)=j+1}\le\frac{5c}{x^2} \cdot \ee{\wh{M}(v) \ind{\wh{L}(v)=j}}
\paragraph{Rewriting the LHS of \cref{eq:mhat-uppertail}} 
Observe that
\[
	\prob{\wh{M}(v)\ge x\ | \wh{L}(v)=j+1} \stackrel{\text{\cref{obs:Mj-and-hLj-independent}~\eqref{item:Mv-and-MLv}}}{=} \prob{\wh{M}_{j+1}(v)\ge x\ | \wh{L}(v)=j+1}.
\]
First observe that $\prob{\wh{M}(v)\ge x\ | \wh{L}(v)=j+1}  = \prob{\wh{M}_{j+1}(v)\ge x\ | \wh{L}(v)=j+1}$, since  $\wh{M}_{j+1}(v) = \wh{M}(v)$ conditioned on $\wh{L}(v)=j+1$ by definition. We thus have
\begin{equation}\label{eq:9h4g9hg}
\begin{split}
\prob{\wh{M}(v) \ge x\ \wedge\ \wh{L}(v)=j+1} = & \prob{\wh{M}_{j+1}(v)\ge x\ \wedge\ \wh{L}(v)=j+1}\\
 \stackrel{\text{\cref{obs:Mj-and-hLj-independent}~\eqref{item:obj-Mjv-and-hLv-independent}}}{=} & \prob{\wh{M}_{j+1}(v)\ge x} \prob{\wh{L}(v)=j+1}\\
 = & \prob{\wh{M}_{j+1}(v)\ge x}\prob{v\in\wh{V}_j}\prob{v\in\wh{V}_{j+1}|v\in\wh{V}_j}\prob{v\not\in\hV_{j+2}|v\in\hV_{j+1}}\\
 = & \prob{\wh{M}_{j+1}(v)\ge x}\prob{v\in\wh{V}_j}\prob{S_j(v)<\delta}\prob{S_{j+1}(v) \ge \delta},
 \end{split}
\end{equation}
where $S_j(v)$ is as defined in \cref{eq:definition-Sjv}. (When $j=J$, $\prob{v\not\in\wh V_{j+2}}$ is simply $1$.)
\paragraph{Rewriting the RHS of \cref{eq:mhat-uppertail}}
We have
\[
	\expected{\wh{M}(v) \cdot \ind{\wh{L}(v)=j}} = \prob{v\in\wh{V}_j}\prob{\wh{L}(v)=j|v\in\wh{V}_j}\ee{\wh{M}_j(v)}.
\]
By definition $\prob{\hL(v)=j|v\in\hV_j}=\prob{S_j\ge\delta}$, and hence
\begin{equation}\label{eq:rewritten-RHS}
	\ee{\wh{M}(v) \cdot \ind{\wh{L}(v)=j}}= \prob{v\in\wh{V}_j} \prob{S_j(v) \ge\delta} \ee{\wh{M}_j(v)}.
\end{equation}

\paragraph{The proof strategy}
In the rest of the proof, to establish \cref{eq:mhat-uppertail} we upper-bound the ratio of the LHS of \cref{eq:9h4g9hg} and the RHS of \cref{eq:rewritten-RHS} by $10 c^2 / x^2$. At a high level, the RHS of \cref{eq:9h4g9hg} is small when $x \gg \delta$. This is the case since the random variables $\wh{M}_{j+1}(v)$ and $S_j(v)$ (see \cref{eq:mhatjv-def} and \cref{eq:definition-Sjv} respectively) have similar expectations and they both concentrate well around their expectations. Hence, it is unlikely that at the same time $\wh{M}_{j+1}(v)$ is large and $S_j(v)$ is small.

To implement this intuition, we consider two cases with respect to $\ee{\wh{M}_{j+1}(v)}$. First, when $\ee{\wh{M}_{j+1}(v)}$ is relatively large, i.e., at least $x/2$, we show that $\prob{S_j(v)<\delta}$ is small. On the other hand, for the terms appearing on the RHS of \cref{eq:rewritten-RHS} we have: large $\ee{\wh{M}_{j+1}(v)}$ implies large $\ee{\wh{M}_{j}(v)}$, and small $\prob{S_j(v)<\delta}$ implies that $\prob{S_j(v)\ge\delta} \ge 1/2$.

Second, when $\ee{\wh{M}_{j+1}(v)} < x/2$, we show that $\prob{\wh{M}_{j+1}(v)\ge x}$ is very small.

We complete the proof by balancing the two cases.
\paragraph{Case 1: $\expected{\wh{M}_{j+1}(v)}\ge x/2$.}
In this case we have 
\begin{equation}\label{eq:bound-on-Sjv}
\ee{S_j(v)} \stackrel{\text{\cref{obs:relation-between-Sj-and-Sj+1}}}{\geq} \frac{\ee{S_{j+1}(v)}}{c+1} \stackrel{\text{\cref{obs:Mj-and-hLj-independent}~\eqref{item:hM=Sj}}}{=} \frac{\ee{\wh{M}_{j+1}(v)}}{c+1} \geq \frac{x}{2(c+1)}.
\end{equation}
This further implies
\begin{equation}\label{eq:bound-on-hMjv}
	\ee{\wh{M}_j(v)} \stackrel{\text{\cref{obs:Mj-and-hLj-independent}~\eqref{item:hM=Sj}}}{=}\expected{S_j(v)} \stackrel{\text{\cref{eq:bound-on-Sjv}}}{\ge} \frac{x}{2(c+1)}.
\end{equation}

Let us define a random $Z_k$ for iid edge $e_k$ as follows
$$Z_k = \ind{e_k \text{ equals } \{w, v\}}\sum_{i=0}^{\min\{L(w), j\}}c^{i-j}.$$
Notice that $S_j(v)=\sum_{k=1}^{c^jm/n}Z_k$ and $Z_k\in[0,2]$, therefore by applying Chernoff bound \cref{lemma:chernoff}~\eqref{item:lower-tail} to $S_j(v)$:
\begin{equation}{\label{1/2}}
\begin{split}
    \prob{S_j(v)<\delta}& \le \prob{S_j(v)\le(1-1/2)\ee{S_j(v)}}\\
    \le & \exp\left(-\frac{(1/2)^2 \expected{S_j(v)}}{2\cdot2}\right)\\
    \stackrel{\text{\cref{eq:bound-on-Sjv}}}{\le} & \exp\left(-\frac{(1/2)^2 \frac{x}{2(c+1)}}{2\cdot2}\right)\\
    \le & \exp\left(-\frac x{32(c+1)}\right)\\
    \le & \exp\left(-\frac x{40c}\right),
\end{split}
\end{equation}
where the first inequality follows as $\delta \le 1/2$ and $\ee{S_j(v)} \ge 1$ from \cref{eq:bound-on-Sjv} and the definition of $x$. The last inequality of \cref{1/2} follows since $c\ge20$ by the assumption of the lemma. Therefore, substituting \cref{1/2} into \cref{eq:9h4g9hg}, we get
\begin{equation}\label{eq:hM-ge-x-bound}
	\prob{\wh{M}(v)\ge x\ \wedge\ \wh{L}(v)=j+1}\le\prob{v\in\wh{V}_j}\exp\left(-\frac x{40c}\right).
\end{equation}
Furthermore, from \cref{1/2} and for $x\ge100c\log c$ we have $\prob{S_j\ge\delta} \ge 1/2$. Substituting this bound and \cref{eq:bound-on-hMjv} into \cref{eq:rewritten-RHS} leads to
\[
\ee{\wh{M}(v) \cdot \ind{\wh{L}(v)=j}} \ge \frac{1}{2}\prob{v\in\wh{V}_j}\cdot\frac x{2(c+1)}.
\]
%We can do this since by \cref{1/2} we have that $\prob{S_j(v)<\delta}\le\exp\left(-\frac x{40c}\right)\le1/2$.
Combining the last inequality with \cref{eq:hM-ge-x-bound} leads to
\begin{equation}\label{eq:final-case1}
	\frac{\prob{\wh{M}(v)\ge x\ \wedge\ \wh{L}(v)=j+1}}{\ee{\wh{M}(v) \cdot \ind{\wh{L}(v)=j}}}\le\frac{4(c+1)\cdot\exp(-\frac x{40c})}x.
\end{equation}

\paragraph{Case 2: $\ee{\wh{M}_{j+1}(v)}< x/2$.} Let $\ee{\wh{M}_{j+1}(v)} = t x$ for some $t<1/2$.
%As in previous case, by \cref{obs:Mj-and-hLj-independent}~\eqref{item:hM=Sj} we have that  $\expected{\wh{M}_{j+1}(v)}=\expected{S_{j+1}(v)} = tx$.
To apply Chernoff bound, similar to previous case we define $Z_w$ for any vertex $w \in N(v)$ as follows

$$Z_w \eqdef \sum_{i=0}^{\min\{\wh{L}(w), j\}}c^i/n.$$
  
Therefore we have $\wh{M}_j(v) = \sum_{w\in N(v)} Z_w$. Observe that $Z_w \in [0, 2]$. Since $t < 1/2$, by applying Chernoff bound \cref{lemma:chernoff}~\eqref{item:at-least-1} we get
\begin{align*}
    \prob{\wh{M}_{j+1}(v)\ge x}&=\prob{\wh{M}_{j+1}(v)\ge\left(1+(1/t-1)\right)\cdot\expected{\wh{M}_{j+1}(v)}}\\
    &\le\exp\left(-\frac{1/t \cdot \log{1/t}\cdot\expected{\wh{M}_{j+1}(v)}}{3\cdot2}\right)\\
    &\le\exp\left(-\frac{x\log{1/t}}6\right).
\end{align*}
%\xxx[MK]{This is wrong: the Chernoff bound does not give quadratic dependence for the upper tail. It is easy to see that the conclusion cannot be true: if you have $n$ Bernoulli $1/2$ random variables, say, you cannot have a tail event with probability smaller than $2^{-n}$. But your bound for small $t$ and large-ish $x$ will give a smaller tail.}
%\xxx[MK]{The upper tail of the Chernoff bound gives $(\Omega(t))^{\Omega(x)}$, which is exactly what we should use. Luckily, it will still give a similar conclusion, see below.}
Substituting this bound into \cref{eq:9h4g9hg} we obtain
\begin{equation}\label{eq:bound-on-hMw-and-hLv}
	\prob{\wh{M}(v)\ge x\ \wedge\ \wh{L}(v)=j+1}\le\prob{v\in\wh{V}_j}\prob{S_{j+1}\ge\delta}\exp\left(-\frac{x\log{1/t}}6\right).
\end{equation}
%Furthermore as explained before:
%\begin{align*}
%\expected{\wh{M}(v)\mathbbm1(\wh{L}(v)=j)}= &\prob{v\in\wh{V}_j}\prob{S_j\ge\delta}\expected{\wh{M}_j(v)},\\
%\end{align*}
%\xxx[MK]{Where did the above inequality come from? We need a careful explanation.}
\cref{eq:rewritten-RHS} and \cref{eq:bound-on-hMw-and-hLv} imply
\begin{align*}
    \frac{\prob{\wh{M}(v)\ge x\ \wedge\ \wh{L}(v)=j+1)}}{\ee{\wh{M}(v) \cdot \ind{\wh{L}(v)=j}}} \le & \frac{2(c+1)\cdot\exp\left(-\frac{x\log{1/t}}6\right)}{tx}\cdot\frac{\prob{S_{j+1}(v)\ge\delta}}{\prob{S_j(v)\ge\delta}}\\
    = & \frac{2(c+1)}x\cdot t^{x/6-1}\cdot\frac{\prob{S_{j+1}(v)\ge\delta}}{\prob{S_j(v)\ge\delta}}\\
    \le & \frac{2(c+1)}{x}\cdot2^{-x/6+1}\cdot\frac{\prob{S_{j+1}(v)\ge\delta}}{\prob{S_j(v)\ge\delta}}
\end{align*}
%\xxx[MK]{The above is wrong, but will be correct if you do the right Chernoff bound, with $(\Omega(t))^{\Omega(x)}$.}

Now we upper bound $\frac{\prob{S_{j+1}(v)\ge\delta}}{\prob{S_j(v)\ge\delta}}$. Consider the definition of $S_j(v)$ from \cref{eq:definition-Sjv}
$$S_j(v) \eqdef \sum_{\begin{matrix}k=1\\ e_k\sim U_E\end{matrix}}^{c^j m/n}\ind{e_k \text{ equals }\{w, v\}}\sum_{i=0}^{\min(L(w), j)}c^{i-j}$$
Let us split this definition into two parts: one corresponding to the last term of the second sum and one corresponding to all other terms:
\begin{align*}
S_j(v)&=A_j(v)+B_j(v)\\
A_j(v)&=\sum_{\begin{matrix}k=1\\ e_k\sim U_E\end{matrix}}^{c^j m/n}\ind{e_k \text{ equals }\{w, v\}}\sum_{i=0}^{\min(L(w), j-1)}c^{i-j}\\
B_j(v)&=\sum_{\begin{matrix}k=1\\ e_k\sim U_E\end{matrix}}^{c^j m/n}\ind{e_k \text{ equals }\{w, v\}}\ind{w\in\hV_j}
\end{align*}
Note that $\prob{S_j(v)\ge\delta}=\prob{A_j(v)\ge\delta\ \vee\ B_j\ge1}\le\prob{A_j(v)\ge\delta}+\prob{B_j(v)\ge1}$. We must bound $$\frac{\prob{A_{j+1}\ge\delta}+\prob{B_{j+1}(v)\ge1}}{\prob{S_j(v)\ge\delta}}.$$

Notice that $A_{j+1}(v)$ is the average of $c$ independently sampled copies of $S_j(v)$, say $S_j^{(i)}(v)$. In order for $A_{j+1}(v)$ to be greater than $\delta$ at least one of the $S_j^{(i)}(v)$'s must be greater than $\delta$, therefore by union bound $\prob{A_{j+1}\ge\delta}\le c\prob{S_j(v)\ge\delta}$. Notice now that $B_{j+1}(v)$ is {\it at most} the sum of $c$ independent copies of $B_j(v)$, say $B_j^{(i)}(v)$. Since $B_j(v)$ is integral, in order for $B_{j+1}(v)$ to be greater than $1$ at least one of the $B_j^{(i)}(v)$'s  need to be greater than $1$, therefore by union bound $\prob{B_{j+1}(v)}\le c\prob{B_j(v)\ge1}\le c\prob{S_j(v)\ge\delta}$. So in conclusion
$$\frac{\prob{S_{j+1}(v)\ge\delta}}{\prob{S_j(v)\ge\delta}}\le2c.$$ 
%\xxx[MK]{I think we should make the bound above part of Observation~\cref{obs:relation-between-Sj-and-Sj+1}.}

This finalizes the bound of this case as well
\begin{equation}\label{eq:final-case2}
	\frac{\prob{\wh{M}(v)\ge x\ \wedge\ \wh{L}(v)=j+1)}}{\expected{\wh{M}(v)\mathbbm1(\wh{L}(v)=j)}}\le\frac{2(c+1)}x\cdot2^{-x/6+1}\cdot2c.
\end{equation}

\paragraph{Finalizing}
Combining the two cases, from \cref{eq:final-case1} and \cref{eq:final-case2} we conclude
$$\frac{\prob{\wh M(v)\ge x\ \wedge\ \wh L(v)=j+1}}{\expected{\wh M(v)\mathbbm1{\wh L(v)=j}}}\le\max\left(\frac{4c(c+1)}{x}\cdot2^{-x/6+1},\frac{4(c+1)\cdot\exp(-\frac x{40c})}x\right)$$
The RHS of the inequality above is upper-bounded by $10c^2/x^2$ for $c\ge20$ and $x\ge100c\log{c}$.
\end{proof}
%\jakab{This is badly written but now it should be correct. Since we use the $S=A+B$ here and in the oversampling lemma as well, we could make it into an observation somewhere.}

%!TEX root = ./000-main.tex
\section{Oversampling Lemma}\label{lemma:app-oversampling}
In this section we formally proof the following lemma.

\oversamplinglemma*
\begin{proof}

Let $Z = \sum_{i=1}^c X_i$. Notice that $Z$ is a sum of independent random variables each in the range $[0, 1]$. Also, $\prob{\oX \ge \delta} = \prob{Z \ge c \delta}$. From the definition of $X_i$ and the linearity of expectation, we have $\ee{Z} \le c \delta / 3$. This, in compination with Chernoff bound (\cref{lemma:chernoff}\eqref{item:at-least-1}), further implies
\begin{equation}\label{eq:oX-bound}
	\prob{\oX\ge\delta} = \prob{Z\ge c \delta} \le \exp\rb{-\frac{2 \cdot c \delta / 3}{3}}\le \exp\rb{-\frac{c\delta}{9}}.
\end{equation}
%\xxx[MK]{Is the $c\delta/2$ in the first equality a typo?}

We now consider two cases depending on the value of $p$.

\paragraph{Case 1: $p \ge 2 \exp\left(-\frac{c\delta}{9}\right)$.} The proof follows directly from \cref{eq:oX-bound}.

\paragraph{Case 2: $p < 2\exp\left(-\frac{c\delta}9\right)$.} In this case we consider the following three events which we call \emph{bad}. 
\begin{itemize}
		\item Event $\cE_1$: At least two of $X_i$'s have value at least $\delta$.
		\item Event $\cE_2$: At least one $X_i$ has value more than $t$, for a threshold $t:=\delta c/30\gg\delta$.
		\item Event $\cE_3$: At least one $X_i$ has value more than $\delta$ and less than $0.1\cdot c$ of the $X_i$'s have value below $2 \delta/3$.
		%Only one $X_i$ has value at least $\delta$, and that same $X_i$ has value at most $\sqrt c$. Moreover, less than $r$ of the other $X_i$'s have value below $2 \gamma$, where $r=\sqrt c/\gamma$.
	\end{itemize}

	If none of the bad events happen, then $\oX \le \delta$. To see that, observe that $ \bar \cE_1$ and $\bar \cE_2$ imply that at most one $X_i$ has value more than $\delta$, and that the same $X_i$ has value at most $t$. Note that $\bar \cE_3$ is the event that either none of the $X_i$'s has value more than $\delta$ or more than $0.1c$ of the $X_i$'s have value below $2\delta/3$. In the former case, $\overline X\le\delta$ is clearly satisfied. Consider now the latter case intersected with $\bar\cE_1$ and $\bar\cE_2$; denote the $X_i$ larger than $\delta$ by $\Xlarge$. At least $0.1\cdot c$ values of $X_i$ are less than $2\delta/3$ and the rest, excluding $\Xlarge$, are less than $\delta$. Therefore, the average of $\Xlarge$ and the elements having value less than $2 \delta/3$ is at most $\frac{0.1\cdot c\cdot2\delta/3+t}{0.1\cdot c}=2\delta/3+10t/c$. This is less than $\delta$ as long as $t\le\delta c/30$. All other elements are below $\delta$ as well.
	
	%\xxx[MK]{Note that $\bar \cE_3$ is the event that {\bf either} none of the $X_i$'s has value more than $\delta$ {\bf or} more than $0.1c$ of the $X_i$'s have value below $2\delta/3$. This is inconsistent with discussion above.}
	
	In the rest of the proof we upper-bound the probability that each of the bad events occurs. Then, by taking union bound we will upper-bound $\prob{\oX \ge \delta}$.

\paragraph{Upper-bound on $\prob{\cE_1}$.}
	By union bound we have
	\begin{equation}\label{eq:bound-cE1}
		\prob{\cE_1} = \prob{\exists i_1\neq i_2: X_{i_1} \ge \delta\ \wedge\ X_{i_2} \ge \delta} \le \binom{c}{2} p^2 \le c^2 p \exp\rb{-\frac{c\delta}9}
	\end{equation}

\paragraph{Upper-bound on $\prob{\cE_2}$.}
Again by union bound we derive
\begin{align}
	\prob{\cE_2} & =  \prob{\exists i : X_i \ge t} \nonumber \\
	& \le  c \cdot\prob{X \ge \delta} \cdot \prob{X\ge t|X\ge\delta}\nonumber\\% + \prob{X < \delta} \cdot \prob{X\ge t|X < \delta}
	& = c p \cdot \prob{X\ge t|X\ge\delta}. \label{eq:first-bound-on-cE2}
\end{align}
To upper-bound $\prob{X\ge t| X \ge \delta}$, consider the random variable $L$ defined as the lowest integer such that the partial sum $\sum_{k=1}^L Y_k$ is already at least $\delta$. Then
\begin{align}
    \prob{X\ge t|X\ge\delta}&=\sum_{l=1}^K \prob{X\ge t|L=l} \prob{L=l | X\ge\delta} \nonumber \\
    &\le\max_l \prob{X\ge t|L=l} \nonumber \\
    &=\max_l \prob{\sum_{k=1}^{l-1}Y_k+Y_l+\sum_{k=l+1}^KY_k\ge t|L=l}. \label{eq:bound-all-Yk}
\end{align}
Recall that each $Y_k \in [0, 1]$. Also, for $L = l$, by the definition we have $\sum_{k=1}^{l-1}Y_k < \delta < 1$. Hence,
\begin{equation}\label{eq:bound-on-sums-of-Y}
	\sum_{k=1}^{l-1}Y_k + Y_l \le 2.
\end{equation}
This together with \cref{eq:bound-all-Yk} implies
\begin{eqnarray}
    \prob{X\ge t|X\ge\delta} & \stackrel{\text{from \cref{eq:bound-all-Yk}}}{\le} & \max_l \prob{\sum_{k=l+1}^K Y_k \ge t - \sum_{k=1}^{l-1}Y_k-Y_l | L=l} \nonumber\\
		& \stackrel{\text{from \cref{eq:bound-on-sums-of-Y}}}{\le} & \max_l \prob{\sum_{k=l+1}^K Y_k \ge t - 2| L=l}\nonumber\\
    & \le & \prob{X\ge t-2}\label{eq:bound-on-X-given-at-least-delta}.
\end{eqnarray}

 From the assumption given in the statement of the lemma, it holds that $\ee{X} \le \delta/3 < 1$ (we may contrain $\delta$ to be less than $3$). By Chernoff bound (\cref{lemma:chernoff}\eqref{item:at-least-1}) and taking into account that $X$ is a sum of random variables in $[0, 1]$, we obtain
\[
	\prob{X\ge t-2} \stackrel{\text{from $\ee{X} < 1$}}{\le} \prob{X \ge \ee{X} + t - 3} \le \exp\rb{-\frac{t-3}{3}}.
\]
%where the last inequality follows by Chernoff bound (\cref{lemma:chernoff}\eqref{item:at-least-1}) and the fact that $\expected{X} \leq 1$ for any $t\geq5$ (recall that the value of $t$ is indeed large).
From the last chain of inequalities and \cref{eq:first-bound-on-cE2} we derive
\begin{equation}\label{eq:bound-cE2}
	\prob{\cE_2} \le c p \cdot \exp\left(-\frac{t-3}3\right).
\end{equation}

\paragraph{Upper-bound on $\prob{\cE_3}$.}
Consider $\cE_3$ as the union of the subevents $\cE_3(i^*)$ when $X_{i*}$ is specifically greater than $\delta$ and less than $0.1\cdot c$ of the rest of the $X_i$'s are below $2\delta/3$.
%\xxx[MK]{Let's call this event $\cE_3(i^*)$, note superscript in $i^*$.}Then
\begin{equation}\label{eq:bound-cE3}
	\prob{\cE_3}\le\sum_{i*=1}^c\prob{\cE_3(i*)}=c\prob{\cE_3(1)}=cp\prob{|\{i>1:X_i\le2\delta/3\}|<0.1\cdot c}.
\end{equation}

Note that by Markov's inequality we have $\prob{X_i \le 2 \delta/3} \ge 1/2$ (since $\prob{X_i \ge 2 \delta/3} \le 1/2$). Therefore,
\[
	\ee{|\{i>1:X_i\le2\delta/3\}|} \ge (c-1)/2.
\]
%So we would expect at least $c/2$ of them to be below $2\gamma$; we need to bound the probability of only $\sqrt c$ being below. This we can do using Chernoff bound again:
Hence, by Chernoff bound (\cref{lemma:chernoff}\eqref{item:lower-tail}) we derive
%\stodo{Refer to a bound on $c$ and also set how it depends on $\gamma$.}
\[
	\prob{|\{i>1:X_i\le2\delta/3\}| \le 0.1\cdot c} \le\exp\left(\frac{(3/4)^2(c-1)/2}{2}\right) \le\exp\rb{-\frac{c}{8}}.
\]
assuming that $c\ge10$.
%\xxx[MK]{I am not seeing how the above follows at all. Are you using the lower tail of the Chernoff bound? Why is the choice of $c \ge\max(50,5/\gamma^2)$ tight?}
This bound together with \cref{eq:bound-cE3} implies
\begin{equation}\label{eq:bound-cE3-final}
	\prob{\cE_3}\le c p \exp\rb{-\frac{c}{8}}.
\end{equation}
%This probability only decreases if we condition on the event that at least one of the $X_i$'s is greater than $\delta$.

\paragraph{Combining all the bounds.}
From \cref{eq:bound-cE1}, \cref{eq:bound-cE2} and \cref{eq:bound-cE3-final} we conclude
\begin{align*}
	\prob{\oX \ge \delta} &\le \prob{\cE_1}+\prob{\cE_2}+\prob{\cE_3}\\
	&\le c^2 p \exp\rb{-\frac{c\delta}9} + c p \exp\rb{-\frac{t-3}3}+ c p \exp\rb{-\frac c{8}}\\
	&\le p/2,
\end{align*}
when $\delta$ and $c$ are set appropriately. Indeed recalling that $t = \frac{c\delta}{30}$ and set $c \ge 2000\log(1/\delta)/\delta$ to achieve this goal. Notice that these bounds are not tight.

\end{proof}

\section{Proofs omitted from Section~\ref{sec:worst-case-greedy}}\label{app:worst-case-greedy}
\begin{proofof}{Lemma~\ref{lemma:v-bound}} Recall the definitions of $I_e$ and $T_e$ from the proof of Lemma~\ref{lemma:P&T}: Let $I_{e}$ be the indicator variable of $e$ being explored when Algorithm~\ref{alg:Yoshida} is called from $e_0$; let $T_{e}$ be the size of the exploration tree from $e$ in $H_i$. Let $T=T_{e_0}$, $t(\lambda)=t_{e_0}(\lambda)$. Let $v(\lambda)=\mathbb E(T^2|r(e_0)=\lambda)$; we will derive a recursive formula for $v(\lambda)$ and prove that $\sup_\lambda v(\lambda)\le10d^5$, thus proving the lemma. Recall further from the proof of Lemma~\ref{lemma:P&T} our formula for $T$

$$T=1+\sum_{e\in\delta(e_0)}I_e\cdot T_e.$$

Therefore,
\begin{align*}
    v(\lambda)&=\mathbb E\left(1+\sum_{e\in\delta(e_0)}I_e\cdot T_e\middle|r(e_0)=\lambda\right)^2\\
    &=1+2\mathbb E\left(\sum_{e\in\delta(e_0)}I_e\cdot T_e\middle|r(e_0)=\lambda\right)+\mathbb E\left(\sum_{e\in\delta(e_0)}\sum_{f\in\delta(e_0)}I_e\cdot I_f\cdot T_e\cdot T_f\middle|r(e_0)=\lambda\right)\\
    &\le2\mathbb E\left(1+\sum_{e\in\delta(e_0)}I_e\cdot T_e\middle|r(e_0)=\lambda\right)+\mathbb E\left(\sum_{e\neq f}I_e\cdot I_f\cdot T_e\cdot T_f\middle|r(e_0)=\lambda\right)+\mathbb E\left(\sum_{e\in\delta(e_0)}I_e\cdot T_e^2\middle|r(e_0)=\lambda\right).
\end{align*}

The first term is simply $2\mathbb E(T|r(e_0)=\lambda)=2t(\lambda)$ and is therefore bounded by $4d$, due to Corollary~\ref{cor:t-bound}.

To bound the second term, we drop the $I_e$ and $I_f$. Then we note that $T_e$ and $T_f$ are independent, as they depend only on $H_e$ and $H_f$ respectively.
\begin{align*}
    \mathbb E\left(\sum_{e\neq f}I_e\cdot I_f\cdot T_e\cdot T_f\middle|r(e_0)=\lambda\right)&\le\sum_{e\neq f}\mathbb E(T_e\cdot T_f|r(e_0)=\lambda)\\
    &=\sum_{e\neq f}\mathbb ET_e\cdot\mathbb ET_f\\
    &\le d(d-1)\cdot(\mathbb ET)^2\\
    &\le4d^4,
\end{align*}
again by \cref{cor:t-bound}.

The third term does not admit to an outright bound. However we can express it recursively in terms of $v(\mu)$. Note, as in the proof of \cref{lemma:P&T}, that $I_e$ and $T_e$ are independent when conditioned on the rank of $e$.
\begin{align*}
    \mathbb E\left(\sum_{e\in\delta(e_0)}I_e\cdot T_e^2\middle|r(e_0)=\lambda\right)&=\sum_{e\in\delta(e_0)}\int_0^\lambda\mathbb E(I_e\cdot T_e^2|r(e)=\mu)d\mu\\
    &=\sum_{e\in\delta(e_0)}\int_0^\lambda\mathbb E(I_e|r(e)=\mu)\cdot\mathbb E(T_e^2|r(e)=\mu)d\mu\\
    &=d\int_0^\lambda x^{-1}(\mu)v(\mu)d\mu.
\end{align*}

Therefore, the full recursive inequality for $v(\lambda)$ is
$$v(\lambda)\le4d+4d^4+d\int_0^\lambda x^{-1}(\mu)v(\mu)d\mu\le5d^4+d\int_0^\lambda x^{-1}(\mu)v(\mu)d\mu,$$
for $d\ge5$. This is very similar for to the recursive formula for $t(\lambda)$ seen in the proof of Lemma~\ref{lemma:P&T}. Let $\tilde v(\lambda)=v(\lambda)/(5d^4)$. Then
$$v(\lambda)\le1+d\int_0^\lambda x^{-1}(\mu)v(\mu)d\mu.$$
This is now identical to the formula for $t(\lambda)$ (with an inequality instead of the equality), so $\tilde v(\lambda)\le t(\lambda)$ by Grönwall's inequality, since $x^{-1}(\mu)\ge0$. So $v(\lambda)\le5d^4t(\lambda)\le10d^5$ as claimed. \end{proofof}

\begin{proofof}{\cref{cor:v-bound}}
Let $T=T_{e_0}$ and $T_i=T_{e^{(i)}}$.
\begin{align*}
    \ee{T^2}&\le\ee{(1+\sum_{i=1}^{\epsilon d}T_i)^2}\\
    &=1+2\ee{\sum_{i=1}^{\epsilon d}T_i}+\ee{\sum_{i\neq j}^{\epsilon d}T_i\cdot T_j}+\ee{\sum_{i=1}^{\epsilon d}T_i^2}\\
    &\le1+2\epsilon d \cdot\ee{T_1}+(\epsilon d)^2\ee{T_1}^2+\epsilon d \cdot\ee{T_1^2}\\
    &\le1+4\epsilon d^2+4\epsilon^2d^4+10\epsilon d^6,
\end{align*}
by \cref{cor:t-bound} and \cref{lemma:v-bound}. This can then be upper bounded by $11\epsilon d^6$ for $d\ge5$.
\end{proofof}

%!TEX root = ./000-main.tex
\section{Details omitted from \cref{sec:lower-bound}}{\label{app:lower}}
\subsection{Proof of \cref{lower-mainmain}}
We now provide the formal analysis of the total variation distance between $m^{1-\e}$ edge-samples from graphs sampled from our hard distributions $\mathcal D^{YES}$ and $\mathcal D^{NO}$. 

%\xxx[MK]{Needs a pass!}

\begin{proofof}{\cref{lower-mainmain}}
 We begin by defining random variables $A_1$, $A_2$, $B_1$, and $B_2$ that contain partial information about the iid stream under the \textsc{YES} and \textsc{NO} cases respectively. Let $A_i$ be a random variable in $\mathcal A \eqdef \left([r]\cup\{\star\}\right)^{m^{1-\epsilon}}\times\mathbb N^r$, where the $j^\text{th}$ coordinate of the first half of $A_i$ (the part in $\left([r]\cup\{\star\}\right)^{m^{1-\epsilon}}$) signifies which gadget (if any) the $j^\text{th}$ edge of the stream belongs to, the coordinate being $\star$ if it belongs to the clique. The $j^\text{th}$ coordinate of the second half of $A_i$ (the part in $\mathbb N^r$) signifies the number of {\it distinct} edges from $V_j\times V_j$ sampled throughout the stream. Furthermore, let $B_i$ be a vector of length $r+1$, where the $j^\text{th}$ coordinate signifies the isomorphism class of sampled edges of the $j^\text{th}$ gadget and the last coordinate signifies the isomorphism class of the subsampled clique. Let the support of $B_i$ be $\mathcal B$

With slight abuse of notation, for $i\in\{1,2\}$ let 
\begin{equation*}
\begin{split}
p_i(a,b,c)&:=\prob{A_i=a\ \wedge\ B_i=b\ \wedge\ C_i=c}\\
p_i(a)&:=\prob{A_i=a}\\
p_i(b)&:=\prob{B_i=b}\\
p_i(c)&:=\prob{C_i=c}\\ 
p_i(b|a)&:=\prob{B_i=b|A_i=a}\\
p_i(c|a,b)&:=\prob{C_i=c|A_i=a\ \wedge\ B_i=b}.
\end{split}
\end{equation*}
Again, we are interested in the total variation distance between $C_1$ and $C_2$, which satisfies
\begin{align*}
    \tvd{C_1}{C_2}&\le\tvd{(A_1,B_1,C_1)}{(A_2,B_2,C_2)}\\
    &=\frac12\sum_{(a,b,c)\in\mathcal A\times\mathcal B\times\mathcal C}|p_1(a,b,c)-p_2(a,b,c)|\\
    &=\frac12\sum_{(a,b,c)\in\mathcal A\times\mathcal B\times\mathcal C}|p_1(a)p_1(b|a)p_1(c|a,b)-p_2(a)p_2(b|a)p_2(c|a,b)|
\end{align*}

First, observe that there is no discrepancy between $p_1(a)$ and $p_2(a)$ as the distributions of $A_1$ and $A_2$ are identical. Notice that the probability of a given iid edge being in a specific gadget or in the clique depends only on the number of edges of that gadget or the number of edges of the cliques. The clique contains $\binom{w}2$ edges in both the \textsc{YES} and \textsc{NO} cases. Also $G$ and $H$ have the same number of edges (simply apply the guarantee of \cref{lower-main} with $K$ being a single edge), so all gadgets have the same number of edges as well. $p_1(a)=p_2(a)=:p(a)$.
\begin{align*}
    \tvd{C_1}{C_2}&=\frac12\sum_{a\in\mathcal A}p(a)\sum_{(b,c)\in\mathcal B\times\mathcal C}|p_1(b|a)p_1(c|a,b)-p_2(b|a)p_2(c|a,b)|\\
    &\le\prob{\mathcal E}+\frac12\sum_{a\in\mathcal A'}p(a)\sum_{(b,c)\in\mathcal B\times\mathcal C}|p_1(b|a)p_1(c|a,b)-p_2(b|a)p_2(c|a,b)|
\end{align*}
where $\mathcal A'$ is the set of outcomes of $A_i$ in accordance with $\overline{\mathcal E}$. Recall that 
$$\cE \eqdef \{\exists i\in[r]:\text{edges between vertices of $V_i$ appear more than $k$ times in the stream}\}.$$

Consider now the discrepancy between $p_1(b|a)$ and $p_2(b|a)$. Again, we will prove that the two distributions are equivalent, as long as the value of $A_i$ being conditioned on is in $\mathcal A'$. Consider $B_i$ to be $\left(B_i^{(1)},B_i^{(2)},\ldots,B_i^{(r)},B_i^*\right)$, where $B_i^{(j)}$ represents the isomorphism class of the sampled version of the $j^\text{th}$ gadget and $B_i^*$ is the isomorphism class of the sampled version of the clique. Note that the coordinates of $B_i$ are independent conditioned on an outcome of $A_i$. Clearly, the distributions of $B_1^*$ and $B_2^*$ are identical. Consider now the distributions of $B_1^{(j)}$ and $B_2^{(j)}$ conditioned on $A_1=A_2=a\in \mathcal A'$. Conditioning on an outcome in $\mathcal A'$ fixes the size of the sampled subgraph to some $l\le k$, which means the support of $p_i(b|a)$ is some set of graphs of size $l$. For any specific graph $K$ in the support, we know that the number of subgraphs of $G$ and $H$ isomorphic to $K$ are equal (by the guarantee of \cref{lower-main}); let this number be $X$. Also let the number of edges in a gadget be $Y$. Then $\prob{B_1^{(j)}=[K]|A_1=a}=\prob{B_2^{(j)}=[K]|A_2=a}=X/\binom Yl$. Thus $p_1(b|a)=p_2(b|a)=:p(b|a)$ for every $a\in\mathcal A'$.

\begin{align*}
    \tvd{C_1}{C_2}&\le\frac1{10}+\frac12\sum_{(a,b)\in\mathcal A'\times\mathcal B}p(a)p(b|a)\sum_{c\in\mathcal C}|p_1(c|a,b)-p_2(c|a,b)|
\end{align*}

Finally, consider the discrepancy between $p_1(c|a,b)$ and $p_2(c|a,b)$. We will, yet again, prove that the two distributions are identical when conditioned on any $(a,b)\in\mathcal A'\times\mathcal B$. Having conditioned on $A_i=a$ and $B_i=b$ the following are set about the stream: for every gadget, as well as the clique, we know the placement and number of the edges in the stream, and we know the isomorphidm class of the subsampled gadget (or clique). For every gadget (or clique) with subsampled isomorphism-class $[K]$, we don't know the particular embedding of $K$ into $V_j$ (or $V_K$) that produces the subsampled gadget (or clique), and we also don't know the order and multiplicity with which these edges arrive. Thanks to the fact that all gadgets were uniformly randomly permuted in their embedding into $V$ in the construction of $\mathcal D^{\text{YES}}$ and $\mathcal D^\text{NO}$, the embedding of $K$ into $V_j$ is also uniformly random. (The clique is completely symmetric and need not be permuted.) Furthermore, since the stream is iid, conditioned on the set of edges in $V_j\times V_j$ that must appear, their order and multiplicity is drawn from the same distribution, regardless of whether we are in the \textsc{YES} or \textsc{NO} case. Therefore, for any $(a,b,c)\in\mathcal A'\times\mathcal B\times\mathcal C$, $p_1(c|a,b)=p_2(c|a,b)=:p(c|a,b)$.
%\jakab{I assume this is way too hand-wavey. I couldn't think of a better way of writing it.}

\begin{align*}
    \tvd{C_1}{C_2}\le\frac1{10}+\frac12\sum_{(a,b)\in\mathcal A'\times\mathcal B}p(a)p(b|a)\sum_{c\in\mathcal C}|p(c|a,b)-p(c|a,b)|=\frac1{10}
\end{align*}
\end{proofof}
\subsection{Proof of \cref{lift:cor}}
\label{sec:lift-cor}
Our proof of \cref{lift:cor} is built on Theorem 3.4.~of~\cite{lubotzky1988ramanujan} and a result from \cite{cullinan2012primes}. We next restate the first result.
\begin{theorem}[\cite{lubotzky1988ramanujan}]\label{thm:LPS}
	For any distinct primes $p$ and $q$ congruent to $1$ modulo $4$, there exists a group $\cG^{p,q}$ with a set $S$ of generator elements with the following properties: $|\cG^{p,q}| \in [q(q^2-1) / 2, q(q^2-1)]$; $|S| = p + 1$; and, $\cG^{p,q}$ has girth at least $2\log_p(q/4)$.
\end{theorem}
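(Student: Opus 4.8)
The plan is to carry out the Lubotzky--Phillips--Sarnak construction of $\cG^{p,q}$ from integer quaternions. Work in the Lipschitz order $H(\mathbb Z)$ of quaternions $\alpha=a_0+a_1i+a_2j+a_3k$ ($a_m\in\mathbb Z$), with conjugate $\bar\alpha=a_0-a_1i-a_2j-a_3k$ and norm $N(\alpha)=\alpha\bar\alpha=a_0^2+a_1^2+a_2^2+a_3^2$. By Jacobi's four--square theorem $p$ has exactly $8(p+1)$ representations as a sum of four squares; imposing the normalization that $a_0>0$ is odd and $a_1,a_2,a_3$ are even leaves exactly $p+1$ quaternions $\alpha_1,\dots,\alpha_{p+1}$ of norm $p$, and since $a_0\ne 0$ this set is stable under $\alpha\mapsto\bar\alpha$ with no fixed points, so it splits into $(p+1)/2$ conjugate pairs. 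These will be the ``atoms'' generating $\cG^{p,q}$.

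First I would reduce mod $q$. Since $q\equiv 1\pmod 4$, choose $x\in\mathbb F_q$ with $x^2\equiv-1\pmod q$; this yields an isomorphism $H(\mathbb F_q)\cong M_2(\mathbb F_q)$ sending $i\mapsto\left(\begin{smallmatrix}x&0\\0&-x\end{smallmatrix}\right)$, $j\mapsto\left(\begin{smallmatrix}0&1\\-1&0\end{smallmatrix}\right)$, $k\mapsto\left(\begin{smallmatrix}0&x\\x&0\end{smallmatrix}\right)$, under which $\alpha=a_0+a_1i+a_2j+a_3k$ has image $\left(\begin{smallmatrix}a_0+a_1x&a_2+a_3x\\-a_2+a_3x&a_0-a_1x\end{smallmatrix}\right)$ of determinant $N(\alpha)$. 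Thus each $\alpha_m$ maps to a matrix of determinant $p\not\equiv 0\pmod q$, hence to an element of $\mathrm{GL}_2(\mathbb F_q)$; let $s_m\in\mathrm{PGL}_2(\mathbb F_q)$ be its class and $S=\{s_1,\dots,s_{p+1}\}$. From $\alpha_m\bar\alpha_m=p$ the image of $\bar\alpha_m$ equals $s_m^{-1}$, so $S$ is symmetric; and (in the regime $q>p$ that is the only one relevant downstream) the $s_m$ are pairwise distinct, since two norm-$p$ quaternions with proportional images mod $q$ would produce a norm-$p^2$ quaternion whose pure part is divisible by $q$, forcing that pure part to vanish and hence the quaternions to be equal up to sign, hence equal by the normalization. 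Set $\cG^{p,q}:=\langle S\rangle\le\mathrm{PGL}_2(\mathbb F_q)$, so $|S|=p+1$. The even-length words generate $\mathrm{PSL}_2(\mathbb F_q)$ -- a classical fact, provable from Dickson's classification of subgroups of $\mathrm{PSL}_2(\mathbb F_q)$ (for $p\ge 5$ the quaternion images cannot all lie in a Borel, dihedral, $A_4$, $S_4$ or $A_5$ subgroup for all but finitely many $q$) or via strong approximation -- so $\mathrm{PSL}_2(\mathbb F_q)\le\cG^{p,q}\le\mathrm{PGL}_2(\mathbb F_q)$, and as $|\mathrm{PSL}_2(\mathbb F_q)|=q(q^2-1)/2$ and $|\mathrm{PGL}_2(\mathbb F_q)|=q(q^2-1)$ this gives exactly the claimed range for $|\cG^{p,q}|$ (with $\cG^{p,q}=\mathrm{PSL}_2(\mathbb F_q)$ when $p$ is a quadratic residue mod $q$ and $=\mathrm{PGL}_2(\mathbb F_q)$ otherwise).

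The girth bound is the crux. In the $(p+1)$-regular Cayley graph $\mathrm{Cay}(\cG^{p,q},S)$, a cycle of length $\ell$ based at the identity is a reduced word $s_{m_1}\cdots s_{m_\ell}=\mathbbm 1$ with $s_{m_{k+1}}\neq s_{m_k}^{-1}$ for every $k$. Lifting, $\beta:=\alpha_{m_1}\cdots\alpha_{m_\ell}\in H(\mathbb Z)$ has $N(\beta)=p^\ell$, and the key point is that $p\nmid\beta$ in $H(\mathbb Z)$: this follows from the reduction theory of quaternions of prime norm -- concretely $\alpha_m H(\mathbb Z)\cap\bar\alpha_m H(\mathbb Z)=pH(\mathbb Z)$ for each $m$, so a cancellation $\alpha_{m_k}\alpha_{m_{k+1}}\equiv 0\pmod p$ would force $m_{k+1}$ to be the conjugate index of $m_k$, which the non-backtracking condition forbids (equivalently, the $p+1$ lattices $\alpha_m H(\mathbb Z)$ index the neighbours of a vertex in the Bruhat--Tits tree of $\mathrm{PGL}_2(\mathbb Q_p)$, so a reduced word traces a non-backtracking path in a tree and never closes up). On the other hand $s_{m_1}\cdots s_{m_\ell}=\mathbbm 1$ in $\mathrm{PGL}_2(\mathbb F_q)$ means the image of $\beta$ is scalar mod $q$, which by the explicit formula above is equivalent to $q\mid a_1,a_2,a_3$ when $\beta=a_0+a_1i+a_2j+a_3k$. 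Writing $a_m=qb_m$ for $m\ge 1$,
$$p^{\ell}=N(\beta)=a_0^2+q^2\bigl(b_1^2+b_2^2+b_3^2\bigr).$$
If $\ell<2\log_p q$ then $p^\ell<q^2$, forcing $b_1=b_2=b_3=0$, hence $p^\ell=a_0^2$; then $\ell$ is even and $\beta=\pm p^{\ell/2}$, which is divisible by $p$ whenever $\ell\ge 2$ -- a contradiction, while $\ell=0$ is the trivial cycle. Hence every nontrivial cycle has length at least $2\log_p q\ge 2\log_p(q/4)$, which is (even slightly stronger than) the asserted girth bound; the weaker constant $q/4$ in the statement merely absorbs the clean handling of the prime $2$ and the units $\pm i,\pm j,\pm k$ in the normalization.

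The single substantive ingredient -- and the step I expect to be the main obstacle in a fully self-contained write-up -- is the lemma ``$p\nmid\beta$ for reduced words'', which rests on the arithmetic of $H(\mathbb Z)$ away from $2$: unique factorization of quaternions of odd norm into quaternion primes up to units, and the identification of the $p+1$ primes above $p$ with the conjugate set $\{\alpha_m\}$. I would cite this as a standard black box (Dickson/Hurwitz) rather than reprove it. Everything else is either elementary linear algebra over $\mathbb F_q$ (the splitting isomorphism of $H(\mathbb F_q)$ and the scalar-matrix criterion) or classical finite group theory (the generation statement, needed only for the order sandwich), and the girth computation itself is the short calculation displayed above.
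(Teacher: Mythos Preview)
The paper does not prove this theorem at all: it is stated as a citation of \cite{lubotzky1988ramanujan} and used purely as a black box in the proof of \cref{lift:cor}. Your write-up, by contrast, sketches the actual Lubotzky--Phillips--Sarnak construction (the norm-$p$ quaternions, the splitting $H(\mathbb F_q)\cong M_2(\mathbb F_q)$ via a square root of $-1$, the identification of $\cG^{p,q}$ as $\mathrm{PSL}_2$ or $\mathrm{PGL}_2$, and the girth argument via the norm equation $p^\ell=a_0^2+q^2(b_1^2+b_2^2+b_3^2)$). That is the original proof, and your outline is accurate; but for the purposes of this paper none of it is needed --- a one-line citation suffices, and that is exactly what the authors do.
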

\begin{theorem}[\cite{cullinan2012primes}]\label{theorem:primes-congruent-1-mod-4}
	For any $x \ge 7$, the interval $(x, 2x]$ contains a prime number congruent $1$ modulo $4$.
\end{theorem}

\lemmaliftcor*
\begin{proof}
		If $l < 7$, let $p = 13$. Otherwise, if $l \ge 7$, let $p$ be a prime number congruent $1$ modulo $4$ from the interval $[l, 2 l]$. By \cref{theorem:primes-congruent-1-mod-4}, such $p$ exists. Let $q$ be a prime number congruent $1$ modulo $4$ from the interval $[4 p^g, 8 p^g]$. Again by \cref{theorem:primes-congruent-1-mod-4} and recalling that $p \ge 2$, such $q$ exists. The statement now follows by \cref{thm:LPS}.
\end{proof}

%\begin{theorem}\label{thm:LPS}
	%For any distinct primes $p$ and $q$ congruent to $1$ modulo $4$, if the Legendre symbol $\left(\frac pq\right)=-1$, there exists a group $\mathcal G^{p,q}$ of size $q(q^2-1)$ with a set $S$ of generator elements, such that the Cayley graph defined by $\mathcal G^{p,q}$ and $S$ has girth at least $4\log_p(q/4)$.
%\end{theorem}

\section{Proofs omitted from Section~\ref{sec:coupling}}\label{app:coupling}
\begin{proofof}{\cref{lm:kl-bernoulli}}
By symmetry we may assume that $p\le1/2$. We consider $3$ cases:

\paragraph{Case 1: $\epsilon\le-p/3$.}  With this constraint 
$$
\kl{\Ber{p+\epsilon}}{\Ber p}\le\kl{\Ber0}{\Ber p}=\log\left(\frac1{1-p}\right)\le\frac p{1-p}.
$$ Therefore the lemma statement is always satisfied.

\paragraph{Case 2: $\epsilon\ge1/4$.} With this constraint $\kl{\Ber{p+\epsilon}}{\Ber p}\le\kl{\Ber1}{\Ber p}=\log\left(\frac1p\right)\le\frac1p$. Therefore, the lemma statement is always satisfied.

\paragraph{Case 3: $\epsilon\in[-p/3,1/4]$.} In that case we have

\begin{align}
    \kl{\Ber{p+\epsilon}}{\Ber p}&=-(p+\epsilon)\log\left(\frac p{p+\epsilon}\right)-(1-p-\epsilon)\log\left(\frac{1-p}{1-p-\epsilon}\right)\\
    &=-(p+\epsilon)\log\left(1-\frac\epsilon{p+\epsilon}\right)-(1-p-\epsilon)\log\left(1+\frac\epsilon{1-p-\epsilon}\right)\\
    &\le-(p+\epsilon)\left(-\frac{\epsilon}{p+\epsilon}-\frac{4\epsilon^2}{(p+\epsilon)^2}\right)-(1-p-\epsilon)\left(\frac{\epsilon}{1-p-\epsilon}-\frac{4\epsilon^2}{(1-p-\epsilon)^2}\right)\label{line3}\\
    &=\frac{4\epsilon^2}{(p+\epsilon)(1-p-\epsilon)}\\
    &\le\frac{16}{p(1-p)}\label{line5}
\end{align}

Here \cref{line3} follows from Taylor's theorem. Indeed, By the restriction on the range of $\epsilon$, both $-\epsilon/(p+\epsilon)$ and $\epsilon/(1-p-\epsilon)$ are in the interval $[-1/2,\infty)$. On this interval the function $\log(1+x)$ is twice differentiable and the absolute value of its second derivative is bounded by $4$, therefore
$$x-4x^2\le\log(1+x)\le x+4x^2.$$

\end{proofof}

\begin{proofof}{\cref{kltri}}
We assume without loss of generality that $r\leq 1/2$. Let $\wt{r}:=\textsc{Padding}(r, \e)$.  Then $\wt r$ is also less than half and in fact $\wt r=\max(r,\epsilon)$. Let $\eta_1=|p-q|$, $\eta_2=|q-\wt r|$ and $\eta_3=|p-\wt r|$. For simplicity we will denote $\kl{\Ber{x}}{\Ber{y}}$ as $\kl{x}{y}$ during this proof. By \cref{lm:kl-bernoulli}, in order to establish the result of the lemma it suffices to show that 
 \begin{equation}\label{eq:bwtc}
\eta_3^2\leq O(\epsilon) \wt{r}(1-\wt{r}).
\end{equation}
Note that the term $(1-\wt r)$ is in $[1/2,1]$ and can be disregarded.

We will use the following facts throughout the proof:

\begin{fact}\label{trifact1}
For all $x\in\mathbb R$,
$$\log(1+x)\le x.$$
\end{fact}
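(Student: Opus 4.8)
\textbf{Proof proposal for Fact~\ref{trifact1}.} The plan is to reduce the claim to the standard fact that the concave function $\log$ lies below each of its tangent lines. First I would note that for $x\le -1$ the quantity $\log(1+x)$ is either undefined or equals $-\infty$ under the usual convention, so the inequality holds vacuously; hence it suffices to treat $x>-1$, i.e.\ $t:=1+x>0$.

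For $t>0$ I would argue as follows. The function $g(t)=\log t$ satisfies $g(1)=0$ and $g'(t)=1/t$, so the tangent line to the graph of $g$ at $t=1$ is $y=t-1$. Since $g''(t)=-1/t^2<0$, the function $g$ is strictly concave on $(0,\infty)$, and therefore lies weakly below every tangent line; in particular $\log t\le t-1$ for all $t>0$. Substituting $t=1+x$ yields $\log(1+x)\le x$, as desired.

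If one prefers to avoid invoking concavity as a black box, the same conclusion follows from a one-line calculus argument: set $f(x)=x-\log(1+x)$ on $(-1,\infty)$; then $f'(x)=1-\tfrac1{1+x}=\tfrac{x}{1+x}$, which is negative on $(-1,0)$ and positive on $(0,\infty)$, so $f$ attains its global minimum at $x=0$, where $f(0)=0$. Hence $f(x)\ge 0$ throughout, which is exactly the claim. There is no real obstacle here — the statement is elementary — so the only thing to be slightly careful about is stating the domain convention for $x\le -1$ so that the phrase "for all $x\in\mathbb{R}$" in the lemma is literally correct.
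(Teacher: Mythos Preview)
Your proposal is correct. The paper does not actually prove this fact; it is stated as a standalone \texttt{fact} without justification (as is common for such elementary inequalities), so there is nothing to compare against beyond noting that your argument is a valid and standard way to establish it.
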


\begin{fact}\label{trifact2}
For all $x\le1$,
$$\log(1+x)\le x-\frac{x^2}4.$$
\end{fact}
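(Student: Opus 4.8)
The plan is to prove the inequality on all of $x\in(-1,1]$, which is the natural domain on which $\log(1+x)$ is defined, by a single monotonicity argument. Introduce the auxiliary function
$$f(x) \eqdef x-\frac{x^2}{4}-\log(1+x),\qquad x\in(-1,1],$$
so that the claim is exactly $f(x)\ge 0$ on this interval. Since $f(0)=0$, the whole thing reduces to controlling the sign of $f'$.

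First I would differentiate and put everything over a common denominator:
$$f'(x)=1-\frac{x}{2}-\frac{1}{1+x}=\frac{\bigl(1-\tfrac{x}{2}\bigr)(1+x)-1}{1+x}=\frac{x(1-x)}{2(1+x)},$$
the last step being the routine simplification $(1-\tfrac{x}{2})(1+x)-1=\tfrac{x}{2}-\tfrac{x^2}{2}$. Then I would read off the sign of $f'$ on the two relevant subintervals. For $x\in(-1,0)$ the numerator $x(1-x)$ is negative and the denominator $2(1+x)$ is positive, so $f'(x)<0$; hence $f$ is strictly decreasing on $(-1,0]$, giving $f(x)>f(0)=0$ for such $x$. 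For $x\in(0,1]$ the numerator $x(1-x)$ is nonnegative and the denominator positive, so $f'(x)\ge 0$; hence $f$ is nondecreasing on $[0,1]$, giving $f(x)\ge f(0)=0$ for such $x$. Combining the two cases yields $f(x)\ge 0$ throughout $(-1,1]$, which is the asserted bound (with equality only at $x=0$).

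There is no genuine obstacle in this proof; the only points requiring care are (i) restricting to $x>-1$ so that $\log(1+x)$ makes sense, and (ii) the sign bookkeeping in computing and evaluating $f'$. For completeness I would also note the alternative for the range $x\in(-1,0]$: expanding via the Mercator series $\log(1+x)=x+\sum_{k\ge 2}\tfrac{(-1)^{k+1}x^k}{k}$, every term of degree $\ge 2$ is negative when $x<0$, so $\log(1+x)\le x-\tfrac{x^2}{2}\le x-\tfrac{x^2}{4}$; but the monotonicity argument above covers the entire interval uniformly and is the cleanest to present, so that is the version I would include.
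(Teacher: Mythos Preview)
Your proof is correct. The paper states this as a Fact without supplying a proof, so there is nothing to compare against; your monotonicity argument via $f(x)=x-\tfrac{x^2}{4}-\log(1+x)$ and the factorization $f'(x)=\tfrac{x(1-x)}{2(1+x)}$ is clean and complete on the natural domain $(-1,1]$.
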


\begin{fact}\label{trifact3}
For all $x\in[0,1/2]$,
$$\kl{0}{x}\le2x.$$
\end{fact}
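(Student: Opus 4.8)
The plan is to prove Fact~\ref{trifact3} by directly expanding the definition of the KL-divergence between the two Bernoulli distributions and then invoking the elementary logarithm bound of Fact~\ref{trifact1}, which is already listed as a tool for this proof.

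First I would unfold the shorthand: $\kl{0}{x}$ stands for $\kl{\Ber{0}}{\Ber{x}}$, so by the definition of KL-divergence it equals $\sum_{a\in\{0,1\}}\Ber{0}(a)\log\!\big(\Ber{0}(a)/\Ber{x}(a)\big)$. The distribution $\Ber{0}$ assigns mass $1$ to the outcome $0$ and mass $0$ to the outcome $1$, so under the standard convention $0\log 0 = 0$ the outcome-$1$ term drops out, leaving $1\cdot\log\!\big(1/(1-x)\big) = -\log(1-x)$. In particular $\kl{0}{x} = -\log(1-x)$; the boundary case $x=0$ gives $0\le 0$ trivially, so we may assume $x\in(0,1/2]$ for the remainder.

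Next I would bound $-\log(1-x)$. Writing $-\log(1-x) = \log\!\big(1 + \tfrac{x}{1-x}\big)$ and applying Fact~\ref{trifact1} (that is, $\log(1+y)\le y$) with $y = \tfrac{x}{1-x}$ yields $-\log(1-x) \le \tfrac{x}{1-x}$. Since $x\le 1/2$ forces $1-x \ge 1/2$, we conclude $\tfrac{x}{1-x}\le 2x$, and hence $\kl{0}{x}\le 2x$, as claimed.

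I expect no real obstacle here: this is a one-line computation. The only point requiring a moment of care is the $0\log 0$ convention in the outcome-$1$ term, which is standard and consistent with the definition of KL-divergence used earlier in \cref{sec:coupling}. One could equally well finish via Fact~\ref{trifact2} or a first-order Taylor estimate of $-\log(1-x)$, but routing through Fact~\ref{trifact1} is the cleanest and keeps the proof self-contained within the facts already collected for \cref{kltri}.
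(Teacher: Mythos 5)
Your proof is correct and matches what the paper does: the paper states \cref{trifact3} without a separate proof, but in Case~3 of the proof of \cref{kltri} it uses exactly the same computation, $\kl{0}{x}=-\log(1-x)\le2x$ for $x\le1/2$. Your routing of the final inequality through \cref{trifact1} via $-\log(1-x)=\log\bigl(1+\tfrac{x}{1-x}\bigr)\le\tfrac{x}{1-x}\le2x$ is a clean and valid way to justify the step the paper leaves implicit.
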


\begin{fact}\label{trifact4}
For all $x\in[0,1/2]$,
$$\kl{x}{2x}\ge\frac x4.$$
\end{fact}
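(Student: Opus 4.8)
The plan is to prove \cref{trifact4}, i.e.\ that $\kl{x}{2x}\ge x/4$ for every $x\in[0,1/2]$ (recall that in this proof $\kl{x}{2x}$ abbreviates $\kl{\Ber x}{\Ber{2x}}$), by a short explicit estimate rather than by invoking a general statement such as \cref{lm:kl-bernoulli}, which only yields an \emph{upper} bound on KL divergences of this shape. First I would dispose of the two endpoints: at $x=0$ both sides vanish, and at $x=1/2$ the divergence equals $+\infty$ because the term $(1-x)\log\frac{1-x}{1-2x}$ diverges, so the inequality holds trivially there.

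For $x\in(0,1/2)$, I would expand the divergence, with the natural logarithm as throughout this section, and observe that the two $x\log x$ contributions cancel:
\[
\kl{x}{2x}=x\log\frac{x}{2x}+(1-x)\log\frac{1-x}{1-2x}=-x\log 2+(1-x)\log\frac{1-x}{1-2x}.
\]
Then I would bound the remaining term from below using the elementary inequality $\log z\ge 1-\tfrac1z$ (valid for all $z>0$), applied with $z=\tfrac{1-x}{1-2x}$, which is positive on $(0,1/2)$; this gives $(1-x)\log\tfrac{1-x}{1-2x}\ge(1-x)-(1-2x)=x$, hence
\[
\kl{x}{2x}\ \ge\ -x\log 2+x\ =\ x(1-\log 2).
\]
Since $1-\log 2=0.3068\ldots>\tfrac14$, this finishes the proof. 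The only points requiring care are the direction of the inequality $\log z\ge 1-1/z$ and the positivity of its argument, both entirely routine, so I do not anticipate a genuine obstacle — this is essentially a one-line computation.

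If one prefers an argument that does not rely on the numerical comparison $1-\log 2>\tfrac14$ (or wants to see where the slack goes), a fallback plan is to set $h(x)=\kl{x}{2x}-x/4$, note $h(0)=0$, differentiate, and under the substitution $t=1-2x\in(0,1]$ reduce $h'(x)$ to $\phi(t)=\tfrac1t+\log t-\log(1+t)-\tfrac14$. A one-line computation then yields $\phi'(t)=-\tfrac{1}{t^2(1+t)}<0$, so $\phi$ is strictly decreasing on $(0,1]$ and thus $\phi(t)\ge\phi(1)=\tfrac34-\log 2>0$; hence $h'>0$ on $[0,1/2)$ and $h\ge h(0)=0$, which is the claim. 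Both routes are only a couple of lines, and I would go with the first.
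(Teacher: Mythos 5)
Your first argument is correct and is essentially the paper's own proof: the paper also expands the divergence, applies the elementary bound $\log(1+y)\le y$ (its \cref{trifact1}, which is exactly your $\log z\ge 1-1/z$ applied to $z=(1-x)/(1-2x)$) to lower-bound the second term by $x$, and concludes via $x(1-\log 2)\ge x/4$. The endpoint discussion and the fallback monotonicity argument are unnecessary but harmless.
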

Indeed,
$$\kl{x}{2x}=-x\log\left(\frac{2x}x\right)-(1-x)\log\left(1-\frac x{1-x}\right)\ge-x\log2+(1-x)\cdot\frac x{1-x}=x\cdot(1-\log2)\ge\frac x4,$$
by \cref{trifact1}.

We will differentiate six cases depending on the ordering of $p$, $q$ and $\wt r$. However, four of these, the ones where $q$ is not in the middle, will be very simple.
\noindent\paragraph{Case 1: $p\le\wt r\le q$.} Then,
$$\kl{p}{\wt r}\le\kl{p}{q}\le\epsilon.$$

\noindent\paragraph{Case 2: $\wt r\le p\le q$.} Then,
$$\kl{p}{\wt r}\le\kl{q}{\wt r}\le\kl{q}{r}\le\epsilon.$$

\noindent\paragraph{Case 3: $\wt q\le p\le \wt r$.} Then, if $\wt r=r$,
$$\kl{p}{\wt r}\le\kl{q}{\wt r}=\kl{q}{r}\le\epsilon.$$
On the other hand, if $\wt r=\epsilon$,
$$\kl{p}{\wt r}\le\kl{0}{\epsilon}=-\log(1-\epsilon)\le2\epsilon,$$
by \cref{trifact3} since $\epsilon\le1/2$.

\noindent\paragraph{Case 4: $\wt q\le\wt r\le p$.} Then,
$$\kl{p}{\wt r}\le\kl{p}{q}\le\epsilon.$$

\noindent\paragraph{Case 5: $p\le q\le\wt r$.} We consider two subcases.

\paragraph{(a.) $p\le4\epsilon$.} Then $q$ cannot be greater than $8\epsilon$. Indeed this would mean by \cref{trifact4} that
$$\kl{p}{q}>\kl{4\epsilon}{8\epsilon}\ge\epsilon,$$
which is a contradiction. Similarly, $r$ cannot be greater than $16\epsilon$. Indeed this would mean by \cref{trifact4} that
$$\kl{q}{r}>\kl{8\epsilon}{16\epsilon}\ge2\epsilon,$$
which is also a contradiction. Ultaminately, $\wt r\le16\epsilon$, so
$$\kl{p}{\wt r}\le\kl{0}{16\epsilon}\le32\epsilon$$
by \cref{trifact3} since $16\epsilon\le1/2$.

\paragraph{(b.) $p\ge4\epsilon$.} Note that $\wt r=r$. Let us bound $\eta_1$. First note that $\eta_1$ cannot be greater than $p$ due to \cref{trifact4}. We will further show that $\eta_1$ in fact cannot be greater than $2\sqrt{p\epsilon}$.
\begin{align*}
\epsilon&\ge\kl{p}{q}\\
&=-p\log\left(1+\frac{\eta_1}{p}\right)-(1-p)\log\left(1-\frac{\eta_1}{1-p}\right)\\
&\ge-p\left(\frac{\eta_1}{p}-\frac{\eta_1^2}{4p^2}\right)-(1-p)\left(-\frac{\eta_1}{1-p}\right)&&\text{By \cref{trifact1,trifact2}, since $\eta_1/p\le1$,}\\
&=\frac{\eta_1^2}{4p}.
\end{align*}
An identical calculation shows that $\eta_2\le2\sqrt{q\epsilon}$. Ultimately,
\begin{align*}
\eta_3&=\eta_1+\eta_2\\
&\le2\sqrt{p\epsilon}+2\sqrt{q\epsilon}\\
&\le2\sqrt{p\epsilon}+2\sqrt{\left(p+2\sqrt{p\epsilon}\right)\cdot\epsilon}\\
&\le6\sqrt{p\epsilon}&&\text{Since $p\ge\epsilon$,}\\
&\le6\sqrt{\wt r\epsilon}.
\end{align*}
From here \eqref{eq:bwtc} follows immediately.

\noindent\paragraph{Case 6: $\wt r\le q\le p$.} In this case, let us first bound $\eta_2$. We will show that $\eta_2$ cannot be greater than $2\sqrt{q\epsilon}$.
\begin{align*}
\epsilon&\ge\kl{q}{r}\\
&\ge\kl{q}{\wt r}\\
&=-q\log\left(1-\frac{\eta_2}{q}\right)-(1-q)\log\left(1+\frac{\eta_2}{1-q}\right)\\
&\ge-q\left(-\frac{\eta_2}{q}-\frac{\eta_2^2}{4q^2}\right)-(1-q)\left(\frac{\eta_2}{1-q}\right)&&\text{By \cref{trifact1,trifact2},}\\
&=\frac{\eta_2^2}{4q}.
\end{align*}
This also implies that $q$ is at most $6\wt r$. Indeed, suppose $q=\gamma\wt r$. Then
\begin{align*}
\wt r&=q-\eta_2\\
&\ge q-2\sqrt{q\epsilon}\\
&=\gamma\wt r-2\sqrt{\gamma\wt r\epsilon}\\
&\ge(\gamma-2\sqrt{\gamma})\cdot\wt r,
\end{align*}
since $\wt r\ge\epsilon$. Therefore, $1\ge\gamma-2\sqrt{\gamma}$, so $\gamma\le6$. We conclude that $\eta_2\le2\sqrt{6\wt r\epsilon}$. An identical calculation shows that $\eta_1\le2\sqrt{6q\epsilon}\le12\sqrt{\wt r\epsilon}$. Ultimately,
\begin{align*}
\eta_3&=\eta_1+\eta_2\\
&\le2\sqrt{6\wt r\epsilon}+12\sqrt{\wt r\epsilon}\\
&\le18\sqrt{\wt r\epsilon}.
\end{align*}
From here \eqref{eq:bwtc} follows immediately.
\\

This concludes the proof of the lemma under all possible orderings of $p$, $q$ and $\wt r$.
\end{proofof}

\end{document}